\documentclass[12pt,reqno]{amsart}
\usepackage[table]{xcolor}

\usepackage[letterpaper, margin = 0.965truein]{geometry} 

%%BeginIpePreamble
\usepackage{amsmath,amsthm,amssymb}
\usepackage{mathrsfs}
\usepackage{enumitem}
\usepackage{pgf,tikz,pgfplots}
\usetikzlibrary{arrows}
\usepackage[unicode]{hyperref}
\hypersetup{colorlinks=true, linkcolor=blue, citecolor=blue, filecolor=blue, urlcolor=blue}
\usepackage{comment}
\usepackage{chngcntr}
\usepackage{mathabx}
\usepackage{bm}
\usepackage{etoolbox}
\usepackage{algpseudocode}
\usepackage{enumitem}
\usepackage{algorithm}
\usepackage{subcaption}
\usepackage{float}
\usepackage{pkgfile}
\usepackage{bm}
\usepackage{extarrows}
\usepackage{algorithm}
\usepackage{algpseudocode}
\usepackage{enumitem}
\usepackage{booktabs}
\usepackage{multirow}
\usepackage{threeparttable}
\usepackage{caption}
\usepackage{mathtools}
\usepackage{subfloat}
\usepackage{algorithm}
\usepackage{lipsum} 
\usepackage{array}
\usepackage{colortbl}
\usepackage{makecell}
\usepackage{enumitem}
\usepackage{setspace}

\usepackage[comma,sort&compress, numbers]{natbib}

\allowdisplaybreaks

\newtheorem{theorem}{Theorem}[section]
\newtheorem{corollary}[theorem]{Corollary} 
\newtheorem{lemma}[theorem]{Lemma} 
\newtheorem{proposition}[theorem]{Proposition} 

\theoremstyle{definition}
\newtheorem{definition}[theorem]{Definition} 
\newtheorem{remark}[theorem]{Remark} 
\newtheorem{example}[theorem]{Example} 
\newtheorem{assumption}{Assumption}

\newcommand{\bH}{\bm{H}}
\newcommand{\bp}{\bm{p}}
\newcommand{\bw}{\bm{w}}
\newcommand{\bW}{\bm{W}}

\newcommand{\bcW}{\bm{\mathcal{W}}}

\newcommand{\sfK}{\mathsf{K}}
\newcommand{\sfH}{\mathsf{H}}
\newcommand{\one}{\bm{1}}

\newcommand{\sfL}{\mathsf{L}}
\newcommand{\MMD}{\mathrm{MMD}}
\newcommand{\EMMD}{\mathrm{ECMMD}}
\newcommand{\emmdstat}{\EMMD[\sfK,\bcW_n,\sZ_n]}
\newcommand{\emmdstatsq}{\EMMD^{2}[\sfK,\bcW_n,\sZ_n]}
\newcommand{\rmd}{\mathrm{d}}

%%EndIpePreamble 

\title[A Kernel-Based Conditional Two-Sample Test Using Nearest Neighbors]{A Kernel-Based Conditional Two-Sample Test Using Nearest Neighbors \\
\footnotesize (with Applications to Calibration, Regression Curves, and Simulation-Based Inference)}

\author[Chatterjee]{Anirban Chattejee$^\ast$}\thanks{$^\ast$The first two authors contributed equally to the paper.}
\address{Department of Statistics and Data Science, University of Pennsylvania, Philadelphia, USA } 
\email{anirbanc@wharton.upenn.edu} 

\author[Niu]{Ziang Niu$^\ast$} 
\address{Department of Statistics and Data Science, University of Pennsylvania, Philadelphia, USA } 
\email{ziangniu@wharton.upenn.edu }

\author[Bhattacharya]{Bhaswar B. Bhattacharya}
\address{Department of Statistics and Data Science, University of Pennsylvania, Philadelphia, USA } 
\email{bhaswar@wharton.upenn.edu}

\keywords{Conditional nonparametric inference, graph-based tests, kernel methods, model calibration, resampling and derandomization, Stein's method. }

\begin{document}

\maketitle

\begin{abstract} 
In this paper we introduce a kernel-based measure for detecting differences between two conditional distributions. Using the `kernel trick' and nearest-neighbor graphs, we propose a consistent estimate of this measure which can be computed in nearly linear time (for a fixed number of nearest neighbors). Moreover, when the two conditional distributions are the same, the estimate has a Gaussian limit and its asymptotic variance has a simple form that can be easily estimated from the data. The resulting test attains precise asymptotic level and is universally consistent for  detecting differences between two conditional distributions. We also provide a resampling based test using our estimate that applies to the conditional goodness-of-fit problem, which controls Type I error in finite samples and is asymptotically consistent with only a finite number of resamples. A method to de-randomize the resampling test is also presented.   The proposed methods can be readily applied to a broad range of problems, ranging from classical nonparametric statistics to modern machine learning. Specifically, we explore three applications: testing model calibration, regression curve evaluation, and validation of emulator models in simulation-based inference. We illustrate the superior performance of our method for these tasks, both in simulations as well as on real data. In particular, we apply our method to (1) assess the calibration of neural network models trained on the CIFAR-10 dataset, (2) compare regression functions for wind power generation across two different turbines, and (3) validate emulator models on benchmark examples with intractable posteriors and for generating synthetic `redshift' associated with galaxy images. 
\end{abstract}

\section{Introduction}\label{sec:introduction} 

The conditional 2-sample problem is to test whether the conditional distributions of two response variables $X$ and $Y$ are the same, given a set of covariates $Z$. More formally, given independent samples $(X_1,Y_1,Z_1), (X_2,Y_2,Z_2),\ldots,(X_n,Y_n,Z_n)$ from a joint distribution $ P_{XYZ}$, we want to test the following null hypothesis: 
\begin{align}\label{eq:H0H1}
    \bH_0: P_{X|Z} =  P_{Y|Z} \text{ almost everywhere } P_Z , 
\end{align}
where $P_{X|Z}$, $P_{Y|Z}$ denote the conditional distributions of $X|Z$ and $Y|Z$, respectively, and $P_Z$ is the marginal distribution of $Z$. This problem appears in different guises in the literature, both classical and contemporary. The following are three examples:

\begin{example} (Calibration tests for predictive models)
Many problems in supervised learning involve estimating a conditional probability distribution $P_{Y|Z}$ of a response $Y$ given a feature vector $Z$. For models which output such predictive distributions, statistical guarantees  beyond accuracy are necessary for quantifying predictive uncertainties. One such guarantee is calibration, which, loosely stated, ensures that almost every prediction matches the conditional distribution of the response given this prediction. Calibration has been used to provide predictive guarantees in meteorological and statistical studies for many years \citep{degroot1983comparison,murphy1977reliability}. 
This notion has seen a resurgence in modern machine learning, following the breakthrough work of \citet{guo2017calibration}, which showed that
common neural network architectures trained on image and text data are often miscalibrated. This prompted a slew of work on different notions of calibration for classification (where $Y$ is categorical) \cite{degroot1983comparison,murphy1977reliability,kull2015novel,kumar2018trainable,guo2017calibration,wenger2020non,widmann2019calibration,lee2023,vaicenavicius2019evaluating} and regression problems (where $Y$ is continuous)
\cite{ho2005calibrated,rueda2007calibration,gneiting2007probabilistic,taillardat2016calibrated,song2019distribution}. Specifically, for a $r$-class classification problem one can define calibration as follows \cite{widmann2019calibration}: Consider a pair of random of variables $(Y, Z)$ with joint distribution $P_{YZ}$ over a space $\cY \times \cZ$, with $\cY = \{1,2,\ldots, r\}$ and a predicted model $f:\cZ\ra \cP_{\cY}$, where 
    \begin{align*}
        \cP_{\cY} = \left\{\bp = (p_{1},p_{2},\ldots, p_{r}):0\leq p_{s}\leq 1, \sum_{s=1}^{r}p_{s}=1\right\} , 
    \end{align*}
    is the set of distributions over $\cY$. Then $f$ is said to be {\it calibrated } if and only if, for $1 \leq s \leq r$, 
    \begin{align}\label{eq:classcalibration}
        \P\left[Y = s \middle| f(Z) = \bp\right] = p_{s} \quad \text{ for all } \bp\in f(\cZ)\subseteq\cP_{\cY} , 
    \end{align} 
where $f(Z)$ is the vector of predicted probabilities. Calibration testing can be formulated as a conditional 2-sample problem by noting that if $X\sim \mathrm{Multinomial}(r, f(Z))$, then by definition of the multinomial distribution: $\P\left[X = s \middle| f(Z) = \bp \right] = p_{s}$, for $1 \leq s \leq r$. 
Hence, from \eqref{eq:classcalibration}, we say the predictive model $f$ is calibrated if and only if,
\begin{align}\label{eq:classification}
     P_{Y|f(Z)} =  P_{X|f(Z)} \text{ almost everywhere } P_Z.
\end{align}
Note that $f$ is a known trained model, hence given data $(Y_1, Z_1), (Y_2, Z_2), \ldots, (Y_n, Z_n)$ from $P_{YZ}$, it is usually easy to draws samples $X_i \sim \mathrm{Multinomial}(r, f(Z_i))$, for $1 \leq i \leq n$. Hence, \eqref{eq:classification} can be implemented as a conditional 2-sample problem based on the samples $(X_1,Y_1,f(Z_1)),\ldots,(X_n,Y_n,f(Z_n))$. For details on the implementation, comparison with existing methods, and extension to regression problems, see Section \ref{sec:classificationregression}. 

\end{example}

\begin{example} (Comparing regression curves) A prototypical example of conditional 2-sample testing is the problem of comparing two regression curves. The curves usually correspond to the mean functions of a control and a treatment outcome given a collection of covariates $Z$. For instance, suppose one has $n$ pairs of independent observations, 
\begin{align}\label{eq:compH0}
    (X_i, Y_i) = (f(Z_i) + \vep_i, g(Z_i) + \delta_i), 
 \end{align}
for $1\leq i\leq n$, where  $\vep_{1},\ldots,\vep_n$ and $\delta_1,\ldots,\delta_n$ are i.i.d. error variables. Then the hypothesis of no treatment effect, that is, testing whether or not $f=g$ is equivalent to \eqref{eq:H0H1}. Note that when $f$ and $g$ are linear and the errors are normally distributed, this is the classical analysis of covariance problem \cite{hocking2013methods}. This problem is also well-studied the nonparametric setting \eqref{eq:compH0}, beginning with the works of \citet{hardle1990semiparametric} and \citet{king1991testing}, where tests based on the kernel density estimates of the regression functions were proposed. Many variations have been considered over the years, focusing primarily on testing equality of conditional moments  (see \cite{dette1998nonparametric,neumeyer2003nonparametric,chaudhuri1999sizer,hardle1990semiparametric,kulasekera1995comparison,varianceequality,fan1998} and the references therein). The framework considered in this paper extends beyond comparing the mean and variance functions, to the entire conditional distribution of the responses given the covariates. For instance, suppose 
 $X = f(Z,\vep)$ and $Y = g(Z,\delta)$, 
where $\vep,\delta$ are independent and identically distributed error variables. Then testing the hypothesis $f=g$ is equivalent to $\bH_0$ in \eqref{eq:H0H1} (see Section \ref{sec:regressionexample} for a specific example).  
\end{example}

\begin{example} (Validation of emulators models in Simulation-Based  Inference)
Likelihood Free Inference (LFI) and Simulation-Based  Inference (SBI) broadly refer to the collection of methods that use simulations to infer the posterior in situations where the likelihood function is intractable. This approach has found widespread success in many scientific domains (see \citet{cranmer2020frontier} and the references therein). In a typical setup, the simulator takes as input a vector of parameters $\theta$ generated from a prior $p(\theta)$, samples a set of latent variable $\eta\sim p(\eta|\theta)$ and generates a data vector $X\sim p(X|\eta,\theta)$. In this case, the likelihood function $p(X|\theta)$ is implicitly defined as 
\begin{align*}
    p(X|\theta) = \int p(X|\eta,\theta)p(\eta)\d\eta , 
\end{align*} 
which is intractable for most real-life simulators. To overcome this issue, several methods like Approximate Bayesian Computation \cite{marin2012approximate} and Synthetic Likelihood \cite{wood2010statistical} have been proposed. These methods make repeated calls to the simulator and use the simulated data to provide an estimate to the posterior distribution of $\theta$.  
However, the accuracy of the estimate depends on the number of simulation calls, leading to increased computational cost, especially for expensive simulators. Additionally, inference chain depends on the choice of hyperparameters and low dimensional summary statistics, which can potentially reduce the quality of inference. Moreover, directly using the simulator presents a lack of amortization, that is, the inference chain has to be restarted every time new data is available. To mitigate these issues, one often trains a faster surrogate or emulator $q(\cdot|\theta)$ for the computationally expensive simulator. Examples of emulator models include Gaussian mixture density networks \cite{lueckmann2019likelihood}, density ratio estimators \cite{hermans2020likelihood,durkan2020contrastive,dinev2018dynamic}, and, more recently, neural conditional density estimators, such as normalizing flows~\cite{trippe2018conditional,glockler2022variational,dirmeier2023simulation}, autoregressive models~\cite{hansen1994autoregressive,papamakarios2017masked,bruinsma2023autoregressive,papamakarios2019sequential},  conditional diffusion models~\cite{batzolis2021conditional,shi2022conditional}, and conditional deep generative networks \cite{zhou2022deep,liu2021wasserstein,song2023wasserstein,alfonso2023generative}, among others. To evaluate the validity of the emulator one needs to understand the extent to which it can imitate the simulator.  The basic diagnostic check towards this is to test the following hypothesis: 
\begin{align}\label{eq:SBItest}
    p(\cdot|\theta) = q(\cdot|\theta)\text{ almost everywhere }p(\theta) , 
\end{align}
This can be implemented as an instance of the conditional 2-sample hypothesis \eqref{eq:H0H1} as follows: Given independent samples $\{\theta_{1}, \theta_{2}, \ldots, \theta_{n}\}$ from the prior $p(\theta)$, generate a sample $X_{i}\sim p(\cdot|\theta_{i})$ from the simulator and a sample $Y_{i}\sim q(\cdot|\theta_{i})$ from the emulator, for $1 \leq i \leq n$. Then we can test the hypothesis \eqref{eq:SBItest} using the samples $(X_{1}, Y_{1}, {\theta}_{1}), (X_{2}, Y_{2}, {\theta}_{2}), \ldots, (X_{n}, Y_{n}, {\theta}_{n})$ (see, for example, \cite{dalmasso2020confidence,dalmasso2020validation, dalmasso2021likelihood,lemos2023sampling}).

In addition to emulating the simulator to provide posterior inferences about the parameter $\theta$, there have been approaches to directly emulate the true posterior from observed samples without additional calls to the simulator. This method offers an amortized way to estimate the posterior distribution. It allows for faster analysis in situations where time is limited or a large amount of data needs to be processed \cite{gonccalves2020training,dax2021real}. Moreover, it facilitates rapid application of diagnostic techniques that rely on obtaining posterior samples for numerous observations \cite{cook2006validation,talts2018validating}. Popular approaches towards approximating the posterior include mixture density network \cite{papamakarios2016fast,lueckmann2017flexible}, masked autoregressive flows \cite{greenberg2019automatic,deistler2022truncated}, normalising flows \cite{papamakarios2021normalizing,rodrigues2021hnpe,wiqvist2021sequential}, generative adversarial networks \cite{ramesh2022gatsbi}, and diffusion models \cite{gloeckler2024all,linhart2024diffusion}, among others. Once again the basic diagnostic check towards understanding the validity of approximate posterior is to test,
\begin{align}\label{eq:sbiposterior}
    p\left(\cdot|X\right) = q\left(\cdot|X\right)\text{ almost everywhere } p(X),
\end{align}
where $p(\cdot|X)$ is the true posterior distribution and $q(\cdot|X)$ is the proposed approximation. Given independent samples $(X_1, \theta_1), \ldots, (X_n, \theta_n)$ from the joint distribution $p(X, \theta)$ the test from \eqref{eq:sbiposterior} can be considered as an instance of \eqref{eq:H0H1} as follows: Generate a sample $\theta_i^\prime\sim q(\cdot|X_i)$ from the emulator for $1\leq i\leq n$. Now, we can test the hypothesis \eqref{eq:sbiposterior} in a similar manner as \eqref{eq:H0H1} using the samples $(\theta_1^\prime, \theta_1,X_1),\ldots(\theta_n^\prime, \theta_n, X_n)$ (see, for example \cite{zhao2021diagnostics,linhart2024c2st}). 
\end{example}

\subsection{Summary of Results}

In this paper we propose a measure of discrepancy between 2 conditional distributions, by adapting the well-known kernel Maximum Mean Discrepancy (MMD) \cite{gretton2012kernel} to the conditional setting. Towards this, we first embed the conditional distributions $P_{X|Z}$ and $P_{Y|Z} $ in a reproducing kernel Hilbert space (RKHS), through their conditional kernel mean embeddings \cite{park2020measure}. Then we quantify the discrepancy between $P_{X|Z}$ and $P_{Y|Z} $ in terms of the norm difference (in the RKHS) between the conditional mean embeddings averaged over the marginal distribution of $Z$. We refer to this measure as the {\it Expected Conditional Mean Embedding} (ECMMD). This measure characterizes the equality of two conditional distributions, that is, the $\EMMD$ measure is zero if and only if  $P_{X|Z} = P_{Y|Z} $ almost surely (Proposition \ref{ppn:EMMD0}). Moreover,  leveraging the reproducing property of the Hilbert space (the well-known `kernel-trick') we can express the $\EMMD$ in terms of the kernel dissimilarities averaged over the respective conditional distributions (Proposition \ref{ppn:EMMDexpr}). This allows us to estimate the $\EMMD$ efficiently based on the observed data. Specifically, we propose a nearest-neighbor graph based estimate of the $\EMMD$ that has the following properties: 

\begin{itemize}

\item The $\EMMD$ estimate has a simple, interpretable form, which does not require any estimation of density or distribution functions. Moreover, it encompasses both categorial and continuous responses and, consequently, can be easily applied to a range of data types. 

\item The estimate can be computed in near-linear time (with a fixed number of nearest neighbors) irrespective of the dimension of the data (see Remark \ref{remark:computation}). 

\item The estimate is consistent for the population $\EMMD$ measure under mild moment conditions on the kernel (Theorem \ref{thm:consistency}). 

\item Under $\bH_0$ as in \eqref{eq:H0H1} the estimate is asymptotically Gaussian (Theorem \ref{thm:CLT}) and its  variance has a simple tractable form (Proposition \ref{ppn:condexpH0}). In particular, we can consistently estimate the variance under $\bH_0$ (Proposition \ref{prop:SnestvarH0}), using which we obtain a universally consistent test for \eqref{eq:H0H1} that requires no nuisance parameter estimation (see Remark \ref{remark:distribution} for details). Moreover, both the consistency and the null distribution hold for any fixed kernel bandwidth, which is in contrast to several results on conditional inference, where density estimation with dimension dependent smoothing bandwidths are required for consistent estimation/testing.

\end{itemize} 

\noindent Next, motivated by the model-$X$ framework for conditional independence testing \cite{candes2018panning}, in Section \ref{sec:finitetest} we propose a resampling based test for the hypothesis \eqref{eq:H0H1} when there is sample access from one of the conditional distributions. This also provides a test for the conditional goodness-of-fit problem that controls Type I error in finite samples and is asymptotically consistent with only a finite number of resamples (see Proposition \ref{ppn:finitetest}). To further improve the stability of the procedure, we propose a de-randomized test which attains precise asymptotic level and is universally consistent (see Section \ref{sec:Derandom}). We provide a numerical comparison our finite-sample and derandomized tests with the test based on Kernel Conditional Stein Discrepancy (KCSD) \cite{jitkrittum2020testing} in Section \ref{sec:simulationsM}. Finally, we return to the examples mentioned before and illustrate how our proposed method performs both in simulations and real-data. Codes to reproduce all the experiments are available in \href{https://github.com/anirbanc96/ECMMD-CondTwoSamp}{https://github.com/anirbanc96/ECMMD-CondTwoSamp}. The following is a summary of our findings:

\begin{itemize}

\item In Section \ref{sec:classification} and Section \ref{sec:regression} we apply both the asymptotic and derandomized $\EMMD$ tests for assessing calibration in classification and regression models. We compare our method with the tests based on squared kernel calibration error (SKCE) proposed recently in \cite{widmann2019calibration,widmann2022calibration}.   In terms of power, the $\EMMD$ and SKCE are comparable for the classification model, but the $\EMMD$ is significantly more powerful for the regression model. Moreover, under the null hypothesis the SKCE statistic has a non-Gaussian asymptotic distribution (an infinite weighted sum of $\chi^2$ distributions) with no closed form expressions for the quantiles, hence the rejection threshold has to be chosen based on bootstrap/permutation methods. In contrast, the null distribution of the $\EMMD$ test statistic is Gaussian, hence, the rejection threshold can be obtained readily. Consequently, the implementation of the $\EMMD$ tests are significantly faster than the SKCE tests (see Appendix \ref{sec:computation} for a time comparison).

\item We also apply our method to test whether convolution neural network (CNN) models for classification are calibrated using the CIFAR 10 dataset. The $\EMMD$ test is able to successfully predict the insufficient calibration of the CNN model for image classification. The $\EMMD$ is also able to detect the significant change in calibration performance when applied to a recalibrated set of predicted probabilities (see Section \ref{sec:imagecalibration}).

\item For comparing regression curves, we apply the $\EMMD$ method on the wind energy dataset \cite{ding2019data,prakash2022gaussian,hwangbo2017production}, to test if the effect wind speed on wind power generation remain the same across different turbines. In contrast to existing methods for comparing regression curves, which  are primarily focused on comparing the mean functions, the $\EMMD$ based test can be used to detect arbitrary differences between the two models (see Section \ref{sec:regressionexample} for details).

\item In Section \ref{sec:posteriorapproximation} we use the $\EMMD$ to test the validity of posterior approximations on benchmark examples arising in simulation-based inference (SBI). Specifically, we apply the $\EMMD$ to test how well methods based on Mixture Density Networks (MDNs) and Neural Spline Flow (NSF) approximate the posterior distributions in the well-known {\tt Two Moons} and {\tt Simple Likelihood Complex Posterior (SLCP)} tasks.

\item In Section \ref{sec:densitysimulator} we apply the $\EMMD$ test to validate an emulator for the conditional density of the synthetic `redshift' associated with photometric galaxy images. Using a Gaussian convolutional mixture density network (ConvMDN) as an emulator for the synthetic `redshift' (as in \cite{zhao2021diagnostics,dey2022calibrated,leroy2021md,schmidt2020evaluation}), we illustrate the efficacy of the $\EMMD$ test in accurately detecting similarities and differences between the emulator and the simulator.

\end{itemize}

\subsection{Related Work}

Although, as discussed above, specific instance of the conditional two-sample problem are abundant, very few rigorous statistical methods have been proposed in the generality considered in this paper.  Very recently, \citet{chen2022paired} proposed a test for the hypothesis \eqref{eq:H0H1hypothesis} using a conditional version of the celebrated energy distance test \cite{szekely2003statistics,wang2015conditional,szekely2004testing,multivariate2004}. Their method uses kernel density estimation techniques, which, as mentioned before, require a careful choice of the bandwidth for the asymptotic properties to hold, that 
can be difficult to control beyond low dimensions. The $\EMMD$ measure has also appeared in the recent paper by \citet{huang2022evaluating}, where it is used as a metric for empirically quantifying the discrepancy between two conditional generative models. However, their estimation method and final aim are very different (see Remark \ref{remark:mmd} are further details). In the context of simulation-based inference, \citet{linhart2024c2st} recently proposed a test for comparing 2 conditional distributions locally at a given observation using the classifier 2-sample test \cite{friedman2003,classifier2022}.

In another variant of the conditional two-sample problem one assumes that the responses $X$ and $Y$ are conditioned on (potentially) different covariates and the response covariate pairs are independently generated from their respective joint distributions. In this setting, \citet{hu2023two} proposed a test using techniques from conformal prediction and a classifier based estimation of the marginal density ratio of the covariates. 
However, this requires sample-splitting and the performance of the test depends on the accuracy of the classifier. A refinement based on de-biased two-sample $U$-statistics has been proposed very recently by  \citet{chen2024biased}. Another test has been proposed by \citet{yan2022nonparametric} based on the integrated conditional energy distance. This requires estimation of the marginal density of the covariates based on kernels with smoothing bandwidths. Consequently, their test statistic has a non-negligible bias for dimensions greater than 4 and the asymptotic variance is also intractable, hence, the rejection threshold is chosen using a bootstrap resampling approach. 

A similarity measure between the mean functions of 2 conditional distributions 
sharing the same set of covariates, based on the minimum mean square error (mMSE) gap, also appears in \citet[Section 2.3.4]{zhang2022inference}. 

\section{Conditional Two Sample Test and Expected Conditional MMD}\label{sec:EMMD}

In this section we introduce the Expected Conditional MMD (ECMMD) measure for the conditional two-sample sample problem 
\eqref{eq:H0H1}. We begin by introducing the necessary formalism. To this end, suppose $\mathcal X$ and $\mathcal Z$ are Polish spaces, that is, complete and separable metric spaces, and $\mathscr B(\mathcal X)$ and $\mathscr B(\mathcal Z)$ be the $\sigma$-algebras generated by the open sets of $\mathcal X$ and $\mathcal Z$, respectively. Denote by $\mathcal P(\mathcal X \times \mathcal X \times \mathcal Z)$ the collection of all probability distributions on $(\mathcal X \times \mathcal X \times \mathcal Z, \mathcal B(\mathcal X) \times \mathcal B(\mathcal X) \times \mathcal B(\mathcal Z))$. Suppose $P_{XYZ}  \in \mathcal P(\mathcal X \times \mathcal X \times \mathcal Z)$ and $(X,Y,Z)\sim P_{XYZ}$ be a random variable with distribution $P_{XYZ}$. Denote by $P_X, P_Y, P_Z, P_{XY}, P_{YZ}, P_{XZ}$ the marginal distributions of $X, Y, Z, (X, Y), (Y, Z), (X, Z)$, respectively. Also, denote by $P_{X|Z}$, $P_{Y|Z}$, and $P_{XY|Z}$, the regular conditional distributions of $X|Z$, $Y|Z$, and $(X, Y)|Z$, respectively, which exist by \cite[Theorem 8.37]{klenke2013probability}.
Now, the conditional two-sample sample hypothesis \eqref{eq:H0H1} can be stated more formally as follows:
\begin{align}\label{eq:H0H1hypothesis}
    \bH_0: P_{Z}\left[P_{X|Z} = P_{Y|Z}\right] = 1 \text{ versus } \bH_1: P_{Z}\left[P_{X|Z} = P_{Y|Z}\right]<1 , 
\end{align} 
where $P_{Z}$ is the marginal distribution of $Z$. Our goal is to test the above hypothesis based on i.i.d. samples $\{(X_1,Y_1,Z_1),\ldots,(X_n,Y_n,Z_n)\}$ from the joint distribution $ P_{XYZ}$. 
 
\begin{remark} 
Note that when $Z$ is almost surely a constant, then \eqref{eq:H0H1hypothesis} reduces to the familiar unconditional two-sample testing problem between the distributions $P_X$ and $P_Y$ which are the marginal distributions of $X$ and $Y$, respectively.
\end{remark} 

To introduce the $\EMMD$ measure we first recall the fundamentals of the kernel MMD from \citet{gretton2012kernel}.

\subsection{Kernel Maximum Mean Discrepancy}  
\label{sec:preliminaries}

Denote by $\mathcal P(\mathcal X)$ the collection of all probability measures on $(\mathcal X, \mathscr B(\mathcal X))$. The {\it maximum mean discrepancy} (MMD) between two probability measures $P_X, P_Y \in \mathcal P(\mathcal X)$ is defined as,
\begin{align}\label{eq:FPQ}
    \MMD\left[\mathcal{F}, P_X, P_Y \right] = \sup_{f\in\mathcal{F}}\left\{ \mathbb{E}_{X \sim P_X } [ f( X ) ] -\mathbb{E}_{Y \sim P_Y } [f( Y ) ] \right\} ,
\end{align} 
where $\mathcal{F}$ is the unit ball of a reproducing kernel Hilbert space (RKHS) $\mathcal H$ defined on $\mathcal X$ \citep{aronszajn1950theory}. Since $\mathcal H$ is an RKHS, by the Riesz representation theorem \cite[Theorem II.4]{reedsimon} there exists a positive definite kernel $\sfK : \mathcal X \times \mathcal X \rightarrow \mathbb R$ such that the feature map $\psi_x\in\cH$ for all $x\in \cX$ satisfies $\sfK(x,\cdot) = \psi_{x}(\cdot)$ and $\sfK (x,y) = \langle \psi_x,\psi_y\rangle_{\mathcal{H}}$. The notion of feature map can now be extended to define the \textit{kernel mean embedding} $\mu_{P}$ for any distribution $P \in \cP(\cX)$ as, 
\begin{align}\label{eq:Hf}
    \langle f,\mu_{P}\rangle_{\mathcal{H}} = \mathbb{E}_{X\sim  P}[f(X)] , 
\end{align}  
for all $f\in \mathcal{H}$. By the canonical form of the feature map it follows that
\begin{align}\label{eq:muP}
    \mu_{P}(t) = \E_{X \sim  P}  [\sfK(X,t)] , 
\end{align} 
for all $t\in \cX$. 

Throughout we will assume the following, which ensures that a RKHS is rich enough to distinguish the two distributions from their corresponding mean embeddings. 

\begin{assumption}\label{assumption:K} {\em The kernel $\sfK : \mathcal X \times \mathcal X \rightarrow \mathbb R$ is positive definite and satisfies the following: 
\begin{itemize} 

\item[$(1)$]  $\mathbb{E}_{X \sim P_X}[\sfK (X,X)^{\frac{1}{2}}]<\infty$ and $\mathbb{E}_{Y \sim P_Y}[\sfK (Y,Y)^{\frac{1}{2}}]<\infty$.

\item[$(2)$] The kernel $\sfK$ is \textit{characteristic}, that is, the mean embedding $\mu: \mathcal{P}(\mathcal{X})\rightarrow\mathcal{H}$ is a one-to-one (injective) function.Moreover, the RKHS $\cH$ generated by $\sfK$ is separable.
\end{itemize} } 
\end{assumption}

Assumption \ref{assumption:K} ensures that $\mu_{P_X}, \mu_{P_Y}\in\mathcal{H}$ and the MMD can then be expressed as the distance in the RKHS between the corresponding kernel mean embeddings (see \cite[Lemma 4]{gretton2012kernel}): 
\begin{align}\label{eq:KPQ}
    \MMD^2\left[\mathcal{F}, P_X, P_Y\right] = \left\|\mu_{ P_X}-\mu_{ P_Y}\right\|_{\mathcal{H}}^2 , 
\end{align}
where $\|\cdot\|_{\mathcal{H}}$ is the norm corresponding to the inner product $\langle\cdot,\cdot\rangle_{\mathcal{H}}$. 
This, in particular, implies that $ \mathrm{MMD}^2\left[\mathcal{F}, P_X, P_Y\right]  = 0$ if and only if $ P_X= P_Y$.

\subsection{Conditional Kernel Mean Embedding and Expected Conditional MMD} 

Using the representation of the kernel mean embedding from \eqref{eq:muP}, one can define the {\it conditional kernel mean embeddings} of $X|Z$ and $Y|Z$ as follows (see \citet[Definition 3.1]{park2020measure}): 
    \begin{align}\label{eq:conditionalmean}
        \mu_{ P_{X|Z}}(t) = \E_{P_{X|Z}}\left[\sfK(X,t)\middle|Z\right] \text{ and } \mu_{ P_{Y|Z}}(t) = \E_{P_{Y|Z}}\left[\sfK(Y,t)\middle|Z\right]  , 
     \end{align} 
    for all $t\in \cX$, where $ P_{X|Z}$ and $P_{Y|Z}$ are the regular conditional distributions of $X|Z$ and $Y|Z$, respectively. By Assumption \ref{assumption:K}, the conditional mean embeddings are well-defined and for all $f \in \cH$, similar to \eqref{eq:Hf} (see \citet[Lemma 3.2]{park2020measure}), 
$$\langle f,\mu_{P_{X|Z}}\rangle_{\mathcal{H}} = \mathbb{E}_{ P_{X|Z}}[f(X)|Z] \text{ and }  \langle f, \mu_{P_{Y|Z}}\rangle_{\mathcal{H}} = \mathbb{E}_{P_{Y|Z}}[f(Y)|Z] ,$$ 
almost surely. Also, note that, unlike the (unconditional) kernel mean embeddings which are fixed elements in $\cH$, the conditional kernel mean embedding $\mu_{ P_{X|Z}}$ and $\mu_{ P_{Y|Z}}$ are random variables taking values in $\cH$. Consequently, $\|\mu_{ P_{X|Z}} - \mu_{ P_{Y|Z}}\|_{\cH}$ is random metric that measures the difference between the distributions $ P_{X|Z}$ and $P_{Y|Z}$, given a specific value of $Z$. Averaging over $Z$ leads to the following definition: 

\begin{definition} 
The {\it Expected Conditional $\MMD$} ($\EMMD$) between the conditional distributions $P_{X|Z}$ and $P_{Y|Z}$ is defined as: 
\begin{align}\label{eq:defEMMD}
    \EMMD^2\left[\cF,  P_{X|Z}, P_{Y|Z}\right] := \E_{Z\sim P_Z}\left[\left\|\mu_{ P_{X|Z}} - \mu_{ P_{Y|Z}}\right\|_{\cH}^2\right] ,
\end{align}  
where the expectation taken over the marginal distribution $P_Z$.  
\end{definition} 

The following result shows that $\EMMD$ characterizes the equality of two conditional distributions (see Appendix \ref{sec:conditionalpf} for the proof). 

\begin{proposition}\label{ppn:EMMD0}
    Suppose the kernel $\sfK$ satisfies Assumption \ref{assumption:K}. Then 
    \begin{align*}
        \EMMD^2\left[\cF,  P_{X|Z}, P_{Y|Z}\right] = 0\text{ if and only if }\P\left[ P_{X|Z} =  P_{Y|Z}\right] = 1.
    \end{align*}
\end{proposition}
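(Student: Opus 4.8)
The plan is to unpack the definition of $\EMMD^2$ and reduce the claim to the almost-sure injectivity of the (unconditional) mean embedding $\mu$, which is guaranteed by Assumption \ref{assumption:K}(2). Writing $d(z) := \|\mu_{P_{X|Z=z}} - \mu_{P_{Y|Z=z}}\|_{\cH}^2 \geq 0$, we have $\EMMD^2 = \E_{Z\sim P_Z}[d(Z)]$, so $\EMMD^2 = 0$ if and only if $d(Z) = 0$ almost surely $P_Z$, i.e. $P_Z[d(Z) = 0] = 1$. Thus the entire statement hinges on showing that, for $P_Z$-almost every $z$, we have $d(z) = 0$ if and only if $P_{X|Z=z} = P_{Y|Z=z}$.

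For the ``if'' direction this is immediate: if the regular conditional distributions agree at $z$ then their conditional mean embeddings coincide by \eqref{eq:conditionalmean}, so $d(z) = 0$. For the ``only if'' direction, the key observation is that the conditional mean embedding $\mu_{P_{X|Z=z}}$, as defined in \eqref{eq:conditionalmean}, is precisely the ordinary mean embedding $\mu_{P_{X|Z=z}}$ of the probability measure $P_{X|Z=z} \in \cP(\cX)$ in the sense of \eqref{eq:muP} — the conditioning only enters through which probability measure on $\cX$ we take the embedding of. Hence $d(z) = 0$ means $\mu_{P_{X|Z=z}} = \mu_{P_{Y|Z=z}}$ as elements of $\cH$, and since $\mu: \cP(\cX) \to \cH$ is injective (Assumption \ref{assumption:K}(2)), we conclude $P_{X|Z=z} = P_{Y|Z=z}$. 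Running this pointwise equivalence through the expectation gives both directions of the proposition.

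The main technical obstacle is measurability bookkeeping: one must verify that $z \mapsto d(z)$ is a genuine (measurable) random variable so that ``$d(Z) = 0$ a.s.'' makes sense, and one must be careful that the regular conditional distributions $P_{X|Z}$ and $P_{Y|Z}$ are only defined up to $P_Z$-null sets, so all the ``pointwise'' statements above are really statements about $P_Z$-almost every $z$. Separability of $\cH$ (also assumed in \ref{assumption:K}(2)) is what makes $z \mapsto \mu_{P_{X|Z=z}}$ a Bochner-measurable $\cH$-valued map and hence $d$ measurable; the existence of regular conditional distributions on Polish spaces (cited via \cite[Theorem 8.37]{klenke2013probability}) ensures $P_{X|Z=z}$ and $P_{Y|Z=z}$ are bona fide probability measures for $P_Z$-a.e. $z$, so that the injectivity of $\mu$ can be legitimately applied. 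Assumption \ref{assumption:K}(1), the integrability of $\sfK(\cdot,\cdot)^{1/2}$, is needed to ensure the conditional mean embeddings lie in $\cH$ in the first place (via \citet[Lemma 3.2]{park2020measure}), so that $d(z) < \infty$ and the norm difference is well-defined.

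Concretely, I would organize the proof as: (i) recall from \citet[Lemma 3.2]{park2020measure} and Assumption \ref{assumption:K}(1) that $\mu_{P_{X|Z}}, \mu_{P_{Y|Z}} \in \cH$ almost surely and that $z \mapsto \mu_{P_{X|Z=z}}$ is measurable by separability; (ii) note that $\E_Z[d(Z)] = 0$ with $d \geq 0$ forces $d(Z) = 0$ $P_Z$-a.s.; (iii) for $P_Z$-a.e. such $z$, identify $\mu_{P_{X|Z=z}}$ with the mean embedding of the probability measure $P_{X|Z=z}$ and invoke the injectivity of $\mu$ from Assumption \ref{assumption:K}(2) to deduce $P_{X|Z=z} = P_{Y|Z=z}$; (iv) conversely, if $P_Z[P_{X|Z} = P_{Y|Z}] = 1$ then $d(Z) = 0$ a.s.\ and hence $\EMMD^2 = 0$. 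Steps (i) and (iii) carry the real content; the rest is routine.
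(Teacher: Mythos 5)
Your proposal is correct and follows essentially the same route as the paper's own proof: both reduce $\EMMD^2 = 0$ to the $P_Z$-almost-sure vanishing of the nonnegative integrand $\|\mu_{P_{X|Z}} - \mu_{P_{Y|Z}}\|_{\cH}^2$, and then invoke the characteristic (injective-mean-embedding) property of $\sfK$ from Assumption \ref{assumption:K}(2) for one direction and the definitional coincidence of conditional embeddings for the other. The only difference is cosmetic: you spell out the measurability and separability bookkeeping that the paper's two-line argument leaves implicit, which is good hygiene but not a new idea.
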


The above result shows that the conditional 2-sample hypothesis \eqref{eq:H0H1hypothesis} can be equivalently reformulated as: 
$$\bH_0: \EMMD^2\left[\cF,  P_{X|Z}, P_{Y|Z}\right]= 0 \quad \text{ versus } \quad \bH_1: \EMMD^2\left[\cF,  P_{X|Z}, P_{Y|Z}\right] > 0.$$ 
Hence, to obtain a consistent test for the conditional 2-sample problem it suffices to consistently estimate $\EMMD^2\left[\cF,  P_{X|Z}, P_{Y|Z}\right]$ based on the data. While at first glance this might seem difficult, because the  MMD involves a supremum over functions in the unit ball of a RKHS, the well-known `kernel'-trick \cite{gretton2012kernel} allows us to express the $\EMMD$ in the following more tractable form: 

\begin{proposition}\label{ppn:EMMDexpr}
    Suppose the kernel $\sfK$ satisfies Assumption \ref{assumption:K}. Then,
    \begin{align}\label{eq:EMMDK}
        \EMMD^2\left[\cF,  P_{X|Z}, P_{Y|Z}\right] = \E\left[\sfK(X,X') + \sfK(Y,Y') - \sfK(X,Y') - \sfK(X',Y)\right] , 
    \end{align}
    where $(X,X',Y,Y',Z)$ has the following distribution: Generate $Z\sim P_Z$ and then sample $(X,Y)$ and $(X',Y')$ independently from the conditional distribution $ P_{XY|Z}$.
\end{proposition}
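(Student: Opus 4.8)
The plan is to expand the squared norm in the definition \eqref{eq:defEMMD} using the inner product structure of the RKHS, and then repeatedly invoke the reproducing-property identity for conditional mean embeddings stated just before the proposition, namely $\langle f, \mu_{P_{X|Z}} \rangle_{\cH} = \E_{P_{X|Z}}[f(X)|Z]$ and likewise for $Y$. Concretely, for a fixed value of $Z$ write
\begin{align*}
  \left\| \mu_{P_{X|Z}} - \mu_{P_{Y|Z}} \right\|_{\cH}^2
  = \langle \mu_{P_{X|Z}}, \mu_{P_{X|Z}} \rangle_{\cH}
  + \langle \mu_{P_{Y|Z}}, \mu_{P_{Y|Z}} \rangle_{\cH}
  - 2 \langle \mu_{P_{X|Z}}, \mu_{P_{Y|Z}} \rangle_{\cH} ,
\end{align*}
and handle each of the three terms separately.

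For the cross term, I would apply the conditional reproducing property with $f = \mu_{P_{Y|Z}} \in \cH$ to get $\langle \mu_{P_{X|Z}}, \mu_{P_{Y|Z}} \rangle_{\cH} = \E[\mu_{P_{Y|Z}}(X) \mid Z]$ (here $X$ is drawn from $P_{X|Z}$, independently of whatever randomness defines $\mu_{P_{Y|Z}}$ once $Z$ is fixed), and then use the canonical form $\mu_{P_{Y|Z}}(t) = \E[\sfK(Y,t)\mid Z]$ from \eqref{eq:conditionalmean} with an independent copy $Y'$ of $Y$ given $Z$, yielding $\langle \mu_{P_{X|Z}}, \mu_{P_{Y|Z}} \rangle_{\cH} = \E[\sfK(X, Y') \mid Z]$. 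An identical computation handles the other diagonal terms: $\langle \mu_{P_{X|Z}}, \mu_{P_{X|Z}} \rangle_{\cH} = \E[\sfK(X,X')\mid Z]$ and $\langle \mu_{P_{Y|Z}}, \mu_{P_{Y|Z}} \rangle_{\cH} = \E[\sfK(Y,Y')\mid Z]$, where $(X,Y)$ and $(X',Y')$ are two conditionally independent draws from $P_{XY|Z}$ given $Z$. Taking the expectation over $Z \sim P_Z$ and using the tower property then gives \eqref{eq:EMMDK}, where the distribution of $(X,X',Y,Y',Z)$ is exactly the one described in the statement (the coupling between $X$ and $Y$, and between $X'$ and $Y'$, through $P_{XY|Z}$ is irrelevant for the identity but is the natural data-generating description and matters for the later estimator).

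The main technical obstacle is justifying the measurability and integrability needed to move expectations inside inner products and to apply the conditional reproducing property — i.e., that $Z \mapsto \mu_{P_{X|Z}}$ is a well-defined $\cH$-valued (Bochner-measurable) random element, that $\mu_{P_{X|Z}} \in \cH$ almost surely, and that the relevant Bochner integrals exist so that $\langle \E[\,\cdot \mid Z], \cdot \rangle_{\cH}$ can be exchanged. All of this is delivered by Assumption \ref{assumption:K}: part (1) gives $\E[\sfK(X,X)^{1/2} \mid Z] < \infty$ almost surely (after a tower-property argument) which bounds $\|\mu_{P_{X|Z}}\|_{\cH}$, and the separability of $\cH$ in part (2) makes weak and strong measurability coincide (Pettis), so the $\cH$-valued conditional expectation is legitimate; these are precisely the hypotheses under which \citet[Lemma 3.2]{park2020measure} establishes the conditional reproducing property, which I would cite rather than reprove. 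With those justifications in hand, the remaining computation is the routine bilinear expansion sketched above, so I would keep that part brief and concentrate the written proof on citing the correct measurability/integrability facts and on carefully tracking which variables are independent given $Z$.
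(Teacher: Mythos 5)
Your proof is correct and takes essentially the same route as the paper: introduce a second, conditionally independent draw $(X',Y')$ from $P_{XY|Z}$ given $Z$, interchange the $\cH$-inner product with the Bochner conditional expectations (justified via \citet[Lemma 3.2]{park2020measure}), and then integrate over $Z\sim P_Z$. The only cosmetic difference is that your term-by-term bilinear expansion produces the cross term as $-2\,\E[\sfK(X,Y')\mid Z]$, which the symmetry of the construction splits into $-\sfK(X,Y')-\sfK(X',Y)$, whereas the paper keeps both copies on equal footing from the start by expanding $\langle \E[\sfK(X,\cdot)-\sfK(Y,\cdot)\mid Z],\,\E[\sfK(X',\cdot)-\sfK(Y',\cdot)\mid Z]\rangle_{\cH}$ in one step.
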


The proof of Proposition \ref{ppn:EMMDexpr} is given in Appendix \ref{sec:Kpf}. In the next section, we use the representation in \eqref{eq:EMMDK} to estimate $\EMMD$ based on nearest-neighbors.

\begin{remark}\label{remark:mmd}
As mentioned in the Introduction, the $\EMMD$ measure has been used in  \cite{huang2022evaluating} to quantify the discrepancy between two conditional generative models (see also earlier related work \cite{conditionalgenerative} where the conditional kernel mean embeddings was used to measure the pointwise difference between 2 conditional distributions). In \cite{huang2022evaluating} authors use multiple rounds of resamples from the generative models to estimate the $\EMMD$ and demonstrate numerically the performance of the estimate in measuring discrepancies between conditional distributions. However, the distributional properties of the estimate was not explored. On the other hand, we use the $\EMMD$ measure to develop a test for the conditional 2-sample problem with asymptotic guarantees. Our estimation is based on a simple nearest-neighbor technique that requires no resampling. We establish the consistency and asymptotic normality of our estimate and, as a consequence, obtain a universally consistent test for the conditional 2-sample problem (see Section \ref{sec:H0}). 
\end{remark}

\section{Estimating $\EMMD$ Using Nearest-Neighbors}

In this section we propose an estimate of the $\EMMD$ measure \eqref{eq:defEMMD} given i.i.d. samples $(X_1,Y_1,Z_1),\ldots,$ 
$(X_n,Y_n,Z_n)$ from the joint distribution $ P_{XYZ}$. Towards this, suppose $\sfK$ is a kernel satisfying Assumption \ref{assumption:K} and for $\bw = (x,y), \bw' = (x',y')\in \cX\times\cX$ define,
\begin{align}\label{eq:defH}
    \sfH(\bw,\bw') = \sfK(x,x') + \sfK(y,y') - \sfK(x,y') - \sfK(x',y) . 
\end{align}
Then using Proposition \ref{ppn:EMMDexpr} the $\EMMD$ can be expressed as: 
\begin{align}\label{eq:HWW}
    \EMMD^2\left[\cF,  P_{X|Z}, P_{Y|Z}\right] = \E\left[\sfH(\bW,\bW')\right] = \E\left[\E\left[\sfH(\bW,\bW')\middle|Z\right]\right] ,  
\end{align}
where $Z\sim  P_Z$ and $\bW = (X,Y)$ and $\bW' = (X',Y')$ are generated independently from $ P_{XY|Z}$. To estimate the RHS of \eqref{eq:HWW}, we first fix $Z=Z_u$, for $1 \leq u \leq n$, and consider the condition expectation $\E\left[\sfH(\bW,\bW')\middle|Z= Z_u\right]$.  To estimate this conditional expectation the idea is to average the discrepancies of the centered kernel $\sfH$ over indices which are `close'  to $Z_u$. One natural way to capture such proximity is through neighbors-neighbor graphs. This motivates the following construction: 

\begin{itemize}

\item Fix $K=K_n \geq 1$ and construct the directed $K$-nearest neighbor ($K$-NN) graph $G(\sZ_n)$ of the data points $\sZ_n := \{Z_1,Z_2,\ldots,Z_n\}$. Unless otherwise specified, we will use the notation $(u, v)$ to denote the directed edge $Z_u\ra Z_v$, for $Z_u, Z_v \in \sZ_n$. 
Also, we will denote the edge set of $G(\sZ_n)$ by $$E(G(\sZ_n)) = \{(u, v) \in  [n]^2: \text{such that } Z_u\ra Z_v \text{ is a directed  edge in } G(\sZ_n) \},$$
where $[n]  := \{1, 2, \ldots, n\}$. 

\item Then the $K$-NN based estimate of \eqref{eq:HWW} is given by: 
\begin{align}\label{eq:estEMMD}
    \EMMD^2[\sfK, \bcW_n, \sZ_n] = \frac{1}{n}\sum_{u=1}^{n}\frac{1}{K}\sum_{v \in N_{G(\sZ_n)}(u)}\sfH(\bW_{u},\bW_{v}) , 
\end{align}
where $\bcW_n:=\left\{\bW_i = (X_i,Y_i):1\leq i\leq n\right\}$ and $N_{G(\sZ_n)}(u) = \{v: 
(u,v) \in E(G(\sZ_n))\}$. 
\end{itemize}

\begin{remark}\label{remark:computation}
Note that the estimate \eqref{eq:estEMMD} can be computed easily in $O(K n \log n)$ time, in any dimensions. This is because $K$-NN graph can be computed in $O(K n \log n)$ time (see, for example, \cite{friedman1977algorithm}) and, given the graph, the sum in \eqref{eq:estEMMD} can be computed in $O(K n)$ time, since the $K$-NN graph has $O(Kn)$ edges. 
\end{remark} 

To prove the consistency of \eqref{eq:estEMMD} we assume the following on the conditioning variable $Z$.

\begin{assumption}\label{assumption:Kn}
The random variable $Z$ takes values in $\cZ = \R^d$, for some $d \in \N$, and $\|Z - Z'\|_2$ has a continuous distribution, where $Z, Z'$ are i.i.d. samples from $P_Z$.
\end{assumption}
 
This assumption ensures that the $K$-NN graph constructed using $\|\cdot\|_2$ norm is well-defined and the degrees of its vertices scales proportional to $K$ (see \cite{jaffe2020randomized,deb2020measuring}). One can easily relax this to include any finite dimensional inner product space over $\R$, by isometrically isometric embedding such spaces into the Euclidean space. It is worth noting here that we do not require the space $\cX$ (on which the random variables $X, Y$ are defined) to be Euclidean or even finite-dimensional. (For example, while calibration testing in classification models the variables $X, Y$ are categorical.) The following theorem establishes the consistency of 
$\EMMD[\sfK,\bcW_n,\sZ_n]$. 

\begin{theorem}\label{thm:consistency}
    Suppose Assumption \ref{assumption:K} and Assumption \ref{assumption:Kn} hold. Moreover, suppose the kernel $\sfK$ satisfies $\int \sfK(x,x)^{2+\delta}\mathrm{d} P_X(x)<\infty$, and  $\int \sfK(x,x)^{2+\delta}\mathrm{d} P_Y(x)<\infty$, for some $\delta>0$. Then with $K = o(n/\log n)$,
    \begin{align*}
        \EMMD^2[\sfK,\bcW_n,\sZ_n]\pto \EMMD^2\left[\cF,  P_{X|Z}, P_{Y|Z}\right] , 
    \end{align*}
    where $\EMMD^2[\sfK,\bcW_n,\sZ_n]$ and $\EMMD^2\left[\cF,  P_{X|Z}, P_{Y|Z}\right]$ are defined in \eqref{eq:estEMMD} and \eqref{eq:defEMMD},  respectively.
\end{theorem}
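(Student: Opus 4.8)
# Proof Proposal

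The plan is to decompose the estimator into a sum over NN-graph edges, treat this as an approximation of the double conditional expectation in \eqref{eq:HWW}, and control the approximation error by a combination of (i) a bias term coming from the fact that nearest neighbors $Z_v$ are close to $Z_u$ but not equal to it, and (ii) a variance term that we kill using the graph-theoretic properties of the $K$-NN graph (bounded in-degrees, total edge count $O(Kn)$). The backbone of the argument is a standard ``$U$-statistic on a graph'' analysis, but complicated here by the fact that $\sfH$ is unbounded — only a $(2+\delta)$-moment is assumed — so a naive second-moment bound is not enough; I would need a truncation argument.

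First I would set up notation: write $h_u(z) := \E[\sfH(\bW_u,\bW')\mid Z_u, Z'=z]$ where $\bW'$ is drawn from $P_{XY|z}$ (abusing notation slightly, since $\sfH$ depends on $\bW_u$ which is itself coupled to $Z_u$), and more usefully the symmetric object $\eta(z,z') := \E[\sfH(\bW,\bW') \mid Z=z, Z'=z']$ with $\bW,\bW'$ drawn independently from the respective conditionals. The target is $\theta := \E_{Z}[\eta(Z,Z)] = \EMMD^2[\cF,P_{X|Z},P_{Y|Z}]$ (by \eqref{eq:HWW} and Proposition \ref{ppn:EMMDexpr}). I would then split
\begin{align*}
\EMMD^2[\sfK,\bcW_n,\sZ_n] - \theta
= \underbrace{\frac{1}{nK}\sum_{(u,v)\in E}\big(\sfH(\bW_u,\bW_v) - \eta(Z_u,Z_v)\big)}_{\text{(I): centered/martingale-like term}}
+ \underbrace{\frac{1}{nK}\sum_{(u,v)\in E}\big(\eta(Z_u,Z_v) - \eta(Z_u,Z_u)\big)}_{\text{(II): smoothing bias}}
+ \underbrace{\frac{1}{n}\sum_{u=1}^n \eta(Z_u,Z_u) - \theta}_{\text{(III): LLN term}}.
\end{align*}
Term (III) is a plain average of i.i.d.\ (conditionally on nothing — $\eta(Z_u,Z_u)$ are i.i.d.) bounded-in-$L^1$ random variables, so $\pto 0$ by the weak law, once I check $\E|\eta(Z,Z)|<\infty$, which follows from Assumption \ref{assumption:K}(1) and Cauchy–Schwarz on $\sfH$. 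For term (II), I would use that under Assumption \ref{assumption:Kn} the $K$-NN distance $\|Z_u - Z_v\|_2 \to 0$ in probability uniformly enough (for $K = o(n/\log n)$, the $K$-th NN radius shrinks — this is where the assumption on $K$ enters), combined with an $L^1$-continuity / approximate-continuity property of $z' \mapsto \eta(z,z')$. Since $\eta$ need not be continuous, I would instead argue via a dominated-convergence / uniform-integrability route: approximate $\sfH$ by a bounded continuous function off a small-probability set, use continuity there, and control the tail using the moment hypothesis. This is the place where the $(2+\delta)$ moment (rather than just $\sfK(x,x)^{1/2}$ integrable) gets used to guarantee uniform integrability of $\{\sfH(\bW_u,\bW_v)\}$ along graph edges.

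The main obstacle — and where I would spend most of the effort — is term (I). Conditionally on $\sZ_n$ and on the graph $G(\sZ_n)$, the summands $\sfH(\bW_u,\bW_v) - \eta(Z_u,Z_v)$ have conditional mean zero (given $Z_u,Z_v$, since $\bW_u,\bW_v$ are drawn independently from their conditionals — here I must be slightly careful: $\bW_u$ appears in many edges $(u,v)$ and $(v',u)$, so the summands are \emph{not} independent, but they form a low-dependency graph). I would compute the conditional second moment: $\E[(\text{I})^2 \mid \sZ_n]$ expands into a sum over pairs of edges, and pairs of edges sharing no vertex contribute zero (by conditional independence of the $\bW$'s given the $Z$'s), so only pairs of edges sharing at least one vertex survive. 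The number of such pairs is $O(nK^2)$ because each vertex has in-degree exactly $K$ and out-degree $O(K)$ (the $O(1)$-bound on in-degree of $K$-NN graphs, with the constant depending only on $d$, as in \cite{jaffe2020randomized,deb2020measuring}). Hence $\E[(\text{I})^2\mid\sZ_n] = O(nK^2 \cdot B)/(nK)^2 = O(B/n)$ where $B$ bounds the relevant second moments of $\sfH$. If $\sfH$ were bounded this finishes via Chebyshev; since it is not, I would truncate: write $\sfH = \sfH\mathbf{1}\{|\sfH|\le M_n\} + \sfH\mathbf{1}\{|\sfH|> M_n\}$ with $M_n \to \infty$ slowly, bound the truncated part's variance as above (now with $B = M_n^2$, needing $M_n^2/n \to 0$), and bound the $L^1$-contribution of the untruncated part using $\E[|\sfH|\mathbf{1}\{|\sfH|>M_n\}] \to 0$, which is exactly uniform integrability of $|\sfH|$ — again delivered by the $(2+\delta)$-moment assumption via $|\sfH(\bw,\bw')| \le 2\sfK(x,x)^{1/2}\sfK(x',x')^{1/2} + \ldots$ and Cauchy–Schwarz/Young. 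Choosing $M_n = n^{1/4}$, say, makes both pieces vanish. A minor technical point to handle along the way is that the coupling ``$\bW_u,\bW_v$ independent given $Z$'' in Proposition \ref{ppn:EMMDexpr} is a statement about the population object, whereas in the data we have a single $\bW_u$ per index; I would reconcile this by noting that for the expectation identity \eqref{eq:HWW} it is enough that, conditionally on $(Z_u,Z_v)$ with $u \ne v$, the pair $(\bW_u,\bW_v)$ is distributed as two independent draws from $P_{XY|Z_u}$ and $P_{XY|Z_v}$ respectively — which holds by i.i.d.\ sampling — so $\E[\sfH(\bW_u,\bW_v)\mid Z_u,Z_v] = \eta(Z_u,Z_v)$ exactly, and this is all that (I) and (II) require. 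Assembling (I), (II), (III) via the triangle inequality and $\pto$ gives the theorem.
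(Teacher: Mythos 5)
Your proof is correct in substance and would go through, but it follows a genuinely different route from the paper's. The paper establishes $L^2$-consistency by separately showing that (i) the bias $\E T_n - \EMMD^2$ vanishes (Lemma \ref{lemma:consistency2}) and (ii) $\Var T_n \to 0$ via the Efron--Stein inequality (Lemma \ref{lemma:consistency1}), whereas you propose a three-way pathwise decomposition into a conditionally-centered fluctuation term (I), a smoothing-bias term (II), and a plain LLN term (III). For (I) you compute the conditional second moment given $\sZ_n$, count the $O(nK^2)$ edge pairs sharing a vertex, and handle the unboundedness of $\sfH$ by truncation; the paper instead gets the variance bound in one stroke from Efron--Stein by observing that perturbing a single $(\bW_s,Z_s)$ changes only $O_d(K)$ edges, which reduces the problem to bounding $\frac{1}{n}\E\bigl[\max_{u,v}\sfH(\bW_u,\bW_v)^2\bigr]$ and handling the unboundedness via the maximal inequality in \eqref{eq:maximumKX}. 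Both close the same gap; Efron--Stein is slicker because it does not require you to verify that disjoint edge pairs decorrelate or to re-center the truncated kernel, while your route is more elementary. For (II), you anticipate needing an awkward ``bounded continuous approximation off a small set'' argument for the map $z' \mapsto \eta(z,z')$, but you can bypass this entirely by noticing that $\eta(z,z') = \langle g_z, g_{z'}\rangle_\cH$ with $g_z := \E[\sfK(\cdot,X) - \sfK(\cdot,Y)\mid Z=z]$, so the bias is $\E\bigl[\langle g_{Z_1}, g_{Z_{N(1)}} - g_{Z_1}\rangle_\cH\bigr]$, which Cauchy--Schwarz reduces to $\E\bigl[\|g_{Z_{N(1)}} - g_{Z_1}\|_\cH^2\bigr]\to 0$, handled by uniform integrability and the NN lemma of \cite{deb2020measuring} exactly as in the paper's \eqref{eq:Hbarlimitdiffbdd}--\eqref{eq:gZ1}. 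Your (II)$+$(III) jointly recover what the paper packages as the bias estimate; the RKHS reformulation is what makes that step clean, and is worth adopting rather than arguing via Lusin-type approximations. Finally, you are right to flag and resolve the subtle point that the coupling in Proposition \ref{ppn:EMMDexpr} is a population object: for $u \neq v$ the i.i.d.\ sampling makes $\bW_u, \bW_v$ independent given $(Z_u,Z_v)$, so $\E[\sfH(\bW_u,\bW_v)\mid Z_u,Z_v] = \langle g_{Z_u}, g_{Z_v}\rangle_\cH$ exactly as needed.
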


The proof of Theorem \ref{thm:consistency} is given in Appendix \ref{sec:proofofconsistency}. The proof involves the following steps: 
\begin{itemize}

\item First we show that the expectation of $\EMMD^2[\sfK,\bcW_n,\sZ_n]$ converges to the population $\EMMD^2$ (Lemma \ref{lemma:consistency2}). The main idea here is that averaging the centered kernel $\sfH$ (recall \eqref{eq:defH}) around the nearest neighbors of a point $Z=z$, provides an asymptotically unbiased estimate of the $\MMD^2$ distance between conditional mean embeddings $P_{X|Z=z}$ and $P_{Y|Z=z}$. 

\item Next, using the Efron-Stein inequality \cite{efron1981variance}, we show that the variance of \eqref{eq:estEMMD} converges to zero (Lemma \ref{lemma:consistency1}). This leverages the local dependence of the $K$-NN graph, which is controlled by the condition $K = o(n/\log n)$. 

\end{itemize}

\section{Asymptotic Test Based on $\EMMD$}
\label{sec:H0}

To use the estimate \eqref{eq:estEMMD} for testing the conditional two-sample hypothesis, we need to derive its asymptotic distribution under $\bH_{0}: P_{X|Z} = P_{Y|Z}$. To this end, consider the scaled version of the $\emmdstat$ statistic,
\begin{align}\label{eq:defetan}
    \eta_{n}:=\sqrt{nK}\emmdstatsq = \frac{1}{\sqrt{nK}}\sum_{u=1}^{n}\sum_{v \in N_{G(\sZ_n)}(u)} \sfH(\bW_{u},\bW_{v}).
\end{align} 
Also, denote by $\cF(\sZ_n)$ the $\sigma$-algebra generated by $Z_1,Z_2,\ldots,Z_n$.  To begin with, note that 
    \begin{align}\label{eq:EH0}
        \E_{\bH_{0}}\left[\eta_{n}|\cF(\sZ_n)\right] = \frac{1}{\sqrt{nK}}\sum_{1 \leq u,v \leq n} \E_{\bH_{0}}\left[\sfH(\bW_{u},\bW_{v})|\cF(\sZ_n)\right]\one\left\{(u,v)\in E(G(\sZ_n))\right\} = 0 , 
    \end{align}
    since, for $1 \leq u\neq v \leq n$, recalling \eqref{eq:defH}, 
    \begin{align}
        \E_{\bH_{0}}\left[\sfH(\bW_{u},\bW_{v})|\cF_{n}(\sZ_n)\right]
        =\E_{\bH_{0}}\left[\sfH(\bW_{u},\bW_{v})|Z_{u},Z_{v}\right]
        &
        = 0 , \label{eq:condexpij0}
    \end{align}
    almost surely. 
Next, we compute the conditional variance of $\eta_{n}$ under $\bH_{0}$.   

\begin{proposition}\label{ppn:condexpH0}
Denote by $\sigma^2_n := \Var_{\bH_0}\left[\eta_n\middle|\cF(\sZ_n)\right]$, conditional variance of $\eta_n$ given $\cF(\sZ_n)$  under  $\bH_{0}$. Then
\begin{align}\label{eq:condeta2exp}
       \sigma^2_n = \frac{1}{nK}\sum_{ 1 \leq u, v \leq n} f(Z_u,Z_v) \left( \one\left\{(u,v)\in E(G(\sZ_n))\right\} + \one\left\{(u,v),(v,u)\in E(G(\sZ_n))\right\} \right) , 
\end{align} 
where $f(Z,Z'):= \E_{\bH_0}\left[\sfH^2((X,Y),(X',Y')) |Z,Z'\right]$ with $(X,Y,Z),(X',Y',Z')$ i.i.d. samples from $ P_{XYZ}$.  
\end{proposition}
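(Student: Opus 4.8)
The plan is to expand $\sigma_n^2 = \Var_{\bH_0}[\eta_n\mid\cF(\sZ_n)]$ directly. Since $\E_{\bH_0}[\eta_n\mid\cF(\sZ_n)] = 0$ by \eqref{eq:EH0}, we have $\sigma_n^2 = \E_{\bH_0}[\eta_n^2\mid\cF(\sZ_n)]$, so squaring \eqref{eq:defetan} gives
\[
\sigma_n^2 = \frac{1}{nK}\sum_{(u,v)\in E(G(\sZ_n))}\ \sum_{(u',v')\in E(G(\sZ_n))}\E_{\bH_0}\left[\sfH(\bW_u,\bW_v)\,\sfH(\bW_{u'},\bW_{v'})\,\middle|\,\cF(\sZ_n)\right].
\]
Everything then reduces to deciding, for each ordered pair of directed edges, when the summand is nonzero. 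Two structural facts drive this. First, conditionally on $\cF(\sZ_n)$ the pairs $\bW_1,\dots,\bW_n$ are mutually independent with $\bW_i\sim P_{XY|Z_i}$, because the samples $(X_i,Y_i,Z_i)$ are i.i.d.\ and $\sfH$ depends only on the $X$'s and $Y$'s. Second, the same cancellation that yields \eqref{eq:condexpij0} shows more generally that under $\bH_0$, for $i\neq j$,
\[
\E_{\bH_0}\left[\sfH(\bW_i,\bW_j)\,\middle|\,\bW_i,\cF(\sZ_n)\right] = 0 \quad\text{and}\quad \E_{\bH_0}\left[\sfH(\bW_i,\bW_j)\,\middle|\,\bW_j,\cF(\sZ_n)\right] = 0
\]
almost surely: writing $\sfH$ via \eqref{eq:defH} and using $P_{X|Z_j}=P_{Y|Z_j}$, the four conditional-kernel terms cancel in pairs upon integrating $\bW_j\sim P_{XY|Z_j}$, and the second identity follows from the first by the symmetry $\sfH(\bw,\bw')=\sfH(\bw',\bw)$ (immediate from the symmetry of $\sfK$).

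I would then run the case analysis on $s := |\{u,v\}\cap\{u',v'\}|$, noting $s\in\{0,1,2\}$ since $G(\sZ_n)$ has no self-loops. If $s=0$, the four indices are distinct, so by independence the summand factors as a product of two conditional expectations, each $0$ by the centering identity and the tower property. If $s=1$, say $u=u'$ (the other three configurations are handled identically after relabeling and using the symmetry of $\sfH$), I would condition on $(\bW_u,\cF(\sZ_n))$; then $\bW_v$ and $\bW_{v'}$ are conditionally independent, so the summand equals the conditional expectation given $\cF(\sZ_n)$ of the product $\E_{\bH_0}[\sfH(\bW_u,\bW_v)\mid\bW_u,\cF(\sZ_n)]\,\E_{\bH_0}[\sfH(\bW_u,\bW_{v'})\mid\bW_u,\cF(\sZ_n)]$, which vanishes by the centering identity.

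Finally, $s=2$ forces $\{u,v\}=\{u',v'\}$, i.e.\ either $(u',v')=(u,v)$ or $(u',v')=(v,u)$. In the first sub-case the summand is $\E_{\bH_0}[\sfH(\bW_u,\bW_v)^2\mid\cF(\sZ_n)]$, which by the conditional independence structure equals $f(Z_u,Z_v)$ with $f$ as in the statement; in the second sub-case $\sfH(\bW_u,\bW_v)\sfH(\bW_v,\bW_u)=\sfH(\bW_u,\bW_v)^2$ by symmetry, again giving $f(Z_u,Z_v)$. Collecting the surviving terms, the ``$(u',v')=(u,v)$'' part contributes $\sum_{(u,v)\in E(G(\sZ_n))}f(Z_u,Z_v)=\sum_{1\le u,v\le n}f(Z_u,Z_v)\one\{(u,v)\in E(G(\sZ_n))\}$, and the ``$(u',v')=(v,u)$'' part --- present exactly when both $(u,v)$ and $(v,u)$ are edges --- contributes $\sum_{1\le u,v\le n}f(Z_u,Z_v)\one\{(u,v),(v,u)\in E(G(\sZ_n))\}$; dividing by $nK$ gives \eqref{eq:condeta2exp}. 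The only places needing real care are getting the conditioning order right in the $s=1$ case so that the centering identity applies, and the directed-edge bookkeeping in the last step; the rest is a routine second-moment expansion.
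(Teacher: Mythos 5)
Your proof is correct and follows essentially the same route as the paper: expand $\E_{\bH_0}[\eta_n^2\mid\cF(\sZ_n)]$, split by the overlap $s=|\{u,v\}\cap\{u',v'\}|$, and kill the $s\in\{0,1\}$ terms via the conditional-centering identity for $\sfH$ before reading off the $s=2$ contribution. Your explicitly stated centering identity $\E_{\bH_0}[\sfH(\bW_i,\bW_j)\mid\bW_i,\cF(\sZ_n)]=0$ is exactly the paper's display \eqref{eq:condexpfull0}, and the three overlap cases correspond precisely to the paper's $S_3$, $S_2$, $S_1$.
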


The proof of Proposition \ref{ppn:condexpH0} is given in Appendix \ref{sec:varpf}. The proof relies on the observation that for $1 \leq u \ne u'  \ne v \ne v' \leq n$, 
$$\mathrm{Cov}_{\bH_0}[ \sfH(\bW_{u},\bW_{v}),  \sfH(\bW_{u'},\bW_{v'}) | \cF(\sZ_n) ] = 0, $$ unless $\{u, v\} = \{ u', v'\}$. 
In other words, under the null hypothesis, the summands in \eqref{eq:defetan} are pairwise conditionally uncorrelated, hence, the only terms that contribute to the conditional variance are when (1) $u=u'$ and $v = v'$ which corresponds to the first term in \eqref{eq:condeta2exp} or (2) $u=v'$ and $v = u'$ which corresponds to the second term in \eqref{eq:condeta2exp}.  Consequently, $\sigma^2_n$ has a simple form as the average of the conditional expectation of $\sfH^2$ over the edges of the $K$-NN graph $G(\sZ_n)$ (counted twice when edges are present in both directions). 
This representation is particularly convenient because just by replacing the (unknown) conditional expectation $f(Z_u, Z_v)$ with kernel values $\sfH^2(\bW_{u},\bW_{v})$, we get following natural estimate of $\sigma_{n}^2$: 
\begin{align}\label{eq:defSn}
    \hat{\sigma}_{n}^2 = \frac{1}{nK}\sum_{u=1}^{n}\sum_{v=1}^{n}\sfH^{2}\left(\bW_{u},\bW_{v}\right)\left(\one\left\{(u,v)\in E(G(\sZ_n))\right\}+\one\left\{(u,v),(v,u)\in E(G(\sZ_n))\right\}\right). 
\end{align} 

In the following we establish the consistency of $\hat{\sigma}_{n}^2$ (see Appendix \ref{sec:estimatepf} for the proof) and subsequently use $\hat{\sigma}_n$ to define a studentized version of $\eta_n$ for constructing the asymptotic test. 

\begin{proposition}\label{prop:SnestvarH0}
    Suppose Assumption \ref{assumption:K} and Assumption \ref{assumption:Kn} holds. Furthermore, for $(X, Y, Z)\sim P_{XYZ}$ assume that $P_{XY}(X\neq Y)>0$ and the kernel $\sfK$ satisfies $\int\sfK(x,x)^{4+\delta}\rmd P_{X}(x)<\infty$ and $\int\sfK(y,y)^{4+\delta}\rmd P_{Y}(y)<\infty$, for some $\delta>0$. Then under $\bH_0$ with $K = o(n/\log n)$, as $n \rightarrow \infty$
    \begin{align}\label{eq:Snapproxvar}
        \left|\frac{\hat{\sigma}_n^2}{\sigma_n^2}-1\right| = o_{p}\left(n^{-\frac{\delta}{32+4\delta}}\right). 
    \end{align}
    
    where $\sigma_n^2$ and $\hat{\sigma}_n^2$ are defined in \eqref{eq:condeta2exp} and \eqref{eq:defSn}, respectively. 
\end{proposition}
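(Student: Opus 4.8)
The plan is to prove \eqref{eq:Snapproxvar} by controlling the relative error $\hat\sigma_n^2/\sigma_n^2 - 1$ through a combination of a conditional second moment (variance) bound and a lower bound on the denominator $\sigma_n^2$. First I would write $\hat\sigma_n^2 - \sigma_n^2$ as a sum over the edges of $G(\sZ_n)$ (counted with the appropriate multiplicity coming from the two indicator terms in \eqref{eq:defSn} and \eqref{eq:condeta2exp}) of the centered quantities $\sfH^2(\bW_u,\bW_v) - f(Z_u,Z_v)$, which have conditional mean zero given $\cF(\sZ_n)$ by the very definition of $f$. The idea is then to bound $\E[(\hat\sigma_n^2 - \sigma_n^2)^2 \mid \cF(\sZ_n)]$ by expanding the square and exploiting the local dependence structure of the $K$-NN graph: the cross terms $\mathrm{Cov}[\sfH^2(\bW_u,\bW_v) - f(Z_u,Z_v), \sfH^2(\bW_{u'},\bW_{v'}) - f(Z_{u'},Z_{v'}) \mid \cF(\sZ_n)]$ vanish unless the pairs $\{u,v\}$ and $\{u',v'\}$ share at least one index (since, given $\cF(\sZ_n)$, the $\bW_i$ are conditionally independent across distinct $i$). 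The number of pairs of edges sharing a vertex is $O(n K^2)$ using the bounded in-degree property of Euclidean $K$-NN graphs (Assumption \ref{assumption:Kn}), and each covariance term is controlled by $\E[\sfH^4]$, which is finite because of the $(4+\delta)$-moment assumption on $\sfK$ via the bound $|\sfH(\bw,\bw')| \le \sfK(x,x)^{1/2}\sfK(x',x')^{1/2} + \cdots$ from Cauchy–Schwarz. Putting these together yields $\E[(\hat\sigma_n^2 - \sigma_n^2)^2 \mid \cF(\sZ_n)] = O(K/(n))$ (after dividing by the $(nK)^2$ normalization and multiplying by the $O(nK^2)$ term count), i.e.\ $|\hat\sigma_n^2 - \sigma_n^2| = O_p(\sqrt{K/n})$, which is $o_p(1)$ since $K = o(n/\log n)$.

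The second, and I expect main, obstacle is to show that $\sigma_n^2$ is bounded away from zero with high probability — more precisely, to get a quantitative lower bound $\sigma_n^2 \gtrsim n^{-\delta/(16+2\delta)}$ or so, which is what the exponent $\delta/(32+4\delta)$ in \eqref{eq:Snapproxvar} reflects (it is half the denominator exponent, consistent with the ratio being controlled by $|\hat\sigma_n^2 - \sigma_n^2|/\sigma_n^2$). Since $\sigma_n^2$ is an average of $f(Z_u,Z_v)$ over the $K$-NN edges, and $f(Z,Z') = \E_{\bH_0}[\sfH^2 \mid Z, Z'] \ge 0$, I would first argue that $\E_{\bH_0}[\sigma_n^2] \to \E_{\bH_0}[\sfH^2(\bW,\bW')]$ by an argument parallel to Lemma \ref{lemma:consistency2} (nearest-neighbor averages of $f$ converge to the conditional-expectation-at-a-point, then average over $P_Z$), and that this limit is strictly positive: under $\bH_0$ one has $\sfH(\bw,\bw') = \sfK(x,x') + \sfK(y,y') - \sfK(x,y') - \sfK(x',y)$ evaluated at $\bW,\bW'$ sampled i.i.d.\ from the (common) conditional law given $Z$, and $\E_{\bH_0}[\sfH^2]=0$ would force $\sfH=0$ a.s., which, together with the characteristic-kernel property and $P_{XY}(X \ne Y) > 0$, is impossible. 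The quantitative version requires a concentration bound for $\sigma_n^2$ around its mean — again an Efron–Stein / bounded-differences argument using the $K$-NN local dependence, but now one has to be careful because $f$ itself need not be bounded, so a truncation at level $n^{\alpha}$ for a suitable $\alpha$ depending on $\delta$ is needed, with the truncation error controlled by the $(4+\delta)$-moment assumption. Balancing the truncation level against the concentration rate is where the precise exponent $\delta/(32+4\delta)$ will emerge.

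Finally I would assemble the pieces: write
\begin{align*}
    \left|\frac{\hat\sigma_n^2}{\sigma_n^2} - 1\right| = \frac{|\hat\sigma_n^2 - \sigma_n^2|}{\sigma_n^2} \le \frac{|\hat\sigma_n^2 - \sigma_n^2|}{\E_{\bH_0}[\sigma_n^2]/2} \cdot \one\{\sigma_n^2 \ge \E_{\bH_0}[\sigma_n^2]/2\} + \one\{\sigma_n^2 < \E_{\bH_0}[\sigma_n^2]/2\},
\end{align*}
bound the second indicator's probability using the lower-tail concentration of $\sigma_n^2$, and bound the first term using the numerator estimate $|\hat\sigma_n^2 - \sigma_n^2| = O_p(\sqrt{K/n})$ together with $\E_{\bH_0}[\sigma_n^2] \ge c\, n^{-\delta/(16+2\delta)}$ (the slowly-decaying lower bound coming from the truncation argument). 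Since $K = o(n/\log n)$, the numerator rate $\sqrt{K/n}$ is $o(1)$ and, after dividing by the polynomially-small denominator, still gives the stated $o_p(n^{-\delta/(32+4\delta)})$ rate. The bookkeeping in the truncation step — tracking how the $(4+\delta)$-moment bound on $\sfK$ propagates through $\sfH^2$ to $f$ and then through the $K$-NN average — is the most delicate part, but it is routine once the truncation level is fixed; the conceptual content is entirely in (i) the conditional-uncorrelatedness that kills the cross terms in the numerator bound, and (ii) the strict positivity of $\E_{\bH_0}[\sfH^2]$ under $\bH_0$.
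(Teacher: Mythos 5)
Your high-level scaffolding (upper-bound $|\hat\sigma_n^2-\sigma_n^2|$, lower-bound $\sigma_n^2$, combine) matches the paper, but both halves contain issues worth flagging, and the final arithmetic does not, as written, produce the stated rate.

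On the numerator, your proposal is genuinely different from the paper's. You want to center by $f(Z_u,Z_v)$ and exploit conditional uncorrelatedness of the edge summands given $\cF(\sZ_n)$, so that only $O(nK^2)$ overlapping edge pairs contribute to the conditional second moment. The paper instead decomposes $\hat\sigma_n^2-\sigma_n^2$ into $(S_n-\E S_n)-(V_n-\E V_n)$ (noting $\E S_n = \E V_n$), and bounds each piece unconditionally via Efron--Stein, exactly as in Lemma \ref{lemma:consistency1} but with $\sfH$ replaced by $\sfH^2$ and by $f$. Your route is plausible and even potentially tighter — once you push through the $\cF(\sZ_n)$-dependence of the conditional covariance by upper-bounding it with $\sqrt{g(Z_u,Z_v)g(Z_{u'},Z_{v'})}$, $g(z,z'):=\E[\sfH^4\mid Z=z,Z'=z']$, AM--GM, and the nearest-neighbor moment lemma of \cite{deb2020measuring} — but note that the correct count gives $\E[(\hat\sigma_n^2-\sigma_n^2)^2]=O(1/n)$, not $O(K/n)$: the $O(nK^2)$ overlap count is divided by $(nK)^2$, so the $K$'s cancel. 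Your $O_p(\sqrt{K/n})$ rate is off by a $\sqrt K$ factor.

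On the denominator, your proposal reflects a genuine misunderstanding of the structure of the paper's argument, and is internally inconsistent. You correctly note in the middle of the proposal that $\E_{\bH_0}[\sigma_n^2]\to \E_{\bH_0}[\|h_Z\|_{\cH\otimes\cH}^2]>0$ (strict positivity via the characteristic-kernel property and $P_{XY}(X\neq Y)>0$), which is exactly Lemmas \ref{lemma:condH2convg} and \ref{lemma:varlimitnon0}. But then you pivot to claiming that a ``quantitative'' lower bound of the form $\sigma_n^2\gtrsim n^{-\delta/(16+2\delta)}$ is needed, obtained by a truncation argument, and that ``balancing the truncation level against the concentration rate is where the precise exponent $\delta/(32+4\delta)$ will emerge.'' This is wrong on both counts. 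The paper shows $\sigma_n^2\to_p c>0$ (Lemma \ref{lemma:nepscondvarinfty} is a consequence of this), so no polynomially-decaying lower bound is needed, and no truncation is deployed in the denominator at all. In the paper, the exponent $\delta/(32+4\delta)$ is \emph{entirely} a byproduct of the $L^2$ bound on the numerator: the Efron--Stein step produces $\E\left[\left(\max_{u\neq v}\sfH^2(\bW_u,\bW_v)\right)^2\right]=o(n^{8/(8+\delta)})$ via truncating the maximum, giving $\E[(\hat\sigma_n^2-\sigma_n^2)^2]=o(n^{-\delta/(8+\delta)})$, and $\delta/(32+4\delta)$ is simply a quarter of $\delta/(8+\delta)$.

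Finally, even granting your intermediate claims as stated, the assembly does not close. Dividing $O_p(\sqrt{K/n})$ by your conjectured lower bound $n^{-\delta/(16+2\delta)}$ gives a ratio of order $\sqrt{K/n}\,n^{\delta/(16+2\delta)}$, which, for $K=n/\log^2 n$ (allowed by $K=o(n/\log n)$), is $n^{\delta/(16+2\delta)}/\log n\to\infty$ — not $o(n^{-\delta/(32+4\delta)})$. The fix is to use what you already proved: $\sigma_n^2$ is bounded below by a constant with probability tending to one, combined with the (corrected) numerator rate $O_p(1/\sqrt n)$.
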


We are now ready to state the result about the asymptotic null distribution of $\eta_{n}$. Specifically, in the following theorem we show that $\eta_n$ scaled by $\hat{\sigma}_n$ converges to $N(0, 1)$, under $\bH_0$, in the Kolmogorov distance.

\begin{theorem}\label{thm:CLT}
 Suppose Assumption \ref{assumption:K} and Assumption \ref{assumption:Kn} holds. Furthermore, for $(X, Y, Z)\sim P_{XYZ}$ assume that $P_{XY}(X\neq Y)>0$ and the kernel $\sfK$ satisfies $\int\sfK(x,x)^{4+\delta}\rmd P_{X}(x)<\infty$ and $\int\sfK(y,y)^{4+\delta}\rmd P_{Y}(y)<\infty$, for some $\delta>0$. Then under $\bH_0$, with $K = o(n^{1/44})$, as $n \rightarrow \infty$, 
    \begin{align}\label{eq:distributionH0}
        \sup_{z\in\R}\left|\P_{\bH_0}\left[\frac{\eta_{n}}{\hat{\sigma}_n}\leq z\right] - \Phi(z)\right|\ra 0 , 
    \end{align}
    where $\Phi(\cdot)$ is the CDF of the standard Gaussian distribution. 
\end{theorem}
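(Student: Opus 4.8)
The plan is to establish a conditional central limit theorem for $\eta_n$ given $\cF(\sZ_n)$ and then remove the conditioning by a dominated-convergence / Slutsky argument, using Proposition \ref{prop:SnestvarH0} to replace $\sigma_n$ by $\hat\sigma_n$. First I would rewrite $\eta_n = (nK)^{-1/2}\sum_{(u,v)\in E(G(\sZ_n))}\sfH(\bW_u,\bW_v)$ and note, from \eqref{eq:EH0}--\eqref{eq:condexpij0}, that conditionally on $\cF(\sZ_n)$ this is a sum of mean-zero terms whose pairwise conditional covariances vanish unless the index pairs coincide as unordered pairs (the observation already recorded after Proposition \ref{ppn:condexpH0}). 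Thus $\eta_n/\sigma_n$ is a standardized sum exhibiting exactly the local dependence structure induced by the $K$-NN graph: two summands $\sfH(\bW_u,\bW_v)$ and $\sfH(\bW_{u'},\bW_{v'})$ are conditionally dependent only if $\{u,v\}\cap\{u',v'\}\neq\varnothing$. I would package this into a dependency graph on the edge set $E(G(\sZ_n))$ and apply Stein's method for sums with local dependence — e.g.\ the normal-approximation bound of Chen--Shao or the version in Ross's survey — which yields a Kolmogorov-distance bound of the form
\begin{align*}
\sup_{z\in\R}\left|\P_{\bH_0}\!\left[\frac{\eta_n}{\sigma_n}\le z\,\middle|\,\cF(\sZ_n)\right]-\Phi(z)\right|\;\lesssim\; \frac{D^2}{\sigma_n^3}\sum_{(u,v)\in E(G(\sZ_n))}\E_{\bH_0}\!\left[|\sfH(\bW_u,\bW_v)|^3\,\middle|\,\cF(\sZ_n)\right]\;+\;(\text{a second, similar term}),
\end{align*}
where $D$ bounds the maximum degree in the edge-dependency graph. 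The key combinatorial input is that the $K$-NN graph has max in-degree $O(K)$ (by Assumption \ref{assumption:Kn} and the standard bound on the number of points having a given point among their $K$ nearest neighbors, cf.\ \cite{jaffe2020randomized,deb2020measuring}), so $D = O(K)$ and $|E(G(\sZ_n))| = O(nK)$.

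Next I would control the three ingredients of the bound. For the numerator, $\E[|\sfH(\bW_u,\bW_v)|^3\mid\cF(\sZ_n)]$ is handled by expanding $\sfH$ via \eqref{eq:defH}, applying the Cauchy--Schwarz/reproducing inequality $|\sfK(x,x')|\le \sfK(x,x)^{1/2}\sfK(x',x')^{1/2}$, and using the moment hypothesis $\int\sfK(x,x)^{4+\delta}\rmd P_X<\infty$ (and the analogous one for $P_Y$) — this gives a uniform-in-$(u,v)$ bound on, say, the $(4+\delta)$-th conditional moments, hence on third moments, after an unconditional expectation over $\cF(\sZ_n)$. For the denominator, the crucial point is a lower bound $\sigma_n^2 \gtrsim 1$ with high probability: using the representation \eqref{eq:condeta2exp}, $\sigma_n^2$ is an average over the $O(nK)$ edges of $f(Z_u,Z_v) = \E_{\bH_0}[\sfH^2\mid Z_u,Z_v]$, and the hypothesis $P_{XY}(X\neq Y)>0$ together with the characteristic property (Assumption \ref{assumption:K}) forces $\E f(Z,Z') = \E\sfH^2 > 0$; one then argues that the empirical edge-average concentrates around a positive constant (this is essentially what underlies Proposition \ref{prop:SnestvarH0}, which I may quote directly since it gives $\hat\sigma_n^2/\sigma_n^2 \to 1$ and, combined with an analogous statement for $\E_{\bH_0}\hat\sigma_n^2$, pins $\sigma_n^2$ away from $0$). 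Putting the pieces together, the Stein bound is $O\big(K\cdot nK/(nK)^{3/2}\big) = O\big(K^{3/2} n^{-1/2}\big)$ up to a term of the same order, which tends to $0$ under the stated scaling $K = o(n^{1/44})$ — in fact any $K = o(n^{1/3})$ would suffice for the conditional CLT itself; the more stringent $n^{1/44}$ presumably absorbs the rate in Proposition \ref{prop:SnestvarH0}.

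Finally, I would integrate: since the conditional Kolmogorov distance above converges to $0$ in probability (being bounded by $O(K^{3/2}n^{-1/2})$ after taking expectations and using Markov), and since it is bounded by $1$, dominated convergence gives $\sup_z|\P_{\bH_0}[\eta_n/\sigma_n\le z] - \Phi(z)|\to 0$; then Proposition \ref{prop:SnestvarH0}, which says $\hat\sigma_n^2/\sigma_n^2 = 1 + o_p(1)$, together with Slutsky's theorem upgrades this to $\eta_n/\hat\sigma_n \rightsquigarrow N(0,1)$, and since the limit law is continuous the convergence is again uniform in $z$, giving \eqref{eq:distributionH0}. I expect the main obstacle to be the high-probability lower bound on $\sigma_n^2$ — ensuring the normalizing constant does not degenerate — since this is where the characteristic-kernel assumption, the condition $P_{XY}(X\neq Y)>0$, and the concentration of the edge-average all have to be combined carefully; the Stein-method step itself is fairly mechanical once the $O(K)$ degree bound for the $K$-NN dependency graph is in hand.
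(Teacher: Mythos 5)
Your overall strategy matches the paper's --- a conditional Stein's-method CLT for $\eta_n/\sigma_n$ given $\cF(\sZ_n)$ via a local-dependence argument on the $K$-NN graph, a high-probability lower bound on $\sigma_n^2$ obtained from $P_{XY}(X\neq Y)>0$ and the characteristic kernel, and finally a Slutsky-type replacement of $\sigma_n$ by $\hat{\sigma}_n$ using Proposition~\ref{prop:SnestvarH0}. The genuine difference is in how you organize the summands. You index by the directed edges $(u,v)\in E(G(\sZ_n))$ and put an edge in the dependency graph between $(u,v)$ and $(u',v')$ precisely when $\{u,v\}\cap\{u',v'\}\neq\varnothing$; since the total (in$+$out) degree of a point in the $K$-NN graph is $O(K)$, this gives a dependency graph on $O(nK)$ summands with maximum degree $O(K)$. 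The paper instead groups at the vertex level, writing $\eta_n=\sum_u V_u$ with $V_u = (nK)^{-1/2}\sum_{v\in N_{G(\sZ_n)}(u)}\sfH(\bW_u,\bW_v)$, and links $u$ and $u'$ whenever they are within distance $2$ in the undirected $G^*(\sZ_n)$ (needed because $V_u$ and $V_{u'}$ can be coupled through a common out-neighbor), which gives $n$ summands but maximum degree $O(K^2)$. Your edge-level decomposition is cleaner and has a smaller dependency degree, which is what underlies your (correct) observation that the conditional CLT by itself should tolerate $K$ growing faster than $n^{1/44}$; the paper's vertex-level grouping has fewer, heavier summands, and is what makes the third-moment computation in Lemma~\ref{lemma:bddV3} a single pass over vertices. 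One point worth tightening before calling your version a proof: the Kolmogorov-distance bound $\lesssim D^2\sigma_n^{-3}\sum\E[|\sfH|^3\mid\cF]$ you write down is the shape of the \emph{Wasserstein} dependency-graph bound (Ross's survey, Thm.~3.6); for Kolmogorov with unbounded summands you need a Berry--Esseen-type local-dependence theorem such as Chen--Shao \cite[Thm.~2.7]{chen2004normal} (which is exactly what the paper invokes, and is the source of the large $K^{20}$ factor that inflates the exponent to $n^{1/44}$), or you pay a square-root loss passing through Wasserstein. Finally, your treatment of the denominator is right in spirit but slightly over-claims: Proposition~\ref{prop:SnestvarH0} alone gives $\hat\sigma_n^2/\sigma_n^2\to1$, which does not by itself keep $\sigma_n^2$ away from zero; the paper handles this separately via Lemma~\ref{lemma:nepscondvarinfty} (built from Lemmas~\ref{lemma:condH2convg} and~\ref{lemma:varlimitnon0}), which shows $\sigma_n^2$ converges in probability to a strictly positive constant, precisely by the $P_{XY}(X\neq Y)>0$ and characteristic-kernel arguments you anticipate.
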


The proof of Theorem \ref{thm:CLT} is given in Appendix \ref{sec:CLTpf}. The proof entails showing that 
   \begin{align}\label{eq:distributionH0pf}
        \sup_{z\in\R} \left|\P_{\bH_0}\left[\frac{\eta_{n}}{\sqrt{\Var_{\bH_0}[\eta_n|\cF(\sZ_n)]}}\leq z\right] - \Phi(z)\right|\ra 0 . 
    \end{align}
The result in \eqref{eq:distributionH0} follows by combining the above with Proposition \ref{prop:SnestvarH0}. To show \eqref{eq:distributionH0pf} we use the Stein's method based on dependency graphs \cite{chen2004normal}, which allows us to control the 
Kolmogorov distance between $\eta_{n}/\hat{\sigma}_n$ and $N(0, 1)$ in terms 
of maximum degree of the $K$-NN graph. 

To choose the rejection threshold for $\hat \eta_n$ based on Theorem \ref{thm:CLT}, fix $\alpha \in (0, 1)$ and consider the test function:  

\begin{align}\label{eq:defphin}
    \phi_{n} := \bm{1}\left\{\left|\eta_n/\hat{\sigma}_n\right|>z_{\alpha/2}\right\} , 
\end{align}
where $\eta_n$ and $\hat{\sigma}_n$ are as defined in \eqref{eq:defetan} and \eqref{eq:defSn}, respectively. Theorem \ref{thm:CLT} directly implies that $\phi_{n}$ is asymptotically level $\alpha$ for $\bH_0$ as in \eqref{eq:H0H1hypothesis}, that is, 
$$\lim_{n\ra\infty} \P_{\bH_0}\left[\phi_{n} = 1\right] = \alpha.$$ 
Since $\EMMD$ characterizes the equality of the conditional distributions and \eqref{eq:estEMMD} is a consistent estimate of $\EMMD$ (recall Theorem \ref{thm:consistency}), the test $\phi_n$ is consistent for fixed alternatives. This is summarized in the following result (see Appendix \ref{sec:consistencyH0pf} for the proof). 
  
\begin{corollary}\label{cor:asymptest} 
Suppose the assumptions of Theorem \ref{thm:CLT} hold. Then for any $P_{XYZ} \in \bH_1$, $$\lim_{n\ra\infty} \P_{\bH_1}\left[\phi_n = 1\right] = 1.$$ 
\end{corollary}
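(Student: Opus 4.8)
The plan is to show that under any fixed alternative $P_{XYZ}\in\bH_1$ the studentized statistic $\eta_n/\hat\sigma_n$ appearing in \eqref{eq:defphin} diverges to $+\infty$ in probability, so that the test rejects with probability tending to $1$. This decomposes into two parts: (i) the numerator $\eta_n$ blows up, and (ii) the denominator $\hat\sigma_n$ stays bounded in probability.

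For (i), observe first that the moment hypotheses of Theorem \ref{thm:CLT} imply those of Theorem \ref{thm:consistency}: since $\sfK$ is positive definite, $\sfK(x,x)\ge 0$, so a finite $(4+\delta)$-th moment of $\sfK(X,X)$ (resp. $\sfK(Y,Y)$) forces a finite $(2+\delta)$-th moment; moreover $K=o(n^{1/44})$, which is in particular $o(n/\log n)$. Hence Theorem \ref{thm:consistency} applies and gives $\EMMD^2[\sfK,\bcW_n,\sZ_n]\pto\Delta$, where $\Delta:=\EMMD^2[\cF,P_{X|Z},P_{Y|Z}]$; since we are under $\bH_1$, Proposition \ref{ppn:EMMD0} shows $\Delta>0$. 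As $\eta_n=\sqrt{nK}\,\EMMD^2[\sfK,\bcW_n,\sZ_n]$ and $\sqrt{nK}\ge\sqrt n\to\infty$, we conclude $\eta_n\pto+\infty$, and in particular $|\eta_n|\pto\infty$.

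For (ii), I would establish $\hat\sigma_n^2=O_p(1)$ directly from the moment assumptions and the geometry of the nearest-neighbor graph. By the reproducing property and Cauchy--Schwarz, $|\sfH(\bw,\bw')|\le\Psi(\bw)\Psi(\bw')$ for $\bw=(x,y)$, $\bw'=(x',y')$, where $\Psi(x,y):=\sfK(x,x)^{1/2}+\sfK(y,y)^{1/2}$; hence $\sfH^2(\bW_u,\bW_v)\le\tfrac12\bigl(\Psi(\bW_u)^4+\Psi(\bW_v)^4\bigr)$ by the AM--GM inequality. In the directed $K$-NN graph $G(\sZ_n)$ each vertex has out-degree exactly $K$ and, by Assumption \ref{assumption:Kn}, in-degree at most $c_d K$ almost surely, where $c_d$ depends only on $d$ (a point of $\R^d$ lies among the $K$ nearest neighbors of at most $c_d K$ other points). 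Bounding both indicator sums in \eqref{eq:defSn} by $\sum_{(u,v)\in E(G(\sZ_n))}\sfH^2(\bW_u,\bW_v)$ and combining the degree bounds with the previous inequality yields
\[
\E_{\bH_1}\!\left[\hat\sigma_n^2\right]\;\le\;C_d\,\E_{\bH_1}\!\left[\Psi(\bW_1)^4\right]\;<\;\infty ,
\]
for a constant $C_d$ depending only on $d$, where finiteness follows from $\Psi(\bW_1)^4\le 8\bigl(\sfK(X_1,X_1)^2+\sfK(Y_1,Y_1)^2\bigr)$ and the assumed $(4+\delta)$-th moment bounds. Markov's inequality then gives $\hat\sigma_n^2=O_p(1)$. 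No lower bound on $\hat\sigma_n$ is needed: on $\{\hat\sigma_n=0\}$ the test already rejects whenever $\eta_n\neq 0$.

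To conclude, fix $\varepsilon>0$ and, using (ii), choose $M>0$ with $\limsup_n\P_{\bH_1}[\hat\sigma_n>M]<\varepsilon$. Since $z_{\alpha/2}>0$ we have $\{|\eta_n|>Mz_{\alpha/2}\}\cap\{\hat\sigma_n\le M\}\subseteq\{\,|\eta_n/\hat\sigma_n|>z_{\alpha/2}\,\}=\{\phi_n=1\}$, so
\[
\liminf_{n\to\infty}\P_{\bH_1}[\phi_n=1]\;\ge\;\liminf_{n\to\infty}\P_{\bH_1}\!\left[|\eta_n|>Mz_{\alpha/2}\right]-\limsup_{n\to\infty}\P_{\bH_1}[\hat\sigma_n>M]\;\ge\;1-\varepsilon ,
\]
where the first term on the right equals $1$ by $|\eta_n|\pto\infty$ from (i). Letting $\varepsilon\downarrow0$ gives $\P_{\bH_1}[\phi_n=1]\to1$. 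The main (and essentially only) obstacle is the uniform-in-$n$ bound on $\E_{\bH_1}[\hat\sigma_n^2]$ in step (ii); once one invokes the standard fact that in-degrees in a Euclidean $K$-NN graph are $O(K)$ (guaranteed by Assumption \ref{assumption:Kn}), the remaining estimates are routine, and part (i) is an immediate consequence of Theorem \ref{thm:consistency} and Proposition \ref{ppn:EMMD0}.
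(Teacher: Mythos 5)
Your proof has the same two-step skeleton as the paper's: (i) $\eta_n=\sqrt{nK}\,\emmdstatsq$ diverges under $\bH_1$ because the estimator is consistent (Theorem \ref{thm:consistency}) for a limit that is strictly positive by Proposition \ref{ppn:EMMD0}, and (ii) $\hat\sigma_n=O_P(1)$ under $\bH_1$, so the studentized statistic explodes. Step (i) is identical, including the observation that the hypotheses of Theorem \ref{thm:CLT} are stronger than those of Theorem \ref{thm:consistency}. Step (ii) is where the two proofs diverge in method, though not in conclusion. The paper bounds $\hat\sigma_n^2$ by $\frac{2}{nK}\sum_{(u,v)\in E}\sfH^2(\bW_u,\bW_v)$ and then invokes Lemma \ref{lemma:condH2convg} together with \eqref{eq:exphZfinite} to conclude this term has bounded expectation, i.e.\ it appeals to the convergence of that expectation to $\E[\|h_Z\|^2_{\cH\otimes\cH}]$. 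You instead prove boundedness of the expectation directly: the Cauchy--Schwarz/triangle-inequality bound $|\sfH(\bw,\bw')|\le\Psi(\bw)\Psi(\bw')$, with $\Psi(x,y)=\sfK(x,x)^{1/2}+\sfK(y,y)^{1/2}$, combined with the AM--GM step and the fact (used repeatedly in the paper, via \cite{jaffe2020randomized}) that total degree in the Euclidean $K$-NN graph is $O(K)$, gives $\E_{\bH_1}[\hat\sigma_n^2]\le C_d\,\E[\Psi(\bW_1)^4]<\infty$ directly from the assumed fourth-moment bounds. Your route is more elementary and self-contained (no tensor-product RKHS, no limit identification), whereas the paper's has the side benefit of reusing Lemma \ref{lemma:condH2convg}, which it needs elsewhere anyway for the variance analysis. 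Both arguments are correct; the concluding $\liminf$/$\limsup$ bookkeeping you give is routine and handles the degenerate event $\{\hat\sigma_n=0\}$ adequately.
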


\begin{remark}\label{remark:distribution}
The ECMMD test has several interesting features which are in contrast to other kernel and nearest-neighbor based nonparametric tests: 
\begin{itemize} 
\item The limiting distribution of the statistic $\eta_{n}$ is normal under $\bH_0$ and the asymptotic variance has a simple form, which can be easily estimated from the data. Consequently, the rescaled test statistic $\eta_{n}/\hat{\sigma}_n$ converges to $N(0, 1)$ under $\bH_0$ and we can obtain the rejection threshold as in \eqref{eq:defphin}, without having to estimate any nuisance parameter. In other words, $\eta_{n}/\hat{\sigma}_n$ is {\it asymptotically distribution-free}, that is, its limiting distribution under  $\bH_0$ does not depend on the unknown distribution of the data. In contrast, the familiar kernel MMD statistic for the (unconditional) two-sample problem has a non-Gaussian (specifically, an infinite mixture of chi-squares) limiting distribution under the null \cite{gretton2012kernel}. Closed form estimates for the quantiles of such distributions are not available, in general,  which necessitates the use permutation/bootstrap resampling techniques or conservative approximations based on concentration inequalities for determining the rejection thresholds \cite{schrab2023mmd,gretton2009fast,chatterjee2023boosting}. We circumvent this issue through the use of nearest neighbor graphs (on the space of the covariate variable), which mitigates the dependence among the summands in \eqref{eq:defetan} in such a way that, although the kernel $\sfH$ is degenerate under $\bH_0$, the asymptotic distribution of $\eta_n$ is normal. 
\item Another important property of the $\EMMD$ test is that the statistic $\eta_n$ is unbiased (has mean zero) under $\bH_0$ (recall \eqref{eq:EH0}). This is different from the recent work on conditional independence testing based on nearest neighbors \cite{azadkia2021simple,shi2021azadkia,huang2022kernel}, where the test statistics have a non-zero bias under the null, and, hence, cannot be directly used for inference, without additional de-biasing. The bias issue also appears in density-estimation based methods, both for the conditional independence \cite{wang2015conditional} and the conditional 2-sample problem \cite{yan2022nonparametric}, which is usually handled by choosing dimension dependent smoothing bandwidths. We, on the other hand, are able to cancel the bias because of the paired nature of the samples (recall that for each $Z_i$ we have paired samples $(X_i, Y_i)$) and through the use of nearest-neighbors. 
\end{itemize} 
\end{remark}

\section{A Resampling Based Conditional Goodness-of-Fit Test Using $\EMMD$}\label{sec:finitetest}

In this section, drawing parallel from the model-$X$ framework for conditional independence testing \cite{candes2018panning}, we design a resampling based test the hypothesis  \eqref{eq:H0H1hypothesis} that controls Type I error in finite samples when it is possible to efficiently sample from one of the conditional distributions $P_{X|Z}$ or $P_{Y|Z}$. This principle applies more broadly to the conditional goodness-of-fit problem which entails testing 
the hypothesis in \eqref{eq:H0H1hypothesis}, when one of the conditional distributions is specified. Specifically, suppose we are given on i.i.d. samples $(Y_1, Z_1),\ldots,(Y_n, Z_n)$ from the joint distribution $ P_{YZ}$ and we wish to test the hypothesis 
\begin{align}\label{eq:H0H1finite}
\bH_0: P_{Z}\left[P_{Y|Z}=P_{X|Z}\right] = 1 \quad \text{ versus } \quad \bH_1: P_{Z}\left[P_{Y|Z} \ne P_{X|Z}\right]>0,
\end{align} 
where $P_{X|Z}$ is a specified conditional distribution.

\begin{algorithm}[!h]
\raggedright \hspace*{\algorithmicindent} \caption{ A finite sample conditional goodness-of-fit test }

\begin{enumerate}[label=\textrm{(\arabic*)}]
    \item For each $1\leq u \leq n$, generate i.i.d. samples $\left(X_{u}^{(1)},\ldots,X_{u}^{(M+1)}\right)$ from the distribution $ P_{X|Z=Z_u}$.

    \item Denote by $\bm W_{u}^{(m)}= (X_u^{(M+1)}, X_u^{(m)})$, for $1 \leq m \leq M$, and $\bm W_{u}^{(M+1)}= (X_u^{(M+1)}, Y_u)$, for $1\leq u \leq n$. Define 
\begin{align}\label{eq:defetanm}
    \eta_{n}^{(m)}=\frac{1}{\sqrt{nK}}\sum_{u=1}^{n}\sum_{v \in N_{G(\sZ_n)}(u)}\sfH(\bW_{u}^{(m)},\bW_{v}^{(m)}), 
    \end{align} 
    for $1 \leq m \leq M+1$, where $\sfH$ is defined in \eqref{eq:defH} and $G(\sZ_n)$ is the $K$-NN graph of the data points $\sZ_n = \{Z_1,\ldots,Z_n\}$.
    
    \item Report the $p$-value 
      \begin{align}\label{eq:Mfinite}
        p_{M}:=\frac{1}{M+1}\left[1+\sum_{m=1}^M \mathbf{1}\bigg\{\left|\eta_{n}^{(m)}\right| \geq \left|\eta_{n}^{(M+1) }\right|\bigg\}\right].
    \end{align} 
   
\end{enumerate}

\label{algorithm:finitesample}
\end{algorithm}

In Algorithm \ref{algorithm:finitesample} we develop a resampling based test for \eqref{eq:H0H1finite}. Note that \eqref{eq:Mfinite} in Algorithm \ref{algorithm:finitesample} is a valid $p$-value because the collection $\{\eta_{n}^{(1)},\eta_{n}^{(2)},\ldots,\eta_{n}^{(M+1)}\}$ is exchangeable conditional on $\sZ_n$ when $ P_{X|Z} =  P_{Y|Z}$ almost surely. 

Consequently, the resulting test controls Type I error in finite samples. 
This is formalized in the following result which also establishes the asymptotic consistency of test with a finite number of resamples (see Appendix \ref{sec:H0hypothesispf} for the proof): 

\begin{proposition}\label{ppn:finitetest}
 Fix $\alpha \in (0, 1)$  and consider the test function $\tilde \phi_{n, M} = \bm{1}\left\{p_{M}\leq \alpha\right\}$, with $p_M$ as in \eqref{eq:Mfinite}. Then the following hold: 
\begin{itemize}
\item[$(1)$] $\P_{\bH_0}[\tilde \phi_{n, M} = 1]\leq \alpha$. 

\item[$(2)$] Suppose the kernel $\sfK$ satisfies $\int \sfK(x,x)^{2+\delta}\mathrm{d} P_X(x)<\infty$, and  $\int \sfK(x,x)^{2+\delta}\mathrm{d} P_Y(x)<\infty$, for some $\delta>0$. Then with $K = o(n/\log n)$ and any $P_{XYZ} \in \bH_1$ (that is, $P_{Y|Z}\ne P_{X|Z}$), $\lim_{n \rightarrow \infty} \P_{\bH_1}[\tilde \phi_{n, M} = 1] = 1$, whenever $M>\frac{1}{\alpha}-1$.
\end{itemize}
\end{proposition}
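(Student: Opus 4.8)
The plan is to handle the two parts separately: part $(1)$ by an exchangeability/rank argument, and part $(2)$ by combining Theorem \ref{thm:consistency} with a union bound.

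\emph{Part $(1)$: finite-sample validity.} I would condition on the $\sigma$-algebra $\cG_n := \sigma(\sZ_n, \{X_u^{(M+1)} : 1 \le u \le n\})$ generated by the covariates together with the ``anchor'' draws. Under $\bH_0$ one has $P_{X|Z=Z_u} = P_{Y|Z=Z_u}$, and for each $u$ the variables $X_u^{(1)}, \ldots, X_u^{(M)}, Y_u$ are, conditionally on $Z_u$, i.i.d.\ from this common distribution and independent of $X_u^{(M+1)}$, with everything independent across $u$. The key observation is then that, conditionally on $\cG_n$, the ``profiles'' $V^{(m)} := (X_1^{(m)}, \ldots, X_n^{(m)})$ for $1 \le m \le M$, together with $V^{(M+1)} := (Y_1, \ldots, Y_n)$, are exchangeable; and since $\eta_n^{(m)}$ is obtained by applying one and the same $\cG_n$-measurable map (built only from $G(\sZ_n)$, the anchors and $\sfH$ --- note $\bW_u^{(m)}$ always has first coordinate $X_u^{(M+1)}$ and second coordinate the $u$-th entry of $V^{(m)}$) to the profile $V^{(m)}$, the vector $(\eta_n^{(1)}, \ldots, \eta_n^{(M+1)})$ inherits this exchangeability given $\cG_n$, hence also given $\sZ_n$. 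The standard rank lemma for exchangeable statistics (as in the model-$X$/conformal literature) then yields $\P_{\bH_0}[p_M \le \alpha \mid \cG_n] \le \alpha$ almost surely --- the ``$1+$'' in the numerator of \eqref{eq:Mfinite} together with the ``$\ge$'' making it conservative in the presence of ties --- and integrating gives $\P_{\bH_0}[\tilde\phi_{n,M} = 1] = \P_{\bH_0}[p_M \le \alpha] \le \alpha$.

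\emph{Part $(2)$: consistency under $\bH_1$.} Here the crucial point is that $\eta_n^{(M+1)}/\sqrt{nK}$ is precisely the estimator \eqref{eq:estEMMD} evaluated on the sample $\{(X_u^{(M+1)}, Y_u, Z_u)\}_{u=1}^n$, in which $X_u^{(M+1)}$ and $Y_u$ are drawn independently from $P_{X|Z=Z_u}$ and $P_{Y|Z=Z_u}$. Since both the population $\EMMD^2$ and its nearest-neighbor estimator depend on the data only through the conditional marginals $P_{X|Z}$ and $P_{Y|Z}$ --- the extra conditional independence $X^{(M+1)} \indep Y \mid Z$ plays no role, and the bias and variance steps in the proof of Theorem \ref{thm:consistency} (Lemma \ref{lemma:consistency2} and Lemma \ref{lemma:consistency1}) go through verbatim --- Theorem \ref{thm:consistency} gives $\eta_n^{(M+1)}/\sqrt{nK} \pto \EMMD^2[\cF, P_{X|Z}, P_{Y|Z}] =: c$, and $c > 0$ under $\bH_1$ by Proposition \ref{ppn:EMMD0}. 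In the same way, for each fixed $m \le M$ the quantity $\eta_n^{(m)}/\sqrt{nK}$ is the same estimator evaluated on $\{(X_u^{(M+1)}, X_u^{(m)}, Z_u)\}_{u=1}^n$ with $X_u^{(M+1)}, X_u^{(m)}$ i.i.d.\ from $P_{X|Z=Z_u}$; this is the ``null configuration'' $P_{X|Z} = P_{X|Z}$, so $\eta_n^{(m)}/\sqrt{nK} \pto 0$. Then $\{\eta_n^{(m)} \ge \eta_n^{(M+1)}\} \subseteq \{\eta_n^{(m)}/\sqrt{nK} \ge c/2\} \cup \{\eta_n^{(M+1)}/\sqrt{nK} \le c/2\}$ has probability tending to $0$ for each $m \le M$, and a union bound gives $\P_{\bH_1}[\sum_{m=1}^M \one\{\eta_n^{(m)} \ge \eta_n^{(M+1)}\} \ge 1] \to 0$, so $p_M = 1/(M+1)$ with probability tending to $1$. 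Since $M > \frac1\alpha - 1$ is equivalent to $1/(M+1) < \alpha$, this forces $\P_{\bH_1}[\tilde\phi_{n,M} = 1] \ge \P_{\bH_1}[p_M = 1/(M+1)] \to 1$.

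\emph{Main obstacle.} The substantive part is the exchangeability in $(1)$: choosing the right conditioning $\sigma$-algebra and checking that, under $\bH_0$, relabeling $Y_u$ as ``draw $M+1$'' and $X_u^{(m)}$ as ``draw $m$'' --- simultaneously for all $u$ --- leaves the joint law of the profiles, and hence of $(\eta_n^{(1)}, \ldots, \eta_n^{(M+1)})$, invariant. After that, both the rank lemma and part $(2)$ (modulo invoking Theorem \ref{thm:consistency}) are routine.
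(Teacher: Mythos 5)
Your proof is correct and takes essentially the same route as the paper's: conditional exchangeability of $\{\eta_n^{(m)}\}_{m=1}^{M+1}$ plus the standard rank lemma for part $(1)$, and convergence of $\eta_n^{(M+1)}/\sqrt{nK}$ to $\EMMD^2>0$ versus $\eta_n^{(m)}/\sqrt{nK}\to 0$ together with a union bound for part $(2)$. The only cosmetic difference is that you condition on the larger $\sigma$-algebra $\cG_n$ that also fixes the anchor draws $X_u^{(M+1)}$, which makes the ``same $\cG_n$-measurable map applied to exchangeable profiles'' step completely explicit, whereas the paper conditions only on $\cF(\sZ_n)$ and asserts the exchangeability of the $\eta_n^{(m)}$'s directly (this is fine, since the anchor stream is conditionally independent of the $M+1$ exchangeable profiles, so the argument goes through either way); your version is arguably slightly cleaner, and you also correctly note the conservative ``$\leq\alpha$'' behavior in the presence of ties where the paper writes an equality that only holds almost surely in the tie-free case.
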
 

In the following we summarize the current state-of-the-art in nonparametric conditional goodness-of-fit testing (Section \ref{sec:survey}), discuss how the resampling based test in Algorithm \ref{algorithm:finitesample} fits into this literature and its relevance in modern machine learning problems (Section \ref{sec:resampling}), and propose a de-randomized version of the resampling test and study its asymptotic properties (Section \ref{sec:Derandom}).

\subsection{Prior Work on Conditional Goodness-of-Fit Testing}
\label{sec:survey}

The conditional goodness-of-fit problem has its roots in the econometrics literature, beginning with the work of Andrews \cite{andrews1997conditional}, which extended the classic Kolmogorov test to the conditional case. Thereafter, other methods for the conditional goodness-of-fit problem have been proposed, however, these tests either involve density estimation \cite{zheng2000consistent}, which require decaying smoothing bandwidths that can be difficult to control, or are designed for specific families of conditional models, such as structural equation models \cite{moreira2003conditional} or generalized linear models \cite{stute2002model}. Recently, \citet{jitkrittum2020testing} proposed a general nonparametric test for the conditional goodness-of-fit problem that does not require any density estimation or knowledge of the normalizing constant of the conditioning distribution. Specifically, the method extends the well-known Kernel Stein Discrepancy (KSD) \cite{liu2016kernelized} to the conditional setting, referred to as the Kernel Conditional Stein Discrepancy (KCSD), and only requires knowledge of the score-function of the conditional distribution $P_{X|Z}$. While the KCSD method circumvents several of the limitations of previous density estimation based methods, it still remains inapplicable in situations where $P_{X|Z}$ is implicitly defined and one only has access to samples from the conditional distribution. In contrast, the resampling based method described above can be readily applied in such situations. We elaborate on this in the next section.

\subsection{When is Resampling Useful?} 
\label{sec:resampling}

Any goodness-of-fit problem can be transformed into a 2-sample problem by repeatedly sampling from the known null distribution. Therefore, it is no surprise that the $\EMMD$ statistic, which is a measure of difference between two conditional distributions, can be calibrated for the conditional goodness-of-fit problem through resampling. Friedman \cite[Section 4]{friedman2003} summarizes this principle succinctly as follows: `Using this additional
information has the potential for increased power at the expense of having to generate many Monte Carlo samples, instead of just one.' We are not advocating that one should always resort to resampling for conditional goodness-of-fit testing, but there are important cases where it might be reasonable to apply the finite sample test in Proposition \ref{ppn:finitetest}, over  existing conditional goodness-of-fit methods. This is indeed the case for two of the examples considered in this paper: (1) calibration testing and (2) validation of emulator models in SBI, as explained below: 

\begin{itemize}

\item[(1)] Calibration testing for classification is an example of a conditional goodness-of-fit problem that can be implemented as a conditional 2-sample problem by sampling from the known distribution $P_{X|f(Z)}$  (recall \eqref{eq:classification}). We illustrate the numerical performance of  this method in Section \ref{sec:classification}. We also compare our method with the test in \cite{widmann2019calibration}, which is based on directly estimating kernel versions of the expected calibration error (ECE) (see Appendix \ref{sec:calibrationfigures} for the definition of ECE). Beyond the classification setting, that is, for continuous response (see Section \ref{sec:regression} for more details), kernel based ECE measures for testing calibration require taking expectations against a specified (usually intractable) probabilistic model and involve kernels on the space of distributions \cite{widmann2022calibration}. Although these issues have been mitigated in the recent work based on KCSD \cite{glaser2023kernelscore}, it still requires the score function of the generative model to be available in closed form. On the other hand, our formulation of the calibration problem as a conditional 2-sample hypothesis only requires sample access from the conditional distribution $P_{X|f(Z)}$, which is usually readily available given the trained model $f(Z)$. Consequently, our proposed method can be applied easily to categorical and continuous responses. In fact, from the numerical experiments in Section \ref{sec:imagecalibration} we will see that even the asymptotic test, which only requires sampling a single $X_i$ for each observation $(Y_i, f(Z_i))$, is powerful in a variety of examples.

\item[(2)] Validation tests of emulator models or approximate posteriors is another instance of a conditional goodness-of-fit problem that can be naturally operationalized as a conditional 2-sample problem. Here, $P_{X|Z}$ in \eqref{eq:H0H1finite} corresponds to $p(\cdot| \theta)$ (the density/score function of simulator) or $p(\cdot|X)$ (the true posterior distribution), and 
$P_{Y|Z}$ corresponds to either the emulator distribution $q(\cdot| \theta)$ as in \eqref{eq:SBItest} or the approximate posterior $q(\cdot|X)$ as in \eqref{eq:sbiposterior}, respectively. In either case, there is usually no tractable form $P_{X|Z}$. Moreover, it is usually easier to sample from $q(\cdot| \theta)$ or $q(\cdot|X)$ than to deal with their actual functional forms. In this situation, following Friedman's aphorism, we can validate the performance of the emulator (test the hypothesis \eqref{eq:SBItest}) or the approximate posterior (test the hypothesis \eqref{eq:sbiposterior}) using the finite-sample test \eqref{eq:Mfinite}, by repeatedly sampling from the emulator, or the asymptotic test \eqref{eq:defphin}, with a single set of samples (recall the discussion after \eqref{eq:SBItest} and \eqref{eq:sbiposterior}). We discuss examples in Section \ref{sec:sbiexperiments}.
\end{itemize} 

\subsection{A Derandomized  Asymptotic Test Based on ECMMD}\label{sec:Derandom}

One issue with the test in Algorithm \ref{algorithm:finitesample} (as well as the test from Section \ref{sec:H0}) is that it is a randomized procedure, that is, different runs of the algorithm produce different $p$-values, which can lead to inconsistent conclusions. In this section we propose a de-randomization method that, instead of calculating the $\EMMD$  statistic for each run of the algorithm as in \eqref{eq:defetanm}, computes a single test statistic by averaging the kernel discrepancies over $M = M_n$ resamples for each given $Z= Z_u$, for $1 \leq u \leq n$, as described below:

\begin{itemize}
    \item[(1)] For each $1\leq u\leq n$ generate i.i.d. samples $(X_{u}^{(1)},\ldots, X_{u}^{(M_n)})$ from the distribution $P_{X|Z = Z_u}$ independent of $Y_u$ given $Z_u$.
    \item[(2)] Denoting $\bW_{u}^{(m)} = (X_u^{(m)}, Y_{u})$, for $1\leq m\leq M_n$ and $1\leq u\leq n$ define the de-randomized test statistic as follows:
        \begin{align}\label{eq:defDn}
            D_n = \frac{1}{nK}\sum_{u=1}^{n}\sum_{v\in N_{G(\sZ_n)}(u)}\frac{1}{M_n}\sum_{m=1}^{M_n}\sfH(\bW_{u}^{(m)},\bW_{v}^{(m)}) . 
        \end{align}
\end{itemize}

Note that for $M_n=1$ the statistic $D_n$ equals the estimate $\emmdstatsq$ defined in \eqref{eq:estEMMD}. The averaging step in \eqref{eq:defDn} is meant to mitigate the sensitivity to the resampling uncertainty. To choose the rejection threshold for $D_n$, we now investigate its asymptotic properties. First, we show that $D_n$ consistently estimates the population $\EMMD$, as $n \rightarrow \infty$, irrespective of the choice of $M_n$ (see Appendix \ref{sec:proofofDnconsistency} for the proof): 

\begin{theorem}\label{thm:consistencyDn}
    Suppose Assumption \ref{assumption:K} and Assumption \ref{assumption:Kn} hold. Moreover, suppose the kernel $\sfK$ satisfies $\int \sfK(x,x)^{2+\delta}\mathrm{d} P_X(x)<\infty$, and  $\int \sfK(x,x)^{2+\delta}\mathrm{d} P_Y(x)<\infty$, for some $\delta>0$. Then with $K = o(n/\log n)$,
    \begin{align*}
        D_n\pto \EMMD^2\left[\cF,P_{X|Z},P_{Y|Z}\right],
    \end{align*}
    where $\EMMD^2\left[\cF,P_{X|Z},P_{Y|Z}\right]$ is defined in  \eqref{eq:defEMMD}.
\end{theorem}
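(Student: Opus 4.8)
The plan is to mimic the two-step strategy already used to prove Theorem \ref{thm:consistency}: show that $\E[D_n]$ converges to the population $\EMMD^2$, and then show that $\Var[D_n] \to 0$. Writing $\bar\sfH_{uv} := \frac{1}{M_n}\sum_{m=1}^{M_n}\sfH(\bW_u^{(m)},\bW_v^{(m)})$, we have $D_n = \frac{1}{nK}\sum_u \sum_{v\in N_{G(\sZ_n)}(u)} \bar\sfH_{uv}$. For the expectation, I would condition on $\sZ_n$ and on the edge structure of $G(\sZ_n)$. For a directed edge $(u,v)$ with $u\ne v$, the resamples $X_u^{(m)}$ (resp. $X_v^{(m)}$) are i.i.d. from $P_{X|Z=Z_u}$ (resp. $P_{X|Z=Z_v}$), independent across $m$ and, crucially, independent of $Y_u$ given $Z_u$; hence $\E[\bar\sfH_{uv}\mid \sZ_n] = \E[\sfH(\bW_u^{(1)},\bW_v^{(1)})\mid Z_u,Z_v]$, which does not depend on $M_n$ at all. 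This is exactly the same conditional-edge expectation that appears in the proof of Theorem \ref{thm:consistency} (with the role of $\bW_u=(X_u,Y_u)$ played by $\bW_u^{(1)}=(X_u^{(1)},Y_u)$), so the argument of Lemma \ref{lemma:consistency2} — that averaging the centered kernel $\sfH$ over the $K$ nearest neighbors of $Z=z$ yields an asymptotically unbiased estimate of $\MMD^2$ between $\mu_{P_{X|Z=z}}$ and $\mu_{P_{Y|Z=z}}$ — applies verbatim and gives $\E[D_n]\to \EMMD^2[\cF,P_{X|Z},P_{Y|Z}]$ under $K=o(n/\log n)$ and the stated $(2+\delta)$ moment condition.

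For the variance, I would again use the Efron–Stein inequality, now treating $D_n$ as a function of the independent input blocks $\{(Z_u,Y_u,X_u^{(1)},\ldots,X_u^{(M_n)})\}_{u=1}^n$ (one block per index $u$, mutually independent). Perturbing block $u$ changes only those summands $\bar\sfH_{u'v'}$ with $u\in\{u',v'\}$; by the local-dependence property of the $K$-NN graph there are $O(K)$ edges out of $u$ and, by the standard bound on in-degrees (Assumption \ref{assumption:Kn}), $O(K)$ edges into $u$ as well, since the point $Z_u$ can be among the $K$ nearest neighbors of at most $O(K)$ other points with probability tending to one. Each affected $\bar\sfH$ is an average of $M_n$ bounded-in-$L^2$ terms, so its $L^2$ norm is controlled by the same kernel moment bound used before, uniformly in $M_n$; in fact Jensen makes the averaged quantity no larger in $L^2$ than a single $\sfH$ term. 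Summing the squared per-coordinate differences over the $n$ blocks and dividing by $(nK)^2$ then yields $\Var[D_n] = O(K/n) = o(1)$, exactly as in Lemma \ref{lemma:consistency1}. Combining the two steps with Chebyshev's inequality gives $D_n \pto \EMMD^2[\cF,P_{X|Z},P_{Y|Z}]$.

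The step that requires the most care — though it is not deep — is making the Efron–Stein bookkeeping uniform in $M_n$: one must check that replacing a single $\sfH(\bW_u^{(m)},\bW_v^{(m)})$ by an average over $m$ does not worsen any of the moment or degree bounds (it does not, by Jensen and the independence of the resamples across $m$), and that the in-degree control from Assumption \ref{assumption:Kn} is unaffected since $G(\sZ_n)$ depends only on $\sZ_n$ and not on the resamples. Because the bias computation is genuinely independent of $M_n$ and the variance bound is only improved by averaging, no new ideas beyond those in Appendix \ref{sec:proofofconsistency} are needed; I would therefore phrase the proof in Appendix \ref{sec:proofofDnconsistency} as a reduction to Lemmas \ref{lemma:consistency2} and \ref{lemma:consistency1}, indicating the (minor) modifications at each place where $\bW_u$ is replaced by the resampled vectors.
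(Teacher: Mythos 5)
Your proposal is correct and follows essentially the same two-step reduction as the paper's proof in Appendix \ref{sec:proofofDnconsistency}: show $\E[D_n]\to\EMMD^2[\cF,P_{X|Z},P_{Y|Z}]$ by observing that $\E[\bar\sfH(\tilde\bW_u,\tilde\bW_v)\mid Z_u,Z_v]=\langle g_{Z_u},g_{Z_v}\rangle_{\cH}$ is independent of $M_n$ (so Lemma \ref{lemma:consistency2} applies), and control $\Var[D_n]$ by the same Efron--Stein argument as Lemma \ref{lemma:consistency1}, using Cauchy--Schwarz (or Jensen) and the exchangeability across $m$ to bound $\E[(\max_{u\ne v}|\bar\sfH(\tilde\bW_u,\tilde\bW_v)|)^2]$ by the single-resample quantity $\E[(\max_{u\ne v}|\sfH(\bW_u^{(1)},\bW_v^{(1)})|)^2]$. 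The only cosmetic difference is that the degree bound for the $K$-NN graph on $\sZ_n$ (from \cite[Lemma 1]{jaffe2020randomized}) is deterministic, not merely with high probability as you phrase it, but this does not affect the argument.
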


Next, we show that a studentized version of $D_n$ converges to $N(0, 1)$ under the null hypothesis. Towards this, define the studentization factor: 
\begin{align}\label{eq:estimateM}
    \hat{\tau}_n^2 := \frac{1}{nK}\sum_{\substack{1\leq u\leq n\\v\in N_{G(\sZ_n)}(u)}}\left(\frac{1}{M_n}\sum_{m=1}^{M_n}\sfH\left(\bW_{u}^{(m)},\bW_{v}^{(m)}\right)\right)^2 \left(\one\{\cE_{u, v}\} +  \one\{\cE_{u, v}^+\} \right)  , 
\end{align} 
where $\cE_{u, v} := \{(u,v)\in E(G(\sZ_n)) \}$ and  $\cE_{u, v}^+ := \{(u,v),(v,u)\in E(G(\sZ_n)) \}$. 
 
\begin{theorem}\label{thm:DnCLT}
    Suppose Assumption \ref{assumption:K} and Assumption \ref{assumption:Kn} hold. Also, assume that the kernel $\sfK$ satisfies $\int\sfK(x,x)^{4+\delta}\rmd P_{X}(x)<\infty$ and $\int\sfK(y,y)^{4+\delta}\rmd P_{Y}(y)<\infty$, for some $\delta>0$. Then, for $K = o(n^{1/44})$ and $M_n\ra\infty$, as $n\ra\infty$, the following holds under $\bH_0$, 
    \begin{align}\label{eq:distributionH0Dn}
        \sup_{z\in\R}\left|\P_{\bH_0}\left[\frac{\sqrt{nK}D_{n}}{\hat{\tau}_n}\leq z\right] - \Phi(z)\right|\ra 0 , 
    \end{align}
    where $\Phi(\cdot)$ is the CDF of standard Gaussian distribution. 
\end{theorem}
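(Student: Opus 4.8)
The plan is to reduce the statement to Theorem \ref{thm:CLT} by showing that the derandomized statistic $\sqrt{nK} D_n$ and its studentization $\hat\tau_n^2$ behave, up to negligible error, like a suitable average of the per-run statistics $\eta_n^{(m)}$ and their conditional variances. Write $\overline{\sfH}_{u,v} := \frac{1}{M_n}\sum_{m=1}^{M_n}\sfH(\bW_u^{(m)},\bW_v^{(m)})$, so that $\sqrt{nK} D_n = \frac{1}{\sqrt{nK}}\sum_{(u,v)\in E(G(\sZ_n))}\overline{\sfH}_{u,v}$ and $\hat\tau_n^2 = \frac{1}{nK}\sum_{(u,v)}\overline{\sfH}_{u,v}^2(\one\{\cE_{u,v}\}+\one\{\cE_{u,v}^+\})$. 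As in the proof sketch of Theorem \ref{thm:CLT}, it suffices (after invoking a consistency statement for $\hat\tau_n^2$, the analogue of Proposition \ref{prop:SnestvarH0}) to prove the Gaussian limit for $\sqrt{nK} D_n / \tau_n$, where $\tau_n^2 := \Var_{\bH_0}[\sqrt{nK} D_n \mid \cF(\sZ_n), \text{the } Y\text{'s}]$ — more precisely, the conditional variance given $\cF(\sZ_n)$ together with whatever $Y$-data is being held fixed. The key structural point is that, under $\bH_0$, for each fixed $m$ the summand $\sfH(\bW_u^{(m)},\bW_v^{(m)})$ is a degenerate kernel in the sense of \eqref{eq:condexpij0} (its conditional expectation given $Z_u,Z_v$ vanishes), because $X_u^{(m)}$ is drawn from $P_{X|Z_u}=P_{Y|Z_u}$ independently of $Y_u$ given $Z_u$; hence $\E_{\bH_0}[\overline{\sfH}_{u,v}\mid \cF(\sZ_n)] = 0$ and the numerator of \eqref{eq:distributionH0Dn} is centered, exactly as $\eta_n$ was.

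First I would set up the dependency-graph structure. Conditionally on $\cF(\sZ_n)$, index the summands of $\sqrt{nK}D_n$ by the directed edges $(u,v)$ of $G(\sZ_n)$; two summands $\overline{\sfH}_{u,v}$ and $\overline{\sfH}_{u',v'}$ are dependent only if $\{u,v\}\cap\{u',v'\}\neq\emptyset$, so the dependency graph on the edge set is governed by the maximum degree of $G(\sZ_n)$, which under Assumption \ref{assumption:Kn} is $O(K)$ up to logarithmic factors with high probability (this is the same $K$-NN geometry used for Theorem \ref{thm:CLT}). I would then apply the Stein's-method bound for sums over dependency graphs of \cite{chen2004normal}, which controls $\sup_z |\P[\sqrt{nK}D_n/\tau_n \le z] - \Phi(z)|$ by a quantity of the form (degree)$\times$(number of terms)$^{-1/2}$ times moment ratios of the summands, divided by appropriate powers of $\tau_n$. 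The moment conditions $\int \sfK(x,x)^{4+\delta}\,\rmd P_X < \infty$ and similarly for $P_Y$ give the needed control on $\E|\overline{\sfH}_{u,v}|^{2+\delta'}$ uniformly in $M_n$, since $\overline{\sfH}_{u,v}$ is an average of terms each satisfying the same moment bound (averaging does not inflate moments by Jensen). The scaling $K = o(n^{1/44})$ is exactly what makes the Stein bound vanish, identically to Theorem \ref{thm:CLT}; the averaging over $m$ does not change the edge-count or the degree, so the rate is unaffected.

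The point requiring genuine care — and the main obstacle — is the lower bound on the variance $\tau_n^2$, i.e., showing $\tau_n^2$ is bounded below by a constant times the number of edges (order $nK$) so that the Stein bound, which has $\tau_n^3$ or $\tau_n^2$ in the denominator, actually goes to zero. Under $\bH_0$ one has $\E_{\bH_0}[\overline{\sfH}_{u,v}^2 \mid \cF(\sZ_n)] = \frac{1}{M_n} f(Z_u,Z_v) + \frac{M_n-1}{M_n}\,g(Z_u,Z_v)$, where $f$ is as in Proposition \ref{ppn:condexpH0} and $g(Z_u,Z_v) := \E_{\bH_0}[\sfH(\bW_u^{(1)},\bW_v^{(1)})\sfH(\bW_u^{(2)},\bW_v^{(2)})\mid Z_u,Z_v]$ is the cross-run covariance, which involves sharing the $Y_u,Y_v$ between resamples but re-drawing the $X^{(m)}$'s. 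Since $M_n\to\infty$, the dominant term is $g$, and I must verify that $g(Z,Z')$ is bounded below by a positive constant (or at least that $\E[g(Z,Z')]>0$, ensuring $\tau_n^2 \gtrsim nK$ with high probability by a law-of-large-numbers argument over the edges as in Proposition \ref{ppn:condexpH0}'s proof). This positivity is where the hypothesis $P_{XY}(X\ne Y)>0$ should enter, mirroring its role in Proposition \ref{prop:SnestvarH0} and Theorem \ref{thm:CLT}: it prevents the degenerate situation where $\sfH$ vanishes almost surely. I would expand $g(Z,Z')$ using \eqref{eq:defH} and the fact that $X^{(1)},X^{(2)}$ are i.i.d.\ copies of $Y$'s conditional law under $\bH_0$; the resulting expression should simplify to something like $\E[\sfK(Y,Y') \mid Z,Z'] - (\text{terms that telescope})$, from which a quantitative lower bound follows under the characteristic-kernel assumption. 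Once $\tau_n^2 \asymp nK$ is established, the Stein bound closes, and combining with the consistency $\hat\tau_n^2/\tau_n^2 \pto 1$ (proved along the lines of Proposition \ref{prop:SnestvarH0}, now additionally using $M_n\to\infty$ to pass from $\overline{\sfH}_{u,v}^2$ to its conditional mean via a variance-of-the-average argument) yields \eqref{eq:distributionH0Dn} by Slutsky.
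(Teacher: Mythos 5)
Your overall blueprint — Stein's method for local dependence, an analogue of Proposition \ref{prop:SnestvarH0} for $\hat\tau_n^2$, and a variance lower bound driven by the cross-run covariance $g$ with $M_n\to\infty$ — does track the skeleton of the paper's argument. Your decomposition $\E_{\bH_0}[\overline{\sfH}_{u,v}^2\mid\cF(\sZ_n)]=\tfrac1{M_n}f(Z_u,Z_v)+\tfrac{M_n-1}{M_n}g(Z_u,Z_v)$ is exactly the $B_1/B_2$ split in the paper's proof of Lemma \ref{lemma:SnapproxvarDn}, and your positivity argument for $g$ (via the characteristic kernel and $P_{XY}(X\ne Y)>0$) mirrors the argument the paper sends through Lemma \ref{lemma:varlimitnon0}. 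Indexing the Stein sum by directed edges of $G(\sZ_n)$, so that two summands are dependent iff they share a vertex, is a legitimate alternative to the paper's choice of summing the $\tilde V_u$'s over vertices (where the dependency graph is determined by paths of length $\le 2$ in $G^*(\sZ_n)$ and has degree $O(K^2)$ rather than your $O(K)$); that part of your plan is fine and would in fact not cost anything under the stated rate $K=o(n^{1/44})$.

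The genuine gap is your choice of conditioning $\sigma$-field. You define $\tau_n^2:=\Var_{\bH_0}[\sqrt{nK}D_n\mid\cF(\sZ_n),\text{the }Y\text{'s}]$, but the numerator is \emph{not} centered once the $Y$'s are fixed. The cancellation that makes $\overline{\sfH}_{u,v}$ degenerate — the one used in \eqref{eq:condexp14equal}, \eqref{eq:condexp23equal}, \eqref{eq:condexpfull0} and again in the proof of Lemma~\ref{lemma:varH0Dn} — requires integrating over \emph{both} $X_u^{(m)}$ and $Y_u$ (or, symmetrically, both $X_v^{(m)}$ and $Y_v$) given the $Z$'s; it is precisely the identity $\E_{\bH_0}[\sfH(\bW_u^{(m)},\bW_v^{(m)})\mid\bW_v^{(m)},Z_u,Z_v]=0$, in which $Y_u$ stays random. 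Conditioning on all the $Y$'s instead gives
$$\E_{\bH_0}\left[\overline{\sfH}_{u,v}\mid\cF(\sZ_n),Y_u,Y_v\right]=\langle\mu_{P_{X|Z_u}},\mu_{P_{X|Z_v}}\rangle_{\cH}+\sfK(Y_u,Y_v)-\mu_{P_{X|Z_u}}(Y_v)-\mu_{P_{X|Z_v}}(Y_u),$$
a quantity that is mean-zero over $(Y_u,Y_v)$ but not identically zero. Summing this over edges and dividing by $\sqrt{nK}$ produces a conditional mean of $\sqrt{nK}D_n$ that is itself a degenerate edge-sum of the same type as $\eta_n$, hence $O_P(1)$ rather than $o_P(1)$: conditionally on the $Z$'s and $Y$'s, $\sqrt{nK}D_n/\tau_n$ has a random, non-negligible shift. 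You would therefore need a separate CLT for this shift and an argument that the shift, appropriately normalized, vanishes — neither is done, and neither is true. Later in your sketch you in fact revert to conditioning on $\cF(\sZ_n)$ alone when you expand $\E_{\bH_0}[\overline{\sfH}_{u,v}^2\mid\cF(\sZ_n)]$, which is the correct choice; the fix is to define $\tau_n^2$ conditional on $\cF(\sZ_n)$ only, as the paper does in Lemma~\ref{lemma:varH0Dn}, and to run the Stein argument conditionally on $\cF(\sZ_n)$ (not $\cF(\sZ_n)$ and the $Y$'s). Note this does not affect the dependency structure you want: two summands $\overline{\sfH}_{u,v}$ and $\overline{\sfH}_{u',v'}$ are already conditionally independent given $\cF(\sZ_n)$ whenever $\{u,v\}\cap\{u',v'\}=\emptyset$, so you gain nothing from fixing the $Y$'s and lose the centering.
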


The proof of Theorem \ref{thm:DnCLT} is given in Appendix \ref{sec:proofofDnClT}. Theorem \ref{thm:DnCLT} shows that the test function
\begin{align}\label{eq:defphinM}
    \tilde{\phi}_{n}^* := \bm{1}\left\{\left| \sqrt{nK}D_{n} \right|>z_{\alpha/2} \hat{\tau}_n \right\} , 
\end{align}
is asymptotically level $\alpha$, that is, $\lim_{n \rightarrow \infty} \P_{\bH_0}[\tilde{\phi}_{n}^* =1] = \alpha$. Moreover, since $D_n$ is a consistent estimate of $\EMMD$ (recall Theorem \ref{thm:consistencyDn}), then following the proof of Corollary \ref{cor:asymptest}, the test $\tilde{\phi}_{n}^*$ is also consistent for fixed alternatives, that is, $\lim_{n \rightarrow \infty} \P_{\bH_1}[\tilde{\phi}_{n}^* =1] = 1$. This shows that $\tilde{\phi}_{n}^*$ is a derandomized test for the conditional goodness-of-fit problem \eqref{eq:H0H1finite} which attains precise asymptotic level and is universally consistent.

\section{Applications}

  In this section we apply the $\EMMD$ test to the examples discussed in the Introduction. The section is organized as follows: In Section \ref{sec:classificationregression} we use the $\EMMD$ method for testing calibration in classification and regression models.  We investigate the performance of the finite-sample and the derandomized test for the conditional goodness-of-fit problem in Section \ref{sec:simulationsM}. We  compare regression functions in the wind energy dataset in Section \ref{sec:regressionexample}. In Section \ref{sec:sbiexperiments} we apply the ECMMD measure to validate emulators in benchmark SBI examples and for simulating redshifts associated with galaxy images.

\subsection{Calibration Tests}
   \label{sec:classificationregression}

   In Section \ref{sec:classification} we apply $\EMMD$ based calibration tests in the classification setting and in Section \ref{sec:regression} we apply $\EMMD$ based calibration tests in regression models. Calibration of convolutional neural network models is tested on the CIFAR-10 dataset using the $\EMMD$ measure in Section \ref{sec:imagecalibration}

  \subsubsection{Calibration Tests for Classification} 
  \label{sec:classification}

  For testing calibration in classification we consider the following data 
    generating mechanism (as in \cite{widmann2019calibration}), 
    \begin{align*}
        \bm f(Z) = (f_1(Z), 1-f_1(Z))^\top \sim \textnormal{Dir}(\rho, 1-\rho) \text{ and } X\sim \mathrm{Ber}(f_1(Z)) . 
    \end{align*}
    To examine the Type I error rate and the power we consider the following setups: 
    \begin{itemize}
        \item Null hypothesis:  $ Y\sim \textnormal{Bern}(f_1(Z))$;
        \item Alternative hypothesis: $ Y\sim \textnormal{Bern}(f_1( Z) - f_1(Z)^2)$.
    \end{itemize}
    We implement the $\EMMD$ asymptotic test from \eqref{eq:defphin} and derandomized test from \eqref{eq:defphinM} with the linear kernel $\sfK(x,y)=x\cdot y$, by varying $\rho\in\{0.1, 0.2, 0.3, 0.4, 0.5\}$, the number of nearest-neighbor $K \in \{15, 25\}$, and the sample size $n = 100$. We use $M_n=20$ in derandomized test \eqref{eq:defphinM}. Figure \ref{fig:classcalibration2} shows the empirical Type I error and power over 500 repetitions. Additional simulations with sample $n=75$ are given in Appendix \ref{sec:additionalclassification}.

\begin{figure}[ht]
    \begin{subfigure}{0.5\textwidth}
        \centering 
        \small
        \centering\begingroup\fontsize{11}{20}\selectfont

\begin{tabular}{cccccc}
\multicolumn{6}{c}{Type I Error} \\
\toprule
\multirow{2}{*}{$\rho$} &\multirow{2}{*}{SKCE} & \multicolumn{2}{c}{Asymptotic} & \multicolumn{2}{c}{Derandomized}\\
\cmidrule(l{3pt}r{3pt}){3-6}
& & 15 NN & 25 NN &  15 NN & 25 NN\\
\midrule
\cellcolor{gray!6}{0.1} & \cellcolor{gray!6}{0.064} & \cellcolor{gray!6}{0.050} & \cellcolor{gray!6}{0.048} & \cellcolor{gray!6}{0.042} & \cellcolor{gray!6}{0.052}\\
0.2 & 0.066 & 0.032 & 0.032 & 0.044 & 0.044\\
\cellcolor{gray!6}{0.3} & \cellcolor{gray!6}{0.076} & \cellcolor{gray!6}{0.066} & \cellcolor{gray!6}{0.066} & \cellcolor{gray!6}{0.054} & \cellcolor{gray!6}{0.052}\\
0.4 & 0.076 & 0.062 & 0.054 & 0.040 & 0.046\\
\cellcolor{gray!6}{0.5} & \cellcolor{gray!6}{0.074} & \cellcolor{gray!6}{0.052} & \cellcolor{gray!6}{0.056} & \cellcolor{gray!6}{0.044} & \cellcolor{gray!6}{0.052}\\
\bottomrule
\end{tabular}
\endgroup{}

        \caption*{(a)}
    \end{subfigure}%
    \begin{subfigure}{0.5\textwidth}
        \centering
        \includegraphics[height=6.25cm]{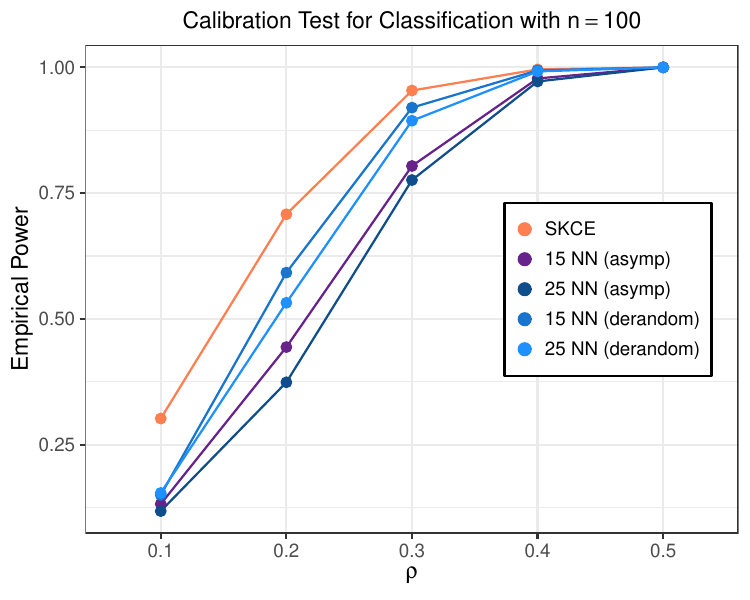} \\ 
            \small{ (b) }
    \end{subfigure}
    \caption{\small Calibration tests for classification: (a) Type I error and (b) empirical power for $n=100$, as a function of the signal strength $\rho$. }
    \label{fig:classcalibration2}
\end{figure}

     Note that, under the alternative hypothesis, as $\rho$ (the signal strength) grows, $f_1^2(Z)$ will tend to be larger, which makes distribution of $Y|Z$ further apart from that of $X|Z$. For comparison we also implement the test based on squared kernel calibration error (SKCE) \cite{widmann2019calibration}. The asymptotic null distribution of the test based on SKCE is an infinite weighted sum of $\chi^2$ distributions \cite{widmann2019calibration}, whose quantiles are intractable.  Hence, to chose the cut-off under the null hypothesis we use a parametric bootstrap procedure as in \cite{widmann2019calibration}. Throughout, the nominal level is to be $0.05$. The results are shown in Figure \ref{fig:classcalibration1} and Figure \ref{fig:classcalibration2}. We oberseve the following: 
    \begin{itemize}
    
   \item Both $\EMMD$ asymptotic test and derandomized test control Type I error for all choices of $K$. The SKCE based test shows marginal Type I error inflation (see the tables in Figure \ref{fig:classcalibration2}(a) and \ref{fig:classcalibration1}(a) in Appendix \ref{sec:additionalclassification}).  
   
   \item In terms of power the SKCE test is slightly better, although the ECMMD tests (both asymptotic and derandomized) catch up as the number of nearest neighbors increase (see Figure \ref{fig:classcalibration2}(b) and Figure \ref{fig:classcalibration1}(b) in Appendix \ref{sec:additionalclassification}). This difference in performance is expected as the SKCE based test uses the functional form of the conditional distribution, whereas the ECMMD based tests only utilize samples from the distribution. It is also worth noticing that the derandomized test improves on the power of asymptotic test by reducing the randomness in sampling. 

\item In Appendix \ref{sec:computation} we compare the time complexities of the SKCE and the ECMMD tests. Figure \ref{fig:computationcalibration}(a) in particular shows that the ECMMD tests (both asymptotic and derandomized) are  significantly faster than the SKCE test. This is expected because the ECMMD tests can be directly implemented without any bootstrap resampling, unlike the SKCE test. 

   \end{itemize} 
   In summary, the SKCE and ECMMD tests have comparable statistical performance, but the ECMMD tests are computationally much more efficient than the SKCE test, for validation of calibration in classification settings. 

  \subsubsection{Calibration Tests for Regression} 
  \label{sec:regression}

In the regression framework calibration is often defined in terms of the quantiles of response distribution \cite{liu2023distribution}. Specifically, suppose $Q(\rho, z)$ is a pre-trained quantile prediction model, which gives the prediction of the $\rho$-th quantile of the conditional distribution of $Y$ given $Z = z$. The quantile prediction model $Q$ is said to be calibrated if and only if,
\begin{align}\label{eq:regcalibration}
    \P\left(Y\leq Q(\rho,Z)|Z = z\right) = \rho , 
\end{align}
for all $0\leq \rho\leq 1$ and $z\in \text{Supp}( P_{Z})$. Notice that for $U\sim \text{Unif}[0,1]$ independent of all previous data, $Q(\cdot, Z)$ is a quantile function for $Q(U,Z)|Z$ and hence, from \eqref{eq:regcalibration}, $Q$ is calibrated if and only if,
\begin{align*}
     P_{Y|Z} =  P_{Q(U,Z)|Z}\text{ almost surely } P_Z,
\end{align*}
reducing the test for calibration of $Q$ to the form \eqref{eq:H0H1}. For more general predictive models, \eqref{eq:classcalibration} can be further extended as follows (see \cite{widmann2022calibration}):  A model $F_Z$ of the conditional distribution $ P_{Y|Z}$ is called calibrated if and only if,
    \begin{align}\label{eq:calibration}
         P_{Y|F_Z} = F_Z\text{ almost surely } P_{F_Z}.
    \end{align}
Note that when $F_Z$ is a classification model, \eqref{eq:calibration} matches with \eqref{eq:classcalibration}. Similar to the classification setting, testing for calibration of a general predictive model $F_Z$ can be framed in terms of  \eqref{eq:H0H1} as follows: For $X\sim F_Z$ notice that $ P_{X|F_Z} = F_Z$ almost surely $ P_{F_Z}$. Then by \eqref{eq:calibration}, the predictive model $F_Z$ is calibrated if and only if,
\begin{align*}
     P_{Y|F_Z} =  P_{X|F_Z}\text{ almost surely } P_{F_Z}.
\end{align*}
Thus, testing for calibration of $F_Z$ is now equivalent to the hypothesis test in \eqref{eq:H0H1} with samples $(X_1,Y_1, F_{Z_1}),\ldots, (X_n,Y_n, F_{Z_n})$, where $X_i\sim F_{Z_i}$, for all $1\leq i\leq n$.

\begin{figure}[!ht]
    \begin{subfigure}{0.5\textwidth}
        \centering
        \includegraphics[height=6.25cm]{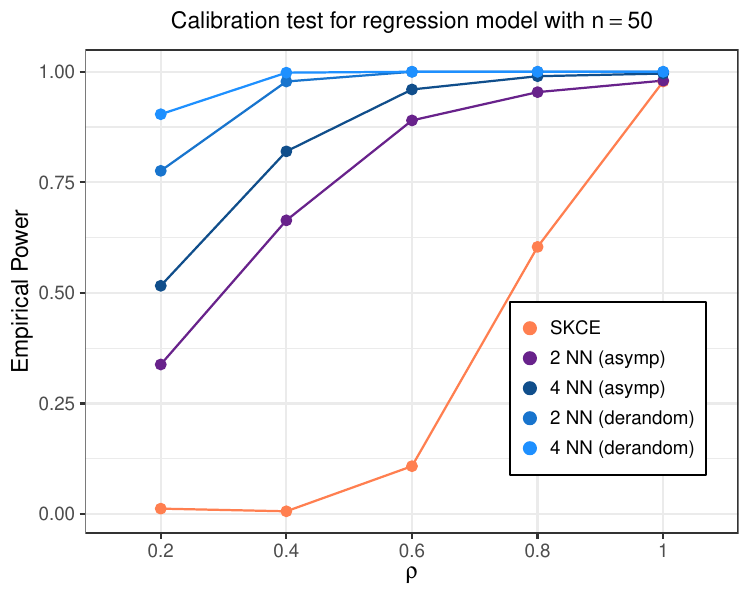} \\ 
        \small{ (a) }
    \end{subfigure}%
    \begin{subfigure}{0.5\textwidth}
        \centering
        \includegraphics[height=6.25cm]{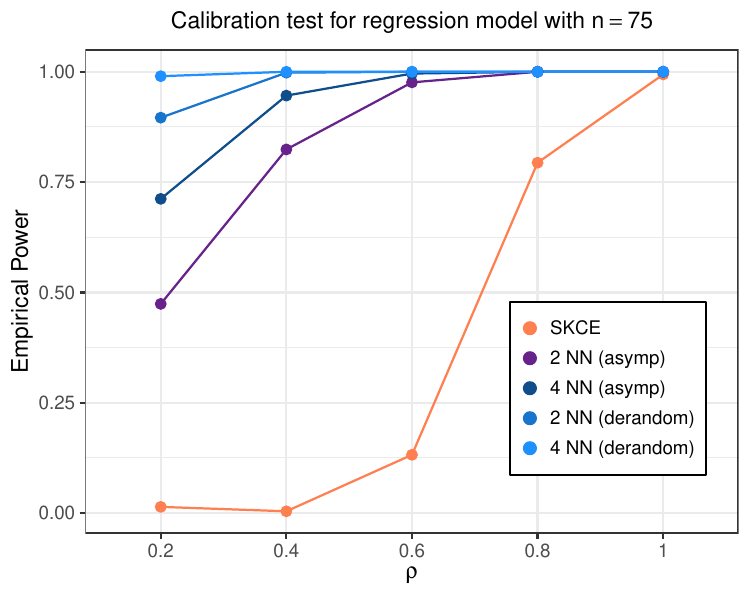} \\ 
        \small{ (b) }
    \end{subfigure}
    \caption{ \small{Empirical power of calibration tests for the regression model \eqref{eq:YZ} as a function of the signal strength $\rho$, with (a) $n=50$ and (b) $n=75$.} } 
    \label{fig:regcalibration}
\end{figure}

\begin{remark}\label{rmk:gaussian_regression_calibration}
    A special case of interest is when $F_Z$ is a Gaussian linear model. In this case, denote the conditional mean and conditional variance as $\E[X|F_Z]$ and $\mathrm{Var}[X|F_Z]$, respectively. Then testing $X|F_Z\overset{D}{=}Y|F_Z$ is the same as:
        $$X|(\E[X|F_Z],\mathrm{Var}[X|F_Z])\overset{D}{=}Y|(\E[X|F_Z],\mathrm{Var}[X|F_Z]),$$ since in the mean and the variance determines a Gaussian distribution. In particular, if one uses the homoscedastic linear model, then it suffices to condition only on the conditional mean.  We will use such a model in the following simulation. 
\end{remark}

To evaluate the performance of the $\EMMD$ test for regression calibration we consider the following model inspired by \citet{widmann2022calibration}: 
    \begin{align}\label{eq:YZ}
        Y=\rho\times\sin(\pi Z)+|1+Z| \varepsilon, \text{ where } \varepsilon\sim N(0,0.15^2) \text{ and } Z\sim \text{Unif}([-1,1]) , 
    \end{align}  
    for $\rho > 0$. Considering a training set of size $n_{\mathrm{train}} = 200$ we fit an ordinary least squares (OLS) to $\{(Y_i, Z_i)\}_{1 \leq i \leq n_{\mathrm{train}}}$ generated i.i.d from \eqref{eq:YZ}, by varying $\rho \in \{0.2, 0.4, 0.6, 0.8, 1\}$. We denote the fitted regression coefficient as $\hat\beta$.  For testing this model we generate data as follows: given $Z_i$ we use \eqref{eq:YZ} to generate $Y_i$ and the conditional normal model to generate $X_i \sim N(Z_i\hat\beta,\hat\sigma^2)$, where $\hat\sigma^2=\sum_{i=1}^{n_{\mathrm{train}}} (Y_i-Z_i \hat\beta)^2/(n_{\mathrm{train}}-1)$, for $1 \leq i \leq n_{\mathrm{test}}$. We will use the samples $\{(X_i, Y_i, Z_i\hat \beta)\}_{1 \leq i \leq n_{\mathrm{test}}}$ to perform the calibration test, since the training model uses the homoscedastic error (recall Remark \ref{rmk:gaussian_regression_calibration}).  
      Specifically, we implement the $\EMMD$ method using asymptotic test \eqref{eq:defphin} and derandomized test \eqref{eq:defphinM} with the Gaussian kernel and the bandwidth chosen as the median of absolute differences $\{|X_i-Y_i|:1\leq i\leq n_{\mathrm{test}}\}$. For the derandomized test \eqref{eq:defphinM} we choose $M_n=20$. The number of nearest neighbors varies over $K \in \{2, 4, 6, 8, 10\}$. We also implement the SKCE test from \cite{widmann2022calibration} for comparison. Figure \ref{fig:regcalibration} shows the empirical power for $n_{\mathrm{test}} \in \{50, 75\}$ over 500 repetitions.  

    Note that due to model misspecification, the OLS model is clearly not calibrated in this case (see \cite[Section A.1]{widmann2022calibration} for further discussion). Moreover, the true model is similar to a heteroscedastic linear model for small values of $\rho$ (since, by a Taylor expansion, $\rho\sin(\pi Z)\approx \rho \pi Z$, when $\rho$ is small), but becomes more non-linear as $\rho$ increases. Hence, the power of the tests are expected to increase as the signal strength $\rho$ increases. This aligns with the results in Figure \ref{fig:regcalibration}. The plots also reveal the following: 
    \begin{itemize}
    
    \item The $\EMMD$ tests (both asymptotic and derandomized)  have better power than the SKCE test and the power of the $\EMMD$ tests increase as $K$ increases.  
    
    \item Even when $\rho$ is small, the $\EMMD$ tests has non-trivial power, whereas the SKCE almost has no power.  This shows that the ECMMD is more sensitive to detecting the heteroscedasticity in the true model (which is approximately linear for small values of $\rho$) than the SKCE.   
    
    \item In Figure \ref{fig:computationcalibration}(b) in Appendix \ref{sec:computation} we provide a comparison of the computation times of the SKCE and the ECMMD tests. As in the case of classification, ECMMD tests  are  significantly faster than the SKCE test.

    \end{itemize}

\subsubsection{Testing Calibration of CNN on Real Data} 
\label{sec:imagecalibration}

In this section, we apply the $\EMMD$ for testing calibration of convolutional neural networks using the CIFAR-10 dataset (\url{https://www.cs.toronto.edu/~kriz/cifar.html}). The dataset 
consists of 10 classes of objects: airplane, automobile, bird, cat, deer, dog, frog, horse, ship, and truck. Each class has 6000 images and the total number of images is 60000. For our experiment, we use the following pairs for binary classification
\begin{align}\label{eq:imagepairs}
    \{\text{bird, cat}\}, \{\text{cat, dog}\}, \{\text{cat, deer}\}, \{\text{cat, frog}\}, \{\text{cat, horse}\}.
\end{align}
In the following we provide the results for $\{\text{bird, cat}\}$ and $\{\text{cat, dog}\}$. The results for the remaining 3 pairs are given in Appendix \ref{sec:imagecalibration3}. The ratio of training data and test data is chosen to be $3:2$. The training data is used to learn a convolutional neural network (CNN) classifier, and the test data will used for assessing calibration based on the $\EMMD$ measure. 

\subsubsection*{Model Setup}

To train the classifier, we use three convolutional layers with 32, 64, and 64 filters, respectively, interspersed with $2\times 2$ max-pooling layers to reduce spatial dimensions. Post convolution, a flattening layer 
transforms the 2D feature maps into a 1D vector, followed by a dense layer with 64 neurons. The output  layer uses a softmax activation function for binary classification.  The model employs the Adam optimizer, categorical cross-entropy loss, and tracks accuracy as its metric.

\subsubsection*{Potential Miscalibration and Recalibration} 

Following the influential paper of \citet{guo2017calibration}, it is now common knowledge that deep-learning models tend to overfit the data which can lead potential miscalibration. To check this we will use the $\EMMD$ measure to test if the convolutional neural network trained as described above is calibrated using the pairs in \eqref{eq:imagepairs}. We then split the test data according to $2:1$ ratio, use the first part of the data to recalibrate the prediction probabilities using isotonic regression (see Appendix \ref{sec:recalibration} for details), and the second part to again test using the ECMMD if  the recalibrated probabilities are indeed calibrated. 

\subsubsection*{Calibration Test}

We randomly choose $1000$ samples from the test data and implement the ECMMD test with the linear kernel, which is a characteristic kernel under the Bernoulli distribution class. To test for calibration we implement the asymptotic test \eqref{eq:defphin} (by generating $1000$ samples from the learned prediction probabilities) as well as the finite-sample test proposed in Algorithm \ref{algorithm:finitesample} with the number of resamples $M=500$. Given the relatively large sample size, we vary the number of nearest neighbors as $K \in \{60, 80, 100, 120\}$. We repeat the above experiment $100$ times to report the proportion of rejection before and after recalibration in Figures \ref{fig:reliability-calibration-1}(b), \ref{fig:reliability-calibration-2}(b) and Figures \ref{fig:reliability-calibration-3}(b), \ref{fig:reliability-calibration-4}(b), \ref{fig:reliability-calibration-5}(b) in Appendix \ref{sec:imagecalibration3}. 

\begin{figure}[ht]
    \begin{subfigure}{0.45\textwidth}
        \centering
        \includegraphics[scale=0.55]{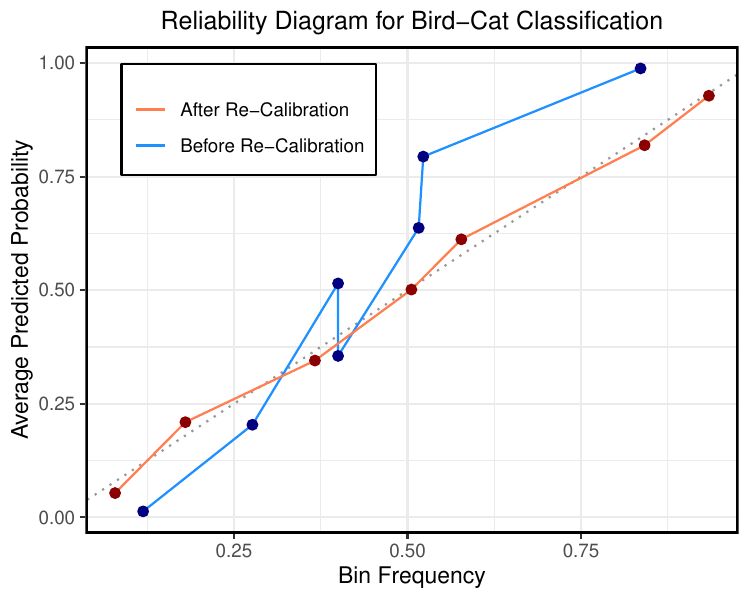}
        \caption*{(a)}
    \end{subfigure}%
    \begin{subfigure}{0.55\textwidth}
        \centering 
        \small
        \begin{tabular}{cccc}
\toprule
\multicolumn{1}{c}{ } & \multicolumn{1}{c}{ } & \multicolumn{2}{c}{Rejection Proportion} \\
\cmidrule(l{3pt}r{3pt}){3-4}
Test & \textit{K} & \multicolumn{1}{p{2.5cm}}{\centering Before\\ Re-Calibration} & \multicolumn{1}{p{2.5cm}}{\centering After\\ Re-Calibration}\\
\midrule
\cellcolor{gray!6}{FS} & \cellcolor{gray!6}{60} & \cellcolor{gray!6}{1.00} & \cellcolor{gray!6}{0.07}\\
FS & 80 & 1.00 & 0.07\\
\cellcolor{gray!6}{FS} & \cellcolor{gray!6}{100} & \cellcolor{gray!6}{1.00} & \cellcolor{gray!6}{0.05}\\
FS & 120 & 1.00 & 0.05\\
\midrule
\cellcolor{gray!6}{Asymp} & \cellcolor{gray!6}{60} & \cellcolor{gray!6}{1.00} & \cellcolor{gray!6}{0.08}\\
Asymp & 80 & 1.00 & 0.08\\
\cellcolor{gray!6}{Asymp} & \cellcolor{gray!6}{100} & \cellcolor{gray!6}{1.00} & \cellcolor{gray!6}{0.05}\\
Asymp & 120 & 1.00 & 0.05\\
\midrule
\cellcolor{gray!6}{ECE} & \cellcolor{gray!6}{} & \cellcolor{gray!6}{0.31} & \cellcolor{gray!6}{0.12}\\
\bottomrule
\end{tabular}

        \caption*{(b)}
    \end{subfigure}
    
    \caption{ \small{ Results for bird-cat classification: (a) reliability plot before recalibration and after recalibration, and (b) $p$-values of the $\EMMD$ test for different values of $K$ and ECE before and after recalibration.
    }}
    \label{fig:reliability-calibration-1} 
\end{figure}

\begin{figure}[ht]
    \begin{subfigure}{0.45\textwidth}
        \centering
        \includegraphics[scale=0.55]{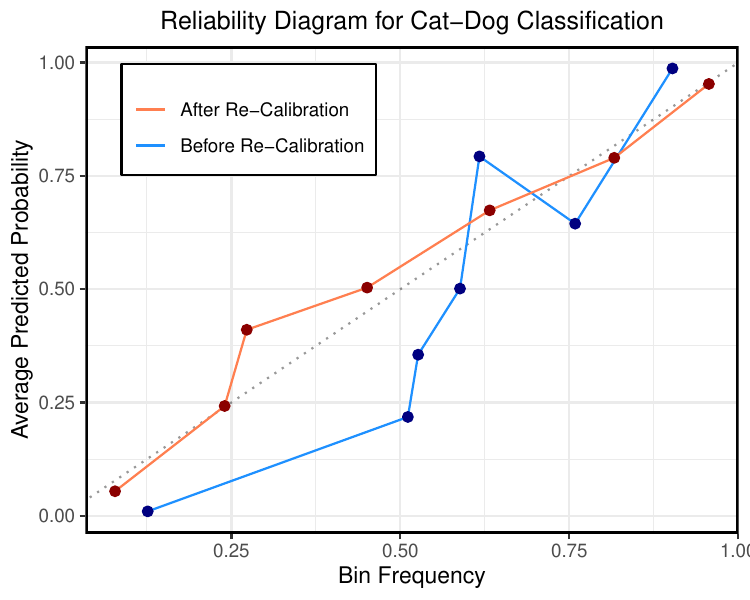}
        \caption*{(a)}
    \end{subfigure}%
    \begin{subfigure}{0.55\textwidth}
        \centering 
        \small
        \begin{tabular}{cccc}
\toprule
\multicolumn{1}{c}{ } & \multicolumn{1}{c}{ } & \multicolumn{2}{c}{Rejection Proportion} \\
\cmidrule(l{3pt}r{3pt}){3-4}
Test & \textit{K} & \multicolumn{1}{p{2.5cm}}{\centering Before\\ Re-Calibration} & \multicolumn{1}{p{2.5cm}}{\centering After\\ Re-Calibration}\\
\midrule
\cellcolor{gray!6}{FS} & \cellcolor{gray!6}{60} & \cellcolor{gray!6}{1.00} & \cellcolor{gray!6}{0.15}\\
FS & 80 & 1.00 & 0.14\\
\cellcolor{gray!6}{FS} & \cellcolor{gray!6}{100} & \cellcolor{gray!6}{1.00} & \cellcolor{gray!6}{0.13}\\
FS & 120 & 1.00 & 0.15\\
\midrule
\cellcolor{gray!6}{Asymp} & \cellcolor{gray!6}{60} & \cellcolor{gray!6}{1.00} & \cellcolor{gray!6}{0.17}\\
Asymp & 80 & 1.00 & 0.17\\
\cellcolor{gray!6}{Asymp} & \cellcolor{gray!6}{100} & \cellcolor{gray!6}{1.00} & \cellcolor{gray!6}{0.15}\\
Asymp & 120 & 1.00 & 0.14\\
\midrule
\cellcolor{gray!6}{ECE} & \cellcolor{gray!6}{} & \cellcolor{gray!6}{0.26} & \cellcolor{gray!6}{0.15}\\
\bottomrule
\end{tabular}

        \caption*{(b)}
    \end{subfigure}
    \caption{ \small{ Results for cat-dog classification: (a) reliability plot before recalibration and after recalibration, and (b) $p$-values of the $\EMMD$ test for different values of $K$ and ECE before and after recalibration. } }
    \label{fig:reliability-calibration-2} 
\end{figure}

\subsubsection*{Results and Interpretation}

To illustrate our results and the effect of recalibration, we show the reliability plots before after recalibration for the 5 pairs in \eqref{eq:imagepairs} are shown in Figures \ref{fig:reliability-calibration-1}(a), \ref{fig:reliability-calibration-2}(a) and Figures \ref{fig:reliability-calibration-3}(a), \ref{fig:reliability-calibration-4}(a), \ref{fig:reliability-calibration-5}(a) in Appendix \ref{sec:imagecalibration3}. A curve aligned with the diagonal line in the reliability diagram indicates that the model is well-calibrated (see Appendix \ref{sec:reliability} for details). The  proportion of rejections of the ECMMD tests, both before and after recalibration, are reported in the tables in Figures \ref{fig:reliability-calibration-1}(b), \ref{fig:reliability-calibration-2}(b), \ref{fig:reliability-calibration-3}(b), \ref{fig:reliability-calibration-4}(b), \ref{fig:reliability-calibration-5}(b). We use $\tt{FS}$ and $\tt{Asymp}$ to denote the finite sample and the asymptotic tests, respectively. The tables also show the estimated values of the expected calibration error (ECE) (averaged over $100$ experiments) computed with the number of bins set to be $100$ (see Appendix \ref{sec:calibrationerror} for details computation of the ECE). The following is a summary of our findings: 

\begin{itemize}
    \item Before recalibration, the reliability diagrams and the proportion of rejection indicate that  CNN classifier is mis-calibrated for all the 5 pairs in \eqref{eq:imagepairs}. 
    
    \item After recalibration, the rejection proportion of the ECMMD test are significantly reduced. The improved calibration can also be seen from the significant reduction of the ECE values. 
\end{itemize} 
The illustrates the efficacy of the $\EMMD$ in testing calibration of deep-learning based classification models in a real-data setting. 

\color{black}

\subsection{Conditional Goodness-of-Fit Test}
\label{sec:simulationsM}

To illustrate the performance of the resampling based conditional goodness-of-fit test using the ECMDD (as in Section \ref{sec:finitetest}), we consider the following heteroscedastic Gaussian model adapted from \cite{jitkrittum2020testing}. For $d\geq 1$, let $Z\sim N_{d}\left(\bm{0}, I_d\right)$ and define,
\begin{align*}
    P_{X|Z} = N \left(Z^\top\bm\one_d, \sigma^2(Z)\right), \text{ where } \sigma^2(Z) = 1 + 10\exp\left(-\frac{\left\| Z-0.8\one_d\right\|_2^2 }{2\times 0.8^2}\right).
\end{align*}
To test the hypothesis from \eqref{eq:H0H1finite} with $P_{X|Z}$ as above, we consider observed samples $(Y_1, Z_1), \ldots,$ $(Y_n, Z_n)$ generated independently from $P_{Y|Z} = N (Z^\top\bm 1_d, 1 )$.\footnote{Here, $\bm 1_d$ denotes the vector all 1 in dimension $d$. Also, $N_d( \mu, \Sigma)$ denotes the $d$-dimensional normal distribution with mean vector $\mu$ and covariance matrix $\Sigma$. In dimension 1, the subscript $d$ is dropped.}  As in \cite{jitkrittum2020testing} we set $d=3$ in the above simulation setup. For comparison we also implement the KCSD test from \cite{jitkrittum2020testing} in the above setup. For the ECMMD based test we use the Gaussian kernel with the bandwidth chosen as the median of absolute differences (as in Section \ref{sec:regression}). 
We implement both finite sample test in Algorithm \ref{algorithm:finitesample} and derandomized test \eqref{eq:defphinM}. The number of nearest-neighbors vary over $K \in \{20, 40\}$ and we choose number of resamples $M=300$ in the finite sample test and $M_n=20$ in the derandomized test. To implement the KCSD test we use Gaussian kernel with the median bandwidth and calibrate the test with $500$ multiplier bootstrap samples. For both tests the nominal level is set to $\alpha = 0.05$.

\begin{figure}[ht]
    \begin{subfigure}{0.5\textwidth}
        \centering 
        \small
        \setlength{\tabcolsep}{4pt}
\centering\begingroup\fontsize{10}{23}\selectfont

\begin{tabular}{cccccc}
\multicolumn{6}{c}{Type I Error} \\
\toprule
\multirow{2}{*}{$n$} &\multirow{2}{*}{KCSD} & \multicolumn{2}{c}{Derandomized} & \multicolumn{2}{c}{Finite-Sample}\\
\cmidrule(l{3pt}r{3pt}){3-6}
& & NN20 & NN40 &  NN20 & NN40\\
\specialrule{\cmidrulewidth}{0pt}{0pt}
50 & 0.038 & 0.054 & 0.060 & 0.048 & 0.062\\
\cellcolor{gray!6}{100} & \cellcolor{gray!6}{0.052} & \cellcolor{gray!6}{0.056} & \cellcolor{gray!6}{0.040} & \cellcolor{gray!6}{0.062} & \cellcolor{gray!6}{0.052}\\
150 & 0.036 & 0.042 & 0.070 & 0.062 & 0.056\\
\cellcolor{gray!6}{200} & \cellcolor{gray!6}{0.040} & \cellcolor{gray!6}{0.052} & \cellcolor{gray!6}{0.052} & \cellcolor{gray!6}{0.070} & \cellcolor{gray!6}{0.062}\\
250 & 0.048 & 0.048 & 0.044 & 0.066 & 0.064\\
\bottomrule
\end{tabular}
\endgroup{}

        \caption*{(a)}
    \end{subfigure}%
    \begin{subfigure}{0.5\textwidth}
        \centering
        \includegraphics[height = 6.25cm]{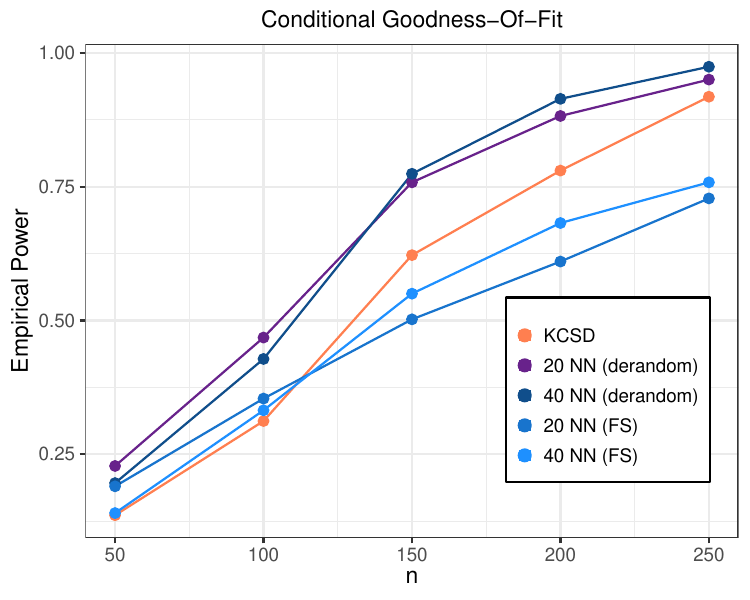} \\ 
        \small{ (b) }
    \end{subfigure}
    \caption{ \small{ Conditional goodness-of-fit test: (a) Type I error and (b) power of the test in \eqref{eq:defphin} and the KCSD test as a function of the sample size. } }
    \label{fig:condgoodfit}
\end{figure}

Figure \ref{fig:condgoodfit} shows the empirical Type I error and power, averaged over $500$ repetitions, of the ECMMD based tests and the KCSD test, as the sample size $n$ varies over $\{50, 100, 150, 200, 250\}$. The KCSD has slightly better power than the finite-sample $\EMMD$ tests. However, the derandomized $\EMMD$ tests outperform the KCSD. 
It is worth noting that the KCSD uses the functional form of the score function of the conditional distribution, whereas the ECMMD test is a completely nonparametric method that `learns' the conditional distribution through sampling. Despite that the ECMDD tests is able to compete and outperform  with the KCSD test, illustrating the strengths of our method.

\color{black}

\subsection{Comparing Regression Curves: Wind Energy Data Application}
\label{sec:regressionexample}

To illustrate the application of the $\EMMD$ based test in comparing two regression curves, we consider the wind energy dataset \cite{ding2019data,prakash2022gaussian,hwangbo2017production}.
The dataset is publicly available at \url{https://aml.engr.tamu.edu/book-dswe/dswe-datasets/} under \texttt{Inland and Offshore Wind Farm Dataset2} folder.
There are four datasets in the folder, each corresponding to one wind turbine. Here, we consider the datasets WT3 and WT4 which correspond to offshore wind turbines. Both these datasets consist of 4 years (2007-2010) of data and the main variable of interest (the outcome) is the wind power generated by different turbines across different time periods. The data also has the following five covariates:  wind speed (V), wind direction (D), air density ($\rho$), turbulence intensity (I), and humidity (S).

The question of interest to researchers in this field is to understand how the wind speed (V) affects the power generation in the presence of other covariates in order to further improve the efficiency of wind turbines. Previous literature considered the problem of testing if the effect of wind speed on power generation is the same across different years. For example, \cite{prakash2022gaussian} adopted a Bayesian nonparameteric approach
and provide a confidence band on the regression functions compared across different years. Here, we investigate if \textit{the effect of wind speed on wind power generation remain the same across different turbines within these two periods: 2007-2008 and 2009-2010?} 
To alleviate the temporal effect, we preprocess the data by averaging the observations within each day. This results in a sample of size $475$ and $626$ for periods $2007$-$2008$ and $2009$-$2010$, respectively. Since the data are recorded over time and the two turbines may be located in adjacent areas, we expect the covariate information to be similar in each recording point. Denote the outcome (averaged within each day as mentioned above) for WT3 and WT4 by $X$ and $Y$, respectively. We then concatenate the covariate vector $Z_1$ for WT3 and the covariate vector $Z_2$ for WT4 as a joint vector $Z = (Z_1,Z_2)$. The combined covariate vector is of size 10 (5 from each turbine). The final preprocessed data can be represented as $\{X_i, Y_i, Z_i\}_{1 \leq i \leq n}$. Based on this data, we now test the hypothesis \eqref{eq:H0H1hypothesis} using the $\EMMD$ statistic \eqref{eq:estEMMD} with the Gaussian kernel. We consider the bandwidth to be the median of absolute difference: $\mathrm{Median}\{|X_i-Y_i|:1\leq i\leq n\}$. We also test the equality of conditional expectations, that is, $\E[Y|Z]=\E[X|Z]$ almost surely, by choosing the linear kernel $\sfK(x, y) = x\cdot y$ in \eqref{eq:estEMMD}. We implement the asymptotic test \eqref{eq:defphin} for this application. Notice whenever the null hypothesis $X|Z\overset{D}{=}Y|Z$ is accepted, we can conclude that the effect of wind speed on wind power generation remain the same across turbines WT3 and WT4, given the other covariates are fixed.

\begin{figure}
    \centering
    \begin{subfigure}{0.5\textwidth}
        \centering
        \includegraphics[height=6.25cm]{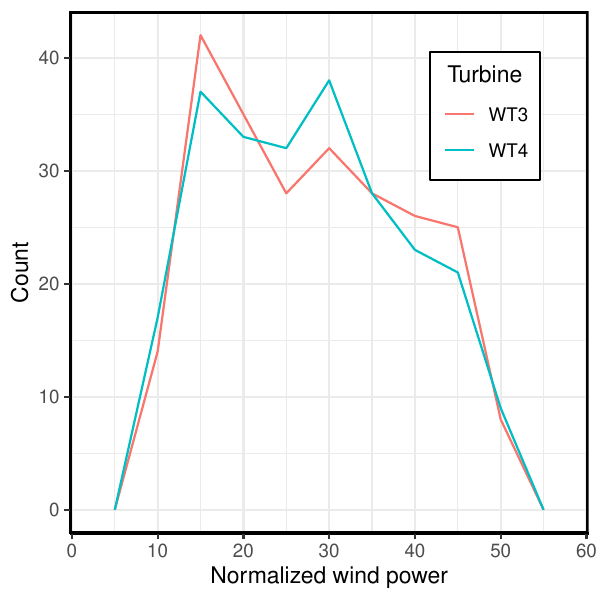}
    \end{subfigure}%
    \begin{subfigure}{0.5\textwidth}
        \centering
        \includegraphics[height=6.25cm]{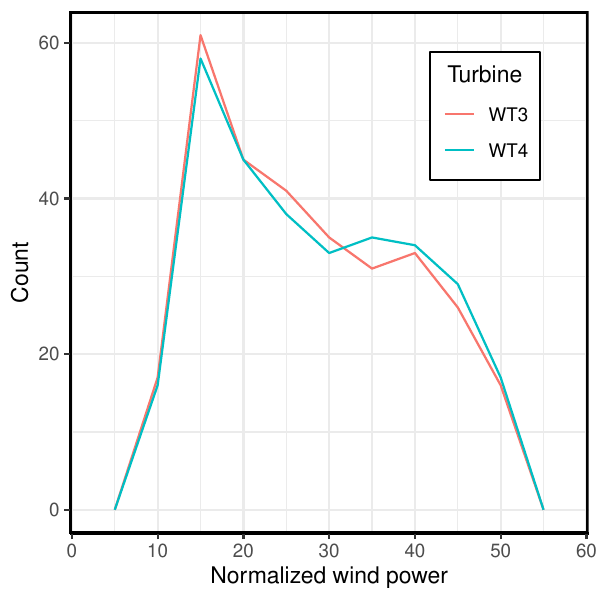}
    \end{subfigure}
    \caption{\small{ Frequency plots of wind power in WT3 ($X$) and WT4 ($Y$) during the 2007-2008 period (left) and 2009-2010 period (right). } }
    \label{fig:real_data_marginal_freq}
\end{figure}

\subsubsection{Results and Interpretation}

We first check the marginal distributions of $X$ and $Y$ by comparing their frequency plots based on the samples $\{X_i\}_{1 \leq i \leq n}$ and $\{Y_i\}_{1\leq i \leq n}$ (see Figure \ref{fig:real_data_marginal_freq}). The frequency plots show that the marginal distributions of $X$ and $Y$ are more different in the 2007-2008 period than the 2009-2010 period (although the overall trend in both cases are very similar). This suggests that there is potentially more heterogeneity between the conditional distributions $X|Z$ and $Y|Z$ in 2007-2008 period compared to the 2009-2010 period. 

\begin{table}[h]
    \centering
     \begin{subtable}[b]{0.45\linewidth}
        \centering
        \begin{tabular}{|c|c|c|}
            \hline
            $K$ & $\E[X|Z]=\E[Y|Z]$ & $\P_{X|Z}=\P_{Y|Z}$ \\
            \hline
            20 & 0.265 & 0.115  \\
            25 & 0.187 & 0.080   \\
            30 & 0.350  & 0.124 \\
            35 & 0.318 & 0.110 \\
            40 & 0.545 & 0.037  \\ 
            45 & 0.419  & 0.046 \\
           \hline 
          \end{tabular} 
           \caption*{(a)}
        \end{subtable}%
     \begin{subtable}[b]{0.45\linewidth}
            \centering
            \begin{tabular}{|c|c|c|}
                \hline
                $K$ & $\E[X|Z]=\E[Y|Z]$ & $\P_{X|Z}=\P_{Y|Z}$ \\
                \hline
                20 & 0.113 & 0.273  \\
                25 & 0.083 & 0.297  \\
                30 & 0.200 &  0.147   \\
                35 & 0.251 &  0.206  \\
                40 & 0.141 & 0.233  \\
                45 & 0.056 & 0.265   \\ 
                \hline 
            \end{tabular} 
            \caption*{(b)}
        \end{subtable}
        \caption{The $p$-values for testing equality of conditional means and  equality of conditional distributions for WT3 versus WT4 in the (a) 2007-2008 and (b) 2009-2010 period, for varying $K$.}
        \label{tab:real_data_turbine_analysis}
\end{table}

To confirm this, in Table \ref{tab:real_data_turbine_analysis} we report the $p$-values of the test \eqref{eq:defphin} with the linear kernel (for testing conditional mean equality) and the Gaussian kernel (for testing conditional distribution equality). We see that during the 2007-2008 period (see Table \ref{tab:real_data_turbine_analysis}(a)), at significance level $0.05$, the hypothesis $\P_{X|Z} = \P_{Y|Z}$ is rejected when $K$ (the number of nearest-neighbors) is large, but the conditional mean hypothesis $\E[X|Z]=\E[Y|Z]$ is not rejected. This indicates there is higher order heterogeneity in the conditional distribution across turbines beyond the conditional mean. However, this heterogeneity is not severe since it takes large number (for example, $K=40$ and $45$) of nearest neighbors to detect the difference. For the time period 2009-2010 (see Table \ref{tab:real_data_turbine_analysis}(b)), both the hypotheses $\P_{X|Z} = \P_{Y|Z}$ and  $\E[X|Z]=\E[Y|Z]$ are accepted, suggesting that during this period both turbines have similar conditional power generation given the wind speed and the other covariates. Similar comparison has been considered in \cite{prakash2022gaussian}. They provide the confidence band for the conditional mean difference of the wind power of WT3 and WT4 over wind speed for each year between 2007-2010, after adjusting the other covariates. Our result on conditional mean testing aligns with the conclusion in \cite[Table 7]{prakash2022gaussian} that there is no significant statistical difference in conditional mean between the wind power of WT3 and WT4 between the periods 2007-2008 and 2009-2010.

\subsection{Validation of Emulators in SBI}\label{sec:sbiexperiments}

In this section we illustrate the applicability of $\EMMD$ measure in testing validity of emulators and approximate posteriors for the true posterior distribution in simulation based inference (SBI).

\subsubsection{Approximate Posteriors in Benchmark Examples}
\label{sec:posteriorapproximation}

\begin{figure}
    \centering
    \begin{subfigure}{.5\textwidth}
      \centering
      \includegraphics[width=0.8\linewidth]{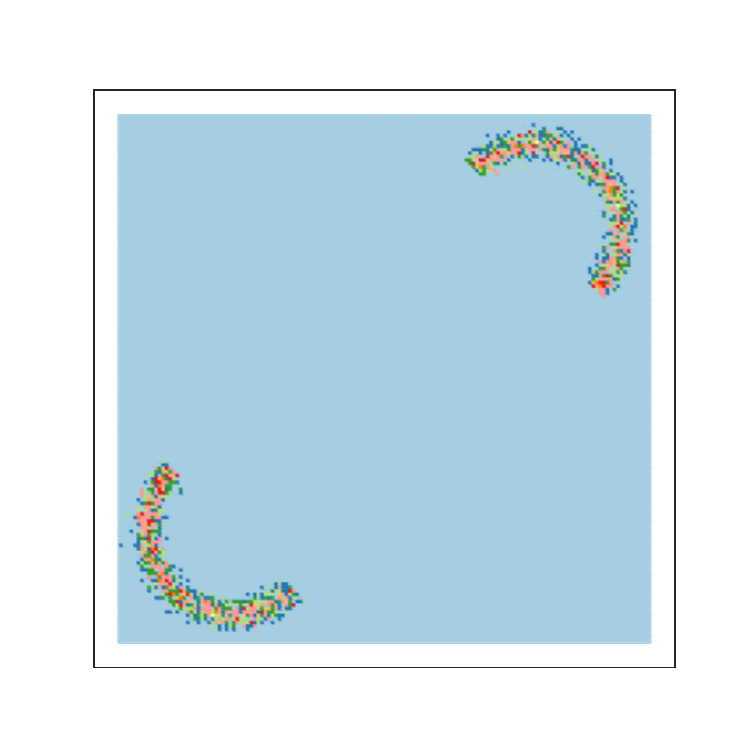}
      \caption*{\small (a)}
    \end{subfigure}%
    \begin{subfigure}{.5\textwidth}
      \centering
      \includegraphics[width=0.8\linewidth]{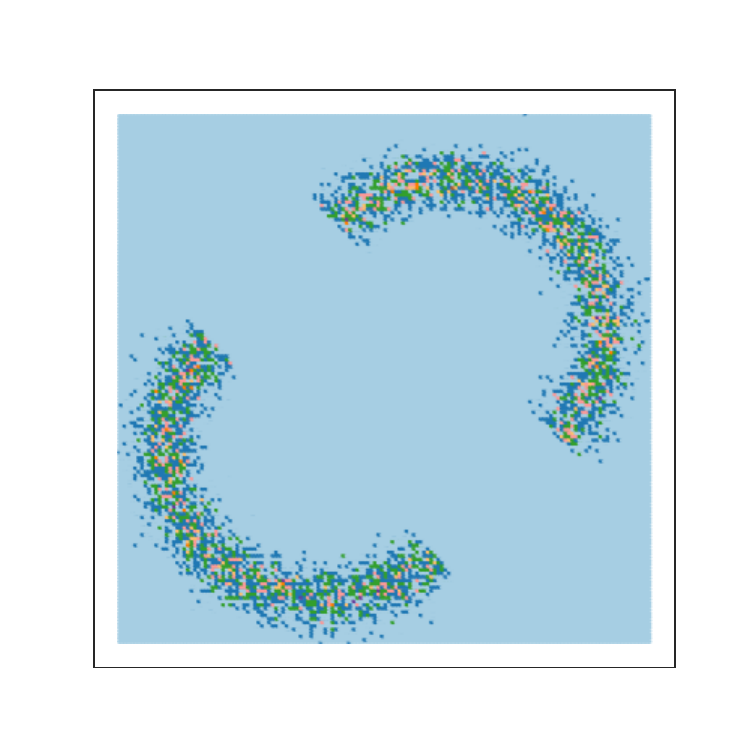}
      \caption*{\small (b)}
    \end{subfigure}
    \caption{ \small{ Examples of posterior distribution in the \textit{Two Moons} task. Samples are generated using the reference posterior from the \texttt{sbibm} \cite{lueckmann2021benchmarking} package with (a) $x = (0.0030109584, -0.6457662)$ and (b) $x = (0.19458583, 1.0400153)$.} }
    \label{fig:twomoonsposterior}
\end{figure} 

We consider the following 2 examples from the literature: 
\begin{itemize}
    \item \textit{Two Moons}: This is a two-dimensional example with a posterior that exhibits both global (bimodality) and local
    (crescent shape) structure (see \cite{greenberg2019automatic,lueckmann2021benchmarking,papamakarios2019sequential,linhart2024c2st,wildberger2024flow,glaser2022maximum}). Here, the data generating process has the following description: 
    \begin{itemize}
        \item Generate $\theta = (\theta_1,\theta_2)^\top$ from the uniform distribution on the unit square $[-1,1]\times [-1,1]$.
        \item Given $\theta$ generate $X$ from the simulator:
        \begin{align*}
            X|\theta = 
            \begin{pmatrix}
                r\cos(\alpha) + 0.25\\
                r\sin(\alpha)
            \end{pmatrix} + 
            \begin{pmatrix}
                -\left|\theta_1+\theta_2\right|/\sqrt{2}\\
                \left(-\theta_1+\theta_2\right)/\sqrt{2}
            \end{pmatrix} , 
        \end{align*}
        where $\alpha\sim \text{Unif}(-\pi/2,\pi/2)$ and $r\sim N(0.1, 0.01^2)$.
    \end{itemize} 
    In Figure \ref{fig:twomoonsposterior} we show the heatmap of the posterior distribution $\theta | X=x$ for 2 different values of $x$ illustrating the bimodality and the cresent shaped structure. 
    \item \textit{Simple Likelihood Complex Posterior (SLCP):} This is an example with a simple likelihood that exhibits complex posterior phenomena. Here, we consider the following data generating process, which is a 
     simplification of the SLCP task considered in \cite{papamakarios2019sequential,greenberg2019automatic,hermans2020likelihood,durkan2020contrastive,lueckmann2021benchmarking}.  

       \begin{itemize}
        \item Generate $\theta = (\theta_1,\theta_2,\theta_3,\theta_4,\theta_5)^\top$ from Uniform distribution on the box $[-3,3]^5$. 
        \item Given $\theta$ generate  $ X|\theta\sim N\left({m}_{\theta}, {S}_{\theta}\right)$ where,
        \begin{align*}
            {m}_{\theta} = 
            \begin{pmatrix}
                \theta_1\\
                \theta_2
            \end{pmatrix}\text{ and }
            {S}_{\theta} = 
            \begin{pmatrix}
                \theta_3^4 & \theta_3^2\theta_4^2\tanh(\theta_5)\\
                \theta_3^2\theta_4^2\tanh(\theta_5) & \theta_4^4
            \end{pmatrix} . 
        \end{align*} 
    \end{itemize}
\end{itemize} 

For both the above examples the posterior distributions are intractable. To approximate the posteriors we consider the following emulators: 
\begin{itemize} 

\item Mixture Density Networks (MDNs) \cite{bishop1994mixture,papamakarios2016fast}: In this case, the emulator $q_\phi(\theta|\bm X)$ takes the form of a mixture of $\ell$ Gaussian components where the mixing coefficients, the means, and the covariance matrices are parametrized by a feed-forward neural network (which we denote by $\phi$) taking $X$ as an input. For the \texttt{Two Moons} example we consider MDNs with $1, 3, 5, 7$ mixtures and for \texttt{SLCP} we consider MDNs with $5, 10, 15, 25$ mixtures.

\item  Conditional density estimates using Neural Spline Flow (NSF) \cite{durkan2019neural}, as in \cite{papamakarios2016fast,lueckmann2021benchmarking}. In our experiments we implement a NSF with $5$ transforms. 

\end{itemize}

\begin{figure}[h]
    \centering
    \begin{subfigure}{.5\textwidth}
      \centering
      \includegraphics[width=0.9\linewidth]{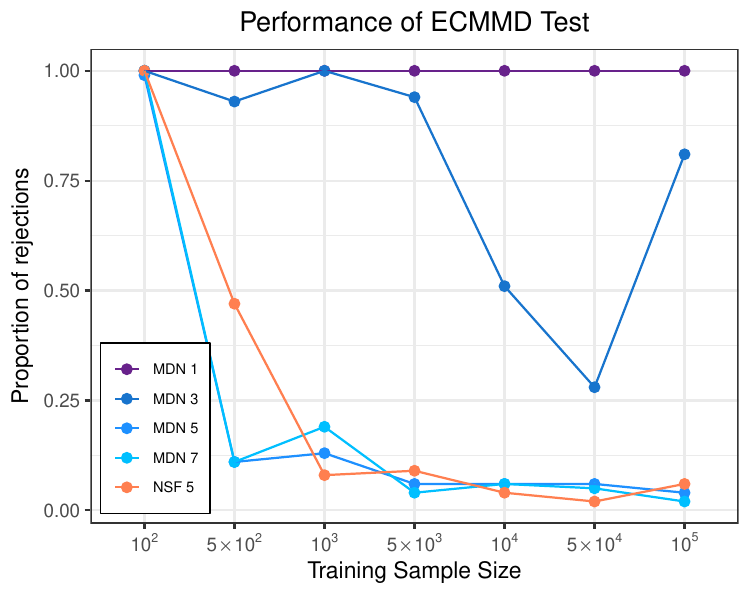}
      \caption*{\small{(a)}} 
    \end{subfigure}%
    \begin{subfigure}{.5\textwidth}
      \centering
      \includegraphics[width=0.9\linewidth]{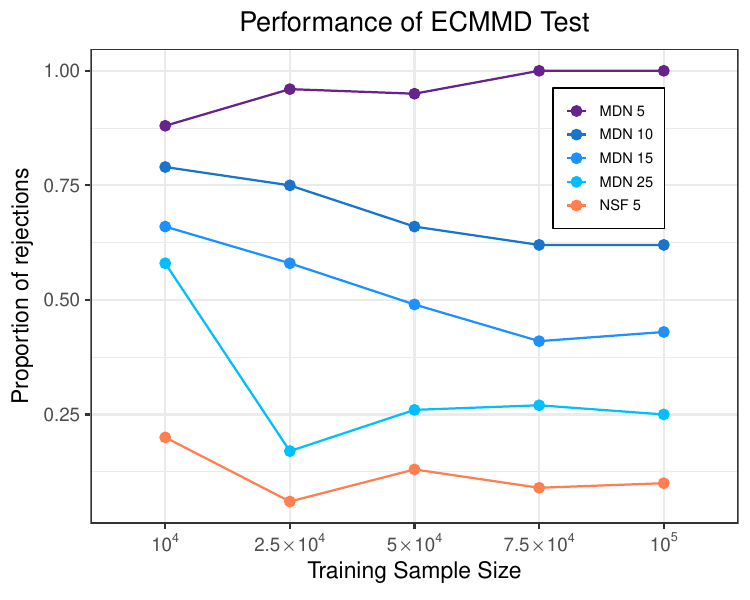}
      \caption*{(b)}
    \end{subfigure}
    \caption{ \small{ Empirical power of the ECMMD test for validating the emulators of the posterior distributions in (a) \texttt{Two Moons} and (b) \texttt{SLCP} examples as a function of training budget. } }
    \label{fig:twomoonsSLCP}
\end{figure} 

The MDN and NSF based posterior approximations are trained using their implementation in the \texttt{sbi} \cite{tejero2020sbi} package. We use a training budget (sample size) of $N\in \{10^2, 5\times 10^2, 10^3, 5\times 10^3, 10^4, 5\times 10^4, 10^5\}$ for the \texttt{Two Moons} example and $N\in \{10^4, 2.5\times 10^4, 5\times 10^4, 7.5\times 10^4, 10^5\}$ for the \texttt{SLCP} example.  For testing the validity of the emulators we consider a test budget of $n=1000$ with $K = 50$ nearest neighbors. For each $1\leq i\leq 1000$, we generate $(\theta_i, X_i)$, where $\theta_i$ is generated from the prior and $X_i$ is generated from the simulator given $\theta_i$. Then with $X_i$ as input to the posterior emulator we generate a sample $\theta_i^\prime$, for $1\leq i\leq 1000$. Based on samples $(\theta_i^\prime, \theta_i, X_i)$, for $1\leq i\leq 1000$, we implement the asymptotic ECMMD based test from \eqref{eq:defphin}. We use a nominal level of $\alpha = 0.05$ and the Gaussian kernel with bandwidth $\lambda = \text{Median}\left\{\left\|\theta_i-\theta_i^\prime\right\|:1\leq i\leq 1000\right\}$. The test is repeated $100$ times (with the same trained model) to estimate the proportion of rejections as a function of the training budget, for different models.  

The results are shown in Figure \ref{fig:twomoonsSLCP}. 
In the \texttt{Two Moons} example, MDNs with $5$ and $7$ mixtures are able to emulate the posterior distribution (with moderate training budget). On the other hand, in the more complex \texttt{SLCP} example, even with $25$ mixtures the MDN is unable to fully capture the complexities of the posterior distribution. In contrast, the NSF model with $5$ transforms is able to emulate the posterior distribution in both examples with moderate training budget. 
This showcases the power of the NSF in emulating complex posterior distribution compared to MDNs. 

\subsubsection{Conditional Density for Photometric Redshift}
\label{sec:densitysimulator}

In this section we illustrate the applicability of the $\EMMD$ measure in testing  emulators for simulating redshifts associated with galaxy images. Specifically, our goal is to validate conditional density estimates of the synthetic `redshift' $X$ associated with photometric or `photo-z' galaxy images $Z$. This example is motivated by the diagnostic studies on conditional density of photometric redshifts in \cite{zhao2021diagnostics,dey2022calibrated,leroy2021md} and presents the $\EMMD$ test as a powerful method for validating photo-z probability densities.

In our experiments, we use \texttt{GalSim}, the open source toolkit for simulating realistic images of astronomical objects introduced in \cite{rowe2015galsim},  to generate $20\times 20$-pixel images of galaxies. The parameters are set as follows: 
 \begin{itemize}
 
 \item The galaxy's rotational angle $\alpha$ with respect to the $x$-axis is chosen from $\text{Unif}\left[-\pi/2,\pi/2\right]$. 
 
\item The parameter $\lambda$, which is the ratio between the minor and major axes of the projection of the elliptical galaxy, is set to be $0.75$ with a small uniform noise generated from $\text{Unif}\left[-0.1,0.1\right]$
added to it. 

\end{itemize} 
\begin{figure}[ht]
    \centering
    \begin{subfigure}{.5\textwidth}
      \centering
      \includegraphics[width=.7\linewidth]{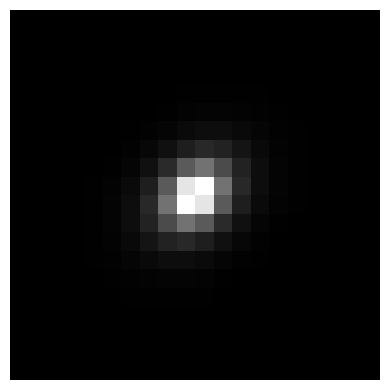}
      \caption*{\small{(a)}} 
    \end{subfigure}%
    \begin{subfigure}{.5\textwidth}
      \centering
      \includegraphics[width=.7\linewidth]{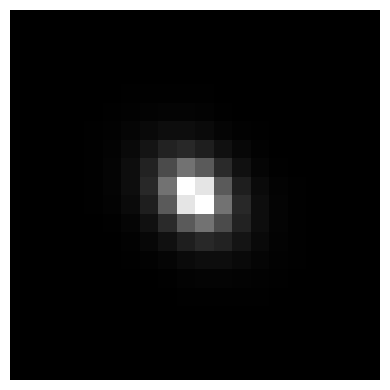}
      \caption*{(b)}
    \end{subfigure}
    \caption{ \small{ Galaxy images generated by \texttt{GalSim} with $\lambda = 0.75$ and (a) $\alpha = -\pi/4$, (b) $\alpha = \pi/4$. } }
    \label{fig:galaxyimages}
\end{figure}
These random parameters $\alpha$ and $\lambda$ are used as input to \texttt{GalSim} to generate a galaxy image $\bm Z \in [0, 1]^{20 \times 20}$. Example of galaxy images with $(\alpha, \lambda) = (-\pi/4,  0.75)$ and $(\alpha, \lambda) = (\pi/4,  0.75)$ are shown in Figure \ref{fig:galaxyimages}.
To simulate the `redshift' $X$ associated with $\bm Z$ we consider the following multimodal setup:
\begin{align}\label{eq:Xgivenalpha}
    X|\alpha\sim
    \begin{cases}
        \frac{1}{2} N \left(2,1\right) + \frac{1}{2} N \left(-2,1\right) & \text{ if }\alpha\geq \theta\\
        \frac{1}{4} N\left(3,0.0625\right) + \frac{1}{4}N\left(\mu,0.0625\right) + \frac{1}{4}N\left(-3,0.0625\right) + \frac{1}{4}N\left(-\mu,0.0625\right) & \text{otherwise,}
    \end{cases}
\end{align}
for fixed parameters $\theta\in \left\{-\pi/3, -\pi/4, -\pi/6, -\pi/12, 0, \pi/12, \pi/6, \pi/4, \pi/3\right\}$ and $\mu\in\{0, 0.5, 1, 1.5\}$. The marginal distribution of $X$ for $\theta \in  \{-\pi/3, 0, \pi/3\}$ and different values of $\mu$ are shown in Figure \ref{fig:marginalX}. 

\begin{figure}
    \centering
    \begin{subfigure}{.32\linewidth}
    \includegraphics[width=\linewidth]{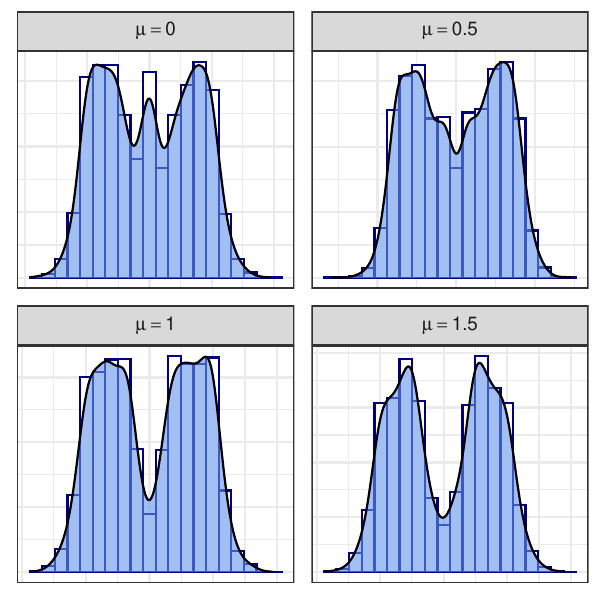}
    \caption*{(a)}
    \end{subfigure}
    \hfil
    \begin{subfigure}{.32\linewidth}
    \includegraphics[width=\linewidth]{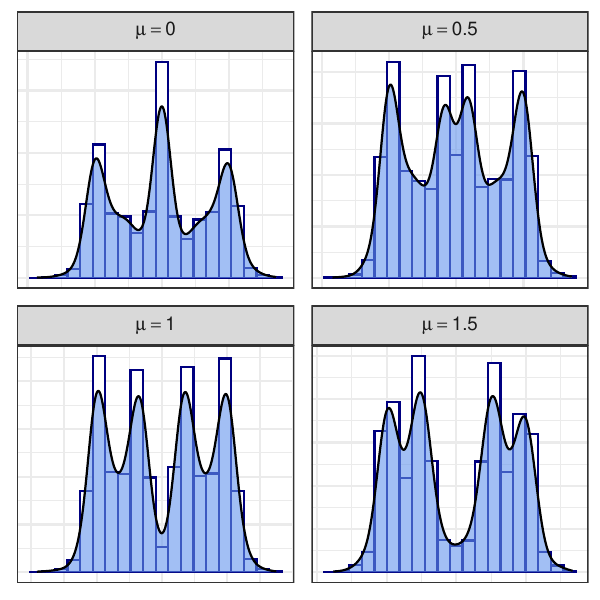}
    \caption*{(b)}
    \end{subfigure}
    \hfil
    \begin{subfigure}{.32\linewidth}
    \includegraphics[width=\linewidth]{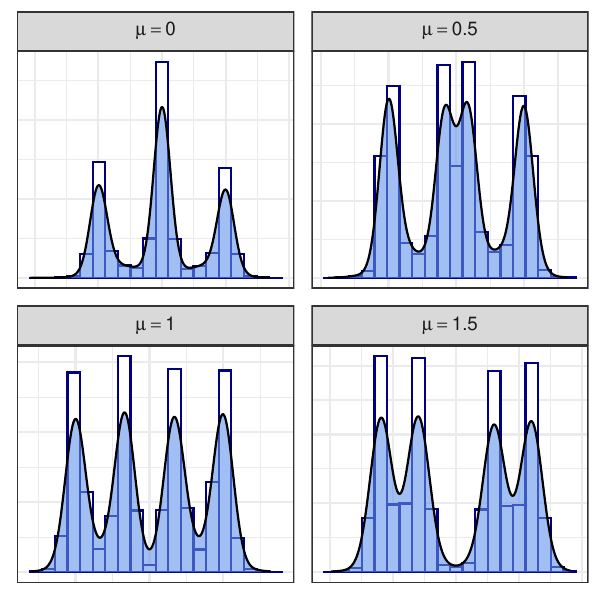}
    \caption*{(c)}
    \end{subfigure}
    \caption{ \small{ Marginal distribution of `redshift' $X$ generated according to \eqref{eq:Xgivenalpha} for (a) $\theta = -\pi/3$, (b) $\theta = 0$, and (c) $\theta = \pi/3$. The density and histogram are generated using $10000$ samples.} }
    \label{fig:marginalX}
\end{figure}
  
In our experiment we use the `redshift' value $X$ and the galaxy image $\bm Z$ as observations, and the parameter $\theta$ remains implicit and hence,  unobserved. Based on the marginal distributions and following the experiments from \cite{zhao2021diagnostics,leroy2021md} we model the conditional distribution of $X|\bm Z$ as a mixture of Gaussians. In particular,  we consider a Gaussian convolutional mixture density network (ConvMDN) \cite{d2018photometric} with $2$ mixtures. This will play the role of the emulator. To fit the ConvMDN we use a convolutional neural network with two convolutional layers interspersed with ReLU layers and having filters of size $3$ with $6$ and $12$ channels,  respectively. Post convolution, after passing through a $2\times 2$ max-pooling layer, a flattening layer transforms the 2D feature maps into a 1D vector, followed by two fully connected layers with $128$ and $10$ neurons and an output layer with $2$ neurons. The means, standard deviation, and weights of the mixtures model are determined using the output layer and the mechanism from \cite[Section 2.2]{d2018photometric}. To train the models we use the negative log-likelihood loss and optimize through the Adam optimizer  \cite{kingma2014adam} with learning rate $10^{-3}$ and parameters $\beta_1 = 0.9$ and $\beta_2 = 0.999$.

For training the ConvMDN, we use $N = 20000$ training samples generated according to the above schematics and to test the accuracy of the trained model we generate $n = 2500$ testing samples. For each $1\leq i\leq 2500$, we 
generate $(X_i, \bm Z_i)$, where $\bm Z_i$ is a galaxy image generated from \texttt{GalSim} and $X_i$ follows \eqref{eq:Xgivenalpha}, with parameters chosen as above. Also, with the galaxy image $\bm Z_i$ as input to the trained ConvMDN model, we generate a sample $Y_i$, for $1\leq i\leq 2500$. To evaluate the ConvMDN model in emulating redshifts, we now test whether or not $X|\bm Z$ (the simulator) has the same distribution as $Y |\bm Z$ (the emulator) using the asymptotic $\EMMD$ test from \eqref{eq:defphin}, based on the samples $(X_i, Y_i, \bm Z_i)$, for $1 \leq i \leq n$. We use a nominal level of $\alpha=0.05$, the number of nearest-neighbors $K= 100$, and the Gaussian kernel $\sfK_{\lambda}(x,y) = \exp\left(-(x-y)^2/\lambda^2\right)$ with bandwidth $\lambda = \text{Median}\{|X_i-Y_i|: 1\leq i\leq 2500\}$. The test is repeated $100$ times (with the same trained model) to estimate the proportion of rejections as a function of $\theta$, for different values of $\mu$.

\begin{figure}[ht]
    \centering
    \begin{minipage}{.5\textwidth}
      \centering
      \includegraphics[width=0.95\linewidth]{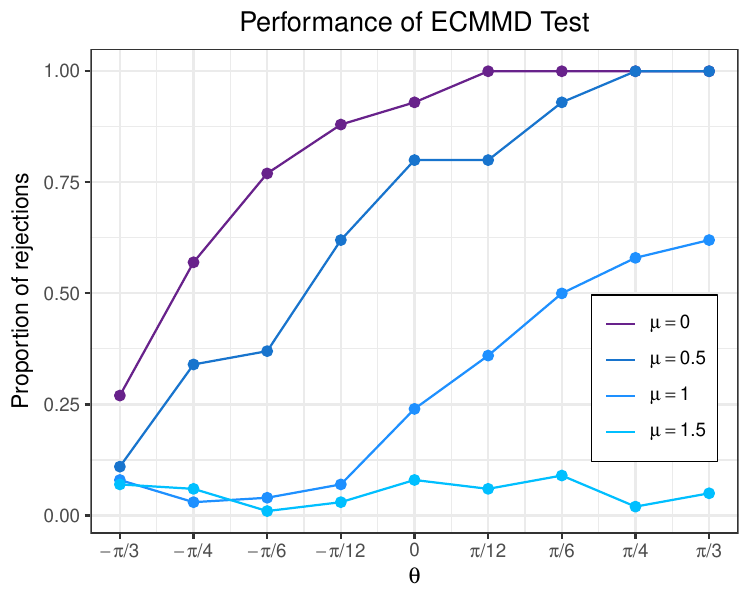} \\ 
      \small{ (a) } 
    \end{minipage}%
    \begin{minipage}{.5\textwidth}
      \centering
      \includegraphics[width=0.95\linewidth]{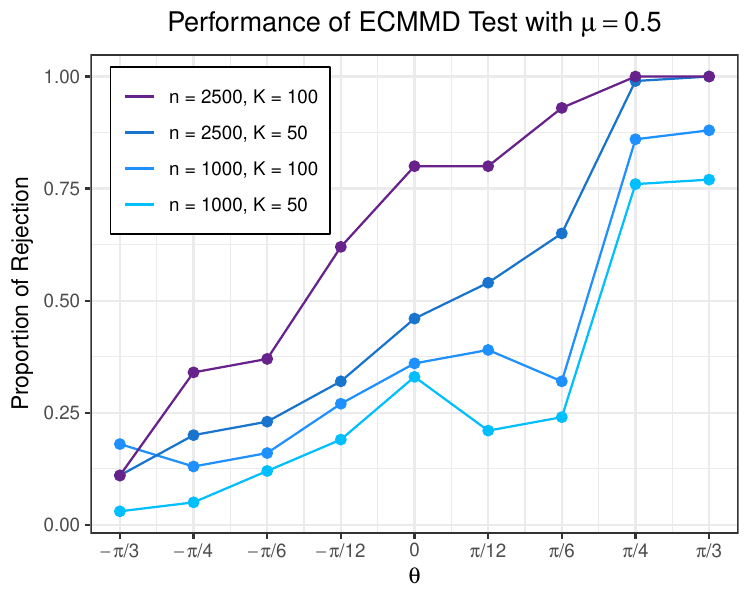} \\ 
      \small{ (b) } 
      \end{minipage}
  \caption{ \small{ Empirical power of the $\EMMD$ test for validating the Gaussian ConvMDN emulator for generating redshifts as a function of $\theta$:  (a) for $n=2500$, $K=100$, and different values of $\mu$, and (b) for $\mu=0.5$ and different values of $n, K$. } }  
      \label{fig:galaxy}
\end{figure} 
The results are shown in Figure \ref{fig:galaxy}. We observe the following: 
\begin{itemize}
\item[$1.$] the power of the $\EMMD$ test is more for larger values of $\theta$; and 
\item[$2.$] the power decreases as $\mu$ increases. 
\end{itemize}
This can be explained by looking at the marginal distribution of $X$. 
Note that the marginal distribution of $X$ has 2 modes for small values of $\theta$ (see Figure \ref{fig:marginalX}(a)), hence, the ConvMDN model provides a good fit in this case. Therefore, the $\EMMD$ test accepts the null hypothesis $P_{X|\bm Z} = P_{Y |\bm Z}$ more often when $\theta$ is small. As $\theta$ increases, the number of modes increase, which leads to more rejections, hence, the increasing trends in the power curves in Figure \ref{fig:galaxy}(a). Further, even though a $2$ component Gaussian mixture seems inadequate for larger values of $\theta$, when $\mu$ is also large, the modes tend to flatten out (see the histograms for $\mu=1.5$ in Figure \ref{fig:marginalX}), which makes it harder to distinguish the true distribution from a $2$-mixture of Gaussian. Hence, the rates of increase of the power curves in Figure \ref{fig:galaxy}(a) are slower for larger values of $\mu$. 
In Figure \ref{fig:galaxy}(b) we fix $\mu=0.5$ and show the proportion of rejections as a function of $\theta$, for different values of $n$ and $K$. As expected, the power improves as $n$ and $K$ increases and the $\EMMD$ test accurately detects the differences between the 2 distributions.
Hence, the $\EMMD$ test successfully validates whether the emulator model is a reasonable surrogate for simulating photometric redshift. This also illustrates how the proposed method can be more generally applicable as a model diagnostic tool within the realm of simulation-based inference.
 
\small

\subsection*{Acknowledgements} B. B. Bhattacharya was supported by NSF CAREER grant DMS 2046393, NSF grant DMS 2113771, and a Sloan Research Fellowship. The authors also thank Lucas Janson for helpful discussions.

\bibliographystyle{abbrvnat}
\bibliography{ref}

\normalsize 

\appendix

\section{Proof of Proposition \ref{ppn:EMMD0} and Proposition \ref{ppn:EMMDexpr}}

\subsection{Proof of Proposition \ref{ppn:EMMD0}} 
\label{sec:conditionalpf}
To begin with suppose $\P_{Z \sim P_Z}\left[ P_{X|Z} =  P_{Y|Z}\right] = 1$. , by recalling \eqref{eq:conditionalmean}, the conditional mean embeddings $\mu_{ P_{X|Z}}$ and $\mu_{ P_{X|Z}}$ are equal on a $P_Z$-almost sure set. Hence, \eqref{eq:defEMMD} implies, $\EMMD^2\left[\cF,  P_{X|Z}, P_{Y|Z}\right] = 0$. 

Now, suppose $\EMMD^2\left[\cF,  P_{X|Z}, P_{Y|Z}\right] = 0$. Then by \eqref{eq:defEMMD},
\begin{align}\label{eq:condmeanembed0}
    \left\|\mu_{ P_{X|Z}} - \mu_{ P_{Y|Z}}\right\|_{\cH} = 0\text{ almost surely } P_Z . 
\end{align} 
Since $\sfK$ is characteristic by Assumption \ref{assumption:K}, recalling the definition of characteristic kernels from \citet{sriperumbudur2008injective} the identity from \eqref{eq:condmeanembed0} implies, $P_{X|Z} =  P_{Y|Z}$, on a $P_Z$-almost sure set, which completes the proof of Proposition \ref{ppn:EMMD0}.

\subsection{Proof of Proposition \ref{ppn:EMMDexpr}}
\label{sec:Kpf}

From the definition of the conditional mean embeddings in \eqref{eq:conditionalmean}, 
\begin{align}\label{eq:EMMDequivKXY}
    \E\left[\left\|\mu_{ P_{X|Z}}(\cdot) - \mu_{ P_{Y|Z}}(\cdot)\right\|_{\cH}^2\right] = \E\left[\left\|\E\left[\sfK(X,\cdot) - \sfK(Y,\cdot)\middle|Z\right]\right\|_{\cH}^2\right].
\end{align}
Now, the construction of $(X,X',Y,Y')$ gives,
\begin{align*}
    \E\left[\left\|\E\left[\sfK(X,\cdot) - \sfK(Y,\cdot)\middle|Z\right]\right\|_{\cH}^2\right] = \E\left[\left\langle\E\left[\sfK(X,\cdot) - \sfK(Y,\cdot)\middle|Z\right],\E\left[\sfK(X',\cdot) - \sfK(Y',\cdot)\middle|Z\right]\right\rangle_{\cH}\right]
\end{align*}
Interchanging the inner product and conditional expectation gives, 
\begin{align}\label{eq:gequiv}
    \E\left[\left\|\E\left[\sfK(X,\cdot) - \sfK(Y,\cdot)\middle|Z\right]\right\|_{\cH}^2\right] 
    & = \E\left[\left\langle\sfK(X,\cdot) - \sfK(Y,\cdot), \sfK(X',\cdot) - \sfK(Y',\cdot)\right\rangle_{\cH}\right]\nonumber\\
    & = \E\left[\sfK(X,X') + \sfK(Y,Y') - \sfK(X,Y') - \sfK(X',Y)\right]. 
\end{align}
This together with the equality in \eqref{eq:EMMDequivKXY} completes the proof of Proposition \ref{ppn:EMMDexpr}.

\section{Proof of Theorem \ref{thm:consistency}}\label{sec:proofofconsistency}

To begin with, for notational convenience, define
\begin{align}\label{eq:defTn}
    T_n = T_n(\bcW_n,\sZ_n) := \EMMD^2[\sfK, \bcW_n, \sZ_n] = \frac{1}{n}\sum_{u=1}^{n}\frac{1}{K}\sum_{v \in N_{G(\sZ_n)}(u)}\sfH(\bW_{u},\bW_{v}) . 
\end{align}
To show Theorem \ref{thm:consistency} we will prove that $T_n$ converges to $\EMMD\left[\cF, P_{X|Z}, P_{Y|Z}\right]$ in $L^2$. The proof is based on the following 2 lemmas: 

\begin{lemma}\label{lemma:consistency2}
    Under the assumptions of Theorem \ref{thm:consistency},
    \begin{align*}
        \E[T_n] \ra\E\left[\left\|g_{Z_{1}}\right\|_{\cH}^2\right] , 
    \end{align*}
    where $g_{Z_1}(\cdot) := \E\left[\sfK(\cdot, X_{1}) - \sfK(\cdot, Y_{1})| Z_{1}\right]$.
\end{lemma}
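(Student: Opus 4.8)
The plan is to evaluate $\E[T_n]$ in closed form and show that it converges to $\E\!\left[\|g_{Z_1}\|_{\cH}^2\right]$, which, since $g_{Z_1} = \mu_{P_{X|Z_1}} - \mu_{P_{Y|Z_1}}$ (recall \eqref{eq:conditionalmean}), is exactly $\EMMD^2[\cF,P_{X|Z},P_{Y|Z}]$. First, by exchangeability of the i.i.d.\ triples and the fact that $N_{G(\sZ_n)}(1)$ is $\cF(\sZ_n)$-measurable,
\[
\E[T_n] = \frac{1}{K}\,\E\!\left[\sum_{v\in N_{G(\sZ_n)}(1)}\sfH(\bW_1,\bW_v)\right] .
\]
Conditioning on $\cF(\sZ_n)$, using that the pairs $\bW_u=(X_u,Y_u)$ are independent given $\sZ_n$ with $\bW_u\mid Z_u\sim P_{XY|Z_u}$, and invoking the reproducing property of $\cH$ as in \eqref{eq:defH}--\eqref{eq:gequiv}, one gets $\E[\sfH(\bW_1,\bW_v)\mid\cF(\sZ_n)] = \langle g_{Z_1},g_{Z_v}\rangle_{\cH}$ for every neighbour $v$ (note $v\neq 1$, as the $K$-NN graph has no self-loops), so that
\[
\E[T_n] = \E\!\left[\Big\langle g_{Z_1},\ \tfrac{1}{K}\!\!\sum_{v\in N_{G(\sZ_n)}(1)}\!\! g_{Z_v}\Big\rangle_{\!\cH}\right] .
\]

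Next I would record the pointwise bound $\|g_z\|_{\cH}\le\E[\sfK(X,X)^{1/2}+\sfK(Y,Y)^{1/2}\mid Z=z]$, which by conditional Jensen and the moment hypotheses of Theorem \ref{thm:consistency} yields $g\in L^2(P_Z;\cH)$ and hence $\E[\|g_{Z_1}\|_{\cH}^2]<\infty$. It then suffices to show $\E\big[\langle g_{Z_1},\ \tfrac1K\sum_{v\in N_{G(\sZ_n)}(1)}(g_{Z_v}-g_{Z_1})\rangle_{\cH}\big]\to 0$. Since $z\mapsto g_z$ need not be continuous, I would fix $\varepsilon>0$ and pick $\tilde g\in C_b(\R^d;\cH)$ with $\E[\|g_Z-\tilde g_Z\|_{\cH}^2]<\varepsilon^2$ — possible because $\cH$ is separable (Assumption \ref{assumption:K}) and bounded continuous maps are dense in the Bochner space $L^2(P_Z;\cH)$ over $\R^d$ — and split the error into the three pieces
\[
\Big\langle g_{Z_1},\ \tfrac1K\!\!\sum_{v\in N_{G(\sZ_n)}(1)}\!\!(g_{Z_v}-\tilde g_{Z_v})\Big\rangle,\qquad \Big\langle g_{Z_1},\ \tfrac1K\!\!\sum_{v\in N_{G(\sZ_n)}(1)}\!\!\tilde g_{Z_v}-\tilde g_{Z_1}\Big\rangle,\qquad \langle g_{Z_1},\ \tilde g_{Z_1}-g_{Z_1}\rangle .
\]
For the first and third terms I would apply Cauchy--Schwarz on the probability space together with Stone's lemma — any point of $\sZ_n$ is a $K$-nearest neighbour of at most $\gamma_d K$ points, with $\gamma_d$ depending only on $d$ — which gives $\E\big[\tfrac1K\sum_{v\in N_{G(\sZ_n)}(1)}\|g_{Z_v}-\tilde g_{Z_v}\|_{\cH}^2\big]\le\gamma_d\,\E[\|g_Z-\tilde g_Z\|_{\cH}^2]<\gamma_d\varepsilon^2$, so both terms are $O(\varepsilon)$ uniformly in $n$. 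For the middle term I would use that $\max_{v\in N_{G(\sZ_n)}(1)}\|Z_v-Z_1\|_2\to 0$ almost surely when $K=o(n)$ (valid here since $K=o(n/\log n)$; this is where Assumption \ref{assumption:Kn} enters, cf.\ \cite{jaffe2020randomized,deb2020measuring}), so continuity of $\tilde g$ forces $\tfrac1K\sum_{v\in N_{G(\sZ_n)}(1)}\|\tilde g_{Z_v}-\tilde g_{Z_1}\|_{\cH}\to 0$ a.s.; being dominated by $2\|\tilde g\|_\infty\|g_{Z_1}\|_{\cH}\in L^1$, dominated convergence kills it. Combining, $\limsup_n\big|\E[T_n]-\E[\|g_{Z_1}\|_{\cH}^2]\big|\le C\varepsilon$ with $C$ independent of $\varepsilon$, and letting $\varepsilon\downarrow 0$ finishes the proof.

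The main obstacle I anticipate is the lack of regularity of $z\mapsto g_z$: the heuristic ``the $K$ nearest neighbours of $Z_1$ are close to $Z_1$, so $g_{Z_v}\approx g_{Z_1}$'' has to be routed entirely through the continuous approximant $\tilde g$, and making the resulting error uniform in $n$ needs the reverse-$K$-nearest-neighbour counting bound rather than a pointwise estimate. The only other technical point is the a.s.\ vanishing of the $K$-NN radius at the $P_Z$-random centre $Z_1$ (a binomial tail bound for the number of sample points in a small ball around $Z_1$, plus Borel--Cantelli, integrated over $Z_1$), which is standard.
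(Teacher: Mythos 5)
Your proof is correct, and it reaches the same reduction as the paper — namely $\E[T_n] = \E\bigl[\tfrac1K\sum_{v\in N_{G(\sZ_n)}(1)}\langle g_{Z_1},g_{Z_v}\rangle_{\cH}\bigr]$ — but from there the two arguments diverge. The paper bounds the discrepancy by $\sqrt{\E[\|g_{Z_1}\|_{\cH}^2]}\,\sqrt{\E[\|g_{Z_{N(1)}}-g_{Z_1}\|_{\cH}^2]}$ with $N(1)$ a uniformly random out-neighbour, then black-boxes two lemmas from \citet{deb2020measuring}: their Lemma D.2 (essentially Stone's reverse-$K$-NN counting bound) to get uniform integrability of $\|g_{Z_{N(1)}}-g_{Z_1}\|_{\cH}^2$ from $\E[\|g_{Z_1}\|_{\cH}^4]<\infty$, and their Lemma D.3 for the $o(1)$ convergence under $K=o(n/\log n)$. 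You instead re-derive the content of Lemma D.3 from scratch: approximate $z\mapsto g_z$ in the Bochner space $L^2(P_Z;\cH)$ by a bounded continuous $\tilde g$, control the two $g/\tilde g$ swap terms uniformly in $n$ via Stone's lemma, and handle the continuous part with the a.s.\ vanishing of the $K$-NN radius plus dominated convergence. Your route is more self-contained and, incidentally, requires only $\E[\|g_{Z_1}\|_{\cH}^2]<\infty$ rather than the fourth moment the paper's uniform-integrability step leans on; the paper's route is shorter because the entire approximation machinery is delegated to the cited lemmas. Both approaches rest on the same two pillars (Stone's lemma and vanishing NN radius), which is why the $K=o(n/\log n)$ rate and Assumption \ref{assumption:Kn} enter in the same places.
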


\begin{lemma}\label{lemma:consistency1}
    Under the assumptions of Theorem \ref{thm:consistency},
    \begin{align}\label{eq:Tnvariance}
        \mathrm{Var}[T_n] = o\left(n^{-\frac{\delta}{4 + \delta}}\right). 
    \end{align}
\end{lemma}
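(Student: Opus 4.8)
The plan is to control $\mathrm{Var}[T_n]$ via the Efron--Stein inequality, exploiting that $T_n=\EMMD^2[\sfK,\bcW_n,\sZ_n]$ is a permutation-symmetric function of the i.i.d.\ observations $(X_1,Y_1,Z_1),\dots,(X_n,Y_n,Z_n)$ whose dependence on any single observation is ``local'' in the nearest-neighbor graph $G(\sZ_n)$. Let $T_n^{(i)}$ denote the statistic obtained when $(X_i,Y_i,Z_i)$ is replaced by an independent copy $(X_i',Y_i',Z_i')$. Then Efron--Stein gives
\[
  \mathrm{Var}[T_n]\;\le\;\tfrac12\sum_{i=1}^n\E\big[(T_n-T_n^{(i)})^2\big],
\]
so it suffices to bound a single-coordinate perturbation uniformly and then sum over $i$.

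The key structural input is a \emph{deterministic} bound on how much replacing one coordinate can move $T_n$. Every vertex of $G(\sZ_n)$ has out-degree exactly $K$, and, by the standard cone argument for Euclidean $K$-NN graphs, its in-degree is at most $\gamma_d K$ for a constant $\gamma_d$ depending only on $d$; here Assumption \ref{assumption:Kn} guarantees that distances are a.s.\ distinct, so $G(\sZ_n)$ is well defined and this degree bound applies (cf.\ the discussion following Assumption \ref{assumption:Kn}). Consequently, replacing the $i$-th point alters at most $C_d K$ of the directed edge contributions $\tfrac{1}{nK}\sfH(\bW_u,\bW_v)$ to $T_n$: the out- and in-edges at vertex $i$ in the old and new graphs ($O(K)$ of them), together with the out-neighborhoods of the $O(K)$ vertices that had $Z_i$ (resp.\ $Z_i'$) among their $K$ nearest neighbors, each of which changes by $O(1)$ edges. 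Since Cauchy--Schwarz gives $|\sfH(\bW_u,\bW_v)|\le 2\sqrt{\kappa_u\kappa_v}\le \kappa_u+\kappa_v$ with $\kappa_w:=\sfK(X_w,X_w)+\sfK(Y_w,Y_w)$, this yields
\[
  |T_n-T_n^{(i)}|\;\le\;\frac{C_d K}{nK}\cdot 2\max\Big(\max_{1\le w\le n}\kappa_w,\ \kappa_i'\Big)\;=\;\frac{2C_d}{n}\,\max\Big(\max_{1\le w\le n}\kappa_w,\ \kappa_i'\Big),
\]
where the factors of $K$ cancel --- the reason the final rate will be free of $K$.

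It remains to absorb the heavy tails of the kernel: only $2+\delta$ moments are assumed, so $\max_{w}\kappa_w$ can be large, and this is the one place that needs care. Because $(X_w,Y_w,Z_w)$ are i.i.d., the $\kappa_w$ are i.i.d.\ with $\E[\kappa_w^{2+\delta}]\le 2^{1+\delta}\big(\int\sfK(x,x)^{2+\delta}\rmd P_X(x)+\int\sfK(y,y)^{2+\delta}\rmd P_Y(y)\big)<\infty$ by the hypothesis of Theorem \ref{thm:consistency}, so the elementary bound $\max_{w\le n}\kappa_w\le\big(\sum_{w\le n}\kappa_w^{2+\delta}\big)^{1/(2+\delta)}$ together with Jensen's inequality gives $\E\big[(\max_{w\le n}\kappa_w)^2\big]\le\big(n\,\E[\kappa_1^{2+\delta}]\big)^{2/(2+\delta)}=O\big(n^{2/(2+\delta)}\big)$, while $\E[\kappa_i'^2]$ is a bounded constant. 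Substituting into the perturbation bound gives $\E[(T_n-T_n^{(i)})^2]=O\big(n^{2/(2+\delta)-2}\big)$ uniformly in $i$, and summing over the $n$ coordinates in Efron--Stein yields $\mathrm{Var}[T_n]=O\big(n^{2/(2+\delta)-1}\big)=O\big(n^{-\delta/(2+\delta)}\big)$, which in particular is $o\big(n^{-\delta/(4+\delta)}\big)$ as claimed. The only genuine obstacle is packaging the bounded-degree/locality fact for the $K$-NN graph correctly so that the powers of $K$ cancel, and choosing a maximal inequality that survives on just $2+\delta$ moments; with those in hand the remainder is bookkeeping.
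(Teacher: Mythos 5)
Your proof is correct and follows the same route as the paper: Efron--Stein, a deterministic $O(K)$ bound on how many $K$-NN edges a single coordinate change can disturb (so the $K$'s cancel against the $1/(nK)$ normalization), and then a moment bound on $\max_w\kappa_w$. The one place you diverge from the paper's argument is the maximal bound: the paper truncates $\sfK(X_u,X_u)^2$ at level $\varepsilon n^{4/(4+\delta)}$ and integrates the tail to get $\E\bigl[\max_u\sfK(X_u,X_u)^2\bigr]=o\bigl(n^{4/(4+\delta)}\bigr)$, while you majorize $\max_w\kappa_w\le\bigl(\sum_w\kappa_w^{2+\delta}\bigr)^{1/(2+\delta)}$ and apply Jensen for the concave power $2/(2+\delta)$, giving the sharper rate $O\bigl(n^{2/(2+\delta)}\bigr)$. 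Both are valid; your observation yields $\mathrm{Var}[T_n]=O\bigl(n^{-\delta/(2+\delta)}\bigr)$, which is strictly better than the claimed $o\bigl(n^{-\delta/(4+\delta)}\bigr)$ and in fact matches the rate that would come out of the paper's truncation if $\varepsilon_n$ were chosen optimally. So: same decomposition, same key structural lemma about the $K$-NN graph, slightly cleaner and tighter handling of the heavy-tailed max.
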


The proofs of Lemma \ref{lemma:consistency2} and Lemma \ref{lemma:consistency1} are given in Appendix \ref{sec:proofofconsistency2} and Appendix \ref{sec:proofofconsistency1}, respectively. Using these results the proof of Theorem \ref{thm:consistency} can be completed as follows: Note that Lemma \ref{lemma:consistency2} and Lemma \ref{lemma:consistency1} together implies,  
\begin{align}\label{eq:TgZ}
    T_n\pto \E\left[\left\|g_{Z_{1}}\right\|_{\cH}^2\right].
\end{align}
Now, consider $(X,X',Y,Y',Z)$ such that $Z\sim P_Z$ and $(X,Y), (X',Y')$ are generated i.i.d. from $ P_{XY|Z}$. Then recalling \eqref{eq:gequiv} gives,
\begin{align}\label{eq:EgEMMD}
    \E\left[\left\|g_{Z_{1}}\right\|_{\cH}^2\right] = \E\left[\sfK(X,X') + \sfK(Y,Y') - \sfK(X,Y') - \sfK(X',Y)\right].
\end{align}
Collecting \eqref{eq:defTn}, \eqref{eq:TgZ}, \eqref{eq:EgEMMD}, and Proposition \ref{ppn:EMMDexpr}, the result in Theorem \ref{thm:consistency} follows.

\subsection{Proof of Lemma \ref{lemma:consistency2}}\label{sec:proofofconsistency2} 

To begin with, note that for all $g_{Z_u} := \E\left[\sfK(\cdot, X_{u}) - \sfK(\cdot, Y_{u})| Z_{u}\right] \in \cH$ is well defined for all $1\leq u\leq n$ by \cite[Theorem 4.1]{park2020measure}. Now, recalling the definitions of $T_n$ and $\sfH$ from \eqref{eq:defTn} and
 \eqref{eq:defH}, respectively, and by exchangeability, 
\begin{align}\label{eq:TnXY}
\E\left[T_{n}\right] 
    = & \E\left[\frac{1}{K}\sum_{u=1}^{n}\left\langle\sfK(\cdot, X_{1}) - \sfK(\cdot, Y_{1}), \sfK(\cdot, X_{u}) - \sfK(\cdot, Y_{u})\right\rangle_{\cH}\I\left\{(1,u)\in E(G(\sZ_n))\right\}\right]  \nonumber \\ 
     = & \E\left[\frac{1}{K}\sum_{u=1}^{n} \E\left[ \left\langle\sfK(\cdot, X_{1}) - \sfK(\cdot, Y_{1}), \sfK(\cdot, X_{u}) - \sfK(\cdot, Y_{u}) \right\rangle_{\cH} | \cF(\sZ_n) \right] \I\left\{(1,u)\in E(G(\sZ_n))\right\}\right] \nonumber \\ 
    = &\  \E\left[\frac{1}{K}\sum_{u=1}^{n} \langle g_{Z_{1}}, g_{Z_{u}}\rangle_{\cH} \I\left\{(1,u)\in E(G(\sZ_n))\right\}\right] , 
\end{align}
where the last step follows by interchanging the inner product and expectation. Now, suppose $N(1)$ is an index selected from the neighbors of vertex $1$ in $G(\sZ_n)$ uniformly at random. Then from \eqref{eq:TnXY}, 
\begin{align}
    \left|\E\left[T_{n}\right] - \E\left[\left\|g_{Z_{1}} \right\|_{\cH}^2\right]\right| 
    &\ \leq \E\left[\frac{1}{K}\sum_{u=1}^{n}\left|\langle g_{Z_{1}}, g_{Z_{u}}\rangle_{\cH} - \|g_{Z_{1}}\|_{\cH}^2\right|\I\left\{(1,u)\in E(G(\sZ_n))\right\}\right] \label{eq:gZH}\\ 
    &\ =\E\left[\frac{1}{K}\sum_{u=1}^{n}\left|\langle g_{Z_{1}}, g_{Z_{u}} - g_{Z_{1}}\rangle_{\cH}\right|\I\left\{(1,u)\in E(G(\sZ_n))\right\}\right]\nonumber\\
    &\ \leq \E\left[\|g_{Z_{1}} \|_{\cH}\|g_{Z_{N(1)}} - g_{Z_{1}} \|_{\cH}\right] \nonumber\\  
    &\ \leq\sqrt{\E\left[\|g_{Z_{1}}\|_{\cH}^2\right]}\sqrt{\E\left[\left\|g_{Z_{N(1)}} - g_{Z_{1}}\right\|_{\cH}^2\right]} \label{eq:Hbarlimitdiffbdd}
\end{align}
where the last 2 steps follows by the Cauchy-Schwarz inequality.  From \cite[Lemma D.2]{deb2020measuring}, 
\begin{align}\label{eq:gZN1Z1}
    \E\left[\left\|g_{Z_{N(1)}} - g_{Z_{1}}\right\|_{\cH}^4\right]\lesssim \E\left[\left\|g_{Z_{N(1)}}\right\|_{\cH}^4\right] + \E\left[\left\|g_{Z_{1}}\right\|_{\cH}^4\right]\lesssim \E\left[\left\|g_{Z_{1}}\right\|_{\cH}^4\right].
\end{align} 

Now, recalling $g_{Z_{1}} = \E\left[\sfK(\cdot, X_{1}) - \sfK(\cdot, Y_{1})| Z_{1}\right]$, observe that 
\begin{align} 
   \E \left[ \|g_{Z_{1}}\|_{\cH}^4 \right ] & = \E \left [ \left\|\E\left[\sfK(\cdot,X_{1}) - \sfK(\cdot, Y_{1})|Z_{1}\right]\right\|_{\cH}^4  \right] \nonumber \\ 
    & \leq \E \left[ \E\left[\left\|\sfK(\cdot, X_{1}) - \sfK(\cdot, Y_{1})\right\|_{\cH}^4|Z_{1}\right] \right] \tag*{ (by Jensen's inequality \cite{inequalitygeneralized}) } \nonumber \\ 
   & \lesssim \E\left[\left\|\sfK(\cdot, X_{1})\right\|_{\cH}^4\right] + \E\left[\left\|\sfK(\cdot, Y_{1})\right\|_{\cH}^4\right] \nonumber \\ 
\label{eq:gZ1}  & = \E\left[\sfK(X_{1},X_{1})^2\right] + \E\left[\sfK(Y_{1},Y_{1})^2\right] <\infty , 
\end{align} 
where the last equality follows by the reproducible property of $\cH$. Applying  \eqref{eq:gZ1} in \eqref{eq:gZN1Z1} implies, $\|g_{Z_{N(1)}} - g_{Z_{1}}\|_{\cH}^2$ is uniformly integrable. Recalling that $K = o(n/\log n)$, by \cite[Lemma D.3]{deb2020measuring}, $$\E\left[\left\|g_{Z_{N(1)}} - g_{Z_{1}}\right\|_{\cH}^2\right] = o(1).$$ The proof of Lemma \ref{lemma:consistency2} is completed by recalling \eqref{eq:Hbarlimitdiffbdd}.

\subsection{Proof of Lemma \ref{lemma:consistency1}}\label{sec:proofofconsistency1} 

The establish the result in \eqref{eq:Tnvariance} we will invoke the Efron-Stein inequality. For this, suppose $(\bW_1', Z_1'), (\bW_2', Z_2'), \ldots, (\bW_n', Z_n')$, where $\bm W_i' = (X_i', Y_i')$, for $1 \leq i \leq n$, are generated i.i.d. from the joint distribution $P_{XYZ}$, independent of the data $(\bW_{1},Z_{1}),\ldots, (\bW_{n},Z_{n})$. Define a new collection 
\begin{align}\label{eq:WZs}
\bcW_{n, s} := \left(\bcW_n \setminus\{\bW_{s}\}\right)\bigcup\{ \bW_{s}'\}\text{ and }\cZ_{n,s} := \left(\sZ_n\setminus\{Z_{s}\}\right)\bigcup\{Z_{s}'\} , 
\end{align} 
where we replace the $s$-th data point with an independent copy. (Recall that $\bcW_n:= \{\bW_i = (X_i,Y_i):1\leq i\leq n \}$ and $\sZ_n := \{Z_1,Z_2,\ldots,Z_n\}$.) The Efron-Stein inequality \cite{efron1981variance} (see also \cite[Chapter 4]{concentration}) then implies, 
\begin{align}\label{eq:EfromStein}
    \mathrm{Var}[T_n] & \leq \sum_{s=1}^{n}\E\left[\left(T_n - T_{n, s}\right)^2\right] , 
\end{align}
where, for $1 \leq s \leq n$, $T_{n,s} := T_n(\bcW_{n,s},\cZ_{n,s})$ is the test statistic as defined in \eqref{eq:defTn} with $\bcW_{n}$ and $\cZ_{n}$ replaced by $\bcW_{n,s}$ and $\cZ_{n,s}$, respectively. 

We now proceed to bound the individual terms in the sum in \eqref{eq:EfromStein}. To this end, recall that $G(\sZ_n)$ is the directed $K$-NN graph constructed using $\cZ_{n}$. For $1 \leq s \leq n$, denote by $G(\sZ_{n, s})$ the directed $K$-NN graph constructed using $\cZ_{n, s}$. We will now decompose $T_n-T_{n,s}$ by filtering out the common terms. For this, we need the following definitions. Throughout we fix $1 \leq s \leq n$. 

\begin{itemize}

\item Denote by $\cA_s$ the set of all ordered pairs $(u, v)$, for $1 \leq u \ne v \leq n$, such that either $u=s$ or $v=s$.

\item Now, define
\begin{align}\label{eq:defCns}
    \cC_{n, s} := \left(E(G(\sZ_n))\bigcap E(G(\sZ_{n, s}))\right)\setminus \cA_{s} , 
\end{align}
which is the collection of edges that are common between $G(\sZ_n)$ and $G(\sZ_{n, s})$ such that none of the end points of the edges are the vertex $s$. 

\item Also, for $1 \leq v \leq n$, define
\begin{align*}
    E_{G(\sZ_n)}(v):= \left\{(v,u): u\in N_{G(\sZ_n)}(v)\right\}, 
\end{align*} 
to be the collection of directed edges in $G(\sZ_n)$ with $Z_v$ as the starting point. 

\item Finally, let 
\begin{align*}
    V_{G(\sZ_n)}:= \left\{v\in [n]: E_{G(\sZ_n)}(v)\bigcap\left(E(G(\sZ_n))\bigcap\left(\cC_{n, s} \right)^{c}\right)\neq \emptyset \right\}
\end{align*}
be the indices which have at least one incident edge in $G(\sZ_n)$ that belongs to $\cC_{n,s}^c$. Similarly, for $G(\sZ_{n, s})$, consider 
\begin{align*}
    V_{G(\sZ_{n, s})}:= \left\{v\in [n]: E_{G(\sZ_{n, s})}(v)\bigcap\left(E(G(\sZ_{n, s}))\bigcap\left(\cC_{n,s}\right)^{c}\right)\neq \emptyset \right\}.
\end{align*}
\end{itemize}

We now rewrite the difference between $T_n$ and $T_{n,s}$ as follows: 
\begin{align*}
   T_n - T_{n, s} = \frac{1}{nK}\left[\sum_{u=1}^{n}\sum_{v \in N_{G(\sZ_n)}(u)}\sfH(\bW_{u}, \bW_{v}) - \sum_{u=1}^{n}\sum_{v\in N_{G(\sZ_{n, s})}(u)}\sfH\left(\bW_{u,s},\bW_{v,s}\right)\right] , 
\end{align*} 
where 
\begin{align}\label{eq:newnotation}
    \bW_{u,s}=
    \begin{cases}
    \bW_u' & \text{if }u=s\\
    \bW_u & \text{otherwise}, 
    \end{cases} \text{ and } 
    Z_{u,s}=
    \begin{cases}
    Z_u' & \text{if }u=s\\
    Z_u & \text{otherwise}. 
    \end{cases}
\end{align}
Note that the edges in $\cC_{n,s}$ (recall \eqref{eq:defCns}) are not effected when $Z_{s}$ is replaced by $Z_{s}'$. Thus, 
\begin{align}\label{eq:diff1stTn}
    & T_n - T_{n, s} \nonumber \\  
    & = \frac{1}{nK} \left[\sum_{u=1}^{n}\sum_{\substack{ v \in N_{G(\sZ_n)}(u)\\ (u, v)\in E(G(\sZ_n))\setminus\cC_{n, s} }} \sfH(\bW_{u},\bW_{v}) - \sum_{u=1}^{n} \sum_{\substack{ v \in N_{G(\sZ_{n, s})}(u) \\ (u, v) \in E(G(\sZ_{n, s}) )\setminus\cC_{n, s} } } \sfH\left(\bW_{u, s}, \bW_{v, s} \right) \right] . 
\end{align}
Next, observe that if $u \in V_{G(\sZ_n)}\bigcap V_{G(\sZ_{n, s})}$ is such that $u\neq s$, then the following hold: 
$$u\in \left(V_{G(\sZ_n)}\right)^{c} \implies E_{G(\sZ_n)}(u)\subseteq \cC_{n,s} \text{ and } u\in \left(V_{G(\sZ_{n, s})}\right)^{c} \implies E_{G(\sZ_{n, s})}(u)\subseteq \cC_{n,s}.$$ In other words, vertices in $\left(V_{G(\sZ_n)}\right)^c$ and $\left(V_{G(\sZ_{n, s})}\right)^c$ contribute all their edges to $\cC_{n,s}$ and, hence, have no contribution to the difference in \eqref{eq:diff1stTn}. Thus,\footnote{
For two sequences $\{a_n\}_{ n \geq 1 }$ and $\{b_n\}_{ n \geq 1 }$, denote $a_n \lesssim_{\square} b_n$ to mean $a_n \leq C(\square) b_n$ for $n$ large enough, where $C(\square) > 0$ is a constant that depends on the subscripted parameters. }
\begin{align}\label{eq:diff2ndTn}
    & T_n - T_{n, s} \nonumber \\ 
    & = \frac{1}{nK}\left[\sum_{u\in V_{G(\sZ_n)}}\sum_{\substack{v\in N_{G(\sZ_n)}(u)\\(u,v)\in E(G(\sZ_n))\setminus\cC_{n,s}}}\sfH(\bW_{u},\bW_{v}) - \sum_{u\in V_{G(\sZ_{n, s})}}\sum_{\substack{v\in N_{G(\sZ_{n, s})}(u)\\(u,v)\in E(G(\sZ_{n, s}))\setminus\cC_{n,s}}}\sfH\left(\bW_{u,s},\bW_{v,s}\right)\right] \nonumber \\ 
    & \leq \frac{1}{nK}\left[\left|\sum_{u\in V_{G(\sZ_n)}}\sum_{\substack{v\in N_{G(\sZ_n)}(u)\\(u,v)\in E(G(\sZ_n))\setminus\cC_{n,s}}}\sfH(\bW_{u},\bW_{v})\right| + \left|\sum_{u\in V_{G(\sZ_{n, s})}}\sum_{\substack{v\in N_{G(\sZ_{n, s})}(u)\\(u,v)\in E(G(\sZ_{n, s}))\setminus\cC_{n,s}}}\sfH\left(\bW_{u,s},\bW_{v,s}\right)\right|\right] \nonumber \\ 
      & \lesssim_{d} \frac{1}{n}\left[\max_{1\leq u\neq v\leq n}\left|\sfH(\bW_{u},\bW_v)\right| + \max_{1\leq u\neq v\leq n}\left|\sfH\left(\bW_{u,s},\bW_{v,s}\right)\right|\right] , 
\end{align} 
where the last step uses the bounds: 
\begin{align}\label{eq:ECn}
    \sum_{u\in V_{G(\sZ_n)}}\left|v\in N_{G(\sZ_n)}(u): (u,v)\in E(G(\sZ_n))\setminus\cC_{n,s}\right| 
    \leq 2\left|E(G(\sZ_n))\setminus\cC_{n,s}\right| \lesssim_d K, 
\end{align}
and, similarly, 
\begin{align}\label{eq:ECns}
    \sum_{u\in V_{G(\sZ_{n, s})}}\left|v\in N_{G(\sZ_{n, s})}(u): (u,v)\in E(G(\sZ_{n, s}))\setminus\cC_{n,s}\right|\leq 2\left|E(G(\sZ_{n, s}))\setminus\cC_{n,s}\right| \lesssim_d K .
\end{align} 
The bound in \eqref{eq:ECn} (and similarly \eqref{eq:ECns}) follows by noting that $E(G(\sZ_n))\setminus\cC_{n,s}$ contains the edges in $G(\sZ_n)$ which are either absent in $G(\sZ_{n,s})$ or have $s$ as one of the endpoints; and from the fact that the  maximum total degree (sum of the out-degree and the in-degree) in a $K$-NN graph is bounded by $c(d) K$, for some constant $c(d) > 0$ depending on the dimension $d$ (see, for example, \cite[Lemma 1]{jaffe2020randomized}). 

Combining \eqref{eq:EfromStein} and the bound from \eqref{eq:diff2ndTn} gives, 
\begin{align}
    \mathrm{Var}[T_n] 
    & \leq \sum_{s=1}^{n}\E\left[\left(T_n - T_{n, s}\right)^2\right]\lesssim_{d}\frac{1}{n}\E\left[\left(\max_{1\leq u\neq v\leq n}\left|\sfH(\bW_u,\bW_v)\right|\right)^2\right] . 
    \label{eq:EfronSteinbd}
\end{align}
Now recalling the definition of $\sfH$ from \eqref{eq:defH}, note that,
\begin{align*}
    \sfH(\bw,\bw') = \left\langle\sfK(x,\cdot) - \sfK(y,\cdot),\sfK(x',\cdot) - \sfK(y,\cdot)\right\rangle_{\cH}.
\end{align*}
Hence, by the Cauchy-Schwartz inequality,
\begin{align*}
    \mathrm{Var}[T_n]  \lesssim_{d} \frac{1}{n}\E\left[\left(\max_{1\leq u\leq n}\left\|\sfK(\cdot, X_{u}) - \sfK(\cdot, Y_{u})\right\|_{\cH}^2\right)^2\right]
\end{align*}
Finally, using the triangle inequality and the reproducing property of $\cH$,
\begin{align}\label{eq:TnETndiffbdd}
    \mathrm{Var}[T_n]
    &\lesssim_{d} \frac{1}{n}\E\left[\max_{1\leq u\leq n}\left\|\sfK(\cdot, X_{u})\right\|_{\cH}^4 + \max_{1\leq u\leq n}\left\|\sfK(\cdot, Y_{u})\right\|_{\cH}^4\right]\nonumber\\
    &\lesssim_{d}\frac{1}{n}\E\left[\max_{1\leq u\leq n}\sfK(X_{u},X_{u})^2 + \max_{1\leq u\leq n}\sfK(Y_{u},Y_{u})^2\right]
\end{align}
Now, for $\varepsilon > 0$, define $\varepsilon_n := \varepsilon n^{\frac{4}{4+\delta}}$, and note that 
\begin{align}
& \E\left[\max_{1\leq u\leq n}\sfK(X_{u},X_{u})^2 \right] \nonumber \\ 
& \leq \varepsilon_n + \E\left[\max_{1\leq u\leq n}\sfK(X_{u},X_{u})^2 \bm 1 \left\{ \max_{1\leq u\leq n} \sfK(X_{u},X_{u})^2 > \varepsilon_n \right\} \right] \nonumber \\ 
& \leq \varepsilon_n + \int_{\varepsilon_n}^{\infty} \P\left(\max_{1\leq u\leq n}\sfK(X_{u},X_{u})^2 \geq t \right) \mathrm dt \nonumber \\ 
& \leq \varepsilon_n + n \int_{\varepsilon_n}^{\infty} \P( \sfK(X_{1},X_{1})^2 \geq t ) \mathrm dt \tag*{ (by union bound) } \nonumber \\ 
& \leq \varepsilon_n + n \int_{\varepsilon_n}^{\infty} \frac{\E[\sfK(X_1,X_1)^{2+\delta}]}{t^{1+ \frac{\delta}{2}}} \tag*{ (by Markov's inequality) } 
\nonumber \\ 
&  \lesssim \varepsilon_n + n \int_{\varepsilon_n}^{\infty} \frac{1}{t^{1+ \frac{\delta}{2}}} \tag*{ (since $\E[\sfK(X_1,X_1)^{2+\delta}] < \infty$ by assumption) } \nonumber \\  
& \lesssim_{\delta} \varepsilon n^{\frac{4}{4+\delta}} + n^{\frac{ 4 - \delta }{4+\delta}} . \nonumber 
\end{align}
Since $\varepsilon$ is arbitrary, this shows 
\begin{align}\label{eq:maximumKX}
\E\left[\max_{1\leq u\leq n}\sfK(X_{u},X_{u})^2 \right]  = o\left(n^{\frac{4}{4+\delta}}\right),
\end{align} 
and similarly, $\E[\max_{1\leq u\leq n}\sfK(Y_{u}, Y_{u})^2 ]  = o(n^{\frac{4}{4+\delta}})$. Hence, from \eqref{eq:TnETndiffbdd} we conclude,
\begin{align}\label{eq:smalloconsistency1}
   \mathrm{Var}[T_n] \lesssim_{d} \frac{1}{n}o\left(n^{\frac{4}{4+\delta}}\right) = o\left(n^{-\frac{\delta}{4 + \delta}}\right) , 
\end{align}
completing the proof of Lemma \ref{lemma:consistency1}.\hfill $\Box$

\section{Proofs for Section \ref{sec:H0}} 
\label{sec:H0pf}

This section is organized as follows: In Section \ref{sec:varpf} we prove 
Proposition \ref{ppn:condexpH0}. Proposition \ref{prop:SnestvarH0} is proved in Section \ref{sec:estimatepf}. In Section \ref{sec:CLTpf} we prove Theorem \ref{thm:CLT} and in Section \ref{sec:consistencyH0pf} we prove Corollary \ref{cor:asymptest}. 

\subsection{Proof of Proposition \ref{ppn:condexpH0}} 
\label{sec:varpf}

In this section, all equalities involving conditional expectations hold almost surely and we omit writing the same.  With that convention, note that 
\begin{align}\label{eq:variancesecondmomentH0}
\mathrm{Var}_{\bH_{0}}[\eta_{n}|\cF(\sZ_n)] = \E_{\bH_{0}}\left[\eta_{n}^2|\cF(\sZ_n)\right],
\end{align}
 since $ \E_{\bH_{0}}\left[\eta_{n}|\cF(\sZ_n)\right] = 0$ by \eqref{eq:EH0}. Hence, 
\begin{align}
    \mathrm{Var}_{\bH_{0}}[\eta_{n}|\cF(\sZ_n)]
    &= \frac{1}{nK}\E_{\bH_{0}}\left[\left(\sum_{ u , v  \in [n] } \sfH(\bW_{u},\bW_{v})\one\left\{(u,v)\in E(G(\sZ_n))\right\}\right)^2 \bigg| \cF(\sZ_n)\right]\nonumber\\
& = S_1 + S_2 + S_3 , 
            \label{eq:condsecondmoment}
        \end{align}
        where, recalling the definition of the 
        function $f$ from Proposition \ref{ppn:condexpH0}, 
        \begin{align}\label{eq:S}
        S_1 & := 
         \frac{1}{nK}\sum_{ u, v \in [n] } \E_{\bH_{0}}[ \sfH^2(\bW_{u},\bW_{v}) |\cF(\sZ_n)] \left(\one\left\{(u,v)\in E(G(\sZ_n))\right\}+ \one\left\{(u,v),(v,u)\in E(G(\sZ_n))\right\}\right)  \nonumber \\ 
         & = \frac{1}{nK}\sum_{ u, v \in [n] } f(Z_u,Z_v)\left(\one\left\{(u,v)\in E(G(\sZ_n))\right\} + \one\left\{(u,v),(v,u)\in E(G(\sZ_n))\right\}\right) \\
            S_2 &:= 
            \frac{1}{nK} \sum_{ \substack{ u, v, u', v' \in [n] \\ |\{u, v\} \cap\{u',v' \}| = 1 }} \E_{\bH_{0}}\left[\sfH(\bW_{u},\bW_{v})\sfH(\bW_{u'},\bW_{v'})|\cF_{n}(\sZ_n)\right]\one\left\{(u, v),(u', v')\in E(G(\sZ_n))\right\}, \nonumber \\ 
             S_3 & := \frac{1}{nK} \sum_{ \substack{ u, v, u', v' \in [n] \\ |\{u, v\} \cap\{u',v' \}| = 0 }} \E_{\bH_{0}} \left[\sfH(\bW_{u},\bW_{v})\sfH(\bW_{u'},\bW_{v'})|\cF_{n}(\sZ_n)\right]\one\left\{(u, v),(u', v')\in E(G(\sZ_n))\right\} .  \nonumber 
        \end{align} 

We begin with $S_2$. For this, let $u, v, u', v' \in [n]$ be such that $|\{u, v\} \cap\{u',v' \}| = 1$. Without loss of generality, assume that $u=u'$ and $v\neq v'$. Then recallng the symmetric property of $\sfH$,
        \begin{align*}
            \E_{\bH_0}\left[\sfH(\bW_{u},\bW_{v})\sfH(\bW_{u'},\bW_{v'}) |\cF_{n}(\sZ_n)\right]
            & = \E_{\bH_0}\left[\sfH(\bW_{v},\bW_{u})\sfH(\bW_{u},\bW_{v'})|Z_{u},Z_{v},Z_{v'}\right].
        \end{align*}
        Note that, under $\bH_{0}$,
        \begin{align}\label{eq:condexp14equal}
            \E_{\bH_0}\left[\sfK(X_{v},X_{u})|X_{u}, Z_{u}, Z_{v}\right] = \E_{\bH_0}\left[\sfK(Y_{v},X_{u})|X_{u}, Z_{u}, Z_{v}\right]
        \end{align}
        and 
        \begin{align}\label{eq:condexp23equal}
            \E_{\bH_0}\left[\sfK(Y_{v},Y_{u})|Y_{u}, Z_{u}, Z_{v}\right] = \E_{\bH_0}\left[\sfK(X_{v},Y_{u})|Y_{u}, Z_{u}, Z_{v}\right].
        \end{align}
Hence, 
        \begin{align}\label{eq:condexpfull0}
             & \E_{\bH_0} \left[\sfH(\bW_{v},\bW_{u})|\bW_{u},Z_{u},Z_{v},Z_{v'}\right] \nonumber \\ 
            & =\E_{\bH_0}\left[\sfK(X_{v},X_{u})|X_{u}, Z_{u}, Z_{v}\right] + \E_{\bH_0}\left[\sfK(Y_{v},Y_{u})|Y_{u}, Z_{u}, Z_{v}\right] \nonumber \\ 
            & \hspace{0.85in} - \E_{\bH_0}\left[\sfK(X_{v},Y_{u})|Y_{u}, Z_{u}, Z_{v}\right] - \E_{\bH_0}\left[\sfK(Y_{v},X_{u})|X_{u}, Z_{u}, Z_{v}\right] . 
        \end{align}
                By \eqref{eq:condexpfull0} and using the tower property of expectation we have,
        \begin{align*}
            \E_{\bH_0}\left[\sfH(\bW_{u},\bW_{v})\sfH(\bW_{u'},\bW_{v'})|\cF(\sZ_n)\right] = 0.
        \end{align*}
       This implies, $S_2 = 0$.

        Next, we consider $S_3$. For this, let $u, v, u', v' \in [n]$ be such that $|\{u, v\} \cap\{u',v' \}| = 0$. Then,
        \begin{align*}
            \E_{\bH_0}\left[\sfH(\bW_{u},\bW_{v})\sfH(\bW_{u'},\bW_{v'})|\cF(\sZ_n)\right] 
            & = \E_{\bH_0}\left[\sfH(\bW_{u},\bW_{v})\sfH(\bW_{u'},\bW_{v'})| Z_{u},Z_{v},Z_{u'},Z_{v'}\right]\\
            & = \E_{\bH_0}\left[\sfH(\bW_{u},\bW_{v})| Z_{u},Z_{v}\right]\E_{\bH_0}\left[\sfH(\bW_{u'},\bW_{v'})| Z_{u'},Z_{v'}\right]
        \end{align*}
        Now under $\bH_{0}$, recalling \eqref{eq:condexpij0} we have,
        \begin{align*}
            \E_{\bH_0}\left[\sfH(\bW_{u},\bW_{v})\sfH(\bW_{u'},\bW_{v'})|\cF_{n}(\sZ_n)\right] = 0 . 
        \end{align*}
       This implies, $S_3 = 0$. 
       
       The proof of Proposition \ref{ppn:condexpH0} is completed by replacing the terms in the RHS of \eqref{eq:condsecondmoment} with $S_1$ from \eqref{eq:S}, $S_2=0$, and $S_3 = 0$. \hfill $\Box$

\subsection{Proof of Proposition \ref{prop:SnestvarH0}}
\label{sec:estimatepf}

The proof of Proposition \ref{prop:SnestvarH0} in based on the following 2 lemmas. First, we show that $\hat{\sigma}_n^2$ is close to $\Var_{\bH_0}[\eta_n|\cF(\sZ_n)]$ in $L^2$ (see Appendix \ref{sec:proofofSn2VarL2} for the proof). 

\begin{lemma}\label{lemma:Sn2VarL2}
    Under the assumptions of Proposition \ref{prop:SnestvarH0},
    \begin{align*}
        \E_{\bH_0}\left[\left(\hat{\sigma}_n^2 - \Var_{\bH_0}[\eta_n|\cF(\sZ_n)]\right)^2\right] = o\left(n^{-\frac{\delta}{8 + \delta}}\right) . 
    \end{align*}
\end{lemma}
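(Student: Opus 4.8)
The plan is to reduce the assertion to a conditional second-moment computation. Writing $a_{uv} := \one\{(u,v)\in E(G(\sZ_n))\} + \one\{(u,v),(v,u)\in E(G(\sZ_n))\}$, which is $\cF(\sZ_n)$-measurable with $0 \le a_{uv}\le 2$, we have $\hat{\sigma}_n^2 = \frac{1}{nK}\sum_{u\ne v}\sfH^2(\bW_u,\bW_v)\,a_{uv}$ (the diagonal terms drop since $(u,u)\notin E(G(\sZ_n))$). Because the data are i.i.d., the pairs $\bW_i=(X_i,Y_i)$ are conditionally independent given $\cF(\sZ_n)$ with $\bW_i\sim P_{XY|Z=Z_i}$; hence $\E_{\bH_0}[\sfH^2(\bW_u,\bW_v)\mid\cF(\sZ_n)]=f(Z_u,Z_v)$ for $u\ne v$, and comparing with \eqref{eq:condeta2exp} we obtain $\E_{\bH_0}[\hat{\sigma}_n^2\mid\cF(\sZ_n)]=\sigma_n^2$. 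Consequently
\[
\E_{\bH_0}\big[(\hat{\sigma}_n^2-\sigma_n^2)^2\big] = \E_{\bH_0}\big[\Var_{\bH_0}(\hat{\sigma}_n^2\mid\cF(\sZ_n))\big],
\]
so it suffices to control the conditional variance.

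Expanding, $\Var_{\bH_0}(\hat{\sigma}_n^2\mid\cF(\sZ_n)) = \frac{1}{n^2K^2}\sum_{u\ne v}\sum_{u'\ne v'} a_{uv}a_{u'v'}\,\mathrm{Cov}_{\bH_0}\!\big(\sfH^2(\bW_u,\bW_v),\,\sfH^2(\bW_{u'},\bW_{v'})\mid\cF(\sZ_n)\big)$, and conditional independence of the $\bW_i$'s forces every summand with $\{u,v\}\cap\{u',v'\}=\emptyset$ to vanish --- the exact analogue of the vanishing of $S_3$ in the proof of Proposition \ref{ppn:condexpH0}. Only two families survive: pairs with $\{u,v\}=\{u',v'\}$, of which there are $O_d(nK)$ (the graph $G(\sZ_n)$ has $nK$ edges), and pairs with $|\{u,v\}\cap\{u',v'\}|=1$, of which there are $O_d(nK^2)$; here I would use the classical bound that every vertex of a Euclidean $K$-NN graph has total degree at most $c(d)K$ --- already invoked in the proof of Lemma \ref{lemma:consistency1} --- so that at most $O_d(K^2)$ ordered edge-pairs can share a fixed vertex, and sum over the $n$ possible shared vertices. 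For the individual terms, positive definiteness of $\sfK$ yields the envelope $|\sfH(\bw,\bw')|\le b(\bw)\,b(\bw')$ with $b(\bw):=\sfK(x,x)^{1/2}+\sfK(y,y)^{1/2}$ for $\bw=(x,y)$, whence $\sfH^4(\bW_u,\bW_v)\le b(\bW_u)^4 b(\bW_v)^4\le \max_{1\le w\le n}b(\bW_w)^8$; by Cauchy--Schwarz each surviving conditional covariance is therefore bounded in absolute value by $\E_{\bH_0}\big[\max_{1\le w\le n}b(\bW_w)^8\,\big|\,\cF(\sZ_n)\big]$.

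Combining the counts with this uniform bound and taking the outer expectation,
\[
\E_{\bH_0}\big[(\hat{\sigma}_n^2-\sigma_n^2)^2\big]\;\lesssim_d\; \frac{nK+nK^2}{n^2K^2}\,\E_{\bH_0}\Big[\max_{1\le w\le n}b(\bW_w)^8\Big]\;\lesssim_d\;\frac1n\,\E_{\bH_0}\Big[\max_{1\le w\le n}b(\bW_w)^8\Big],
\]
so the only remaining, and only genuinely substantive, step is to bound $\E_{\bH_0}[\max_{1\le w\le n}b(\bW_w)^8]$. Since $b(\bW_w)^8\lesssim \sfK(X_w,X_w)^4+\sfK(Y_w,Y_w)^4$ and the moment hypothesis of Proposition \ref{prop:SnestvarH0} gives $\E[\sfK(X,X)^{4+\delta}]+\E[\sfK(Y,Y)^{4+\delta}]<\infty$, the same truncation-plus-union-bound argument that established \eqref{eq:maximumKX} --- now applied to $b(\bW_w)^8$ using the $(4+\delta)$-th moment of the kernel diagonal --- yields $\E_{\bH_0}[\max_{1\le w\le n}b(\bW_w)^8]=o(n^{4/(4+\delta)})$. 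Plugging in gives $\E_{\bH_0}[(\hat{\sigma}_n^2-\sigma_n^2)^2]=o(n^{-\delta/(4+\delta)})$, which is in particular $o(n^{-\delta/(8+\delta)})$, proving the lemma.

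I expect the combinatorial bookkeeping --- identifying which covariances survive and counting overlapping edge-pairs in the $K$-NN graph --- to be the most delicate part to write cleanly, though it mirrors the arguments in Proposition \ref{ppn:condexpH0} and Lemma \ref{lemma:consistency1}; the single quantitative input that consumes the fourth-moment-plus-$\delta$ assumption on the kernel diagonal is the uniform control of $\max_{1\le w\le n}b(\bW_w)^8$. Note that the hypothesis $P_{XY}(X\ne Y)>0$, which is needed elsewhere in Proposition \ref{prop:SnestvarH0} to keep $\sigma_n^2$ bounded away from zero, plays no role in this absolute $L^2$ bound.
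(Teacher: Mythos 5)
Your proof is correct, but it takes a genuinely different route from the paper. The paper writes $\hat{\sigma}_n^2=S_n+\tilde S_n$ and $\sigma_n^2=V_n+\tilde V_n$ and shows each of the four pieces concentrates in $L^2$ around the common mean $\E_{\bH_0}[S_n]$ (resp.\ $\E_{\bH_0}[\tilde S_n]$) via the Efron--Stein inequality, replaying the combinatorial bookkeeping from the consistency proof (the sets $\cC_{n,s}$, the leave-one-out graphs, etc.) for each piece. You instead observe that $\E_{\bH_0}[\hat{\sigma}_n^2\mid\cF(\sZ_n)]=\sigma_n^2$ exactly, so the $L^2$ distance \emph{is} the expected conditional variance $\E_{\bH_0}[\Var_{\bH_0}(\hat{\sigma}_n^2\mid\cF(\sZ_n))]$; you then expand the conditional variance directly and kill all edge-pairs with disjoint index sets by conditional independence of the $\bW_i$'s given $\cF(\sZ_n)$. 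This is cleaner --- it eliminates the four-term triangle inequality, avoids Efron--Stein entirely, and produces an equality rather than an upper bound in the reduction step --- and it arrives at the same envelope $\tfrac{1}{n}\E[\max_w b(\bW_w)^8]$. What the paper's route buys is that it never needs to identify the exact conditional mean and can be written as a modular reuse of Lemma~\ref{lemma:consistency1}; what your route buys is that the vanishing of the disjoint terms is transparent and self-contained.

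Two small points of precision. First, your appeal to the vanishing of $S_3$ in Proposition~\ref{ppn:condexpH0} is a slightly false analogy: there the key input is $\E_{\bH_0}[\sfH(\bW_u,\bW_v)\mid Z_u,Z_v]=0$, i.e.\ degeneracy of $\sfH$ under $\bH_0$, whereas for $\sfH^2$ the conditional mean is $f(Z_u,Z_v)\neq 0$ and the covariance of disjoint blocks vanishes purely by conditional independence. This does not affect your argument --- the conclusion is right --- but the mechanism is different. Second, with the Markov-plus-truncation bound exactly as in \eqref{eq:maximumKX}, the $(4+\delta)$-th moment of the kernel diagonal gives $\E_{\bH_0}[\max_{w}b(\bW_w)^8]=o(n^{8/(8+\delta)})$, which is precisely what the paper claims for the analogous step and is already enough; your stated rate $o(n^{4/(4+\delta)})$ is also true, but it requires replacing the crude Markov tail bound by the uniform-integrability refinement $\E[\xi^p\,\one\{\xi>M\}]\to 0$, which is not literally what \eqref{eq:maximumKX} does. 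Either way the lemma follows.
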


Note that the above result shows that the difference of $\hat{\sigma}_n^2$ and $\Var_{\bH_0}[\eta_n|\cF(\sZ_n)]$ are close in $L^2$. To translate this into an approximation in the ratio form as in \eqref{eq:Snapproxvar} we show in the following lemma $\Var_{\bH_0}[\eta_n|\cF(\sZ_n)]$ is asymptotically bounded away from $0$ (see Appendix \ref{eq:varianceprobabilityH0} for the proof).

\begin{lemma}\label{lemma:nepscondvarinfty}
    Under the assumptions of Proposition \ref{prop:SnestvarH0}, for all $\vep>0$,
    \begin{align*}
        \P_{\bH_0}\left[n^\vep\Var_{\bH_0}[\eta_n|\cF(\sZ_n)]> t\right]\ra 1.
    \end{align*}
    for all $t>0$.
\end{lemma}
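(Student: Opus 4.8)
\textbf{Proof proposal for Lemma \ref{lemma:nepscondvarinfty}.}

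The plan is to show that $\Var_{\bH_0}[\eta_n \mid \cF(\sZ_n)] = \sigma_n^2$, which by Proposition \ref{ppn:condexpH0} equals $\frac{1}{nK}\sum_{1\leq u,v\leq n} f(Z_u,Z_v)\big(\one\{(u,v)\in E(G(\sZ_n))\} + \one\{(u,v),(v,u)\in E(G(\sZ_n))\}\big)$, is bounded below (with probability tending to $1$) by a strictly positive constant — in fact by the expectation $\E_{\bH_0}[f(Z,Z')\mid Z\to Z']$ of $f$ over a typical nearest-neighbor edge, which is positive under the assumption $P_{XY}(X\neq Y) > 0$. Once a constant lower bound is established, multiplying by $n^\varepsilon \to \infty$ gives the claim for any $t > 0$. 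So the whole point is to show $\sigma_n^2 \gtrsim c > 0$ in probability.

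First I would discard the second indicator term (it is nonnegative) and restrict to the sum $\frac{1}{nK}\sum_{(u,v)\in E(G(\sZ_n))} f(Z_u,Z_v)$, which is an average of $f(Z_u,Z_v)$ over the $\approx nK$ directed edges of the $K$-NN graph. The expectation of this quantity is $\frac{1}{K}\sum_{v=1}^n \E[f(Z_1,Z_v)\one\{(1,v)\in E(G(\sZ_n))\}] = \E[f(Z_1, Z_{N(1)})]$, where $N(1)$ is a uniformly chosen out-neighbor of vertex $1$, exactly as in the proof of Lemma \ref{lemma:consistency2}. Writing $f(z,z') = \E_{\bH_0}[\sfH^2((X,Y),(X',Y'))\mid Z=z, Z'=z']$ and noting $f$ is (the conditional expectation of) a continuous, bounded-below-by-zero quantity, the argument of Lemma \ref{lemma:consistency2} — Cauchy--Schwarz plus uniform integrability plus \cite[Lemma D.3]{deb2020measuring} with $K = o(n/\log n)$ — shows $\E[f(Z_1,Z_{N(1)})] \to \E[f(Z_1,Z_1)]$. (Here one needs $\E[f(Z_1,Z_1)^2] < \infty$, which follows from the stronger moment hypothesis $\int \sfK(x,x)^{4+\delta}\,\rmd P_X < \infty$ etc., exactly as $\E[\|g_{Z_1}\|_{\cH}^4] < \infty$ was derived in Lemma \ref{lemma:consistency2}, upgraded to eighth moments of $\sfK(\cdot,\cdot)^{1/2}$.) Now $\E[f(Z_1,Z_1)] = \E_{\bH_0}[\sfH^2((X,Y),(X',Y'))\mid Z=Z']$, and this is strictly positive: under $\bH_0$, $\sfH$ is the centered kernel, and $f(z,z) = 0$ for all $z$ would force $\sfH((X,Y),(X',Y')) = 0$ a.s. given $Z = Z' = z$, which in turn (using that $\sfK$ is characteristic and the identity from Proposition \ref{ppn:EMMDexpr} applied conditionally) would force $P_{X|Z=z}$ and $P_{Y|Z=z}$ to be degenerate at a common point for a.e. $z$, contradicting $P_{XY}(X\neq Y) > 0$. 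So $c := \E[f(Z_1,Z_1)] > 0$.

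To turn the convergence of the mean into a lower bound in probability, I would pair it with a variance bound: $\Var\big(\frac{1}{nK}\sum_{(u,v)\in E(G(\sZ_n))} f(Z_u,Z_v)\big) \to 0$. This is a pure function of $\sZ_n = \{Z_1,\dots,Z_n\}$, so the Efron--Stein argument of Lemma \ref{lemma:consistency1} applies verbatim with $\sfH(\bW_u,\bW_v)$ replaced by $f(Z_u,Z_v)$: replacing one $Z_s$ by an independent copy changes only $O_d(K)$ edges of the $K$-NN graph, so each Efron--Stein term is $O_d\big(\frac{1}{n^2}\max_{u\neq v} f(Z_u,Z_v)^2\big)$, and summing over $s$ and using $\E[\max_{u\neq v} f(Z_u,Z_v)^2] = o(n^{4/(4+\delta)})$ (again from the moment hypothesis, via the maximal-inequality computation already carried out in Lemma \ref{lemma:consistency1}, since $f(z,z') \lesssim \sfK(x,x)^2 + \sfK(y,y)^2$ in conditional expectation by Cauchy--Schwarz) gives $\Var = o(n^{-\delta/(4+\delta)}) = o(1)$. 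Chebyshev then yields $\frac{1}{nK}\sum_{(u,v)\in E(G(\sZ_n))} f(Z_u,Z_v) \pto c > 0$, hence $\sigma_n^2 \geq \frac{c}{2}$ with probability tending to $1$, and $\P_{\bH_0}[n^\varepsilon \sigma_n^2 > t] \geq \P_{\bH_0}[\sigma_n^2 \geq c/2] \to 1$ for all fixed $t > 0$ once $n^\varepsilon c/2 > t$. The main obstacle is the strict-positivity step $\E[f(Z_1,Z_1)] > 0$: one must carefully argue, using the characteristic property of $\sfK$ together with $P_{XY}(X \neq Y) > 0$, that the conditional degeneracy $f(\cdot,\cdot) \equiv 0$ cannot occur; everything else is a re-run of the machinery already developed for Lemmas \ref{lemma:consistency2} and \ref{lemma:consistency1}.
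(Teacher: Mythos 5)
Your overall architecture is the same as the paper's: you reduce $\sigma_n^2$ to the single quantity $V_n := \frac{1}{nK}\sum_{(u,v)\in E(G(\sZ_n))} f(Z_u,Z_v)$, show $\E[V_n]\to c$ via the nearest-neighbor machinery of Lemma~\ref{lemma:consistency2}, control $\Var[V_n]$ by Efron--Stein exactly as in Lemma~\ref{lemma:consistency1}, and then argue $c>0$. That is precisely the paper's route (via \eqref{eq:VnSnZ}, \eqref{eq:Vn1ESn1close}, Lemma~\ref{lemma:condH2convg}, and Lemma~\ref{lemma:varlimitnon0}), and the identification $f(z,z)=\|h_z\|_{\cH\otimes\cH}^2$ makes your constant $c=\E[f(Z_1,Z_1)]$ coincide with the paper's $\E[\|h_{Z_1}\|_{\cH\otimes\cH}^2]$. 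One small point worth making explicit: the mean-convergence argument of Lemma~\ref{lemma:consistency2} rests on an inner-product decomposition, and for $f$ the relevant representation is $f(z,z')=\langle h_z,h_{z'}\rangle_{\cH\otimes\cH}$; ``$f$ is a conditional expectation of a nonnegative continuous quantity'' is not by itself enough to run that argument, but the tensor-product structure is available and so this step goes through.

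The genuine gap is in the strict-positivity step. You claim $f(z,z)\equiv 0$ would force $P_{X|Z=z}$ and $P_{Y|Z=z}$ to be ``degenerate at a common point,'' via Proposition~\ref{ppn:EMMDexpr} applied conditionally plus the characteristic property. That deduction fails for two reasons. First, the implication is false as stated: if $X=Y$ has any continuous conditional law given $Z=z$, then $\sfH\equiv 0$ and $f(z,z)=0$, yet $P_{X|Z=z}$ is nowhere near a point mass. Second, and more importantly, Proposition~\ref{ppn:EMMDexpr} only says $\E_{\bH_0}[\sfH(\bW,\bW')\mid Z=Z'=z]=\MMD^2(P_{X|Z=z},P_{Y|Z=z})$, and under $\bH_0$ this is identically $0$; so a first-moment identity can deliver no contradiction. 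What is actually needed is the second-moment (covariance-operator) structure: writing $g_{\bw}=\sfK(x,\cdot)-\sfK(y,\cdot)$ one has $\sfH(\bw,\bw')=\langle g_{\bw},g_{\bw'}\rangle_{\cH}$ and $f(z,z)=\|\E[g_{\bW}\otimes g_{\bW}\mid Z=z]\|_{\cH\otimes\cH}^2$, so that $f(z,z)=0$ forces $\langle r,g_{\bW}\rangle=0$ a.s.\ for every $r\in\cH$ (separability gives the uniform version), hence $\sfK(X,\cdot)=\sfK(Y,\cdot)$ a.s., hence $X=Y$ a.s.\ by characteristicness — contradicting $P_{XY}(X\neq Y)>0$. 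This is exactly the mechanism of the paper's Lemma~\ref{lemma:varlimitnon0}, and it is the piece your proposal is missing.
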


To complete the proof of Proposition \ref{prop:SnestvarH0} using the above lemmas, first define $\gamma = \frac{\delta}{8 + \delta}$. Note that by Lemma \ref{lemma:nepscondvarinfty} the probability $\P_{\bH_0}\left[n^{\frac{\gamma}{4}}\Var_{\bH_0}\left[\eta_{n}|\cF_n(\cZ_{n})\right]\leq 1\right] = o(1)$. Hence for all $\vep>0$,
\begin{align}
    \small
    \P_{\bH_0}\bigg[\left|\frac{\hat{\sigma}_n^2}{\Var_{\bH_0}\left[\eta_{n}|\cF_n(\cZ_{n})\right]} - 1\right|
    &\geq \vep n^{-\frac{\delta}{32 + 4\delta}}\bigg]\nonumber\\
    & \leq \P_{\bH_0}\left[n^{\frac{\gamma}{4}}\left|\hat{\sigma}_n^2 - \Var_{\bH_0}\left[\eta_{n}|\cF_n(\cZ_{n})\right]\right|\geq\vep n^{-\frac{\delta}{32 + 4\delta}}\right] + o(1)\nonumber\\
    & \leq \frac{n^{\frac{\gamma}{2}}}{\vep^2 n^{-\frac{\delta}{16 + 2\delta}}}\E_{\bH_0}\left[\left(\hat{\sigma}_n^2 - \Var_{\bH_0}\left[\eta_{n}|\cF_n(\cZ_{n})\right]\right)^2\right] + o(1)\nonumber\\
    & = \frac{n^{\gamma}}{\vep^2}\E_{\bH_0}\left[\left(\hat{\sigma}_n^2 - \Var_{\bH_0}\left[\eta_{n}|\cF_n(\cZ_{n})\right]\right)^2\right] + o(1)\ra 0 , \label{eq:completesigmavar}
\end{align}
where the last step uses Lemma \ref{lemma:Sn2VarL2}. This completes the proof of Proposition \ref{prop:SnestvarH0}.

\subsubsection{Proof of Lemma \ref{lemma:Sn2VarL2}}\label{sec:proofofSn2VarL2}

Recalling the definition of $\hat{\sigma}_n^2$ from \eqref{eq:defSn} 
$$\hat{\sigma}_n^2 = S_{n} + \tilde{S}_{n},$$
where,
\begin{align}\label{eq:defSn1}
    S_{n} := \frac{1}{nK}\sum_{u=1}^{n}\sum_{v=1}^{n}\sfH^{2}\left(\bW_{u},\bW_{v}\right)\one\left\{(u,v)\in E(G(\sZ_n))\right\} ,
\end{align}
and 
\begin{align}\label{eq:defSn2}
    \tilde{S}_{n} := \tilde{S}_{n}(\bcW_{n},\cZ_{n}) = \frac{1}{nK}\sum_{u=1}^{n}\sum_{v=1}^{n}\sfH^{2}\left(\bW_{u},\bW_{v}\right)\one\left\{(u,v),(v,u)\in E(G(\sZ_n))\right\}.
\end{align}
Similarly, recalling the definition of $\Var_{\bH_0}[\eta_n|\cF(\sZ_n)]$ from \eqref{eq:condeta2exp} define 
$$\Var_{\bH_0}[\eta_n|\cF(\sZ_n)] = V_{n} + \tilde{V}_{n},$$ 
where 
\begin{align}\label{eq:Vn}
    V_{n} := \frac{1}{nK}\sum_{u=1}^{n}\sum_{v = 1}^{n}f(Z_u,Z_v)\one\left\{(u,v)\in E(G(\sZ_n))\right\},
\end{align}
and 
\begin{align}\label{eq:Vnuv}
    \tilde{V}_{n} := \frac{1}{nK}\sum_{u=1}^{n}\sum_{v = 1}^{n}f(Z_u,Z_v)\one\left\{(u,v),(v,u)\in E(G(\sZ_n))\right\}.
\end{align}
Here, recall that
\begin{align}\label{eq:deff2}
    f(Z,Z')= \E_{\bH_0}\left[\sfH^2(\bW,\bW') |Z,Z'\right] , 
\end{align} 
where $\bW = (X,Y),\bW' = (X',Y')$ and $(X,Y,Z)$, $(X',Y',Z')$ are sampled independently from $ P_{XYZ}$.

To complete the proof of Lemma \ref{lemma:Sn2VarL2} it is enough to show that,
\begin{align}\label{eq:SnVndiffconvg}
    \E_{\bH_0}\left[(S_{n} - V_{n})^2\right] = o\left(n^{-\frac{\delta}{8 + \delta}}\right) \text{ and } \E_{\bH_0}\left[(\tilde{S}_{n} - \tilde{V}_{n})^2\right]  = o\left(n^{-\frac{\delta}{8 + \delta}}\right).
\end{align}
We will show this in Lemma \ref{lemma:V1S1close} and Lemma \ref{lemma:V2S2close}, respectively.  
For notational convenience, as before, we define $\gamma = \frac{\delta}{8 + \delta}$.

\begin{lemma}\label{lemma:V1S1close}
    Suppose the assumptions of Proposition \ref{prop:SnestvarH0} hold. Let $S_n$ and $V_n$ be as defined in \eqref{eq:defSn1} and \eqref{eq:Vn}, respectively. Then 
    \begin{align}\label{eq:SnVn}
        \E_{\bH_0}\left[(S_{n} - V_{n})^2\right] = o\left(n^{-\gamma}\right) . 
    \end{align}
\end{lemma}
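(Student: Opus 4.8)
The plan is to prove the stronger bound $\E_{\bH_0}[(S_n-V_n)^2] = O(1/n)$, which immediately gives $o(n^{-\gamma})$ since $\gamma = \frac{\delta}{8+\delta}<1$. Write $S_n - V_n = \frac{1}{nK}\sum_{(u,v)\in E(G(\sZ_n))} g_{uv}$, where $g_{uv} := \sfH^2(\bW_u,\bW_v) - f(Z_u,Z_v)$. The structural fact driving the argument is that $g_{uv}$ is \emph{centered conditionally on} $\cF(\sZ_n)$: since the $\bW_i$ are conditionally independent given $\cF(\sZ_n)$ with $\bW_i$ depending only on $Z_i$, one has $\E_{\bH_0}[\sfH^2(\bW_u,\bW_v)\mid\cF(\sZ_n)] = \E_{\bH_0}[\sfH^2(\bW_u,\bW_v)\mid Z_u,Z_v] = f(Z_u,Z_v)$, hence $\E_{\bH_0}[g_{uv}\mid\cF(\sZ_n)] = 0$.

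Next I would expand the square and condition on $\cF(\sZ_n)$, writing $\E_{\bH_0}[(S_n-V_n)^2] = \frac{1}{n^2K^2}\sum_{(u,v),(u',v')}\E_{\bH_0}\big[\one\{(u,v),(u',v')\in E(G(\sZ_n))\}\,\E_{\bH_0}[g_{uv}g_{u'v'}\mid\cF(\sZ_n)]\big]$; this is legitimate because the $K$-NN edge indicators are $\cF(\sZ_n)$-measurable. When $\{u,v\}\cap\{u',v'\}=\emptyset$ the variables $g_{uv}$ and $g_{u'v'}$ are conditionally independent given $\cF(\sZ_n)$, so the inner expectation factors and vanishes by conditional centering; these terms drop out. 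For the surviving terms (those sharing at least one index) I would use $|g_{uv}g_{u'v'}|\le \tfrac12(g_{uv}^2+g_{u'v'}^2)$ together with the symmetry of the double sum to reduce matters to $\E_{\bH_0}\big[\sum_{(u,v)\in E(G(\sZ_n))} g_{uv}^2 \cdot \#\{(u',v')\in E(G(\sZ_n)): \{u,v\}\cap\{u',v'\}\neq\emptyset\}\big]$. Since every vertex has out-degree exactly $K$ and in-degree $\lesssim_{d} K$ in a Euclidean $K$-NN graph (Assumption \ref{assumption:Kn}; see \cite[Lemma 1]{jaffe2020randomized}), the number of edges incident to $u$ or $v$ is $\lesssim_{d} K$, giving $\E_{\bH_0}[(S_n-V_n)^2] \lesssim_{d} \frac{1}{n^2 K}\,\E_{\bH_0}\big[\sum_{(u,v)\in E(G(\sZ_n))} g_{uv}^2\big]$.

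It then remains to show $\E_{\bH_0}\big[\sum_{(u,v)\in E(G(\sZ_n))} g_{uv}^2\big] \lesssim nK$. Here I would use $g_{uv}^2 \le 2\sfH^4(\bW_u,\bW_v) + 2 f(Z_u,Z_v)^2$, the conditional Jensen bound $f(Z_u,Z_v)^2 = \E_{\bH_0}[\sfH^2(\bW_u,\bW_v)\mid Z_u,Z_v]^2 \le \E_{\bH_0}[\sfH^4(\bW_u,\bW_v)\mid Z_u,Z_v]$, and Cauchy--Schwarz in $\cH$ (writing $\sfH(\bW_u,\bW_v) = \langle\xi_u,\xi_v\rangle_\cH$ with $\xi_i := \sfK(\cdot,X_i)-\sfK(\cdot,Y_i)$) to obtain $\E_{\bH_0}[g_{uv}^2\mid\cF(\sZ_n)] \le 4\,\psi(Z_u)\psi(Z_v)$, where $\psi(z):=\E_{\bH_0}[\|\xi_1\|_\cH^4\mid Z_1=z]$. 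Bounding $\psi(Z_u)\psi(Z_v)\le\tfrac12(\psi(Z_u)^2+\psi(Z_v)^2)$ and applying the degree bounds once more gives $\E_{\bH_0}\big[\sum_{(u,v)\in E(G(\sZ_n))}\psi(Z_u)\psi(Z_v)\big]\lesssim_{d} K\,\E_{\bH_0}\big[\sum_{u=1}^n\psi(Z_u)^2\big] = nK\,\E_{\bH_0}[\psi(Z_1)^2]$. Finally $\psi(Z_1)^2 \le \E_{\bH_0}[\|\xi_1\|_\cH^8\mid Z_1]$ by Jensen, and since $\|\xi_1\|_\cH^2 = \sfK(X_1,X_1)+\sfK(Y_1,Y_1)-2\sfK(X_1,Y_1)$ with $\sfK(X_1,Y_1)^2\le\sfK(X_1,X_1)\sfK(Y_1,Y_1)$, one gets $\E_{\bH_0}[\psi(Z_1)^2]\lesssim \E_{\bH_0}[\sfK(X_1,X_1)^4]+\E_{\bH_0}[\sfK(Y_1,Y_1)^4]<\infty$ by the hypothesis that $\int\sfK(x,x)^{4+\delta}\rmd P_X(x)$ and $\int\sfK(y,y)^{4+\delta}\rmd P_Y(y)$ are finite. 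Combining the displays yields $\E_{\bH_0}[(S_n-V_n)^2]\lesssim_{d} \frac{1}{n^2K}\cdot nK = O(1/n)$, as required.

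The main obstacle is the bookkeeping in the variance expansion: one must be careful that $g_{uv}$ is only \emph{conditionally} (not unconditionally) centered, so that the disjoint-index cross terms vanish precisely because the $K$-NN edge indicators are $\cF(\sZ_n)$-measurable, and one must control the random number of ``collision'' terms by a deterministic multiple of $K$ using the two-sided degree bounds for $K$-NN graphs, so that the $(4+\delta)$-th moment conditions on $\sfK$ produce a bound uniform in $n$. Once this is in place the rate $o(n^{-\gamma})$ follows trivially from $\gamma<1$, so no delicate quantitative control (e.g. uniform integrability arguments of the type used for Lemma \ref{lemma:consistency2}) is needed.
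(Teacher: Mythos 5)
Your proof is correct, and it takes a genuinely different route from the paper's.

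The paper proves \eqref{eq:SnVn} by showing that both $S_n$ and $V_n$ concentrate around the common mean $\E_{\bH_0}[S_n]$ in $L^2$, each via the Efron--Stein inequality (the same combinatorial argument as in Lemma~\ref{lemma:consistency1}, with $\sfH$ replaced by $\sfH^2$ or by $f$). The cost of Efron--Stein there is a factor $\E_{\bH_0}\bigl[\max_{u\neq v}\sfH^4(\bW_u,\bW_v)\bigr]=o\bigl(n^{8/(8+\delta)}\bigr)$ coming from controlling random maxima of unbounded kernel evaluations, which is exactly what produces the rate $o(n^{-\gamma})$ with $\gamma=\tfrac{\delta}{8+\delta}$. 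You instead study the difference $S_n-V_n$ directly, observing that $g_{uv}=\sfH^2(\bW_u,\bW_v)-f(Z_u,Z_v)$ is centered conditionally on $\cF(\sZ_n)$ and that $g_{uv},g_{u'v'}$ are conditionally uncorrelated when $\{u,v\}\cap\{u',v'\}=\emptyset$; since the $K$-NN edge indicators are $\cF(\sZ_n)$-measurable, all disjoint cross-terms vanish identically, and the $O(K)$ colliding pairs per edge (two-sided degree bound) plus the pointwise bound $\E_{\bH_0}[g_{uv}^2\mid\cF(\sZ_n)]\le 4\psi(Z_u)\psi(Z_v)$ (Cauchy--Schwarz in $\cH$, Jensen, and the fourth-moment hypothesis) give $O(1/n)$, which dominates $o(n^{-\gamma})$ since $\gamma<1$. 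This is sharper and cleaner: it exploits the degenerate $U$-statistic-like structure of $S_n-V_n$ that Efron--Stein ignores, and sidesteps the $n^{8/(8+\delta)}$ loss from random maxima. What it gives up is some uniformity of the toolkit --- the paper's Efron--Stein computation is reused verbatim for the $\tilde S_n,\tilde V_n$ pair in Lemma~\ref{lemma:V2S2close} --- but your argument adapts there as well with no extra effort, since the same conditional orthogonality holds. One small bookkeeping point worth making explicit in a final write-up: the rate you obtain is $O(1/n)$, not merely $o(n^{-\gamma})$, and the only place the $(4+\delta)$-th moment (as opposed to a plain fourth moment) is needed in the paper's proof is precisely the maximum-control step that your argument removes, so you could state the lemma under a weaker hypothesis.
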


\begin{proof}
The proof proceeds by showing that both $S_{n}$ and $V_{n}$ concentrate around $\E_{\bH_0}[S_{n}]$ in $L^2$, that is, 
\begin{align}\label{eq:S_nV_npf}
\E_{\bH_0}\left[\left(S_{n} - \E[S_{n}]\right)^2\right] = o\left(n^{-\gamma}\right)  \text{ and } \E_{\bH_0}\left[\left(V_{n} - \E[S_{n}]\right)^2\right] = o\left(n^{-\gamma}\right) . 
\end{align}
From \eqref{eq:S_nV_npf} the result in \eqref{eq:SnVn} is immediate. 

First, we show that $S_{n}$ concentrates around $\E_{\bH_0}[S_{n}]$ in $L^2$. To begin with, notice that $S_{n}$ (recall \eqref{eq:defSn1}) has the exactly the same form as $T_n$ (that is, $\emmdstat$) in \eqref{eq:defTn} with $\sfH$ replaced by $\sfH^2$. 
Hence, following the proof of \eqref{eq:EfronSteinbd} in Section \ref{sec:proofofconsistency1} with $\sfH$ replaced by $\sfH^2$ gives,
\begin{align}\label{eq:centerSnbdd}
    \E_{\bH_0}\left[\left(S_{n} - \E_{\bH_0}\left[S_{n}\right]\right)^2\right]\lesssim_{d}\frac{1}{n}\E_{\bH_0}\left[\left(\max_{1\leq u\neq v\leq n}\sfH^2(\bW_{u},\bW_{v})\right)^2\right] . 
\end{align}
Now, recalling \eqref{eq:defH}, $\sfH$ can be rewritten as: 
\begin{align}\label{eq:Hasinprod}
    \sfH(\bw,\bw') = \left\langle\sfK(x,\cdot) - \sfK(y,\cdot),\sfK(x',\cdot) - \sfK(y,\cdot)\right\rangle.
\end{align}
Hence, by Cauchy Schwarz inequality and the bound $\|a + b\|_{\cH}^m\leq 2^{m-1}\left(\|a\|_{\cH}^m + \|b\|_{\cH}^{m}\right)$, 
\begin{align}
    \E_{\bH_0}\left[\left(\max_{1\leq u\neq v\leq n}\sfH^2(\bW_{u},\bW_{v})\right)^2\right] & \leq \E_{\bH_0}\left[\left(\max_{1\leq u\leq n}\left\|\sfK(\cdot, X_{u}) - \sfK(\cdot, Y_{u})\right\|_{\cH}^4\right)^2\right] \nonumber \\ 
& \lesssim \E_{\bH_0}\left[\max_{1\leq u\leq n}\|\sfK(\cdot,X_{u})\|_{\cH}^8 + \max_{1\leq u\leq n}\|\sfK(\cdot,Y_{u})\|_{\cH}^8\right]\nonumber\\
    & = \E_{\bH_0}\left[\max_{1\leq u\leq n}\sfK(X_{u},X_{u})^4 + \max_{1\leq u\leq n}\sfK(Y_{u},Y_{u})^4\right] , \label{eq:4thexpbd}
\end{align} 
where the last equality follows from the reproducing property of $\sfK$. Now, since by assumption $\E_{\bH_0}\left[\sfK(X_{1},X_{1})^{4 + \delta}\right]<\infty$ and $\E_{\bH_0}\left[\sfK(Y_{1},Y_{1})^{4 + \delta}\right]<\infty$, for some $\delta>0$, by arguments similar to \eqref{eq:maximumKX} it follows that 
\begin{align*}
\E_{\bH_0}\left[\max_{1\leq u\leq n}\sfK(X_{u},X_{u})^4 \right]  = o\left(n^{\frac{8}{8+\delta}}\right) \text{ and } \E_{\bH_0}\left[\max_{1\leq u\leq n}\sfK(Y_{u}, Y_{u})^4 \right]  = o\left(n^{\frac{8}{8+\delta}}\right) . 
\end{align*}
Hence, combining the above with \eqref{eq:centerSnbdd}  and \eqref{eq:4thexpbd} we conclude,
\begin{align}\label{eq:onconvg}
    \E_{\bH_0}\left[\left(S_{n} - \E[S_{n}]\right)^2\right]\lesssim_{d}\frac{1}{n}\E_{\bH_0}\left[\left(\max_{1\leq u\neq v\leq n}\sfH^2(\bW_{u},\bW_{v})\right)^2\right] = o\left(n^{-\gamma}\right) ,
\end{align}
establishing the concentration of $S_n$ as in \eqref{eq:S_nV_npf}.

Next, we proceed to show that $V_{n}$ (recall \eqref{eq:Vn}) also concentrates around $\E[S_{n}]$. By a conditional expectation argument and recalling \eqref{eq:deff2} it is easy to see that,
\begin{align}\label{eq:VnSnZ}
    \E_{\bH_0}[V_{n}] = \E_{\bH_0}[S_{n}] . 
\end{align}
Once again, following the proof of \eqref{eq:EfronSteinbd} in  Section \ref{sec:proofofconsistency1} gives, 
\begin{align*}
    \E_{\bH_0}\left[\left(V_{n} - \E_{\bH_0}[S_{n}]\right)^2\right] = \E_{\bH_0}\left[\left(V_{n} - \E_{\bH_0}\left[V_{n}\right]\right)^2\right]
    & \lesssim_{d} \frac{1}{n}\E_{\bH_0}\left[\left(\max_{1\leq u\neq v\leq n}f(Z_{u},Z_{v})\right)^2\right].
\end{align*}
Recalling the definition of $f$ from \eqref{eq:deff2} and applying Jensen's inequality shows, 
\begin{align}\label{eq:f22bound}
    \E_{\bH_0}\left[\left(\max_{1\leq u\neq v\leq n}f(Z_{u},Z_{v})\right)^2\right] & \leq \E_{\bH_0}\left[\E_{\bH_0}\left[\max_{1\leq u\neq v\leq n}\sfH(\bW_{u},\bW_{v})^4|\cF(\sZ_n)\right]\right] \nonumber \\ 
    & = \E_{\bH_0}\left[\max_{1\leq u\neq v\leq n}\sfH^{4}(\bW_{u},\bW_{v})\right] . 
\end{align}
Finally, following arguments as in \eqref{eq:4thexpbd} and \eqref{eq:onconvg} we conclude,
\begin{align}\label{eq:Vn1ESn1close}
    \E_{\bH_0}\left[\left(V_{n} - \E_{\bH_0}[S_{n}]\right)^2\right] = o(n^{-\gamma}) . 
\end{align}
This proves the concentration of $V_n$ as in \eqref{eq:S_nV_npf} and completes the proof of Lemma \ref{lemma:V1S1close}. 
\end{proof}

Now, we proceed to show the $\tilde{S}_{n}$ and $\tilde{V}_{n}$ are close in $L^2$ as in stated in \eqref{eq:SnVndiffconvg}. The proof is similar is Lemma \ref{lemma:consistency1}, so we omit some details. 

\begin{lemma}\label{lemma:V2S2close}
    Under the assumptions of Proposition \ref{prop:SnestvarH0}, $S_n$ and $V_n$ as defined in \eqref{eq:defSn2} and \eqref{eq:Vnuv}, respectively, 
    \begin{align*}
        \E_{\bH_0}\left[(\tilde{S}_{n} - \tilde{V}_{n})^2\right] = o\left(n^{-\gamma}\right)
    \end{align*}
\end{lemma}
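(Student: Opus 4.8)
The plan is to follow the proof of Lemma \ref{lemma:V1S1close} essentially verbatim, with $\sfH^2$ and $f$ playing the roles of $\sfH^2$ and $f$ but now summed over \emph{mutual} edges. First I would reduce the claim to showing that $\tilde S_n$ and $\tilde V_n$ each concentrate in $L^2$ around their common mean $\E_{\bH_0}[\tilde S_n]$, since then, writing $\tilde S_n - \tilde V_n = (\tilde S_n - \E_{\bH_0}[\tilde S_n]) - (\tilde V_n - \E_{\bH_0}[\tilde V_n])$ and using $(a-b)^2 \le 2a^2 + 2b^2$, the result follows from $\Var_{\bH_0}[\tilde S_n] = o(n^{-\gamma})$ and $\Var_{\bH_0}[\tilde V_n] = o(n^{-\gamma})$. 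The equality of means $\E_{\bH_0}[\tilde S_n] = \E_{\bH_0}[\tilde V_n]$ is immediate by conditioning on $\cF(\sZ_n)$: the event $\{(u,v),(v,u)\in E(G(\sZ_n))\}$ is $\cF(\sZ_n)$-measurable, and $\E_{\bH_0}[\sfH^2(\bW_u,\bW_v)\mid\cF(\sZ_n)] = \E_{\bH_0}[\sfH^2(\bW_u,\bW_v)\mid Z_u,Z_v] = f(Z_u,Z_v)$, exactly as for \eqref{eq:VnSnZ}.

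Next I would bound $\Var_{\bH_0}[\tilde S_n]$ by the Efron-Stein inequality, reusing the decomposition of Section \ref{sec:proofofconsistency1}. When $(\bW_s,Z_s)$ is replaced by an independent copy, the indicator $\one\{(u,v),(v,u)\in E(G(\sZ_n))\}$ can change only when at least one of the directed edges $(u,v)$, $(v,u)$ lies in $E(G(\sZ_n))\triangle E(G(\sZ_{n,s}))$ or has $s$ as an endpoint; by the same reasoning behind \eqref{eq:ECn}–\eqref{eq:ECns} — in particular the fact that both the in-degree and the out-degree of every vertex of a Euclidean $K$-NN graph are $O_d(K)$ — the number of such ordered pairs is $\lesssim_d K$. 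Hence $|\tilde S_n - \tilde S_{n,s}| \lesssim_d \frac1n\bigl(\max_{u\ne v}\sfH^2(\bW_u,\bW_v) + \max_{u\ne v}\sfH^2(\bW_{u,s},\bW_{v,s})\bigr)$, and summing squares over $s$ gives $\Var_{\bH_0}[\tilde S_n] \lesssim_d \frac1n\E_{\bH_0}[(\max_{u\ne v}\sfH^2(\bW_u,\bW_v))^2]$. Writing $\sfH$ as in \eqref{eq:Hasinprod}, Cauchy--Schwarz, the reproducing property, and $\|a+b\|_{\cH}^m\le 2^{m-1}(\|a\|_{\cH}^m+\|b\|_{\cH}^m)$ bound this by $\E_{\bH_0}[\max_u \sfK(X_u,X_u)^4 + \max_u \sfK(Y_u,Y_u)^4]$, which the truncation-plus-Markov argument of \eqref{eq:maximumKX}, now with the $(4+\delta)$-th moment hypothesis, controls as $o(n^{8/(8+\delta)})$. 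Thus $\Var_{\bH_0}[\tilde S_n] = o(n^{-1+8/(8+\delta)}) = o(n^{-\gamma})$.

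The identical Efron-Stein computation applied to $\tilde V_n$, with $\sfH^2$ replaced by $f$, gives $\Var_{\bH_0}[\tilde V_n] \lesssim_d \frac1n\E_{\bH_0}[(\max_{u\ne v} f(Z_u,Z_v))^2]$; Jensen's inequality in the form $f(Z,Z')^2 \le \E_{\bH_0}[\sfH^4(\bW,\bW')\mid Z,Z']$ together with the bound just established yields $\E_{\bH_0}[(\max_{u\ne v} f(Z_u,Z_v))^2] \le \E_{\bH_0}[\max_{u\ne v}\sfH^4(\bW_u,\bW_v)] = o(n^{8/(8+\delta)})$, so $\Var_{\bH_0}[\tilde V_n] = o(n^{-\gamma})$ as well. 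Combining the two variance estimates with $\E_{\bH_0}[\tilde S_n] = \E_{\bH_0}[\tilde V_n]$ gives the statement.

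The only genuinely new point — and the step I expect to require the most care — is the combinatorial bookkeeping in the Efron-Stein bound: one must verify that perturbing a single point $Z_s$ flips the mutual-edge indicator $\one\{(u,v),(v,u)\in E(G(\sZ_n))\}$ for at most $O_d(K)$ ordered pairs $(u,v)$. This is marginally subtler than the one-directional estimate in Lemma \ref{lemma:consistency1}, since a mutual edge couples two directed edges, but it still reduces to the standard bound on in- and out-degrees in a $K$-NN graph used for \eqref{eq:ECn}; once this is in hand the argument is the same as in Lemma \ref{lemma:V1S1close}.
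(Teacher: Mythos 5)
Your proposal is correct and follows essentially the same route as the paper: reduce to $L^2$ concentration of $\tilde S_n$ and $\tilde V_n$ around their common conditional mean $\E_{\bH_0}[\tilde S_n]$, bound each variance by Efron--Stein using the $O_d(K)$ bound on the number of mutual edges affected by a single perturbation, and control the resulting $\max\sfH^4$ terms with the $(4+\delta)$-moment hypothesis (via Jensen for $\tilde V_n$). The combinatorial point you flag at the end — that perturbing $Z_s$ flips the mutual-edge indicator for only $O_d(K)$ ordered pairs — is exactly the step the paper cites back to \eqref{eq:diff2ndTn} and \eqref{eq:ECn}, and your reasoning for it is sound.
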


\begin{proof}
As in the previous lemma, the proof involves showing that both $\tilde{S}_{n}$ and $\tilde{V}_{n}$ are close to $\E_{\bH_0}[\tilde{S}_{n}]$ in $L^2$. 
We begin with by analyzing $\tilde{S}_{n}$. As in Lemma \ref{lemma:consistency1}, we will apply the Efron-Stein inequality. For this, suppose $(\bW_1', Z_1'), (\bW_2', Z_2'), \ldots, (\bW_n', Z_n')$, where $\bm W_i' = (X_i', Y_i')$, for $1 \leq i \leq n$, are generated i.i.d. from the joint distribution $P_{XYZ}$, independent of the data $(\bW_{1},Z_{1}),\ldots, (\bW_{n},Z_{n})$. As in \eqref{eq:WZs},  define a new collection 
\begin{align*}
\bcW_{n, s} := \left(\bcW_n \setminus\{\bW_{s}\}\right)\bigcup\{ \bW_{s}'\}\text{ and }\cZ_{n,s} := \left(\sZ_n\setminus\{Z_{s}\}\right)\bigcup\{Z_{s}'\} , 
\end{align*} 
where we replace the $s$-th data point with an independent copy. (Recall that $\bcW_n:= \{\bW_i = (X_i,Y_i):1\leq i\leq n \}$ and $\sZ_n := \{Z_1,Z_2,\ldots,Z_n\}$.) Now, recalling \eqref{eq:defSn2}, consider,
    \begin{align*}
        \tilde{S}_{n, s} = \tilde{S}_{n}(\bcW_{n,s},\cZ_{n,s}) , 
    \end{align*}
    that is, $\tilde{S}_{n, s}$ is defined as in \eqref{eq:defSn2} evaluated using the samples $\bcW_{n,s}$ and $\cZ_{n,s}$ (instead of $\bcW_{n}$ and $\cZ_{n}$). Now, by arguments similar to the proof of \eqref{eq:diff2ndTn} it can be shown that 
    \begin{align*}
        |\tilde{S}_{n} - \tilde{S}_{n, s}|\lesssim_{d}\frac{1}{n}\left[\max_{1\leq u\neq v\leq n}\left|\sfH^2(\bW_{u},\bW_v)\right| + \max_{1\leq u\neq v\leq n}\left|\sfH^2\left(\bW_{u,s},\bW_{v,s}\right)\right|\right] , 
    \end{align*} 
    where $\bW_{u,s}$ is defined in \eqref{eq:newnotation}. The Efron-Stein inequality and \eqref{eq:onconvg} implies, 
        \begin{align}\label{eq:onvareq}
        \E_{\bH_0}\left[\left(\tilde{S}_{n} - \E_{\bH_0}[\tilde{S}_{n}]\right)\right]\lesssim_{d}\frac{1}{n}\E_{\bH_0}\left[\left(\max_{1\leq u\neq v\leq n}\sfH^2(\bW_{u},\bW_{v})\right)^2\right] = o(n^{-\gamma}) . 
    \end{align}
    This shows that $\tilde{S}_{n}$ concentrates around $\E_{\bH_0}[\tilde{S}_{n}]$. 
    
    Next, we will show that $\tilde{V}_{n}$ concentrates around $\E_{\bH_0}[\tilde{S}_{n}]$. For this, repeating the arguments in the proof of \eqref{eq:onvareq} with $\tilde{S}_{n}$ replaced by $\tilde{V}_{n}$ gives, 
    \begin{align*}
        \E_{\bH_0}\left[\left(\tilde{V}_{n} - \E_{\bH_0}\left[\tilde{V}_{n}\right]\right)^2\right]\lesssim_{d}\frac{1}{n}\E_{\bH_0}\left[\left(\max_{1\leq u\neq v\leq n}f^2(Z_u,Z_v)\right)^2\right] . 
    \end{align*}
    Further, the tower property of expectation shows that $\E_{\bH_0}[\tilde{V}_{n}] = \E_{\bH_0}[\tilde{S}_{n}]$, which implies,
    \begin{align*}
        \E_{\bH_0}\left[\left(\tilde{V}_{n} - \E_{\bH_0}\left[\tilde{S}_{n}\right]\right)^2\right]
        & \lesssim_{d}\frac{1}{n}\E_{\bH_0}\left[\left(\max_{1\leq u\neq v\leq n}f^2(Z_u,Z_v)\right)^2\right]\\
        &\leq \frac{1}{n}\E_{\bH_0}\left[\max_{1\leq u\neq v\leq n}\sfH^{4}(\bW_{u},\bW_{v})\right] = o(n^{-\gamma}) , 
    \end{align*}
    by \eqref{eq:f22bound} and \eqref{eq:onvareq}. This shows that $\tilde{V}_{n}$ concentrates around $\E_{\bH_0}[\tilde{S}_{n}]$, which completes the proof of Lemma \ref{lemma:V2S2close}. 
\end{proof}

The proof of Lemma \ref{lemma:Sn2VarL2} is now completed by recalling \eqref{eq:SnVndiffconvg} and the results in Lemma \ref{lemma:V1S1close} and Lemma \ref{lemma:V2S2close}.

\subsubsection{Proof of Lemma \ref{lemma:nepscondvarinfty}} 
\label{eq:varianceprobabilityH0}

Using \eqref{eq:variancesecondmomentH0} and recalling the expression of $\E_{\bH_{0}}\left[\eta_{n}^2\middle|\cF(\sZ_n)\right]$ from \eqref{eq:condeta2exp} notice, 
\begin{align*}
    \Var_{\bH_{0}}\left[\eta_n\middle|\cF(\sZ_n)\right] & = \E_{\bH_{0}}\left[\eta_{n}^2\middle|\cF(\sZ_n)\right] \nonumber \\ 
    & \geq \frac{1}{nK}\sum_{u=1}^{n}\sum_{v = 1}^{n}\E_{\bH_0}\left[\sfH^2(\bW_u,\bW_v)\middle|\cF(\sZ_n)\right]\one\left\{(u,v)\in E(G(\sZ_n))\right\} . 
\end{align*}
Therefore, to complete the proof of Lemma \ref{lemma:nepscondvarinfty} it is now enough to show that there exists $c>0$ such that,
\begin{align}\label{eq:EH2condpositive}
    \frac{1}{nK}\sum_{u=1}^{n}\sum_{v = 1}^{n}\E_{\bH_0}\left[\sfH^2(\bW_u,\bW_v)\middle|\cF(\sZ_n)\right]\one\left\{(u,v)\in E(G(\sZ_n))\right\}\pto c . 
\end{align} 
Recalling \eqref{eq:VnSnZ} and \eqref{eq:Vn1ESn1close}, to prove \eqref{eq:EH2condpositive} it suffices to show that 
    \begin{align*}
        \E_{\bH_0}\left[\frac{1}{nK}\sum_{u=1}^{n}\sum_{v=1}^{n}\sfH^{2}\left(\bW_{u},\bW_{v}\right)\one\left\{(u,v)\in E(G(\sZ_n))\right\}\right]\ra c. 
    \end{align*}
The proof is now completed by invoking Lemma \ref{lemma:condH2convg} and Lemma \ref{lemma:varlimitnon0}. 

\subsection{Proof of Theorem \ref{thm:CLT}}
\label{sec:CLTpf}

To begin with define,
    \begin{align}\label{eq:defVu}
        V_{u} := \frac{1}{\sqrt{nK}}\sum_{v\in N_{G(\sZ_n)}(u)}\sfH\left(\bW_{u},\bW_{v}\right), \text{ for all }1\leq u\leq n.
    \end{align}
    Then $\eta_n$ (recall \eqref{eq:defetan}) can be written as: 
    \begin{align*}
        \eta_{n} = \sum_{u=1}^{n}V_{u} . 
    \end{align*} 
    We will prove the CLT of  $\eta_{n}$ using Stein's method based on dependency graphs \cite{chen2004normal}. For this, we need to construct a dependency graph for the random variables $\{V_{u}:1\leq u\leq n\}$. To this end, denote by $G^*(\sZ_n)$ the undirected simple graph obtained from the the $K$-NN graph $G(\sZ_n)$, that is, 
    we remove the directions from the edges and if for a pair of vertices there are directed edges in both directions, we keep only an undirected edge between them. Then we can construct a dependency graph $\cG_n = (V(\cG_n), E(\cG_n))$ as follow: $V(\cG_n) = \{V_{u}:1\leq u\leq n\} $ and there is an edge between $V_u$ and $V_v$, for $1 \leq u \ne v \leq n$, if and only if there exists a path of length $\leq 2$ between $u$ and $v$ in the graph $G^*(\sZ_n)$. Note that if $A, B \in [n]$ is such that there are no edges from the set $\{V_u\}_{u \in A}$ to the set $\{V_u\}_{u \in B}$ in the graph  $\mathcal{G}_n$ , then 2 collections $\{V_u\}_{u \in A}$ and  $\{V_u\}_{u \in B}$ are independent of each other given $\cF(\cZ_{n})$. 

Denote by $\Delta$ the maximum degree of a vertex in the dependency graph $\cG_n$. Since the maximum degree of the graph $G^*(\sZ_n)$ is bounded by $c(d) K$, for some constant $c(d) > 0$ depending on the dimension (see \cite[Lemma 1]{jaffe2020randomized}), it follows that $\Delta \leq C(d) K^2$, for some constant $C(d) > 0$. Now, applying the Stein's method error bound in \cite[Theorem 2.7]{chen2004normal}, recalling \eqref{eq:EH0}, and denoting $\sigma_n^2:=\Var_{\bH_0}\left[\eta_{n}|\cF(\cZ_{n})\right]$ we have,
    \begin{align*}
        \sup_{z\in \R}\left|\P_{\bH_0}\left[\frac{\eta_{n}}{\sigma_n}\leq z \middle|\cF(\cZ_{n})\right] - \Phi(z)\right|\lesssim_{d} \frac{K^{20}}{\sigma_n^3} \E_{\bH_0}\left[\sum_{i=1}^{n}|V_{i}|^3 \middle|\cF(\cZ_{n})\right] , 
     \end{align*}
    almost surely. Using the tower property of conditional expectation gives,     \begin{align}
        \sup_{z\in \R}\left|\P_{\bH_0}\left[\frac{\eta_{n}}{\sigma_n}\leq z\right] - \Phi(z)\right|
        & \lesssim_{d} \E_{\bH_0}\left[ \frac{K^{20}}{\sigma_n^3} \E_{\bH_0}\left[\sum_{i=1}^{n}|V_{i}|^3 \middle|\cF(\cZ_{n})\right]  \right] . \label{eq:bddonsteindiff}
    \end{align} 
Note that by Lemma \ref{lemma:nepscondvarinfty}, 
    \begin{align}
        \E_{\bH_0}\Bigg[ \frac{K^{20}}{\sigma_n^3} \E_{\bH_0}\bigg[\sum_{u=1}^{n}|V_{u}|^3\bigg|
        &\cF(\cZ_{n})\bigg] \Bigg]\nonumber\\
        & \leq \E_{\bH_0}\left[ \frac{K^{20}}{\sigma_n^3} \E_{\bH_0}\left[\sum_{u=1}^{n}|V_{u}|^3|\cF(\cZ_{n})\right] \one\left\{n^{\vep} \sigma_n^2 \geq 1\right\} \right] + o(1) \nonumber \\ 
        & \leq  K^{20}n^{\frac{3\vep-1}{2}}\E_{\bH_0}\left[\sqrt{n}\E_{\bH_0}\left[\sum_{u=1}^{n}|V_{u}|^3|\cF(\cZ_{n})\right]\right] + o(1) . 
        \label{eq:bddExpmin2}
    \end{align} 
   The following provides an upper bound on the RHS of from \eqref{eq:bddExpmin2}.

    \begin{lemma}\label{lemma:bddV3}
        Under the conditions of Theorem \ref{thm:CLT}, 
        \begin{align*}
            \E_{\bH_0}\left[\sqrt{n}\E_{\bH_0}\left[\sum_{u=1}^{n}|V_{u}|^3 \middle|\cF(\cZ_{n})\right]\right]\lesssim_{\sfK} K^{\frac{3}{2}}.
        \end{align*}
    \end{lemma}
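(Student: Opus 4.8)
The plan is to exploit the degeneracy of $\sfH$ under $\bH_0$ together with a moment inequality for sums of independent centered random variables, and then absorb the combinatorics into the bounded-degree property of $K$-nearest-neighbor graphs. Since the out-degree $|N_{G(\sZ_n)}(u)|$ equals $K$ for every $u$, the tower property gives
\begin{align*}
\E_{\bH_0}\left[\sqrt{n}\,\E_{\bH_0}\left[\sum_{u=1}^{n}|V_{u}|^3\,\middle|\,\cF(\sZ_n)\right]\right] = \frac{1}{nK^{3/2}}\sum_{u=1}^{n}\E_{\bH_0}\left[\left|\sum_{v \in N_{G(\sZ_n)}(u)}\sfH(\bW_u,\bW_v)\right|^3\right].
\end{align*}
Fix $u$ and condition additionally on $\bW_u$: given $(\bW_u,\cF(\sZ_n))$ the variables $\{\bW_v\}_{v\neq u}$ are independent with $\bW_v\sim P_{XY|Z_v}$, so $\{\sfH(\bW_u,\bW_v)\}_{v\in N_{G(\sZ_n)}(u)}$ are independent, and each is centered under $\bH_0$ by the same cancellation used for \eqref{eq:condexpij0}: $\E_{\bH_0}[\sfK(X_v,X_u)\mid\bW_u,\cF(\sZ_n)]=\E_{\bH_0}[\sfK(Y_v,X_u)\mid\bW_u,\cF(\sZ_n)]$ and $\E_{\bH_0}[\sfK(Y_v,Y_u)\mid\bW_u,\cF(\sZ_n)]=\E_{\bH_0}[\sfK(X_v,Y_u)\mid\bW_u,\cF(\sZ_n)]$ because $P_{X|Z}=P_{Y|Z}$.

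Next I would apply Rosenthal's inequality with exponent $3$, conditionally on $(\bW_u,\cF(\sZ_n))$, bounding the conditional third moment of $\sum_{v}\sfH(\bW_u,\bW_v)$ (sum over $v\in N_{G(\sZ_n)}(u)$) by a constant times $\big(\sum_{v}\E_{\bH_0}[\sfH^2(\bW_u,\bW_v)\mid\bW_u,\cF(\sZ_n)]\big)^{3/2}+\sum_{v}\E_{\bH_0}[|\sfH(\bW_u,\bW_v)|^3\mid\bW_u,\cF(\sZ_n)]$. Conditional Jensen gives $(\E_{\bH_0}[\sfH^2\mid\bW_u,\cF(\sZ_n)])^{3/2}\le\E_{\bH_0}[|\sfH|^3\mid\bW_u,\cF(\sZ_n)]$, and the power-mean inequality gives $(\sum_{v}a_v)^{3/2}\le K^{1/2}\sum_{v}a_v^{3/2}$; as $K\ge1$, both terms are then $\lesssim K^{1/2}\sum_{v}\E_{\bH_0}[|\sfH(\bW_u,\bW_v)|^3\mid\bW_u,\cF(\sZ_n)]$. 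Writing $\sfH$ as the inner product in \eqref{eq:Hasinprod} and using Cauchy--Schwarz, $|\sfH(\bW_u,\bW_v)|^3\le g(\bW_u)g(\bW_v)$ with $g(\bw):=\|\sfK(x,\cdot)-\sfK(y,\cdot)\|_{\cH}^3$ for $\bw=(x,y)$; taking expectations and using that $\bW_u,\bW_v$ are conditionally independent given $\cF(\sZ_n)$ with $\bW_i\sim P_{XY|Z_i}$, this yields $\E_{\bH_0}\big[\sum_{v}|\sfH(\bW_u,\bW_v)|^3\big]\le\E_{\bH_0}\big[h(Z_u)\sum_{v}h(Z_v)\big]$, where $h(z):=\E[g(\bW)\mid Z=z]$ and the sums run over $v\in N_{G(\sZ_n)}(u)$.

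Finally, summing over $u$, applying $h(Z_u)h(Z_v)\le\tfrac12(h(Z_u)^2+h(Z_v)^2)$, and using that in a $K$-NN graph the out-degree is $K$ and the in-degree is $\lesssim_d K$ (already invoked via \cite{jaffe2020randomized}), I get $\E_{\bH_0}\big[\sum_{u=1}^{n}h(Z_u)\sum_{v\in N_{G(\sZ_n)}(u)}h(Z_v)\big]\lesssim_d Kn\,\E[h(Z_1)^2]$, while $\E[h(Z_1)^2]\le\E[g(\bW_1)^2]\lesssim\E[\sfK(X_1,X_1)^3]+\E[\sfK(Y_1,Y_1)^3]<\infty$ by the moment hypotheses (since $4+\delta>3$). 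Collecting the bounds,
\begin{align*}
\E_{\bH_0}\left[\sqrt{n}\,\E_{\bH_0}\left[\sum_{u=1}^{n}|V_{u}|^3\,\middle|\,\cF(\sZ_n)\right]\right]\;\lesssim\;\frac{K^{1/2}}{nK^{3/2}}\cdot Kn\cdot\E[h(Z_1)^2]\;\lesssim_{\sfK}\;1\;\le\;K^{3/2},
\end{align*}
so one in fact obtains the stronger bound $O(1)$. The main obstacle is the second step: justifying carefully that conditioning on $(\bW_u,\cF(\sZ_n))$ preserves both the independence and the $\bH_0$-centering of the summands, and choosing the moment inequality so that only third-order moments of $\sfH$ (controlled by the stated assumptions) enter; the degree-counting and the remaining estimates are routine.
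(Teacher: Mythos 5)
Your argument is correct, and it follows a genuinely different route that is in fact sharper than the paper's. The paper bounds $\bigl|\sum_{v\in N_{G(\sZ_n)}(u)}\sfH(\bW_u,\bW_v)\bigr|^3 \le K^2 \sum_v |\sfH(\bW_u,\bW_v)|^3$ by H\"older alone, which makes no use of cancellation; combined with the $(nK)^{-3/2}$ normalization and a degree count this yields the $K^{3/2}$ factor. You instead observe that, conditionally on $(\bW_u,\cF(\sZ_n))$, under $\bH_0$ the summands $\sfH(\bW_u,\bW_v)$ are independent and centered---the same degeneracy that gives \eqref{eq:condexpij0}---so Rosenthal's inequality replaces the H\"older loss of $K^2$ by the CLT-scale $K^{1/2}$, improving the bound to $O(1)$. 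Your verification of the conditional centering (using $P_{X|Z}=P_{Y|Z}$ term by term) and of conditional independence (the $\bW_v$ are conditionally independent of each other and of $\bW_u$ given $\cF(\sZ_n)$, and the neighbor sets are $\cF(\sZ_n)$-measurable) is correct, as are the Jensen and power-mean reductions that collapse both Rosenthal terms to $K^{1/2}\sum_v\E[|\sfH|^3\mid\cdot]$; from there your degree counting mirrors the paper's. The paper's argument is more elementary and suffices for the stated $\lesssim K^{3/2}$, at the cost of a cruder constant; your $O(1)$ bound, if propagated through \eqref{eq:bddExpmin2}--\eqref{eq:etacltH0}, would replace the $K^{43/2}$ there by $K^{20}$ and so relax the condition $K=o(n^{1/44})$ in Theorem \ref{thm:CLT} to roughly $K=o(n^{1/40})$.
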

    
    The proof of Lemma \ref{lemma:bddV3} is given in Appendix \ref{sec:Vexpectationpf}. First we complete the proof of Theorem \ref{thm:CLT} using this lemma. To this end, combining \eqref{eq:bddonsteindiff}, \eqref{eq:bddExpmin2}, and Lemma \ref{lemma:bddV3} gives, 
      \begin{align}\label{eq:etacltH0}
         \sup_{z\in \R}\left|\P_{\bH_0}\left[\frac{\eta_{n}}{\sigma_n}\leq z\right] - \Phi(z)\right| \lesssim_{d,\sfK} K^{\frac{43}{2}}n^{\frac{3\vep-1}{2}} + o(1) = o(1) , 
    \end{align} 
    where the final equality follows by choosing $\vep<\frac{1}{132}$.     
To complete the proof of Theorem \ref{thm:CLT} we have to replace $\sigma_n^2$ by $\hat{\sigma}_n^2$ in \eqref{eq:etacltH0}. This follows by \eqref{eq:completesigmavar}.    
    
    \subsubsection{Proof of Lemma \ref{lemma:bddV3}} 
    \label{sec:Vexpectationpf}
    
    Recall the definition of $V_u$, for $1 \leq u \leq n$, from \eqref{eq:defVu}. Then by H\"older's inequality, 
    \begin{align}\label{eq:bddonsumV}
        \sqrt{n}\sum_{u=1}^{n}|V_{u}|^3 & \leq \frac{1}{nK^{\frac{3}{2}}}\sum_{u=1}^{n}\left(\sum_{v\in N_{G(\sZ_n)}(u)}|\sfH(\bW_{u},\bW_{v})|\right)^3 \nonumber \\ 
        & \leq \frac{\sqrt{K}}{n}\sum_{u=1}^{n}\sum_{v\in N_{G(\sZ_n)}(u)}|\sfH(\bW_{u},\bW_{v})|^3 . 
    \end{align}
    Using the representation of $\sfH$ in \eqref{eq:Hasinprod} and the Cauchy-Schwartz inequality,
    \begin{align*}
        & \sum_{u=1}^{n}\sum_{v\in N_{G(\sZ_n)}(u)}
        |\sfH(\bW_{i},\bW_{j})|^3 \nonumber \\ 
        & \leq \sum_{u=1}^{n}\sum_{v\in N_{G(\sZ_n)}(u)}\left\|\sfK(X_{u},\cdot) - \sfK(Y_{u},\cdot)\right\|_{\cH}^{3}\left\|\sfK(X_{v},\cdot) - \sfK(Y_{v},\cdot)\right\|_{\cH}^{3} \nonumber \\ 
   & \lesssim \sum_{u=1}^{n}\sum_{v\in N_{G(\sZ_n)}(u)}\left(\left\|\sfK(X_{u},\cdot)\right\|_{\cH}^{3} + \left\|\sfK(Y_{u},\cdot)\right\|_{\cH}^{3}\right)\left(\left\|\sfK(X_{v},\cdot)\right\|_{\cH}^{3} + \left\|\sfK(Y_{v},\cdot)\right\|_{\cH}^{3}\right) , 
    \end{align*} 
    where the last step uses the inequality $\|a+b\|_{\cH}^m\leq 2^{m-1}(\|a\|_{\cH}^m + \|b\|_{\cH}^m)$, for $m \geq 1$. Finally, recalling the reproducing property of $\sfK$ gives, 
        \begin{align}\label{eq:bddonsumh3}
        & \sum_{u=1}^{n} \sum_{v\in N_{G(\sZ_n)}(u)}
        |\sfH(\bW_{i},\bW_{j})|^3\nonumber\\
        &\lesssim \sum_{u=1}^{n}\sum_{v\in N_{G(\sZ_n)}(u)}\left(\sfK(X_{u},X_{u})^{\frac{3}{2}} + \sfK(Y_{u},Y_{u})^{\frac{3}{2}}\right)\left(\sfK(X_{v},X_{v})^{\frac{3}{2}} + \sfK(Y_{v},Y_{v})^{\frac{3}{2}}\right) . 
    \end{align}
    
    Now, for $(X,Y,Z)\sim P_{XYZ}$ define,
    \begin{align}\label{eq:defbetafn}
        \beta(Z) := \E_{\bH_0}\left[\sfK(X,X)^{\frac{3}{2}} + \sfK(Y,Y)^{\frac{3}{2}} |Z\right].
    \end{align}
    By the moment assumption on $\sfK$ and from \cite[Lemma 1.13]{kallenberg1997foundations} it follows that $\beta:\cZ\ra\R$ is a well defined  measurable function. Hence, from \eqref{eq:bddonsumV} and \eqref{eq:bddonsumh3}, 
    \begin{align} 
        \E_{\bH_0}\left[\sqrt{n}\E_{\bH_0}\left[\sum_{u=1}^{n}|V_{u}|^3|\cF(\cZ_{n})\right]\right] & \lesssim\frac{\sqrt{K}}{n}\E_{\bH_0}\left[\sum_{u=1}^{n}\sum_{v\in N_{G(\sZ_n)}(u)}\beta(Z_{u})\beta(Z_{v})\right] \nonumber \\ 
        &\lesssim K^{\frac{3}{2}}\E_{\bH_0}\left[\frac{1}{nK}\sum_{u=1}^{n}\sum_{v\in N_{G(\sZ_n)}(u)}\beta(Z_{u})\beta(Z_{v})\right] \nonumber \\ 
    & = K^{\frac{3}{2}}\E_{\bH_0}\left[\frac{1}{K}\sum_{v\in N_{G(\sZ_n)}(1)}\beta(Z_{1})\beta(Z_{v})\right] \nonumber \\ 
    & = K^{\frac{3}{2}}\E_{\bH_0}\left[\beta(Z_{1})\beta(Z_{N(1)})\right] , 
        \label{eq:CLTproofuniformN1}
    \end{align} 
    where $N(1)$ is a neighbor of the vertex 1 in the graph $G(\sZ_n)$ chosen uniformly at random. By Cauchy-Schwarz inequality and \cite[Lemma D.2]{deb2020measuring}, 
    \begin{align}\label{eq:CLTproofCauchySch}
        K^{\frac{3}{2}}\E_{\bH_0}\left[\beta(Z_{1})\beta(Z_{N(1)})\right] & \leq K^{\frac{3}{2}}\left(\E_{\bH_0}\left[\beta(Z_{1})^2\right]\right)^{\frac{1}{2}}\left(\E_{\bH_0}\left[\beta\left(Z_{N(1)}\right)^2\right]\right)^{\frac{1}{2}} \nonumber \\ 
        & \lesssim K^{\frac{3}{2}}\E_{\bH_0}\left[\beta(Z_{1})^2\right] . 
    \end{align}
    Combining \eqref{eq:CLTproofuniformN1} and \eqref{eq:CLTproofCauchySch} gives, 
      \begin{align}\label{eq:1stbddbeta}
        \E_{\bH_0}\left[\sqrt{n}\E_{\bH_0}\left[\sum_{u=1}^{n}|V_{u}|^3|\cF(\cZ_{n})\right]\right]\lesssim K^{\frac{3}{2}}\E_{\bH_0}\left[\beta(Z_{1})^2\right].
    \end{align}
    To complete the proof, recalling the definition of the function $\beta$ from \eqref{eq:defbetafn} note that,
    \begin{align}\label{eq:2ndbddbeta}
        \E_{\bH_0}\left[\beta(Z_{1})^2\right] 
        & = \E_{\bH_0}\left[\left(\E_{\bH_0}\left[\sfK(X_{1},X_{1})^{\frac{3}{2}} + \sfK(Y_{1},Y_{1})^{\frac{3}{2}}|Z_{1}\right]\right)^2\right] \nonumber \\ 
        & \lesssim \E_{\bH_0}\left[\sfK(X_{1},X_{1})^3 + \sfK(Y_{1},Y_{1})^3\right]<\infty , 
    \end{align}
    by the moment assumption on $\sfK$. The proof of Lemma \ref{lemma:bddV3} is now completed by substituting the bound from \eqref{eq:2ndbddbeta} in \eqref{eq:1stbddbeta}.  \hfill $\Box$ 

\subsection{Proof of Corollary \ref{cor:asymptest}} 
\label{sec:consistencyH0pf} 

From \eqref{eq:defetan} recall,
\begin{align*}
    \eta_{n}=\sqrt{nK}\emmdstatsq , 
\end{align*} 
where $\bcW_n = \{\bW_i = (X_i,Y_i):1\leq i\leq n\}$ and $\sZ_n = \{Z_1,\ldots,Z_n\}$. By Theorem \ref{thm:consistency} we know that under $\bH_1$,
\begin{align*}
    \emmdstatsq \pto \EMMD\left[\cF, P_{X|Z}, P_{Y|Z}\right]>0 , 
\end{align*}
where the positivity of the limit follows from Proposition \ref{ppn:EMMD0}. Now, recalling $\phi_n$ from \eqref{eq:defphin}, note that to prove Corollary \ref{cor:asymptest} it is now enough to show that under $\bH_1$, $\hat{\sigma}_n = O_P(1)$. Towards this, from \eqref{eq:defSn},
\begin{align*}
    \hat{\sigma}_n^2 & = \frac{1}{nK}\sum_{u=1}^{n}\sum_{v=1}^{n}\sfH^{2}\left(\bW_{u},\bW_{v}\right)\left(\one\left\{(u,v)\in E(G(\sZ_n))\right\}+\one\left\{(u,v),(v,u)\in E(G(\sZ_n))\right\}\right) \nonumber \\ 
    & \leq \frac{2}{nK}\sum_{u=1}^{n}\sum_{v=1}^{n}\sfH^{2}\left(\bW_{u},\bW_{v}\right)\one\left\{(u,v)\in E(G(\sZ_n))\right\}. 
\end{align*} 
We know from \eqref{eq:EH2condconvg} and \eqref{eq:exphZfinite} that 
\begin{align*}
    \E_{\bH_1}\left[ \frac{1}{nK}\sum_{u=1}^{n}\sum_{v=1}^{n}\sfH^{2}\left(\bW_{u},\bW_{v}\right)\one\left\{(u,v)\in E(G(\sZ_n))\right\}\right] = O(1). 
\end{align*} 

This implies, $\hat{\sigma}_n = O_{P}(1)$ under $\bH_1$ and completes the proof of Corollary \ref{cor:asymptest}.

\section{Proofs from Section \ref{sec:finitetest} }
\label{sec:finitepf}

This section is organized as follows: In Section \ref{sec:H0hypothesispf} we prove Proposition \ref{ppn:finitetest}. Theorem \ref{thm:consistencyDn} is proved in Section \ref{sec:proofofDnconsistency} and Theorem \ref{thm:DnCLT} is  proved in Section \ref{sec:proofofDnClT}.

\subsection{Proof of Proposition \ref{ppn:finitetest}}
\label{sec:H0hypothesispf}

Recall that under $\bH_0$, $X$ and $Y$ have the same distribution conditioned on $Z$. Now, recall the finite sample test  described in Section \ref{sec:finitetest}. Observe that conditional on $\sZ_n$, the samples $(Y_1,\ldots,Y_n)$ are independent and identically distributed as the  
samples $\{X_1^{(m)},\ldots, X_n^{(m)}\}$, for all $1 \leq m\leq M+1$. 
As a result, the collection $\{ \eta_{n}^{(m)}, 1 \leq m\leq M+1 \}$ (as defined in \eqref{eq:defetanm}) are exchangeable conditioned on $\sZ_n$. Thus, recalling \eqref{eq:Mfinite},  
\begin{align*}
    \P_{\bH_0}[ \tilde{\phi}_{n, M} = 1|\cF(\sZ_n)] = \P_{\bH_0}\left[p_M\leq \alpha|\cF(\sZ_n)\right]=\frac{\lfloor (M+1)\alpha\rfloor }{M+1}\leq\alpha.
\end{align*}
Taking expectations on both sides show the result in Proposition \ref{ppn:finitetest} (1).

To prove the consistency of the test $\tilde{\phi}_{n,M}$ in Proposition \ref{ppn:finitetest} (2) we start by observing that under $\bH_1$,
\begin{align*}
    \frac{\eta_{n}^{(m)}}{\sqrt{nK}}\pto 0 \text{ for all }1\leq m\leq M\text{ and }\frac{\eta_{n}^{(M+1)}}{\sqrt{nK}}\pto \EMMD\left[\cF, P_{X|Z}, P_{Y|Z}\right]>0,
\end{align*}
where the last inequality follows from Proposition \ref{ppn:EMMD0} and the convergences follow from Theorem \ref{thm:consistency}. Consequently, $\one\bigg \{\left|\eta_{n}^{(m)}\right|\geq \left|\eta_{n}^{(M+1)}\right|\bigg\} = o_{P}(1)$, for all $1\leq m\leq M$. Hence,
\begin{align*}
    p_M = \frac{1}{M+1}\left[1 + \sum_{m=1}^{M}\one\bigg\{\left|\eta_{n}^{(m)}\right|\geq \left|\eta_{n}^{(M+1)}\right|\bigg\}\right] \pto \frac{1}{M+1}.
\end{align*}
The proof is now completed by choosing $M > \frac{1}{\alpha}-1$. 

\subsection{Proof of Theorem \ref{thm:consistencyDn}}\label{sec:proofofDnconsistency}
The proof is similar to the proof of Theorem \ref{thm:consistency} in Appendix \ref{sec:proofofconsistency}. To begin with, define
$$\tilde{\bW}_{u} = \left(X_{u}^{(1)},\ldots, X_{u}^{(M_n)}, Y_u \right),$$ for $1\leq u\leq n$. Then $D_n$ in \eqref{eq:defDn} can be expressed as: 
\begin{align}\label{eq:defDnHbar}
    D_n = \frac{1}{nK}\sum_{u=1}^{n}\sum_{v\in N_{G(\sZ_n)}(u)}\bar{\sfH}\left(\tilde{\bW}_{u},\tilde{\bW}_{v}\right),
\end{align}
where 
\begin{align}\label{eq:defbarH}
    \bar{\sfH}\left(\tilde{\bW}_{u},\tilde{\bW}_{v}\right) = \frac{1}{M_n}\sum_{m=1}^{M_n}\sfH\left(\bW_{u}^{(m)},\bW_{v}^{(m)}\right), \quad \text{ for all }1\leq u,v\leq n ,
\end{align}
and $\bW_{u}^{(m)} = (X_u^{(m)}, Y_{u})$, for $1\leq m\leq M_n$. The expression \eqref{eq:defDnHbar} shows that $D_n$  has the exact same form as the estimate $\emmdstatsq$ in \eqref{eq:estEMMD} with $\sfH$ replaced by $\bar{\sfH}$. Hence, following the combinatorial arguments from Section \ref{sec:proofofconsistency1}, in particular the proof of \eqref{eq:EfronSteinbd} with $T_n$ replaced by $D_n$ shows,
\begin{align*}
    \E\left[\left(D_n - \E D_n\right)^2\right]
    &\lesssim_d\frac{1}{n}\E\left[\left(\max_{1\leq u\neq v\leq n}\left|\bar{\sfH}\left(\tilde{\bW}_{u},\tilde{\bW}_{v}\right)\right|\right)^2\right]\\
    &\leq\frac{1}{nM_n}\sum_{m=1}^{M_n}\E\left[\left(\max_{1\leq u\neq v\leq n}\left|\sfH\left(\bW_{u}^{(m)},\bW_{v}^{(m)}\right)\right|\right)^2\right] , 
\end{align*}
where the last inequality follows by the definition of $\bar{\sfH}$ from \eqref{eq:defbarH} and the Cauchy-Schwarz inequality. By the sampling scheme from Section \ref{sec:Derandom}, for any $1\leq m\leq M_n$, the collection $\{(X_{u}^{(m)},Y_{u}, Z_{u} \}_{1 \leq u \leq n}$ are generated i.i.d. from $P_{XYZ}$. Then,
\begin{align*}
    \E\left[\left(D_n - \E D_n\right)^2\right]
    &\lesssim_d\frac{1}{n}\E\left[\left(\max_{1\leq u\neq v\leq n}\left|\sfH\left(\bW_{u}^{(1)}, \bW_{v}^{(1)} \right)\right|\right)^2\right].
\end{align*}
Note that the above upper bound is same as that \eqref{eq:EfronSteinbd}. Hence proceeding as in \eqref{eq:TnETndiffbdd} and \eqref{eq:smalloconsistency1} we conclude,
\begin{align*}
    \E\left[\left(D_n - \E D_n\right)^2\right]  = o\left(n^{-\frac{\delta}{4+\delta}}\right).
\end{align*}
Now, recalling \eqref{eq:EgEMMD}, to complete the proof of Theorem \ref{thm:consistencyDn} it is enough to show that,
\begin{align*}
    \E[D_n]\ra\E\left[\left\|g_{Z_1} \right\|_{\cH}^2\right] , 
\end{align*}
where $g_{Z_1}  = \E\left[\sfK(\cdot,X_1) - \sfK(\cdot,Y_1)\middle|Z_1\right]$ is defined in Lemma \ref{lemma:consistency2}.
For this, exchangeability and the definition of $D_n$ in \eqref{eq:defDnHbar} gives,
\begin{align*}
    \left|\E[D_n] - \E\left[\left\|g_{Z_1} \right\|_{\cH}^2\right]\right| = \left|\E\left[\frac{1}{K}\sum_{v=1}^{n}\bar{\sfH}\left( \tilde{\bW}_1, \tilde{\bW}_{v}\right)\one\left\{(1,v)\in E\left(\sZ_n\right)\right\}\right] - \E\left[\left\|g_{Z_1}(\cdot)\right\|_{\cH}^2\right]\right| . 
\end{align*}
Now, observe that for $1\leq v\leq n$,
\begin{align*}
    \E\left[\bar{\sfH}\left( \tilde{\bW}_{1}, \tilde{\bW}_{v}\right)\middle|Z_1,Z_v\right] = \frac{1}{M_n}\sum_{m=1}^{n}\E\left[\sfH\left(\bW_{1}^{(m)}, \bW_{v}^{(m)}\right)\middle|Z_1,Z_v\right] = \left\langle g_{Z_1}, g_{Z_v} \right\rangle_{\cH} . 
\end{align*}
where the last equality follows by observing that the collection $\{(X_u^{(m)}, Y_{u}, Z_{u})\}_{1 \leq u \leq n}$ are generated i.i.d. from $P_{XYZ}$, for any $1\leq m\leq M_n$, and the definition of $\sfH$.
Hence,
\begin{align*}
    \left|\E[D_n] - \E\left[\left\|g_{Z_1}(\cdot)\right\|_{\cH}^2\right]\right| \leq \E\left[\frac{1}{K}\sum_{u=1}^{n}\left|\langle g_{Z_{1}}, g_{Z_{u}}\rangle_{\cH} - \|g_{Z_{1}}\|_{\cH}^2\right|\I\left\{(1,u)\in E(G(\sZ_n))\right\}\right] . 
\end{align*}
Notice that the upper bound in the RHS above is same as that in  \eqref{eq:gZH}. Hence, the proof of Theorem \ref{thm:consistencyDn} is now completed by following the subsequent arguments from the proof of Lemma \ref{lemma:consistency2} in Section \ref{sec:proofofconsistency2}.

\subsection{Proof of Theorem \ref{thm:DnCLT}}\label{sec:proofofDnClT}

To begin with, observe from \eqref{eq:condexpij0} that $\E_{\bH_0}\left[D_n\middle|\cF(\sZ_n)\right] = 0$. In the following lemma we compute the conditional variance of $D_n$ under $\bH_0$. 

\begin{lemma}\label{lemma:varH0Dn}
Denote by $\tau_n^2:= \Var_{\bH_0}\left[\sqrt{nK}D_n\middle|\cF(\sZ_n)\right]$, the conditional variance of $\sqrt{nK}D_n$ given $\cF(\sZ_n)$  under $\bH_0$. Then, 
    \begin{align*}
       \tau_n^2 = \frac{1}{nK}\sum_{ 1 \leq u\neq v \leq n} \ell_n(Z_u,Z_v) \left( \one\left\{(u,v)\in E(G(\sZ_n))\right\} + \one\left\{(u,v),(v,u)\in E(G(\sZ_n))\right\} \right),
    \end{align*}
    where $\ell_n(Z_u,Z_v) := \E_{\bH_0}[\bar{\sfH}^2(\tilde{\bW}_{u},\tilde{\bW}_{v}) |Z_u,Z_v]$, with $\bar{\sfH}(\tilde{\bW}_{u},\tilde{\bW}_{v})$ defined in \eqref{eq:defbarH}, for all $1\leq u\neq v\leq n$.
\end{lemma}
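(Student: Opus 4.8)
The plan is to follow the proof of Proposition \ref{ppn:condexpH0} in Section \ref{sec:varpf} essentially verbatim, with the centered kernel $\sfH$ replaced by the resampled average $\bar{\sfH}$ from \eqref{eq:defbarH} and an extra (harmless) sum over the resample indices $m,m'$. Since $\E_{\bH_0}[\sqrt{nK}D_n\mid\cF(\sZ_n)]=0$ (as already observed before the statement, via \eqref{eq:condexpij0}), the conditional variance equals the conditional second moment, so writing $\sqrt{nK}D_n=\tfrac{1}{\sqrt{nK}}\sum_{1\le u,v\le n}\bar{\sfH}(\tilde{\bW}_u,\tilde{\bW}_v)\one\{(u,v)\in E(G(\sZ_n))\}$ and expanding the square gives a double sum over ordered edge pairs $((u,v),(u',v'))$, which I split according to $|\{u,v\}\cap\{u',v'\}|\in\{0,1,2\}$ exactly as in \eqref{eq:condsecondmoment}--\eqref{eq:S}.

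The main work is to show that the cross terms with $|\{u,v\}\cap\{u',v'\}|\in\{0,1\}$ vanish. The point is that $\bar{\sfH}$ inherits the one-sample degeneracy of $\sfH$ under $\bH_0$: because the scheme of Section \ref{sec:Derandom} draws the $X_u^{(m)}$'s independently of $Y_u$ given $Z_u$, and $P_{X|Z}=P_{Y|Z}$ under $\bH_0$, the computations \eqref{eq:condexp14equal}--\eqref{eq:condexpfull0} carry over with $(X_v,Y_v)$ replaced by $(X_v^{(m)},Y_v)$, yielding $\E_{\bH_0}[\sfH(\bW_v^{(m)},\bW_u^{(m)})\mid\bW_u^{(m)},\cF(\sZ_n)]=0$ for every $m$, and similarly $\E_{\bH_0}[\sfH(\bW_u^{(m)},\bW_v^{(m)})\mid Y_u,\cF(\sZ_n)]=0$. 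When the two edge pairs share no index, the blocks $(\bW_u^{(m)},\bW_v^{(m)})$ and $(\bW_{u'}^{(m')},\bW_{v'}^{(m')})$ are conditionally independent given $\cF(\sZ_n)$, so the conditional expectation factorizes and each factor is $0$ by \eqref{eq:condexpij0}. When they share exactly one index, say $u=u'$ (the other three cases follow from the symmetry of $\sfH$), the only variable common to the two blocks is $Y_u$ if $m\ne m'$, and the pair $\bW_u^{(m)}$ if $m=m'$; conditioning additionally on that shared variable makes the two blocks conditionally independent, and the factor indexed by the non-shared neighbour ($v$ or $v'$) is zero by the degeneracy identities above. Averaging over $m,m'$ and dividing by $M_n^2$ preserves the conclusion, so these terms contribute nothing.

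For $|\{u,v\}\cap\{u',v'\}|=2$ the only surviving ordered pairs are $(u',v')=(u,v)$ and $(u',v')=(v,u)$; since $\sfH$---hence $\bar{\sfH}$---is symmetric in its two arguments, and $\tilde{\bW}_u$ depends only on $Z_u$ together with auxiliary randomness independent across $u$, conditioning on $\cF(\sZ_n)$ is the same as conditioning on $(Z_u,Z_v)$, and both pairs contribute $\E_{\bH_0}[\bar{\sfH}^2(\tilde{\bW}_u,\tilde{\bW}_v)\mid\cF(\sZ_n)]=\ell_n(Z_u,Z_v)$. Collecting the indicators, $(u',v')=(u,v)$ supplies $\one\{(u,v)\in E(G(\sZ_n))\}$ and $(u',v')=(v,u)$ supplies $\one\{(u,v)\in E(G(\sZ_n))\}\,\one\{(v,u)\in E(G(\sZ_n))\}=\one\{(u,v),(v,u)\in E(G(\sZ_n))\}$; dividing by $nK$ gives the claimed formula. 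The only genuine obstacle is the bookkeeping in the size-one-overlap case, where one must pick the correct conditioning variable depending on whether the resample indices coincide; once that is set up the argument is identical to Proposition \ref{ppn:condexpH0}.
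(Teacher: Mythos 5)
Your proof is correct and follows essentially the same route as the paper: both establish that under $\bH_0$ the averaged kernel $\bar{\sfH}$ inherits the first-order degeneracy of $\sfH$, and then repeat the overlap decomposition from Proposition~\ref{ppn:condexpH0} with $\sfH$ replaced by $\bar{\sfH}$. The paper records the degeneracy once at the $\bar{\sfH}$-level, namely $\E_{\bH_0}\bigl[\bar{\sfH}(\tilde{\bW}_u,\tilde{\bW}_v)\mid\tilde{\bW}_v,Z_u,Z_v\bigr]=0$, and invokes the earlier proof verbatim; your unpacking into individual resample indices $(m,m')$ and case analysis on $m=m'$ versus $m\neq m'$ is a more granular route to the same cancellations.
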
 

\begin{proof} 
Recalling definition of $\bar{H}$ from \eqref{eq:defbarH} gives, 
\begin{align*}
    \E_{\bH_0}\left[\bar{\sfH}\left(\tilde{\bW}_{u},\tilde{\bW}_{v}\right)\middle|\tilde{\bW}_{v},Z_u,Z_v\right] = \frac{1}{M_n}\sum_{m=1}^{M_n}\E_{\bH_0}\left[\sfH\left(\bW_{u}^{(m)},\bW_{v}^{(m)}\right)\middle|\bW_{v}^{(m)},Z_u,Z_v\right] = 0,
\end{align*}
where the last equality follows from \eqref{eq:condexp14equal}, \eqref{eq:condexp23equal}, and \eqref{eq:condexpfull0}. The proof of Lemma \ref{lemma:varH0Dn} can now be completed by repeating the proof of  Proposition \ref{ppn:condexpH0} from Appendix \ref{sec:varpf} with $\eta_n$ replaced by $\sqrt{nK}D_n$. 
\end{proof}

Next, we show that $\hat{\tau}_n^2$ as defined in \eqref{eq:estimateM} is a  consistent estimate $\tau_n^2$. The proof is given in Appendix \ref{sec:estimateMpf}.

\begin{lemma}\label{lemma:SnapproxvarDn}
    Suppose Assumption \ref{assumption:K} and Assumption \ref{assumption:Kn} hold. Moreover, assume $(X,Y,Z)\sim P_{XYZ}$ is such that $\P\left[X\neq Y\right]>0$ and $\int \sfK(x,x)^{4+\delta}\mathrm{d} P_X(x)<\infty$, $\int \sfK(x,x)^{4+\delta}\mathrm{d} P_Y(x)<\infty$, for some $\delta>0$. Then under $\bH_0$, with $K = o(n/\log n)$ and $M_n\ra\infty$ as $n\ra\infty$,
    \begin{align*}
        \left|\frac{\hat{\tau}_n^2}{\tau_n^2}-1\right| = o_P\left(n^{-\frac{\delta}{32 + 4\delta}}\right) , 
    \end{align*}
    where $\hat{\tau}_n^2$ and $\tau_n^2$ are defined in \eqref{eq:estimateM} and Lemma \ref{lemma:varH0Dn}, respectively. 
\end{lemma}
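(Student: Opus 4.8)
The plan is to follow the two-step template used to establish Proposition \ref{prop:SnestvarH0}. Set $\gamma := \frac{\delta}{8+\delta}$, so that the target rate is $n^{-\gamma/4} = n^{-\frac{\delta}{32+4\delta}}$. First I would prove the $L^2$ bound $\E_{\bH_0}[(\hat\tau_n^2 - \tau_n^2)^2] = o(n^{-\gamma})$; next I would show that $\tau_n^2$ is bounded away from $0$ at a polynomial scale, i.e. $\P_{\bH_0}[n^\vep \tau_n^2 > t] \to 1$ for every $\vep, t > 0$; and finally combine the two via Chebyshev's inequality exactly as in the display \eqref{eq:completesigmavar}, with $(\hat\sigma_n^2,\sigma_n^2)$ replaced by $(\hat\tau_n^2,\tau_n^2)$. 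The only genuinely new feature relative to Proposition \ref{prop:SnestvarH0} is the averaging over the $M_n$ resamples, which enters through $\bar\sfH$ (recall \eqref{eq:defbarH}) and through the fact that the conditional-variance proxy $\ell_n$ from Lemma \ref{lemma:varH0Dn} now depends on $n$ via $M_n$.

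For the $L^2$ step I would mimic the proof of Lemma \ref{lemma:Sn2VarL2} line by line. Split $\hat\tau_n^2$ into its single-edge and reciprocated-edge pieces — the analogues of $S_n$ in \eqref{eq:defSn1} and $\tilde S_n$ in \eqref{eq:defSn2}, but with $\sfH^2$ replaced by $\bar\sfH^2$ — and likewise split $\tau_n^2$ into the analogues of $V_n$ in \eqref{eq:Vn} and $\tilde V_n$ in \eqref{eq:Vnuv}, with $f$ replaced by $\ell_n$. Because $D_n$ has exactly the form of the estimator \eqref{eq:estEMMD} with $\sfH$ replaced by $\bar\sfH$ (observed in the proof of Theorem \ref{thm:consistencyDn}), the Efron--Stein computation behind \eqref{eq:EfronSteinbd}, together with its reciprocated-edge variant from Lemma \ref{lemma:V2S2close} — where one now perturbs the whole bundle $(Z_s,Y_s,X_s^{(1)},\ldots,X_s^{(M_n)})$ attached to unit $s$ — gives, for the single-edge piece $\hat S_n$ of $\hat\tau_n^2$,
\[
\E_{\bH_0}\big[(\hat S_n - \E_{\bH_0}\hat S_n)^2\big] \;\lesssim_d\; \frac1n\, \E_{\bH_0}\Big[\big(\max_{1 \le u \ne v \le n} \bar\sfH^2(\tilde\bW_u,\tilde\bW_v)\big)^2\Big],
\]
and likewise for the reciprocated piece. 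Two uses of Jensen's inequality — first $\bar\sfH^2 \le \frac{1}{M_n}\sum_{m=1}^{M_n}\sfH^2(\bW_u^{(m)},\bW_v^{(m)})$ and then convexity of $t \mapsto t^2$ — let the maximum over $(u,v)$ and the outer square pass inside the average over $m$, so that $\E_{\bH_0}[(\max_{u\ne v}\bar\sfH^2)^2] \le \E_{\bH_0}[(\max_{u\ne v}\sfH^2(\bW_u^{(1)},\bW_v^{(1)}))^2]$, where $\{(X_u^{(1)},Y_u,Z_u)\}_{1\le u\le n}$ are i.i.d.\ from $P_{XYZ}$; the $(4+\delta)$-moment assumption on $\sfK$ then makes this $o(n^{8/(8+\delta)})$ by the argument of \eqref{eq:4thexpbd}--\eqref{eq:maximumKX}. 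Crucially, no union bound over $m$ and no rate on $M_n$ are needed. For the $V_n$-type pieces one uses $\ell_n(Z_u,Z_v)^2 = (\E_{\bH_0}[\bar\sfH^2(\tilde\bW_u,\tilde\bW_v)\mid Z_u,Z_v])^2 \le \E_{\bH_0}[\bar\sfH^4(\tilde\bW_u,\tilde\bW_v)\mid\cF(\sZ_n)]$ together with the same Jensen reduction, exactly as in \eqref{eq:f22bound}. Since $\hat S_n$ and its reciprocated counterpart have conditional means (given $\cF(\sZ_n)$) equal to the corresponding $V_n$-pieces by conditional independence across units, all four quantities concentrate around a common mean at rate $o(n^{-\gamma})$, whence $\E_{\bH_0}[(\hat\tau_n^2-\tau_n^2)^2] = o(n^{-\gamma})$.

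For the lower bound on $\tau_n^2$ I would bound $\tau_n^2 \ge V_n$ and extract an $M_n$-free lower bound for $\ell_n$. Conditioning on $(Z_u,Z_v,Y_u,Y_v)$, the fresh resamples make $\{\sfH(\bW_u^{(m)},\bW_v^{(m)})\}_{m=1}^{M_n}$ conditionally i.i.d.\ with conditional mean equal to the pointwise (given the $Z$'s) centered kernel statistic $g(Y_u,Y_v,Z_u,Z_v) := \langle \mu_{P_{X|Z_u}} - \sfK(Y_u,\cdot),\, \mu_{P_{X|Z_v}} - \sfK(Y_v,\cdot)\rangle_\cH$, so $\E_{\bH_0}[\bar\sfH^2(\tilde\bW_u,\tilde\bW_v)\mid Z_u,Z_v,Y_u,Y_v] \ge g^2$ and therefore, uniformly in $M_n$,
\[
\ell_n(Z_u,Z_v) \;\ge\; h(Z_u,Z_v) \;:=\; \E_{\bH_0}\big[\, g(Y_u,Y_v,Z_u,Z_v)^2 \mid Z_u,Z_v \,\big].
\]
Since $h$ does not depend on $n$, the edge-average $\frac{1}{nK}\sum_{u\ne v} h(Z_u,Z_v)\one\{\cE_{u,v}\}$ can be treated by the arguments behind \eqref{eq:EH2condpositive} (the analogues of Lemma \ref{lemma:condH2convg} and Lemma \ref{lemma:varlimitnon0} with $f$ replaced by $h$): it converges in probability to $\E_{\bH_0}[h(Z,Z)] = \E_{\bH_0}[\|C_Z\|_{\mathrm{HS}}^2]$, where $C_Z$ is the covariance operator of $\sfK(X,\cdot)$ under $P_{X|Z}$. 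The assumption $\P[X \ne Y] > 0$ forces $P_{X|Z}$ to be non-degenerate on a set of positive $P_Z$-measure, so this limit is strictly positive, and hence $\P_{\bH_0}[n^\vep \tau_n^2 > t] \to 1$ for every $\vep,t > 0$.

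Plugging these two ingredients into the chain of inequalities in \eqref{eq:completesigmavar} gives $|\hat\tau_n^2/\tau_n^2 - 1| = o_P(n^{-\delta/(32+4\delta)})$. I expect the lower-bound step to require the most care: unlike in Proposition \ref{prop:SnestvarH0}, where the proxy $f$ is a fixed function, here $\ell_n$ varies with $n$, so one must first isolate the $M_n$-free kernel $h$ (via $\ell_n \ge h$) and re-run the positivity-and-convergence argument for $h$, checking that the only degeneracy obstruction — $P_{X|Z}$ being a point mass on a full $P_Z$-measure set — is exactly what the hypothesis $\P[X\ne Y]>0$ excludes. By contrast, the $L^2$ step is a near-mechanical re-run of the Efron--Stein and moment-tail estimates already developed for $\sfH$, with Jensen's inequality absorbing the resampling average so that the proof is insensitive to the growth rate of $M_n$.
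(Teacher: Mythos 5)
Your proposal is correct, and your $L^2$ concentration step is essentially identical to the paper's: the paper applies H\"older's inequality to get $\bar\sfH^4 \le M_n^{-1}\sum_m \sfH^4(\bW_u^{(m)},\bW_v^{(m)})$, which is the same reduction you obtain by two applications of Jensen. Where you genuinely diverge is the lower bound on $\tau_n^2$. The paper expands $\bar\sfH^2 = M_n^{-2}\sum_{m,m'}\sfH_m\sfH_{m'}$ and evaluates the expectation of the single-edge piece as $\tfrac{1}{M_n}B_1 + \tfrac{M_n-1}{M_n}B_2$, where $B_1$ is the $m=m'$ (diagonal) contribution and $B_2$ is the $m\ne m'$ (off-diagonal) contribution; it then invokes $M_n\to\infty$ to kill $\tfrac{1}{M_n}B_1$, identifies the surviving limit as $\E_{\bH_0}[\|\bar h(Z)\|_{\cH\otimes\cH}^2]$ (the expected squared Hilbert--Schmidt norm of the conditional covariance operator of $\sfK(X,\cdot)$ under $P_{X|Z}$), and shows this is nonzero by a contradiction argument as in Lemma~\ref{lemma:varlimitnon0}. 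You instead observe that, conditional on $(Z_u,Z_v,Y_u,Y_v)$, the resamples $\sfH(\bW_u^{(m)},\bW_v^{(m)})$ are i.i.d.\ with mean $g = \langle \mu_{P_{X|Z_u}} - \sfK(Y_u,\cdot),\, \mu_{P_{X|Z_v}} - \sfK(Y_v,\cdot)\rangle_\cH$, so Jensen gives the $M_n$-free pointwise bound $\ell_n \ge h := \E_{\bH_0}[g^2\mid Z_u,Z_v] = \langle C_{Z_u}, C_{Z_v}\rangle_{\cH\otimes\cH}$, and you re-run the $K$-NN edge-average convergence argument directly on $h$. Both routes arrive at the same positive constant $\E_{\bH_0}[\|C_Z\|_{\mathrm{HS}}^2]$ and the same use of $\P[X\ne Y]>0$ to rule out degeneracy. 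Your version is arguably cleaner and makes transparent that the lower bound is monotone in $M_n$ and so holds uniformly over $M_n$, whereas the paper's expansion is structurally tied to the diagonal/off-diagonal decomposition and uses $M_n\to\infty$ to simplify the limit; the trade-off is that you need to establish that $h$ is of the right form for the Deb--Sen-style $K$-NN averaging lemmas, which you do correctly by writing $h(Z_u,Z_v)=\langle C_{Z_u},C_{Z_v}\rangle_{\cH\otimes\cH}$ as an inner product of Hilbert-space-valued functions of the covariate alone.
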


With the above lemmas, we now ready to complete with the proof of Theorem \ref{thm:DnCLT}. Once again, drawing parallels with $\eta_n$ (recall \eqref{eq:defetan}) and $\sqrt{nK}D_n$ define,
\begin{align*}
    \tilde{V}_{u} = \frac{1}{\sqrt{nK}}\sum_{v\in N_{G(\sZ_n)}(u)}\bar{\sfH}\left(\tilde{\bW}_{u},\tilde{\bW}_{v}\right),\ 1\leq u\leq n.
\end{align*}
Following the arguments from Section \ref{sec:CLTpf} with $V_u$ replaced by $\tilde{V}_{u}$, for $1\leq u\leq n$, and using \eqref{eq:epsVarDn} we get, 
\begin{align*}
    \sup_{z\in \R}\left|\P_{\bH_0}\left[\frac{\sqrt{nK}D_n}{\tau_n}\leq z\right] - \Phi(z)\right|
    & \lesssim_{d} K^{20}n^{\frac{3\vep-1}{2}}\E_{\bH_0} \left[\sqrt{n}\E_{\bH_0} \left[\sum_{u=1}^{n}| \tilde{V}_{u}|^3 \bigg|\cF(\cZ_{n})\right]\right] + o(1) , 
\end{align*}
for any $\vep<\frac{1}{132}$. 
Hence, to complete the proof of Theorem \ref{thm:DnCLT} it is enough to show that,
\begin{align}\label{eq:VZ}
    \E_{\bH_0}\left[\sqrt{n}\E_{\bH_0}\left[\sum_{u=1}^{n}| \tilde{V}_u |^3 \bigg |\cF(\cZ_{n})\right]\right]\lesssim_{\sfK}K^{\frac{3}{2}}.
\end{align}
Towards this, by \eqref{eq:bddonsumV} and H\"{o}lder's inequality observe that, 
\begin{align*}
    \sqrt{n}\sum_{u=1}^{n} | \tilde{V}_u |^3
    &\leq \frac{\sqrt{K}}{n}\sum_{u=1}^{n}\sum_{v\in N_{G(\sZ_n)}(u)}\left|\bar{\sfH}\left(\tilde{\bW}_{u},\tilde{\bW}_{v}\right)\right|^3\\
    &\leq \frac{\sqrt{K}}{n}\sum_{u=1}^{n}\sum_{v\in N_{G(\sZ_n)}(u)}\frac{1}{M_n}\sum_{m=1}^{M_n}\left|\sfH\left(\bW_{u}^{(m)},\bW_{v}^{(m)}\right)\right|^3.
\end{align*}
Then 
\begin{align*}
    & \E_{\bH_0}\left[\sqrt{n}\E_{\bH_0}\left[\sum_{u=1}^{n}| \tilde{V}_{u}|^3 \bigg| \cF(\cZ_{n})\right]\right] \nonumber \\ 
    & \leq \frac{\sqrt{K}}{nM_n}\sum_{m=1}^{M_n}\E_{\bH_0}\left[\E_{\bH_0}\left[\sum_{u=1}^{n}\sum_{v\in N_G(\sZ_n)(u)}\left|\sfH\left(\bW_{u}^{(m)},\bW_{v}^{(m)}\right)\right|^3\middle|\cF(\cZ_{n})\right]\right] . 
\end{align*}
Now, recalling \eqref{eq:bddonsumh3} and the definition of $\beta(\cdot)$ from \eqref{eq:defbetafn} we get,
\begin{align*}
    \E_{\bH_0}\left[\sqrt{n}\E_{\bH_0}\left[\sum_{u=1}^{n}| \tilde{V}_{u}|^3|\cF(\cZ_{n})\right]\right]
    &\leq \frac{\sqrt{K}}{nM_n}\sum_{m=1}^{M_n}\E_{\bH_0}\left[\sum_{u=1}^{n}\sum_{v\in N_G(\sZ_n)(u)}\beta(Z_u)\beta(Z_v)\right]\\
    & = \frac{\sqrt{K}}{n}\E_{\bH_0}\left[\sum_{u=1}^{n}\sum_{v\in N_G(\sZ_n)(u)}\beta(Z_u)\beta(Z_v)\right]\lesssim K^{\frac{3}{2}},
\end{align*}
where the last inequality follows from \eqref{eq:CLTproofuniformN1}, \eqref{eq:CLTproofCauchySch}, \eqref{eq:1stbddbeta}, and \eqref{eq:2ndbddbeta}. This completes the proof of \eqref{eq:VZ}. Collecting the trail of inequalities we now have,
\begin{align}\label{eq:supdiff}
    \sup_{z\in \R}\left|\P_{\bH_0}\left[\frac{\sqrt{nK}D_n}{\tau_n}\leq z\right] - \Phi(z)\right|\lesssim_{d,\sfK} K^{\frac{43}{2}}n^{\frac{3\vep-1}{2}} = o(1)
\end{align}
where the final equality follows by recalling $K = o(n^{1/44})$ and $\vep<\frac{1}{132}$. Replacing $\tau_n$ with $\hat{\tau}_n$ in \eqref{eq:supdiff} by Lemma \ref{lemma:SnapproxvarDn}, gives the result in Theorem \ref{thm:DnCLT}. \hfill $\Box$

\subsubsection{Proof of Lemma \ref{lemma:SnapproxvarDn}} 
\label{sec:estimateMpf}

Drawing parallels between $\hat{\sigma}_n^2$ (recall \eqref{eq:defSn}) and $\tau_n^2$; and between $\eta_n$ (recall \eqref{eq:condeta2exp}) and $\sqrt{nK}D_n$, we will proceed along similar lines as the proof of Proposition \ref{prop:SnestvarH0} from Section \ref{sec:estimatepf}. In particular, as in Lemma \ref{lemma:Sn2VarL2} we first show that,
\begin{align}\label{eq:SndiffvarDn}
    \E_{\bH_0}\left[\left(\hat{\tau}_n^2 - \tau_n^2 \right)^2\right] = o\left(n^{-\frac{\delta}{8 + \delta}}\right).
\end{align}
From the proof of Lemma \ref{lemma:Sn2VarL2} from Section \ref{sec:proofofSn2VarL2}, note that \eqref{eq:SndiffvarDn} holds as long as,
\begin{align*}
    \frac{1}{n}\E_{\bH_0}\left[\max_{1\leq u\neq v\leq n}\bar{\sfH}\left(\tilde{\bW}_{u},\tilde{\bW}_{v}\right)^4\right] = o\left(n^{-\frac{\delta}{8 + \delta}}\right).
\end{align*}
Towards this, using H\"older's inequality gives,  
\begin{align*}
    \left|\bar{\sfH}\left(\tilde{\bW}_{u},\tilde{\bW}_{v}\right)\right| \leq \frac{1}{M_n}\left(\sum_{m=1}^{M_n}\left|\sfH\left(\bW_{u}^{(m)},\bW_{v}^{(m)}\right)\right|^4\right)^\frac{1}{4}M_n^{\frac{3}{4}}.
\end{align*}
Hence, recalling that for any $1\leq m\leq M_n$, the collection 
$\{ ( X_1^{(m)},Y_{1}, Z_{1} ), \ldots, (X_n^{(m)}, Y_{n}, Z_n ) \}$ are generated i.i.d. from $P_{XYZ}$, we get,
\begin{align*}
    \frac{1}{n}\E_{\bH_0}\left[\max_{1\leq u\neq v\leq n}\bar{\sfH}\left(\tilde{\bW}_{u},\tilde{\bW}_{v}\right)^4\right]
    &\leq \frac{1}{nM_n}\sum_{m=1}^{M_n}\E_{\bH_0}\left[\max_{1\leq u\neq v\leq n}\left|\sfH\left(\bW_{u}^{(m)},\bW_{v}^{(m)}\right)\right|^4\right]\\
    & = \frac{1}{n}\E_{\bH_0}\left[\max_{1\leq u\neq v\leq n}\left|\sfH\left(\bW_{u}^{(1)},\bW_{v}^{(1)}\right)\right|^4\right] = o\left(n^{-\frac{\delta}{8 + \delta}}\right),
\end{align*}
where the last equality follows by \eqref{eq:onconvg}. Having shown \eqref{eq:SndiffvarDn}, notice that to complete the proof of Lemma \ref{lemma:SnapproxvarDn} we have to translate \eqref{eq:SndiffvarDn} in to an approximation in terms of the ratio of $\hat{\tau}_n^2$ and $\tau_n^2$. This will be done as in Lemma \ref{lemma:nepscondvarinfty} by proving the following: For all $\vep>0$,
\begin{align}\label{eq:epsVarDn}
    \P_{\bH_0}\left[n^\vep\Var_{\bH_0}\left[\sqrt{nK}D_n\middle|\cF(\sZ_n)\right]>t\right] \ra 1 , 
\end{align}
whenever $t>0$. Again, following the proof of Lemma \ref{lemma:nepscondvarinfty}, to prove \eqref{eq:epsVarDn} it suffices to show that there exists $c>0$ such that 
\begin{align}\label{eq:Hbar2toc}
    \E_{\bH_0}\left[\frac{1}{nK}\sum_{u=1}^{n}\sum_{v\in N_{G(\sZ_n)}(u)}\bar{\sfH}^2(\tilde{\bW}_{u},\tilde{\bW}_{v})\right]\ra c.
\end{align}
To evaluate the expectation on the LHS, recall the sampling scheme from Section \ref{sec:Derandom} to get, 
\begin{align}\label{eq:expandHbar}
    \E_{\bH_0}\bigg[\frac{1}{nK}\sum_{u=1}^{n}\sum_{v\in N_{G(\sZ_n)}(u)}
    &\bar{\sfH}^2(\tilde{\bW}_{u},\tilde{\bW}_{v})\bigg] = \frac{1}{M_n} B_1 + \frac{M_n-1}{M_n} B_2 , 
    \end{align}
    where   
    \begin{align}
    B_1 & : = \E_{\bH_0}\left[\frac{1}{nK}\sum_{u=1}^{n}\sum_{v\in N_{G(\sZ_n)}(u)}\sfH^2(\bW_{u}^{(1)},\bW_{v}^{(1)})\right] , \nonumber \\ 
    B_2 & : = \E_{\bH_0}\left[\frac{1}{nK}\sum_{u=1}^{n}\sum_{v\in N_{G(\sZ_n)}(u)}\sfH(\bW_{u}^{(1)},\bW_{v}^{(1)})\sfH(\bW_{u}^{(2)},\bW_{v}^{(2)})\right] . \nonumber  
\end{align} 
Note, from Lemma \ref{lemma:condH2convg}, in particular \eqref{eq:EH2condconvg} and \eqref{eq:exphZfinite},  $B_1/M_n = o(1)$. Also, by arguments similar to the proof of Lemma \ref{lemma:condH2convg}, $B_2 \rightarrow \E_{\bH_0}[\|\bar{h}(Z)\|_{\cH\otimes\cH}^2]$,     
where $\bar{h}$ is defined as: 
\begin{align*}
    & \bar{h}(Z) \nonumber \\ 
    & := \E_{\bH_0}\left[\sfK(X,\cdot)\otimes\sfK(X,\cdot) - \sfK(X,\cdot)\otimes\sfK(Y^\prime,\cdot) - \sfK(Y,\cdot)\otimes\sfK(X,\cdot) + \sfK(Y,\cdot)\otimes\sfK(Y^\prime,\cdot)|Z\right]  , 
\end{align*} 
with $X\sim P_{X|Z = Z}$ and $ Y,Y^\prime$ are i.i.d. $P_{Y|Z = Z}$ independent of $X|Z$.
Thus, taking $M_n\ra\infty$ in the RHS of \eqref{eq:expandHbar} gives, 
$$\E_{\bH_0}\bigg[\frac{1}{nK}\sum_{u=1}^{n}\sum_{v\in N_{G(\sZ_n)}(u)}
    \bar{\sfH}^2(\tilde{\bW}_{u},\tilde{\bW}_{v})\bigg] \rightarrow \E_{\bH_0}\left[\|\bar{h}(Z)\|_{\cH\otimes\cH}^2\right].$$ 
    Hence, to complete the proof of \eqref{eq:Hbar2toc} (and, as a result,  \eqref{eq:epsVarDn}), it is now enough to show that $\E_{\bH_0}\left[\|\bar{h}(Z)\|_{\cH\otimes\cH}^2\right]\neq 0$. For the sake of contradiction assume that  $\E_{\bH_0}\left[\|\bar{h}(Z)\|_{\cH\otimes\cH}^2\right] = 0$. Then, by arguments similar to the proof of Lemma \ref{lemma:varlimitnon0} we conclude that for $(X,Y,Z)\sim P_{XYZ}$, $X = Y$ almost surely. Recall that by construction that $P_{XYZ} = P_{X|Z}\times P_{Y|Z}\times P_{Z}$, hence from the arguments in the proof of \cite[Theorem 2.1]{deb2020measuring} we arrive at a contradiction. 
    
The proof of Lemma \ref{lemma:SnapproxvarDn} now follows by the arguments in \eqref{eq:completesigmavar} together with \eqref{eq:SndiffvarDn} and \eqref{eq:epsVarDn}. \hfill $\Box$ 

\section{Technical Results}

In this section we collect some technical about tensor product spaces. The following results hold under both $\bH_0$ and $\bH_1$,  hence we refrain from explicitly mentioning the same. 

\begin{definition}\cite[Section 4.6]{berlinet2011reproducing}. 
The tensor product space $\cH\otimes\cH$  is the collection of functions $f\otimes g:\cX\times\cX\ra\R$, where $f, g\in \cH$ and $(f\otimes g)(x,x') := f(x)g(x')$. The inner product in the space $\cH\otimes\cH$ is defined as follows: For $f_1, f_2, g_1, g_2 \in \cH$, 
\begin{align*}
    \left\langle f_{1}\otimes g_{1},f_{2}\otimes g_{2}\right\rangle_{\cH\otimes\cH} = \left\langle f_{1}, f_{2}\right\rangle_{\cH}\left\langle g_{1}, g_{2}\right\rangle_{\cH}.
\end{align*}
\end{definition}
\begin{lemma}\label{lemma:condH2convg} 
   For $(X,Y,Z)\sim P_{XYZ}$, define $h_{Z} \in \cH\otimes\cH$ as follows:
    \begin{align*}
        h_{Z} := \E\left[\sfK(X,\cdot)\otimes\sfK(X,\cdot) - \sfK(X,\cdot)\otimes\sfK(Y,\cdot) - \sfK(Y,\cdot)\otimes\sfK(X,\cdot) + \sfK(Y,\cdot)\otimes\sfK(Y,\cdot)|Z\right] . 
    \end{align*} 
    Suppose Assumption \ref{assumption:K} and Assumption \ref{assumption:Kn} holds. Moreover, suppose the kernel $\sfK$ satisfies \begin{align}\label{eq:momentassumpXY}
        \int\sfK(x,x)^{4+\delta}\rmd P_{X}(x)<\infty \text{ and }\int\sfK(y,y)^{4+\delta}\rmd P_{Y}(y)<\infty,
    \end{align}
    for some $\delta>0$. Then with $K = o(n/\log n)$, as $n \rightarrow \infty$
    \begin{align}\label{eq:EH2condconvg}
        \E\left[\frac{1}{nK}\sum_{u=1}^{n}\sum_{v=1}^{n}\sfH^{2}\left(\bW_{u},\bW_{v}\right)\one\left\{(u,v)\in E(G(\sZ_n))\right\}\right]\ra \E\left[\left\|h_{Z_{1}}\right\|_{\cH\otimes\cH}^2\right]
    \end{align}
    where $\bW_i = (X_i, Y_i)$ for all $1\leq i\leq n$, $\sZ_n = \{Z_1,\ldots, Z_n\}$ and $\{(X_i, Y_i, Z_i), 1\leq i\leq n\}$ are generated independently from $P_{XYZ}$.
\end{lemma}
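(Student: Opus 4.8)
\textbf{Proof proposal for Lemma \ref{lemma:condH2convg}.}

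The plan is to recognize that the left-hand side of \eqref{eq:EH2condconvg} is exactly a $K$-NN average of the form treated in Lemma \ref{lemma:consistency2}, but with the scalar kernel $\sfH$ replaced by the tensor-valued kernel $\bW \mapsto \sfK(X,\cdot)\otimes\sfK(X,\cdot) - \sfK(X,\cdot)\otimes\sfK(Y,\cdot) - \sfK(Y,\cdot)\otimes\sfK(X,\cdot) + \sfK(Y,\cdot)\otimes\sfK(Y,\cdot)$ taking values in the Hilbert space $\cH\otimes\cH$. The first step is to verify the algebraic identity $\sfH^2(\bW_u,\bW_v) = \langle \Psi(\bW_u), \Psi(\bW_v)\rangle_{\cH\otimes\cH}$, where $\Psi(\bW) := (\sfK(X,\cdot) - \sfK(Y,\cdot))\otimes(\sfK(X,\cdot) - \sfK(Y,\cdot))$; this follows by expanding $\sfH(\bW,\bW') = \langle \sfK(X,\cdot) - \sfK(Y,\cdot), \sfK(X',\cdot) - \sfK(Y',\cdot)\rangle_\cH$ from \eqref{eq:Hasinprod} and using the defining inner-product rule $\langle f_1\otimes g_1, f_2\otimes g_2\rangle_{\cH\otimes\cH} = \langle f_1,f_2\rangle_\cH\langle g_1,g_2\rangle_\cH$. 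With this, the summand $\sfH^2(\bW_u,\bW_v)\one\{(u,v)\in E(G(\sZ_n))\}$ becomes an inner product of feature maps in $\cH\otimes\cH$, so conditioning on $\cF(\sZ_n)$ and averaging over $(X_u,Y_u)\mid Z_u$ and $(X_v,Y_v)\mid Z_v$ independently produces $\langle h_{Z_u}, h_{Z_v}\rangle_{\cH\otimes\cH}$, where $h_Z = \E[\Psi(\bW)\mid Z]$ matches the definition in the statement.

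The second step mirrors the proof of Lemma \ref{lemma:consistency2} verbatim with $g_{Z_u}$ replaced by $h_{Z_u}$ and $\cH$ replaced by $\cH\otimes\cH$. Writing $N(1)$ for a uniformly chosen neighbor of vertex $1$ in $G(\sZ_n)$, exchangeability gives
\begin{align*}
\E\left[\frac{1}{nK}\sum_{u,v}\sfH^2(\bW_u,\bW_v)\one\{(u,v)\in E(G(\sZ_n))\}\right] = \E\left[\frac{1}{K}\sum_{u}\langle h_{Z_1}, h_{Z_u}\rangle_{\cH\otimes\cH}\one\{(1,u)\in E(G(\sZ_n))\}\right],
\end{align*}
and by Cauchy--Schwarz the difference from $\E[\|h_{Z_1}\|_{\cH\otimes\cH}^2]$ is bounded by $\sqrt{\E[\|h_{Z_1}\|_{\cH\otimes\cH}^2]}\cdot\sqrt{\E[\|h_{Z_{N(1)}} - h_{Z_1}\|_{\cH\otimes\cH}^2]}$. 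The third step is then to invoke the neighbor-regularity machinery from \cite[Lemma D.2, Lemma D.3]{deb2020measuring}: one shows $\|h_{Z_{N(1)}} - h_{Z_1}\|_{\cH\otimes\cH}^2$ is uniformly integrable, which requires a fourth-moment bound $\E[\|h_{Z_1}\|_{\cH\otimes\cH}^4] < \infty$, and then the condition $K = o(n/\log n)$ forces $\E[\|h_{Z_{N(1)}} - h_{Z_1}\|_{\cH\otimes\cH}^2] = o(1)$. For the fourth-moment bound, Jensen's inequality pushes the conditional expectation inside the fourth power, then the triangle inequality and $\|a+b\|^m \le 2^{m-1}(\|a\|^m + \|b\|^m)$ reduce everything to terms like $\E[\|\sfK(X_1,\cdot)\otimes\sfK(X_1,\cdot)\|_{\cH\otimes\cH}^4] = \E[\sfK(X_1,X_1)^4]$, which is finite by the moment assumption \eqref{eq:momentassumpXY}; indeed a $(4+\delta)$-th moment is stronger than needed here but is presumably imposed for uniform integrability of the maxima elsewhere, and here we only need the $4$-th moment to be finite.

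I expect the main obstacle to be bookkeeping rather than conceptual: one must be careful that the tensor feature map $\Psi(\bW) = (\sfK(X,\cdot)-\sfK(Y,\cdot))^{\otimes 2}$ genuinely lies in $\cH\otimes\cH$ with finite norm (this is where separability of $\cH$ from Assumption \ref{assumption:K} and the second-moment condition $\E[\sfK(X,X)] < \infty$ are used, via \cite[Theorem 4.1]{park2020measure} or its tensor-product analogue), and that the interchange of conditional expectation with the $\cH\otimes\cH$-inner product is justified by Bochner integrability — the same justification used in the proof of Proposition \ref{ppn:EMMDexpr} in Appendix \ref{sec:Kpf}. Once the representation $\sfH^2(\bW_u,\bW_v) = \langle h_{Z_u}, h_{Z_v}\rangle_{\cH\otimes\cH}$ (in conditional expectation) is in place, the remainder is a mechanical transcription of Appendix \ref{sec:proofofconsistency2}, so I would present it briefly and refer back to that argument.
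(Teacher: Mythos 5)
Your proposal is correct and follows essentially the same route as the paper's proof in Appendix F: the paper defines $\sfL_{\bw}$ (your $\Psi(\bW)$) in $\cH\otimes\cH$, observes $\sfH^2(\bw,\bw')=\langle\sfL_{\bw},\sfL_{\bw'}\rangle_{\cH\otimes\cH}$, and then replays the argument of Lemma \ref{lemma:consistency2} with $g_{Z}$ and $\cH$ replaced by $h_{Z}$ and $\cH\otimes\cH$, using Cauchy--Schwarz and Lemmas D.2 and D.3 of \citet{deb2020measuring} together with the fourth-moment bound $\E[\|h_{Z_1}\|_{\cH\otimes\cH}^4]<\infty$. Your side observations — that $\Psi$ factoring as a tensor square is what makes the inner-product identity work, that the interchange of conditional expectation and inner product is a Bochner-integrability point, and that only the fourth moment (not $4+\delta$) is actually needed for this particular lemma — are all accurate and match what the paper implicitly uses.
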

\begin{proof}
For $\bw = (x,y)\in \cX\times\cX$ define,
    \begin{align*}
        \sfL_{\bw} := \sfK(x,\cdot)\otimes\sfK(x,\cdot) - \sfK(x,\cdot)\otimes\sfK(y,\cdot) - \sfK(y,\cdot)\otimes\sfK(x,\cdot) + \sfK(y,\cdot)\otimes\sfK(y,\cdot).
    \end{align*}
    By definition, $\sfL_{\bw} \in \cH\otimes\cH$ and from \eqref{eq:Hasinprod} it follows that,
    \begin{align*}
        \sfH^{2}(\bw,\bw') = \left\langle\sfL_{\bw},\sfL_{\bw'} \right\rangle
    \end{align*}
    for all $\bw, \bw' \in \cX\times\cX$. Now, following the arguments in the proof of \eqref{eq:Hbarlimitdiffbdd} it can shown that, 
    \begin{align}\label{eq:hbarvarlimitidff}
        & \left| \E\left[\frac{1}{nK}\sum_{u=1}^{n}\sum_{v=1}^{n}\sfH^{2}\left(\bW_{u},\bW_{v}\right)\one\left\{(u,v)\in E(G(\sZ_n))\right\}\right]  - \E\left[\left\|h_{Z_1}\right\|_{\cH\otimes\cH}^2\right]\right| \nonumber \\ 
       & \leq \sqrt{\E\left[\left\|h_{Z_1}\right\|_{\cH\otimes\cH}^2\right]}\sqrt{\E\left[\left\|h_{Z_{N(1)}} - h_{Z_{1}}\right\|_{\cH\otimes\cH}^2\right]} , 
    \end{align}
    where $N(1)$ is a uniformly selected index from the neighbors of vertex $1$ in the graph $G(\sZ_n)$. From \cite[Lemma D.2]{deb2020measuring},
    \begin{align}\label{eq:hZN1}
        \E\left[\left\|h\left(Z_{N(1)}\right) - h_{Z_{1}}\right\|_{\cH\otimes\cH}^4\right]\lesssim \E\left[\left\|h_{Z_{1}}\right\|_{\cH\otimes\cH}^4\right].
    \end{align}
    Also, as in the proof of \eqref{eq:gZ1},
    \begin{align}
        \E\left[\left\|h_{Z_{1}}\right\|_{\cH\otimes\cH}^4\right]\lesssim
        &\ \E\left[\sfK(X_{1},X_{1})^4 + 2\sfK(X_{1},X_{1})^2\sfK(Y_{1},Y_{1})^2 + \sfK(Y_{1},Y_{1})^4\right]<\infty , \label{eq:exphZfinite}
    \end{align}
    where the finiteness follows from the moment assumption in \eqref{eq:momentassumpXY}. Combining \eqref{eq:hZN1} and \eqref{eq:exphZfinite} shows that $\|h_{Z_{N(1)}} - h_{Z_{1}}\|_{\cH\otimes\cH}^2$ is uniformly integrable, Hence, by \cite[Lemma D.3]{deb2020measuring} $$\left\|h_{Z_{N(1)}} - h_{Z_{1}} \right\|_{\cH\otimes\cH}^2 \pto 0.$$ Using this and \eqref{eq:exphZfinite} in \eqref{eq:hbarvarlimitidff} completes the proof of Lemma \ref{lemma:condH2convg}. 
\end{proof}
In the following lemma we prove that the limit from \eqref{eq:EH2condconvg} is strictly positive.
\begin{lemma}\label{lemma:varlimitnon0}
    Suppose the assumptions in Lemma \ref{lemma:condH2convg} hold. Also, assume that $(X,Y,Z)\sim P_{XYZ}$ is such that $P_{XY}\left[X\neq Y\right]>0$. Then for $h_Z \in \cH\otimes\cH$ as defined in Lemma \ref{lemma:condH2convg},
    \begin{align*}
        \E\left[\left\|h_{Z}\right\|_{\cH\otimes\cH}^2\right] > 0.
    \end{align*}
\end{lemma}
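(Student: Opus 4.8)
The plan is to show that the function $h_Z \in \cH \otimes \cH$ can be re-expressed as a conditional mean embedding of the \emph{pair} $(X,Y)$ into the tensor product space, and then invoke the characteristic property of the product kernel to deduce that $\E[\|h_Z\|^2_{\cH\otimes\cH}] = 0$ forces $X = Y$ almost surely, contradicting $P_{XY}[X \ne Y] > 0$.

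\medskip

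\noindent \textbf{Step 1: Identify $h_Z$ as a difference of conditional mean embeddings on $\cX\times\cX$.} First I would observe that, by bilinearity of the tensor product and linearity of conditional expectation,
\begin{align*}
    h_Z = \E\left[\big(\sfK(X,\cdot) - \sfK(Y,\cdot)\big)\otimes\big(\sfK(X,\cdot) - \sfK(Y,\cdot)\big)\,\middle|\,Z\right].
\end{align*}
Now consider the product kernel $\overline{\sfK}$ on $(\cX\times\cX)\times(\cX\times\cX)$ defined by $\overline{\sfK}\big((x,y),(x',y')\big) = \sfK(x,x')\sfK(y,y')$, whose RKHS is $\cH\otimes\cH$. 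Under Assumption \ref{assumption:K}, $\sfK$ is characteristic and $\cH$ separable; it is a standard fact (see, e.g., \cite{szabo2018characteristic} — but since the paper may not cite this, one should instead argue directly as in \cite{deb2020measuring} that the tensor product of characteristic kernels is characteristic on the product space, which the proof of \cite[Theorem 2.1]{deb2020measuring} already invokes) that $\overline{\sfK}$ is characteristic on $\cX \times \cX$. The key algebraic point, which I would verify by expanding the tensor product into its four terms and matching against the definition of $h_Z$ in Lemma \ref{lemma:condH2convg}, is that
\begin{align*}
    h_Z = \mu_{P_{(X,X)|Z}} - \mu_{P_{(X,Y)|Z}} - \mu_{P_{(Y,X)|Z}} + \mu_{P_{(Y,Y)|Z}},
\end{align*}
where $\mu_{P_{(A,B)|Z}}$ denotes the conditional mean embedding into $\cH\otimes\cH$ of the conditional law of the $\cX\times\cX$-valued pair $(A,B)$ given $Z$. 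Writing $\Delta_X$ for the law of $(X,X)|Z$ pushed forward by the diagonal map, and similarly for the others, $h_Z$ is the embedding of the signed measure $\Delta_X - \mathrm{Law}(X,Y|Z) - \mathrm{Law}(Y,X|Z) + \Delta_Y$.

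\medskip

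\noindent \textbf{Step 2: Deduce $X = Y$ a.s. from vanishing of the expectation.} If $\E[\|h_Z\|^2_{\cH\otimes\cH}] = 0$, then $\|h_Z\|_{\cH\otimes\cH} = 0$ for $P_Z$-a.e. $Z$, i.e. $h_Z = 0$ in $\cH\otimes\cH$. Since $\overline{\sfK}$ is characteristic, the signed measure represented by $h_Z$ is the zero measure for a.e. $Z$; equivalently, $\langle h_Z, f\otimes g\rangle_{\cH\otimes\cH} = 0$ for all $f,g\in\cH$, which unwinds (using the reproducing property twice) to
\begin{align*}
    \E\Big[\big(f(X)-f(Y)\big)\big(g(X)-g(Y)\big)\,\Big|\,Z\Big] = 0 \quad \text{for all } f,g\in\cH,
\end{align*}
for $P_Z$-a.e. $Z$. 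Taking $g = f$ gives $\E\big[(f(X)-f(Y))^2 \mid Z\big] = 0$, hence $f(X) = f(Y)$ almost surely, for each fixed $f \in \cH$. Taking a countable dense subset of $\cH$ (which exists by separability) and using that $\cH$ being characteristic implies it separates points of the conditional laws — more precisely, exactly as in the proof of \cite[Theorem 2.1]{deb2020measuring}, this forces the conditional law of $(X,Y)$ given $Z$ to be supported on the diagonal $\{x = y\}$ — we conclude $P[X = Y \mid Z] = 1$ for $P_Z$-a.e. $Z$, and therefore $P_{XY}[X = Y] = 1$. This contradicts the hypothesis $P_{XY}[X \ne Y] > 0$, so $\E[\|h_Z\|^2_{\cH\otimes\cH}] > 0$.

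\medskip

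\noindent \textbf{Main obstacle.} The delicate point is Step 2's final implication: going from "$f(X) = f(Y)$ a.s. for every $f$ in a characteristic RKHS $\cH$" to "$X = Y$ a.s." This is not entirely immediate — it requires that $\cH$ (or at least the tensor RKHS $\cH\otimes\cH$) be rich enough to detect the diagonal, which is where the characteristic assumption and separability are essential. The cleanest route is to phrase the whole argument at the level of the product kernel $\overline{\sfK}$ on $\cX\times\cX$: since $h_Z$ is the embedding of a signed measure that is a positive combination minus the "cross" terms, one shows directly that $h_Z = 0$ iff $\mathrm{Law}(X,Y\mid Z)$ assigns full mass to $\{x=y\}$, and this is precisely the computation carried out in \cite[Theorem 2.1]{deb2020measuring}, which the proof can cite. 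I would therefore structure the write-up to reduce to that reference rather than redo the measure-theoretic argument, keeping the new content confined to the bilinear-expansion identity of Step 1.
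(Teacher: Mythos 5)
Your core calculation is exactly the paper's: assume $\E[\|h_Z\|^2_{\cH\otimes\cH}]=0$, pair $h_Z$ against tensor products $r\otimes r$ (you write $f\otimes g$ and specialize $g=f$), reduce to $\E[(r(X)-r(Y))^2]=0$, conclude $r(X)=r(Y)$ almost surely for all $r$ in a countable dense subset of $\cH$, and then leverage the characteristic property to force $X=Y$ almost surely. The only divergence is in the final implication. The paper substitutes $r(\cdot)=\sfK(x,\cdot)$ to get $\sfK(\cdot,X)=\sfK(\cdot,Y)$ as elements of $\cH$, then cites \cite[Proposition~14]{sejdinovic2013equivalence} — effectively the injectivity of the feature map $x\mapsto\sfK(x,\cdot)$, which follows immediately from $\sfK$ being characteristic via Dirac embeddings. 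Your proposed "cleanest route" instead goes through the claim that the tensor product kernel $\overline{\sfK}$ on $\cX\times\cX$ is characteristic; be careful, since that claim is \emph{not} automatic from $\sfK$ being characteristic (products of characteristic kernels need not be characteristic in general), and it is also unnecessary here: once you have $f(X)=f(Y)$ for all $f\in\cH$ almost surely, the injectivity of the mean embedding of $\cH$ alone (applied to Dirac measures) already gives $X=Y$. You correctly flag the final step as the delicate point, but the fix is to drop the product-kernel detour and use the feature-map injectivity directly, as the paper does.
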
 
\begin{proof} The proof will proceed by contradiction. For this, suppose, if possible, $\E[\|h_{Z}\|_{\cH\otimes\cH}^2] = 0$. Then for any $r\in \cH$ by \cite[Lemma 3.3]{park2020measure} we have,
    \begin{align}\label{eq:expinnerprod}
        \E\left[\left|\left\langle r\otimes r, h_{Z}\right\rangle_{\cH\otimes\cH} \right|\right] = \E\left[\left|r(X)^2 - 2r(X)r(Y) + r(Y)^2\right|\right] = \E\left[(r(X) - r(Y))^2\right]
    \end{align}
    By the Cauchy-Schwarz inequality,
    \begin{align*}
        \E\left[\left|\left\langle r\otimes r, h_{Z}\right\rangle_{\cH\otimes\cH} \right|\right]\leq \left\|r\otimes r\right\|_{\cH\otimes\cH}\sqrt{\E\left[\left\|h_{Z}\right\|_{\cH\otimes\cH}^2\right]} = 0 .
    \end{align*}
    Then \eqref{eq:expinnerprod} implies, $r(X) = r(Y)$ almost surely for all $r\in \cH$. Recalling the seperability of $\cH$ we can extend the above conclusion to $r(X) = r(Y)$ for all $r\in \cH$ almost surely.  
    Then, for all $x\in \cX$ choosing $r(\cdot) = \sfK(x,\cdot)$ gives,
    \begin{align*}
        \sfK(x,X) = \sfK(x,Y)\ \text{ for all } x\in \cX\text{ almost surely.}
    \end{align*}
    Hence, $\sfK(X,\cdot) = \sfK(Y,\cdot)$ almost surely, and by  \cite[Proposition 14]{sejdinovic2013equivalence} we conclude that $X = Y$ almost surely. This contradicts the assumption $P_{XY}(X\neq Y)>0$ and completes the proof of Lemma \ref{lemma:varlimitnon0}. 
\end{proof}

\section{Reliability Diagrams and Recalibration}
\label{sec:rdr}

This section is organized as follows. In Section \ref{sec:calibrationfigures} we provide a brief of review reliability diagrams and describe how the expected calibration error (ECE) is computed in the experiments in Section \ref{sec:imagecalibration}. In Section \ref{sec:recalibration} we discuss the recalibration method based on isotonic regression.

\subsection{Reliability Diagrams and Computation of ECE} 
\label{sec:calibrationfigures}

Suppose $Y\in \{0,1\}$ is a binary response associated with feature vector $Z\in \cZ$. Let $f:\cZ\ra [0,1]$ be a predicitve model, such that $f(Z)$ estimates $\P\left[Y=1\middle| Z\right]$ (notice that for each $z\in \cZ$, $f(z)$ defines a distribution over $\{0,1\}$). Furthemore, we assume access to test samples $\{(Y_i, Z_i): 1\leq i\leq n\}$ to construct reliability diagrams and estimate the ECE.

\subsubsection{Reliability Diagrams}\label{sec:reliability}

Reliability diagrams are visual tools to represent calibration of predictive model from finite samples (see \cite{degroot1983comparison,niculescu2005predicting,gweon2019reliable,brocker2007increasing}). In this section we follow the construction from \citet{gweon2019reliable} and \citet{brocker2007increasing} to briefly describe reliability diagrams in the context of calibration for binary predictive models. Towards this, recall from \eqref{eq:classcalibration} that a predictive model $f$ is said to be calibrated if and only if,
\begin{align}\label{eq:binarycal}
    \P\left[Y=1|f(Z)\right] = f(Z)\text{ almost surely }P_Z.
\end{align}
The continuous nature of the conditioning random variable makes it hard to have a visual depiction of the above identity from finite samples. Thus, one adopts a binning approach to approximate conditional probabilities in \eqref{eq:binarycal}. To this end, fix $M \geq 2$ and for $m\in [M] := \{1, 2, \ldots, M\}$ let $I_m = \left(\frac{m-1}{M}, \frac{m}{M}\right]$. Then by \eqref{eq:binarycal} for a perfectly calibrated model,
\begin{align}\label{eq:condexpreliability}
    \P\left[Y = 1|f(Z)\in I_m\right] = \E\left[f(Z)|f(Z)\in I_m\right]\text{ for all }, 1\leq m\leq M.
\end{align} 
We can now estimate the quantities in  \eqref{eq:condexpreliability} from the test samples as follows: For all $m\in [M]$, let $B_m$ be the set of indices of test samples whose prediction falls into the interval $I_m$. Observe that,
\begin{align}\label{eq:LandRBm}
    L(B_m) = \frac{1}{|B_m|}\sum_{i\in B_m}\one\left\{Y_i=1\right\}\text{ and }R(B_m) = \frac{1}{|B_m|}\sum_{i\in B_m}f(Z_i)
\end{align}
are natural estimates of $\P\left[Y = 1|f(Z)\in I_m\right]$ and $\E\left[f(Z)|f(Z)\in I_m\right]$, respectively. Then the reliability diagram is the curve joining the points $\{(L(B_m),R(B_m)):1\leq m\leq M\}$. For calibrated predictive models $f$ we expect the curve to be close to the diagonal line. 

\subsubsection{Computation of ECE} 
\label{sec:calibrationerror}

The Expected Calibration Error (ECE) of a predictor $f$ \cite{guo2017calibration, naeini2015obtaining}, also known as the mean calibration error \cite{frank2015regression}, is defined as,
\begin{align}\label{eq:defECE}
    \text{ECE}(f) = \E_{Z\sim P_{Z}}\left[\left|\P\left[Y=1\middle|f(Z)\right] - f(Z)\right|\right],
\end{align}
which is the expectation of the absolute difference of both sides in the identity from \eqref{eq:binarycal}. By definition, a predictive model $f$ is calibrated if and only if $\text{ECE}(f) = 0$. (In \eqref{eq:defECE} define the ECE with the $\ell_1$ norm, but other variants using the $\ell_p$ norm, for $p\geq 1$, are also studied (see \cite{lee2023,arrieta2022metrics}).) To estimate ECE from test samples we once again follow the binning strategy from Section \ref{sec:reliability}. Following \cite{lee2023}, first note that $\text{ECE}(f)$ can be approximated by piecewise averaging as follows:
\begin{align*}
    \text{ECE}(f)
    & \geq \sum_{m=1}^{M}\P(f(Z)\in I_m)\left|\E\left[\P\left[Y=1\middle|f(Z)\right] - f(Z)\middle|f(Z)\in I_m\right]\right|\\
    & = \sum_{m=1}^{M}\P\left(f(Z)\in I_m\right)\left|\P\left[Y=1\middle|f(Z)\in I_m\right] - \E\left[f(Z)\middle|f(Z)\in I_m\right]\right|.
\end{align*}
Recalling \eqref{eq:LandRBm} the RHS above can be estimated as,
\begin{align*}
    \widehat{\text{ECE}}(f) = \sum_{m=1}^{M}\frac{|B_m|}{n}\left|L(B_m) - R(B_m)\right|,
\end{align*}
which is used as an approximate measure of $\text{ECE}(f)$ from test samples. 

\subsection{Recalibration Based on Isotonic Regression}
\label{sec:recalibration} 

 Isotonic regression is a well-known approach for recalibrating the prediction probabilities of a binary classifier (see \cite{guo2017calibration,berta2024classifier,zadrozny2002transforming,zhong2013accurate} and references therein). Here, we briefly recall this method. For this, suppose in a binary classification problem a predictive model outputs potentially miscalibrated probabilities $\hat p_1,\ldots,\hat p_n$ for the positive class in a test set of size $n$. To recalibrate the probabilities one learns a non-decreasing, piecewise constant transformation $\hat q_i = \mathsf{q}(\hat p_i)$ that minimizes the squared error loss $\sum_{i=1}^n\left(\mathsf{q}(\hat p_i) -Y_i\right)^2$, where $Y_i\in \{0,1\}$ is the observed response of the $i$-th test sample. Leveraging the piecewise constant condition on $\mathsf{q}$, isotonic regression considers the following optimization problem:
\begin{align*}
    \min_{\overset{M,\theta_1,\ldots,\theta_M}{a_1,\ldots,a_{M+1}}}\sum_{m=1}^M \sum_{i=1}^n \mathbf{1}\{a_m\leq \hat p_i<a_{m+1}\}(\theta_m-y_i)^2 , 
\end{align*}
such that $0=a_1\leq a_1\leq \ldots\leq a_{M+1}=1$ and $\theta_1\leq \theta_1\leq \ldots\leq \theta_M$. Here, $M$ is the number of intervals, $\{a_i:1\leq i\leq M\}$ are the endpoints of the intervals and 
$\theta_1,\ldots,\theta_M$ are the function values.

\section{Additional Simulations} 
\label{sec:additionalsimulations}

In this section we present some additional simulation results.  In Appendix \ref{sec:additionalclassification} we present additional simulations for calibration tests for classification. In Appendix \ref{sec:computation} we compare the running times of the SKCE and the $\EMMD$ tests. Additional results for testing calibration using the CIFAR-10 dataset are given in Appendix \ref{sec:imagecalibration3}. 

\subsection{Additional Simulations Calibration Tests for Classification}
\label{sec:additionalclassification}

Consider the same setting as in Section \ref{sec:classification} with sample size $n=75$. Figure \ref{fig:classcalibration2} shows the empirical Type I error and power over 500 repetitions. The findings from the results are summarized in Section \ref{sec:classification}. 
    
  \begin{figure}[ht]
    \begin{subfigure}{0.5\textwidth}
        \centering 
        \small
        \centering\begingroup\fontsize{11}{20}\selectfont

\begin{tabular}{cccccc}
\multicolumn{6}{c}{Type I Error} \\
\toprule
\multirow{2}{*}{$\rho$} &\multirow{2}{*}{SKCE} & \multicolumn{2}{c}{Asymptotic} & \multicolumn{2}{c}{Derandomized}\\
\cmidrule(l{3pt}r{3pt}){3-6}
& & 15 NN & 25 NN &  15 NN & 25 NN\\
\midrule
\cellcolor{gray!6}{0.1} & \cellcolor{gray!6}{0.064} & \cellcolor{gray!6}{0.050} & \cellcolor{gray!6}{0.054} & \cellcolor{gray!6}{0.048} & \cellcolor{gray!6}{0.060}\\
0.2 & 0.074 & 0.038 & 0.052 & 0.062 & 0.052\\
\cellcolor{gray!6}{0.3} & \cellcolor{gray!6}{0.048} & \cellcolor{gray!6}{0.050} & \cellcolor{gray!6}{0.066} & \cellcolor{gray!6}{0.040} & \cellcolor{gray!6}{0.050}\\
0.4 & 0.064 & 0.046 & 0.052 & 0.044 & 0.048\\
\cellcolor{gray!6}{0.5} & \cellcolor{gray!6}{0.066} & \cellcolor{gray!6}{0.048} & \cellcolor{gray!6}{0.046} & \cellcolor{gray!6}{0.030} & \cellcolor{gray!6}{0.056}\\
\bottomrule
\end{tabular}
\endgroup{}

        \caption*{(a)}
    \end{subfigure}%
    \begin{subfigure}{0.5\textwidth}
        \centering
         \includegraphics[height=6.25cm]{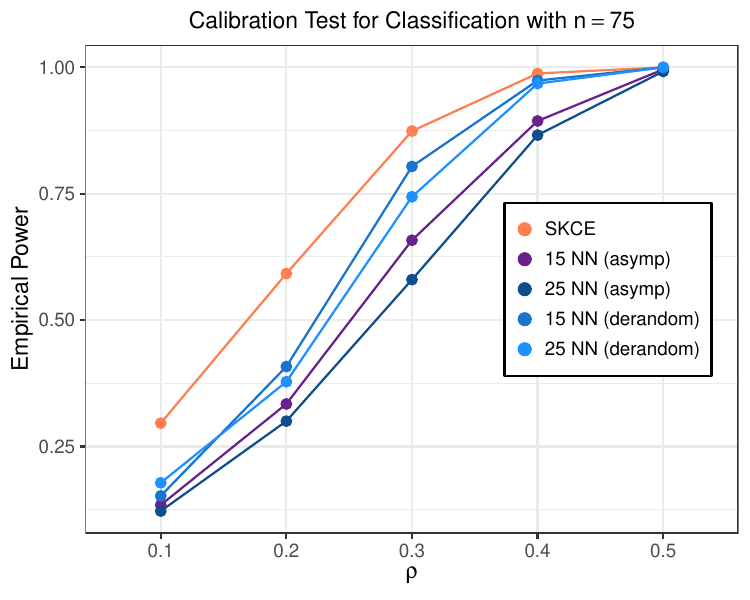} \\
             \small{ (b) }
    \end{subfigure}
    \caption{\small Calibration tests for classification: (a) Type I error and (b) empirical power for $n=75$, as a function of the signal strength $\rho$. } 
    \label{fig:classcalibration1}
\end{figure}

\subsection{Time Comparison Between the SKCE and $\EMMD$ Tests} 
\label{sec:computation}

In this section we report the running times of the SKCE and $\EMMD$ tests for the simulations settings described in Section \ref{sec:classification} and Section \ref{sec:regression}. Figure \ref{fig:computationcalibration}(a) shows the box plots of the running times (over 500 repetitions) of the different tests for the classification setup described in Section \ref{sec:classification} with sample size $n=100$. Figure \ref{fig:computationcalibration}(b) shows the box plots of the running times of the different tests  for the regression setup described in  \eqref{eq:YZ} with sample size $n=75$. In both cases the $\EMMD$ tests are computationally much faster than the SKCE tests. 

\begin{figure}[!ht]
    \begin{subfigure}{0.5\textwidth}
        \centering
        \includegraphics[height=4.75cm]{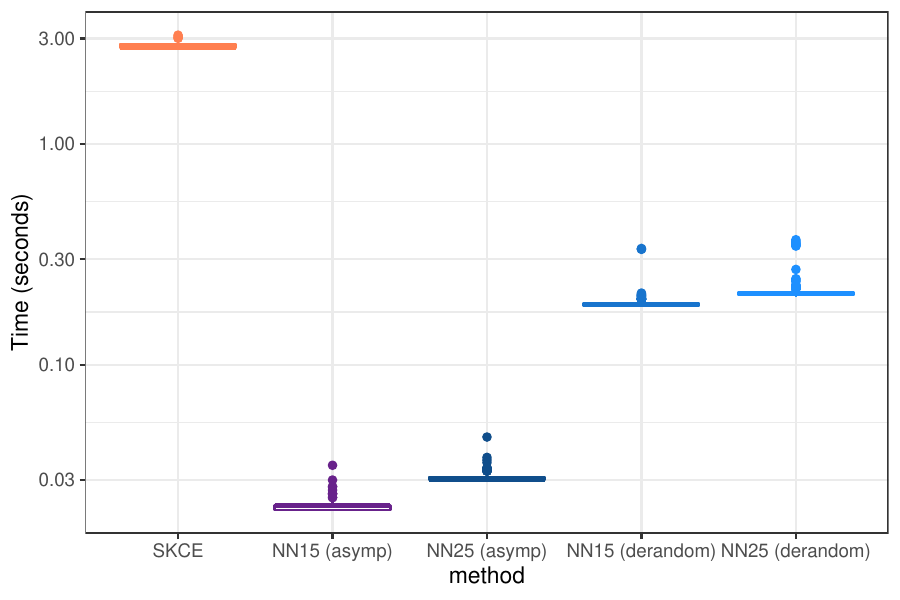} \\ 
        \small{ (a) }
    \end{subfigure}%
    \begin{subfigure}{0.5\textwidth}
        \centering
        \includegraphics[height=4.75cm]{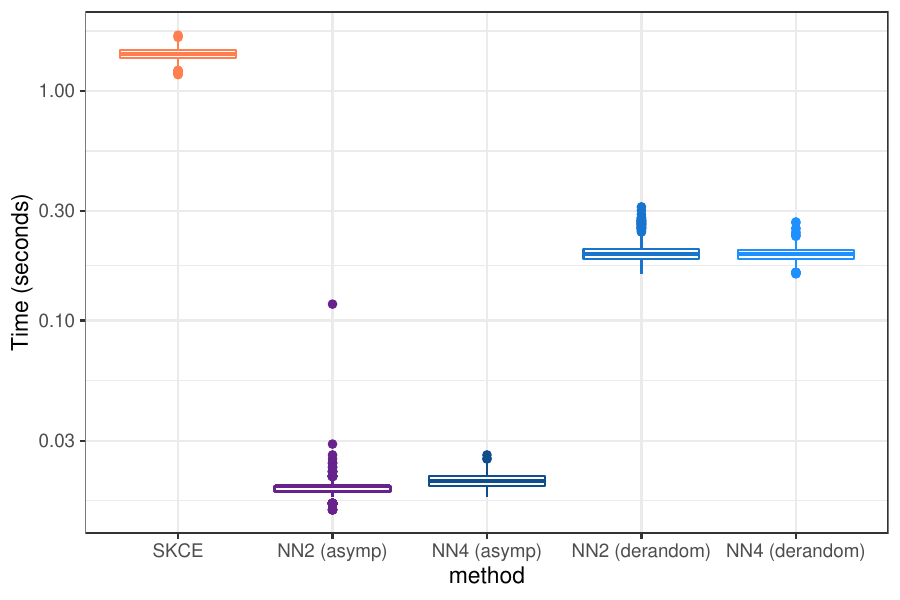} \\ 
        \small{ (b) }
    \end{subfigure}
    \caption{ \small{ Running times of the different tests for (a) the classification setup described Section \ref{sec:classification} with sample size $n=100$ and (b) the regression setup described in \eqref{eq:YZ} with sample size $n=75$. } } 
    \label{fig:computationcalibration}
\end{figure}

\subsection{Additional Results on Calibration Tests for Real Data}
\label{sec:imagecalibration3}

In this section we show the reliability diagrams and results for the $\EMMD$ based calibration tests (before and after recalibration)
for the following  3 pairs of images: 
\begin{align*}\label{eq:additionalimagepairs}
   \{\text{cat, deer}\}, \{\text{cat, frog}\}, \{\text{cat, horse}\}.
\end{align*} 
The experimental setup is same as that in Section \ref{sec:imagecalibration}. 

\begin{figure}[!ht]
    \begin{subfigure}{0.45\textwidth}
        \centering
        \includegraphics[scale=0.55]{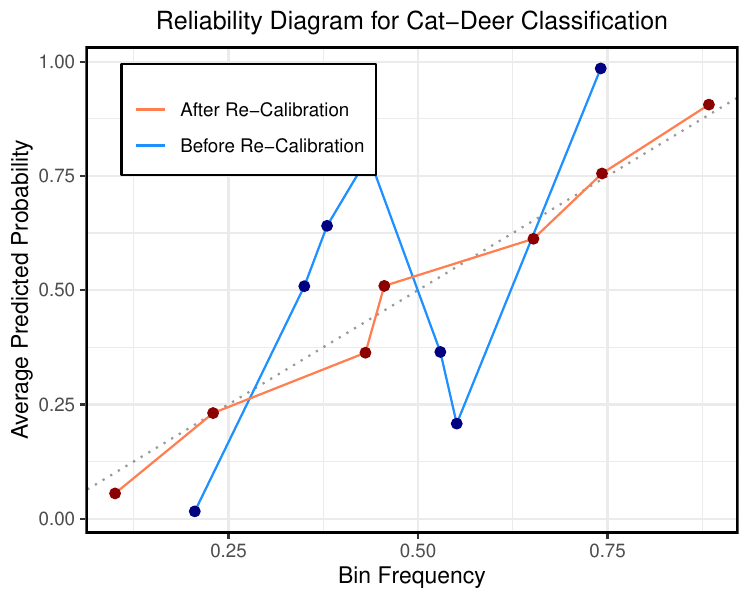}
        \caption*{(a)}
    \end{subfigure}%
    \begin{subfigure}{0.55\textwidth}
        \centering 
        \small
        \begin{tabular}{cccc}
\toprule
\multicolumn{1}{c}{ } & \multicolumn{1}{c}{ } & \multicolumn{2}{c}{Rejection Proportion} \\
\cmidrule(l{3pt}r{3pt}){3-4}
Test & \textit{K} & \multicolumn{1}{p{2.5cm}}{\centering Before\\ Re-Calibration} & \multicolumn{1}{p{2.5cm}}{\centering After\\ Re-Calibration}\\
\midrule
\cellcolor{gray!6}{FS} & \cellcolor{gray!6}{60} & \cellcolor{gray!6}{1.00} & \cellcolor{gray!6}{0.09}\\
FS & 80 & 1.00 & 0.10\\
\cellcolor{gray!6}{FS} & \cellcolor{gray!6}{100} & \cellcolor{gray!6}{1.00} & \cellcolor{gray!6}{0.06}\\
FS & 120 & 1.00 & 0.04\\
\midrule
\cellcolor{gray!6}{Asymp} & \cellcolor{gray!6}{60} & \cellcolor{gray!6}{1.00} & \cellcolor{gray!6}{0.14}\\
Asymp & 80 & 1.00 & 0.09\\
\cellcolor{gray!6}{Asymp} & \cellcolor{gray!6}{100} & \cellcolor{gray!6}{1.00} & \cellcolor{gray!6}{0.09}\\
Asymp & 120 & 1.00 & 0.06\\
\midrule
\cellcolor{gray!6}{ECE} & \cellcolor{gray!6}{} & \cellcolor{gray!6}{0.48} & \cellcolor{gray!6}{0.18}\\
\bottomrule
\end{tabular}

        \caption*{(b)}
    \end{subfigure}
    \caption{ \small{ Results for cat-deer classification: (a) reliability plot before recalibration and after recalibration, and (b) $p$-values of the $\EMMD$ test for different values of $K$ and ECE before and after recalibration. } }
    \label{fig:reliability-calibration-3} 
\end{figure}

\begin{figure}[ht]
    \begin{subfigure}{0.45\textwidth}
        \centering
        \includegraphics[scale=0.55]{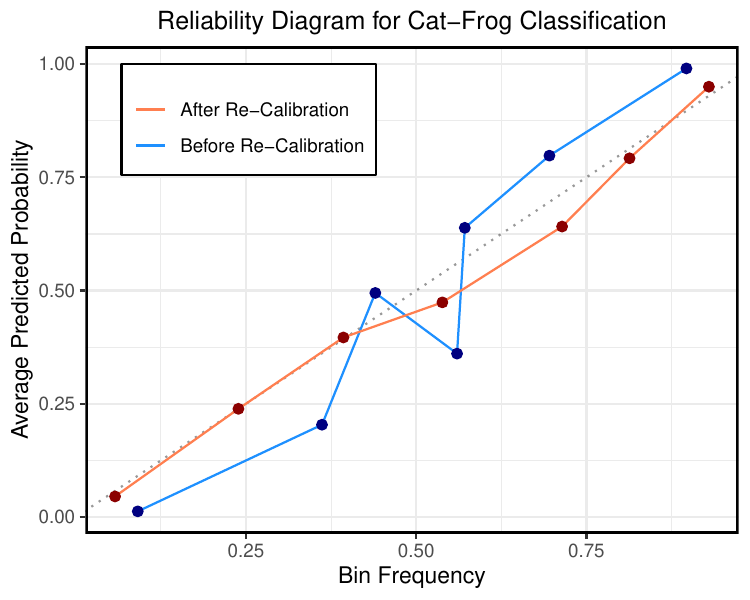}
        \caption*{(a)}
    \end{subfigure}%
    \begin{subfigure}{0.55\textwidth}
        \centering 
        \small
        \begin{tabular}{cccc}
\toprule
\multicolumn{1}{c}{ } & \multicolumn{1}{c}{ } & \multicolumn{2}{c}{Rejection Proportion} \\
\cmidrule(l{3pt}r{3pt}){3-4}
Test & \textit{K} & \multicolumn{1}{p{2.5cm}}{\centering Before\\ Re-Calibration} & \multicolumn{1}{p{2.5cm}}{\centering After\\ Re-Calibration}\\
\midrule
\cellcolor{gray!6}{FS} & \cellcolor{gray!6}{60} & \cellcolor{gray!6}{1.00} & \cellcolor{gray!6}{0.02}\\
FS & 80 & 1.00 & 0.04\\
\cellcolor{gray!6}{FS} & \cellcolor{gray!6}{100} & \cellcolor{gray!6}{1.00} & \cellcolor{gray!6}{0.05}\\
FS & 120 & 1.00 & 0.09\\
\midrule
\cellcolor{gray!6}{Asymp} & \cellcolor{gray!6}{60} & \cellcolor{gray!6}{1.00} & \cellcolor{gray!6}{0.03}\\
Asymp & 80 & 1.00 & 0.02\\
\cellcolor{gray!6}{Asymp} & \cellcolor{gray!6}{100} & \cellcolor{gray!6}{1.00} & \cellcolor{gray!6}{0.06}\\
Asymp & 120 & 1.00 & 0.06\\
\midrule
\cellcolor{gray!6}{ECE} & \cellcolor{gray!6}{} & \cellcolor{gray!6}{0.23} & \cellcolor{gray!6}{0.14}\\
\bottomrule
\end{tabular}

        \caption*{(b)}
    \end{subfigure}
    \caption{ \small{ Results for cat-frog classification: (a) reliability plot before recalibration and after recalibration, and (b) $p$-values of the $\EMMD$ test for different values of $K$ and ECE before and after recalibration. } }
    \label{fig:reliability-calibration-4} 
\end{figure}

\begin{figure}[ht]
    \begin{subfigure}{0.45\textwidth}
        \centering
        \includegraphics[scale=0.55]{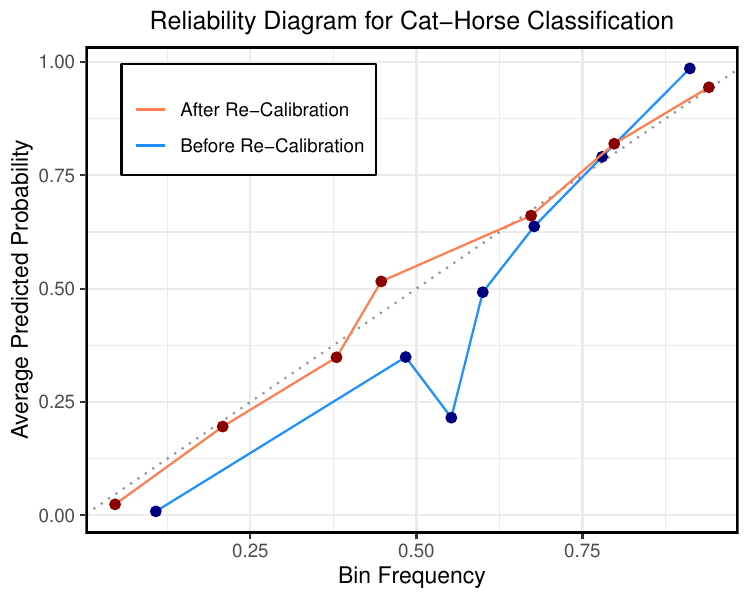}
        \caption*{(a)}
    \end{subfigure}%
    \begin{subfigure}{0.55\textwidth}
        \centering 
        \small
        \begin{tabular}{cccc}
\toprule
\multicolumn{1}{c}{ } & \multicolumn{1}{c}{ } & \multicolumn{2}{c}{Rejection Proportion} \\
\cmidrule(l{3pt}r{3pt}){3-4}
Test & \textit{K} & \multicolumn{1}{p{2.5cm}}{\centering Before\\ Re-Calibration} & \multicolumn{1}{p{2.5cm}}{\centering After\\ Re-Calibration}\\
\midrule
\cellcolor{gray!6}{FS} & \cellcolor{gray!6}{60} & \cellcolor{gray!6}{1.00} & \cellcolor{gray!6}{0.14}\\
FS & 80 & 1.00 & 0.18\\
\cellcolor{gray!6}{FS} & \cellcolor{gray!6}{100} & \cellcolor{gray!6}{1.00} & \cellcolor{gray!6}{0.13}\\
FS & 120 & 1.00 & 0.10\\
\midrule
\cellcolor{gray!6}{Asymp} & \cellcolor{gray!6}{60} & \cellcolor{gray!6}{1.00} & \cellcolor{gray!6}{0.18}\\
Asymp & 80 & 1.00 & 0.21\\
\cellcolor{gray!6}{Asymp} & \cellcolor{gray!6}{100} & \cellcolor{gray!6}{1.00} & \cellcolor{gray!6}{0.19}\\
Asymp & 120 & 1.00 & 0.14\\
\midrule
\cellcolor{gray!6}{ECE} & \cellcolor{gray!6}{} & \cellcolor{gray!6}{0.23} & \cellcolor{gray!6}{0.13}\\
\bottomrule
\end{tabular}

        \caption*{(b)}
    \end{subfigure}
    \caption{ \small{ Results for cat-horse classification: (a) reliability plot before recalibration and after recalibration, and (b) $p$-values of the $\EMMD$ test for different values of $K$ and ECE before and after recalibration. } }
    \label{fig:reliability-calibration-5} 
\end{figure}

\end{document}